\RequirePackage{fix-cm}

\documentclass[twoside,openright,titlepage,numbers=noenddot,headinclude,
               footinclude=true,cleardoublepage=empty,abstractoff, 
               BCOR=5mm,paper=a4,fontsize=11pt,
               british,titlepage=false,pdftex, hyperfootnotes = false, pdfpagelabels
               ]{scrartcl}

\usepackage[pdfspacing, floatperchapter, linedheaders, subfig, nochapters]{classicthesis}


\usepackage[utf8]{inputenc}
\usepackage[T1]{fontenc}
\usepackage[british]{babel}
\usepackage[english = british]{csquotes} 

\babelhyphenation[british]{di-rec-tion di-rec-tions where-up-on rel-e-vant au-tom-a-ton au-tom-a-ta tran-si-tion tran-si-tions pre-serv-ing a-me-na-ble a-me-na-bil-i-ty per-tur-ba-tion per-tur-ba-tions bound-a-ry bound-a-ries coun-ter-ex-am-ple e-qui-var-i-ant}


\usepackage{textcomp} 
\usepackage{scrhack} 
\usepackage{xspace} 
\usepackage{mparhack} 
\usepackage{fixltx2e} 
\usepackage[latest]{latexrelease}
\usepackage[printonlyused, smaller]{acronym} 
    
\usepackage{ragged2e} 


\usepackage{ifoddpage}


\usepackage[fleqn]{amsmath}
\usepackage[warning, all]{onlyamsmath}\AtBeginDocument{\catcode`\$=3} 
\usepackage{amssymb}
\usepackage{mathtools}
\usepackage{mathdots} 
\usepackage{gensymb} 
\usepackage{dsfont}
\usepackage{stmaryrd}
\usepackage{tensor} 


\allowdisplaybreaks

\usepackage{amsthm}
\usepackage{thmtools}



\usepackage{tabularx} 
    \setlength{\extrarowheight}{3pt} 
    
\usepackage{caption}
    \captionsetup{format = hang, font = small, figurewithin = section}
\usepackage{subfig}

\PassOptionsToPackage{pdftex}{graphicx}
\PassOptionsToPackage{svgnames}{xcolor}
\usepackage{standalone}

\pdfcompresslevel = 9
\pdfadjustspacing = 1
\usepackage[pdftex]{graphicx}
    \graphicspath{{./graphics/}}
\usepackage{trimclip, adjustbox} 

\usepackage[svgnames]{xcolor} 
\usepackage{tikz}
\usetikzlibrary{positioning, patterns, arrows, calc}
\usepackage{tikzscale}

\usepackage{calc}
\usepackage[strict]{changepage} 


\usepackage{scalerel} 



\usepackage{hyperref} 
\hypersetup{%
    colorlinks = true, linktocpage = true, pdfstartpage = 3, pdfstartview = FitV,%
    breaklinks = true, pdfpagemode = UseNone, pageanchor = true, pdfpagemode = UseOutlines,%
    plainpages = false, bookmarksnumbered, bookmarksopen = true, bookmarksopenlevel = 1,%
    hypertexnames = true, pdfhighlight = /O,
    urlcolor = webbrown, linkcolor = RoyalBlue, citecolor = webgreen, 
    pdftitle = {},%
    pdfauthor = {},%
    pdfsubject = {},%
    pdfkeywords = {},%
    pdfcreator = {pdfLaTeX},%
    pdfproducer = {LaTeX with hyperref and classicthesis}%
}
\usepackage[noabbrev]{cleveref} 


\usepackage[
  backend = biber,
  style = alphabetic, 
  sortlocale = en_GB,
  hyperref = true,
  backref = true,
  abbreviate = false,
  url = true,
  doi = true,
  eprint = true,
  ibidtracker = false, 
  citetracker = false, 
  autocite = footnote
]{biblatex}
\addbibresource{others.bib}
\addbibresource[label = my]{my.bib}
\DeclareFieldFormat*{citetitle}{#1} 

\usepackage{snotez}
\setsidenotes{
  note-mark-format = #1.,
  text-format = \slshape\footnotesize,
  perpage = true,
  footnote = true
}
\makeatletter 
\def\snotez@text#1#2{%
      \snotez@marginpar{%
        \checkoddpage
        \ifoddpage
          \RaggedRight
          \snotez@format%
          \snotez@write@mark{\snotez@note@mark@format{\@the@snotez@mark}}%
          \snotez@note@mark@sep#2%
        \else
          \RaggedLeft
          \snotez@format%
          \snotez@write@mark{\snotez@note@mark@format{\@the@snotez@mark}}%
          \snotez@note@mark@sep#2%
        \fi%
      }
}
\makeatother

\usepackage{cfr-lm} 


\declaretheoremstyle[
  headfont = \scshape, headpunct = ., postheadspace = 0.5em,
  bodyfont = \itshape,
  spaceabove = \baselineskip, spacebelow = \baselineskip,
  qed = \ensuremath{_\square} 
]{theorem}

\declaretheorem[numberwithin=section, style=theorem, refname={main theorem, main theorems}, Refname={Main Theorem, Main Theorems}, name=Main Theorem]{main-theorem}
\declaretheorem[sibling=main-theorem, style=theorem, refname={theorem,theorems}, Refname={Theorem,Theorems}]{theorem}
\declaretheorem[sibling=main-theorem, style=theorem, refname={lemma,lemmata}, Refname={Lemma,Lemmata}]{lemma}
\declaretheorem[sibling=main-theorem, style=theorem, refname={corollary,corollaries}, Refname={Corollary,Corollaries}]{corollary}

\declaretheoremstyle[
  headfont = \scshape, headpunct = ., postheadspace = 0.5em,
  bodyfont = \upshape,
  spaceabove = \baselineskip, spacebelow = \baselineskip,
  qed = \ensuremath{_\square} 
]{definition}

\declaretheorem[sibling=main-theorem, style=definition, refname={definition,definitions}, Refname={Definition,Definitions}]{definition}

\declaretheorem[sibling=main-theorem, style=definition, refname={remark,remarks}, Refname={Remark,Remarks}]{remark}
\declaretheorem[sibling=main-theorem, style=definition, refname={open problem,open problems}, Refname={Open Problem,Open Problems}, name=Open Problem]{open-problem}

\declaretheoremstyle[
  headfont = \scshape, headpunct = ., postheadspace = 0.5em,
  bodyfont = \normalfont,
  spaceabove = \baselineskip, spacebelow = \baselineskip,
  qed = \ensuremath{_\blacksquare}
]{proof}

\declaretheorem[unnumbered, style=proof, refname={proof,proofs}, Refname={Proof,Proofs}]{proof}
\declaretheorem[unnumbered, style=proof, refname={proof sektch,proof sketches}, Refname={Proof Sketch,Proof Sketches}, name=Proof Sketch]{proof-sketch}
\declaretheorem[unnumbered, style=proof, refname={usage note,usage notes}, Refname={Usage Note,Usage Notes}, name=Usage Note]{usage-note}

\makeatletter
\newcommand{\proofPart}[1]{%
  \par%
  \addvspace{\medskipamount}%
  \noindent\emph{#1.}
  \@afterheading%
}
\makeatother


\newcommand*{\define}[1]{\emph{#1}}
\newcommand*{\defineX}[2]{\emph{#1}}

\newcommand{\mathnote}[1]{
  \checkoddpage%
  \ifoddpage%
    \tag*{\rlap{\hspace{\marginparsep}\smash{\parbox[t]{\marginparwidth}{\itshape\footnotesize\leavevmode\color{Black}\raggedright\hspace{0pt}{#1}}}}}%
  \else%
    \tag*{\llap{\smash{\parbox[t]{\marginparwidth}{\itshape\footnotesize\leavevmode\color{Black}\raggedleft\hspace{0pt}{#1}}}\hspace{\marginparsep}\hspace{\textwidth}}}%
  \fi%
}


\providecommand\ifAndOnlyIf{\DOTSB\;\Longleftrightarrow\;} 


\DeclarePairedDelimiter\parens{\lparen}{\rparen} 

\DeclarePairedDelimiter\cardinalityOf{\lvert}{\rvert} 
\DeclarePairedDelimiter\lengthOf{\lvert}{\rvert} 
\DeclarePairedDelimiter\lengthOfPath{\lvert}{\rvert} 

\DeclarePairedDelimiter\normOf{\lVert}{\rVert} 
\DeclarePairedDelimiter\radiusOf{\lVert}{\rVert} 
\DeclarePairedDelimiterX\innerProductOf[2]{\langle}{\rangle}{#1,#2} 

\DeclarePairedDelimiter\setOf{\{}{\}} 
\newcommand*\suchThat{\mid} 

\DeclarePairedDelimiter\family{\{}{\}} 
\DeclarePairedDelimiter\sequence{\lparen}{\rparen} 

\DeclarePairedDelimiter\equivalenceClassOf{\lbrack}{\rbrack} 


\DeclarePairedDelimiter\closedInterval{\lbrack}{\rbrack}
\DeclarePairedDelimiter\openInterval{\rbrack}{\lbrack}
\DeclarePairedDelimiter\leftOpenAndRightClosedInterval{\rbrack}{\rbrack}
\DeclarePairedDelimiter\leftClosedAndRightOpenInterval{\lbrack}{\lbrack}


\mathchardef\breakingcomma\mathcode`\, 
{\catcode`,=\active
  \gdef,{\breakingcomma\discretionary{}{}{}}
}
\DeclareRobustCommand\ntuple[1]{\lparen\mathcode`\,=\string"8000 #1\rparen} 


\newcommand*{\N}{\mathbb{N}} 
\newcommand*{\Z}{{\mathbb{Z}}} 
\newcommand*{\R}{\mathbb{R}} 

\DeclareMathOperator{\identityMap}{id}




\DeclareMathOperator{\Exists}{\exists} 
\DeclareMathOperator{\notExists}{\nexists} 
\DeclareMathOperator{\ForEach}{\forall} 
\newcommand*\Holds{:} 
\newcommand*\SuchThat{:} 


\newcommand*\actsOnPoint{\triangleright} 
\newcommand*\actsOnMap{\mathbin{\protect\scalerel*{\blacktriangleright}{\triangleright}}} 
\newcommand*\actsByItsCoordinateOn{\mathbin{\protect\scalerel*{\ooalign{$\trianglelefteqslant$\cr\hidewidth\raise.225ex\hbox{$\blacktriangleleft$}\cr}}{\triangleleft}}} 

\makeatletter
\providecommand*{\Dashv}{%
  \mathrel{%
    \mathpalette\@Dashv\vDash
  }%
}
\newcommand*{\@Dashv}[2]{%
  \reflectbox{$\m@th#1#2$}%
}
\makeatother


\newcommand{\VDash}{%
  \mathrel{
    \text{\clipbox{0pt 0pt {.8\width} 0pt}{$\Vdash$}}
    \mkern.9mu
    \text{\adjustbox{width=.87\width,height=\height}{$\vDash$}}
  }
}

\makeatletter
\providecommand*{\DashV}{%
  \mathrel{%
    \mathpalette\@DashV\VDash
  }%
}
\newcommand*{\@DashV}[2]{%
  \reflectbox{$\m@th#1#2$}%
}
\makeatother
\makeatletter
\providecommand*{\dashV}{%
  \mathrel{%
    \mathpalette\@dashV\Vdash
  }%
}
\newcommand*{\@dashV}[2]{%
  \reflectbox{$\m@th#1#2$}%
}
\makeatother




\newcommand*\modulo{\slash}






\newcommand*\from{\colon} 
\newcommand*\after{\circ} 
\DeclareMathOperator{\domainOf}{dom} 
\DeclareMathOperator{\imageOf}{im} 
\newcommand*{\blank}{\mathord{\_}} 
\newcommand*{\restrictedTo}{\mathord{\upharpoonright}} 


\DeclareMathOperator*{\supportOf}{supp} 


\DeclareMathOperator{\distanceOf}{d} 
\DeclareMathOperator{\degreeOf}{deg} 



\newcommand*\occursIn{\sqsubseteq} 
\newcommand*\semiOccursIn{\mathrel{\ooalign{$\occursIn$\cr\hidewidth\raise.225ex\hbox{$\circ\mkern.5mu$}\cr}}} 


\DeclareMathOperator{\powerSetOf}{\mathcal{P}} 
\newcommand{\closureOf}[1]{\mkern 1.5mu\overline{\mkern-1.5mu#1\mkern-1.5mu}\mkern 1.5mu} 

\makeatletter
\def\moverlay{\mathpalette\mov@rlay}
\def\mov@rlay#1#2{\leavevmode\vtop{%
   \baselineskip\z@skip \lineskiplimit-\maxdimen
   \ialign{\hfil$\m@th#1##$\hfil\cr#2\crcr}}}
\newcommand{\charfusion}[3][\mathord]{
    #1{\ifx#1\mathop\vphantom{#2}\fi
        \mathpalette\mov@rlay{#2\cr#3}
      }
    \ifx#1\mathop\expandafter\displaylimits\fi}
\makeatother

\newcommand{\bigDisjointUnionOf}{\charfusion[\mathop]{\bigcup}{\cdot}} 








\newcommand{\myfloatalign}{\centering} 

\newenvironment{wide}{%
  \begin{adjustwidth*}{0pt}{-\marginparsep-\marginparwidth}
    \myfloatalign
}{%
  \end{adjustwidth*}%
}


\newcommand*{\Graph}{\mathcal{G}}

\newcommand*{\Tree}{\mathcal{T}}
\newcommand*{\continuumGraph}{\mathfrak{G}}
\DeclareMathOperator{\Signals}{Sgnl}
\DeclareMathOperator{\Kinds}{Knd}
\DeclareMathOperator{\kindOf}{knd}
\DeclareMathOperator{\Data}{Dt}
\DeclareMathOperator{\NoData}{None}
\DeclareMathOperator{\Vertices}{V}
\newcommand*{\continuumVertices}{\mathfrak{V}}
\DeclareMathOperator{\Edges}{E}
\newcommand*{\continuumEdges}{\mathfrak{E}}
\DeclareMathOperator{\Paths}{P}
\newcommand*{\continuumPaths}{\mathfrak{P}}
\newcommand*{\directionPreserving}{\shortrightarrow}
\newcommand*{\directed}{\rightarrow}
\DeclareMathOperator{\Directions}{Dir}
\DeclareMathOperator{\Arrows}{Arr}
\DeclareMathOperator{\every}{vry}
\DeclareMathOperator{\Times}{T}
\newcommand*{\extendedTimes}{\closureOf{\Times}}
\DeclareMathOperator{\States}{Q}
\DeclareMathOperator{\Configurations}{Cnf}
\DeclareMathOperator{\ReachOf}{R}
\DeclareMathOperator{\directionOf}{dir}
\newcommand*{\start}{\sourceOf}
\newcommand*{\xend}{\targetOf}
\DeclareMathOperator{\speedOf}{spd}
\DeclareMathOperator{\velocityOf}{vel}
\DeclareMathOperator{\datumOf}{dt}
\DeclareMathOperator{\signOf}{sgn}
\DeclareMathOperator{\subtreesOf}{sbtrs}
\DeclareMathOperator{\maxSubtreesOf}{max\,sbtrs}
\newcommand*{\secondMaxSubtreesOf}{\closureOf{\subtreesOf}}
\DeclareMathOperator{\maxPaths}{max\,pths}
\DeclareMathOperator{\maxVertices}{max\,vrtcs}
\DeclareMathOperator{\maxTree}{\hat{\Tree}} 

\newcommand*{\localTransitionFunction}{\delta}
\newcommand*{\globalTransitionFunction}{\mathbin{\boxdot}} 

\makeatletter
\newcommand{\dotDelta}{{\vphantom{\Delta}\mathpalette\d@tD@lta\relax}}
\newcommand{\d@tD@lta}[2]{%
  \ooalign{\hidewidth$\m@th#1\mkern-1mu\cdot$\hidewidth\cr$\m@th#1\Delta$\cr}%
}
\makeatother

\newcommand*{\downTo}{\downarrow} 

\newcommand*{\gtfJump}{\mathbin{\dotDelta}} 
\newcommand*{\gtfIgnoreCollisions}{\mathbin{\boxplus}} 
\newcommand*{\gtfNonNegativeSingularities}{\mathbin{\boxminus}} 
\newcommand*{\gtfNegativeSingularityOfOrderMinusOne}{\mathbin{\boxast}} 
\newcommand*{\concat}{\bullet}
\newcommand*{\reverse}{-}
\newcommand*{\multipli}{\cdot}
\newcommand{\continuumRepresentationOf}[1]{\mkern 1.5mu\overline{\mkern-1.5mu#1\mkern-1.5mu}\mkern 1.5mu} 

\makeatletter
\DeclareRobustCommand{\cev}[1]{%
  \mathpalette\do@cev{#1}%
}
\newcommand{\do@cev}[2]{%
  \fix@cev{#1}{+}%
  \reflectbox{$\m@th#1\vec{\reflectbox{$\fix@cev{#1}{-}\m@th#1#2\fix@cev{#1}{+}$}}$}%
  \fix@cev{#1}{-}%
}
\newcommand{\fix@cev}[2]{%
  \ifx#1\displaystyle
    \mkern#23mu
  \else
    \ifx#1\textstyle
      \mkern#23mu
    \else
      \ifx#1\scriptstyle
        \mkern#22mu
      \else
        \mkern#22mu
      \fi
    \fi
  \fi
}
\makeatother

\newcommand{\character}[1]{\mathtt{#1}}
\newcommand*{\initiateKind}{\character{I}}
\newcommand*{\leafKind}{\character{L}}
\newcommand*{\countKind}{\character{C}}
\newcommand*{\midpointKind}{\character{M}}
\newcommand*{\findMidpointKind}{\character{U}}
\newcommand*{\reflectedFindMidpointKind}{\cev{\findMidpointKind}}
\newcommand*{\slowedDownFindMidpointKind}{\character{V}}
\newcommand*{\freezeKind}{\character{F}}
\newcommand*{\thawKind}{\character{T}}
\newcommand*{\divideKind}{\character{D}}
\newcommand*{\frozenDivideKind}{\freezeKind\divideKind}
\newcommand*{\reflectedDivideKind}{\cev{\divideKind}}
\newcommand*{\boundaryKind}{\character{B}}
\newcommand*{\fireKind}{\character{X}}
\newcommand*{\frozenFireKind}{\freezeKind\fireKind}

\newcommand*{\initiateSignal}[1]{\initiateKind_{#1}}
\newcommand*{\leafSignal}[1]{\leafKind_{#1}}
\newcommand*{\midpointSignal}[2]{\midpointKind_{\setOf{#1, #2}}}
\newcommand*{\countSignal}[1]{\countKind_{#1}}
\newcommand*{\findMidpointSignal}[2]{\findMidpointKind_{#1, #2}} 
\newcommand*{\reflectedFindMidpointSignal}[4]{\reflectedFindMidpointKind_{#1, #2, #3}^{#4}} 
\newcommand*{\slowedDownFindMidpointSignal}[4]{\slowedDownFindMidpointKind_{#1, #2, #3}^{#4}} 
\newcommand*{\freezeSignal}[1]{\freezeKind_{#1}} 
\newcommand*{\thawSignal}[3]{\thawKind_{#1, #2}^{#3}} 
\newcommand*{\divideSignal}[2]{\divideKind_{#1, #2}} 
\newcommand*{\frozenDivideSignal}[2]{\frozenDivideKind_{#1, #2}} 
\newcommand*{\reflectedDivideSignal}[1]{\reflectedDivideKind_{#1}}
\newcommand*{\boundarySignal}{\boundaryKind}
\newcommand*{\fireSignal}{\fireKind}
\newcommand*{\frozenFireSignal}{\frozenFireKind}

\newcommand*{\midpoint}{\mathfrak{m}}
\newcommand*{\general}{\mathfrak{g}}

\newcommand*{\sourceOf}{\sigma}
\newcommand*{\targetOf}{\tau}
\newcommand*{\bedOf}{\beta}
\newcommand*{\eendsOf}{\varepsilon}
\DeclareMathOperator{\invert}{inv}

\newcommand*{\weightOf}{\omega}
\newcommand*{\length}{\weightOf} 

\newcommand*{\direct}[1]{\vec{#1}}

\newcommand*{\emptyWord}{\lambda}

\newcommand*{\booleans}{\mathbb{B}}
\newcommand*{\no}{\character{no}}
\newcommand*{\yes}{\character{yes}}

\DeclareMathOperator{\tree}{tree}
\DeclareMathOperator{\virtualTree}{virtualTree}
\DeclareMathOperator{\boundaryCases}{bndryCases}

\DeclareMathOperator{\areEmpty}{areEmpty}
\DeclareMathOperator{\isInLeaf}{inLeaf}
\DeclareMathOperator{\isPenultimate}{penultimate}

\DeclareMathOperator*{\argmax}{arg\,max}

\newcommand*{\discreteInterval}[2]{[#1 : #2]}

\newcommand*{\deadState}{\mathtt{\#}}
\newcommand*{\generalState}{\mathtt{g}}
\newcommand*{\soldierState}{\mathtt{s}}
\newcommand*{\fireState}{\mathtt{f}}


\renewcommand*{\index}[2][]{}


\definecolor{myred}{HTML}{e41a1c}
\definecolor{myblue}{HTML}{377eb8}
\definecolor{mygreen}{HTML}{4daf4a}
\definecolor{myviolet}{HTML}{984ea3}

\tikzset{%
  edge/.style = {draw = black},
  vertex/.style = {draw = black, thick},
  midpoint/.style = {draw = black, thick},
  boundary/.style = {draw = black, thick},
  find/.style = {draw = black, dotted, very thick}, 
  reflected/.style = {draw = black, densely dotted, thick},
  syncLeft/.style = {pattern = horizontal lines, pattern color = black, draw = none}, 
  syncMiddle/.style = {pattern = fivepointed stars, pattern color = black, draw = none},
  syncRight/.style = {pattern = dots, pattern color = black, draw = none}, 
  slowed/.style = {draw = black}, 
  freeze/.style = {draw = black, dashed}, 
  traversingThaw/.style = {draw = black, loosely dotted, thick},
  thawingThaw/.style = {draw = black, dashed},
  divide0/.style = {draw = black, solid},
  divideN/.style = {draw = black, densely dotted},
  reflectedDivide/.style = {draw = black},
  leftColour/.style = {draw = mygreen, pattern color = mygreen},
  leftFill/.style = {color = mygreen},
  rightColour/.style = {draw = myblue, pattern color = myblue},
  rightFill/.style = {color = myblue},
  overlayLeftColour/.style = {draw = myred, pattern color = myred},
  overlayLeftFill/.style = {color = myred},
  overlayRightColour/.style = {draw = myviolet, pattern color = myviolet},
  overlayRightFill/.style = {color = myviolet}
}

\def\figureSingularities{%
  \subfloat[A singularity of order $1$.]{%
    \resizebox{(\linewidth-1em)/2}{!}{%
      \begin{tikzpicture}[> = To, x = 1cm, y = {-1cm}]
        \pgfmathsetmacro{\l}{3} 
        \pgfmathsetmacro{\h}{\l + \l} 

        \coordinate (gg) at (0, 0); 
        \coordinate (dr) at (\l, \l); 
        \coordinate (rdl) at (0, 2 * \l); 

        \draw[divide0] (gg) -- (dr);
        \draw[reflectedDivide] (dr) -- (rdl);

        \foreach \type in {1, 2, ..., 15} {
          \pgfmathsetmacro{\v}{((2/3)^\type) * \l};
          \draw[divideN] (gg) -- (\v, {2 * \l - \v});
        }

        \draw[vertex] (0, 0) -- (0, 6)
                      (3, 0) -- (3, 6);
        \draw[edge] (0, 0) -- (3, 0)
                    (0, 6) -- (3, 6);
      \end{tikzpicture}
    }
    \label{figure:singularity-of-order-one}
  }
  \hfill
  \subfloat[A singularity of order $-1$.]{%
    \resizebox{(\linewidth-1em)/2}{!}{%
      \begin{tikzpicture}[> = To, x = 1cm, y = {-1cm}]
        \pgfmathsetmacro{\l}{3} 
        \pgfmathsetmacro{\h}{\l + \l} 

        \coordinate (gg) at (0, 0); 
        \coordinate (dr) at (\l, \l); 
        \coordinate (rdl) at (0, 2 * \l); 

        \draw[divide0] (0, 0) -- (-1, 1);
        \draw[reflectedDivide] (-1, 1) -- (2, 4);

        \begin{scope}
          \clip (0, 0) -- (0, 2) -- (2, 4) -- (2, 0) -- cycle;
          \foreach \type in {1, 2, ..., 15} {
            \pgfmathsetmacro{\v}{((2/3)^\type) * \l};
            \draw[divideN] (gg) -- (\v, {2 * \l - \v});
          }
        \end{scope}

        \draw[vertex] (-1, 0) -- (-1, 4)
                      (0, 0) -- (0, 4)
                      (2, 0) -- (2, 4);
        \draw[edge] (-1, 0) -- (2, 0)
                    (-1, 4) -- (2, 4);
      \end{tikzpicture}
    }
    \label{figure:singularity-of-order-minus-one}
  }
}

\def\figureMidpointOneEdge{%
  \begin{tikzpicture}[> = To, x = 1cm, y = {-1cm}]
    \pgfmathsetmacro{\l}{3} 
    \pgfmathsetmacro{\h}{2 * \l} 

    \draw[find] (0, 0) -- (\l, \l);
    \draw[reflected] (\l, \l) -- (0, \h);
    \draw[slowed] (0, 0) -- ({(1/3) * \h}, \h);
    \draw[midpoint] (0.5 * \l, 1.5 * \l) -- (0.5 * \l, \h);

    \draw[vertex] (0, 0) -- (0, \h); 
    \draw[vertex] (\l, 0) -- (\l, \h); 

    \draw[edge] (0, 0) -- (\l, 0); 
    \draw[edge] (0, \h) -- (\l, \h); 
  \end{tikzpicture}
}

\def\figureMidpointTwoEdges{%
  \begin{tikzpicture}[> = To, x = 1cm, y = {-1cm}] 
    \pgfmathsetmacro{\l}{3} 
    \pgfmathsetmacro{\r}{2} 
    \pgfmathsetmacro{\w}{\l + \r} 
    \pgfmathsetmacro{\h}{2 * \l + \r} 

    \begin{scope}[shift = {(\l, -1.5)}, scale = 0.5]
      \draw (0, 0) -- (-2.1213, 2.1213);
      \draw (0, 0) -- (1.414, 1.414);
    \end{scope}

    \coordinate (g) at (\l, 0); 

    \draw[find] (g) -- (0, \l);
    \draw[reflected] (0, \l) -- (\l, 2 * \l);
    \draw[slowed] (\l, 2 * \l) -- ({\l + (1/3) * \r}, \h);

    \draw[find] (g) -- (\w, \r);
    \draw[reflected] (\w, \r) -- (\l, 2 * \r);
    \draw[slowed] (\l, 2 * \r) -- ({0.5 * \w - (1/3) * 0.5 * \w}, \h);

    \draw[midpoint] (0.5 * \w, \h - 0.5 * \w) -- (0.5 * \w, \h);

    \draw[vertex] (0, 0) -- (0, \h);
    \draw[vertex] (g) -- (\l, \h);
    \draw[vertex] (\w, 0) -- (\w, \h);

    \draw[edge] (0, 0) -- (g); 
    \draw[edge] (g) -- (\w, 0); 
    \draw[edge] (0, \h) -- (\l, \h); 
    \draw[edge] (\l, \h) -- (\w, \h); 
  \end{tikzpicture}
}

\def\mazoyerOneLevel(#1,#2){
  \pgfmathsetmacro{\l}{#1} 
  \pgfmathsetmacro{\h}{\l + \l} 

  \coordinate (gg) at (0, 0); 
  \coordinate (dr) at (\l, \l); 
  \coordinate (rdl) at (0, 2 * \l); 

  \draw[divide0] (gg) -- (dr);
  \draw[reflectedDivide] (dr) -- (rdl);

  \foreach \type in {1, 2, ..., 15} {
    \pgfmathsetmacro{\v}{((2/3)^\type) * \l};
    \coordinate (b #2 \type) at (\v, {2 * \l - \v}); 
    \coordinate (be #2 \type) at (\v, \h); 
    \draw[divideN] (gg) -- (b #2 \type);
    \draw[boundary] (b #2 \type) -- (be #2 \type);
  }
}

\def\mazoyerTwoLevels(#1){%
  \pgfmathsetmacro{\le}{#1} 
  \mazoyerOneLevel(\le,0); 
  \foreach \type in {1, 2, ..., 15} {
    \begin{scope}[shift = (b 0 \type)]
      \pgfmathsetmacro{\typeMinusOne}{\type - 1};
      \mazoyerOneLevel({(1/3) * (2/3)^\typeMinusOne * \le},\type);
    \end{scope}
  }
}

\def\figureMazoyer{%
  \begin{tikzpicture}[> = To, x = 1cm, y = {-1cm}]
    \mazoyerTwoLevels(3);
    \draw[vertex] (0, 0) -- (0, 6)
                  (3, 0) -- (3, 6);
    \draw[edge] (0, 0) -- (3, 0)
                (0, 6) -- (3, 6);
  \end{tikzpicture}
}

\def\figureFSSPMazoyerWithTwoEdges{%
  \begin{tikzpicture}[> = To, x = 1cm, y = {-1cm}]
    \pgfmathsetmacro{\l}{3} 
    \pgfmathsetmacro{\r}{2} 
    \pgfmathsetmacro{\w}{\l + \r} 
    \pgfmathsetmacro{\h}{\l + \l + \r} 

    \begin{scope}[shift = {(\l, -1.5)}, scale = 0.5]
      \draw (0, 0) -- (-2.1213, 2.1213);
      \draw (0, 0) -- (1.414, 1.414);
    \end{scope}

    \coordinate (g) at (\l, 0); 
    \coordinate (lb) at (0, \l); 
    \coordinate (rb) at (\w, \r); 
    \coordinate (lg) at (\l, 2 * \l); 
    \coordinate (rg) at (\l, 2 * \r); 
    \coordinate (m) at (0.5 * \w, \l + 0.5 * \w); 
    \coordinate (mlf) at (0.5 * \l, \l + 0.5 * \l); 
    \coordinate (mlt) at (0.5 * \l, \h - 0.5 * \l); 
    \coordinate (mrf) at (\l + 0.5 * \r, \r + 0.5 * \r); 
    \coordinate (mrt) at (\l + 0.5 * \r, \h - 0.5 * \r); 
    \coordinate (lfb) at ($(mlf) + (- 0.5 * \l, 0.5 * \l)$); 
    \coordinate (rfb) at ($(mrf) + (0.5 * \r, 0.5 * \r)$); 
    \coordinate (lfg) at ($(mlf) + (0.5 * \l, 0.5 * \l)$); 
    \coordinate (rfg) at ($(mrf) + (- 0.5 * \r, 0.5 * \r)$); 

    \begin{scope} 
      \clip (0, 0) -- (0, 2 * \l) -- (0.5 * \l, 1.5 * \l) -- (\l, 2 * \l) -- (\l, 0) -- cycle;
      \begin{scope}[xscale = -1, shift = {(-\l, 0)}]
        \mazoyerTwoLevels(\l);
      \end{scope}
    \end{scope}
    \draw[freeze] (0.5 * \l, 1.5 * \l) -- (0, 2 * \l)
                  (0.5 * \l, 1.5 * \l) -- (\l, 2 * \l); 
    \begin{scope}[shift = {(0, \h - 2 * \l)}] 
      \begin{scope}
        \clip (0, 2 * \l) -- (0.5 * \l, 1.5 * \l) -- (\l, 2 * \l) -- cycle;
        \begin{scope}[xscale = -1, shift = {(-\l, 0)}]
          \mazoyerTwoLevels(\l);
        \end{scope}
      \end{scope}
      \draw[thawingThaw] (0.5 * \l, 1.5 * \l) -- (0, 2 * \l)
                  (0.5 * \l, 1.5 * \l) -- (\l, 2 * \l); 
    \end{scope}

    \begin{scope}[shift = {(\l, 0)}]
      \begin{scope} 
        \clip (0, 0) -- (0, 2 * \r) -- (0.5 * \r, 1.5 * \r) -- (\r, 2 * \r) -- (\r, 0) -- cycle;
        \mazoyerTwoLevels(\r);
      \end{scope}
      \draw[freeze] (0.5 * \r, 1.5 * \r) -- (0, 2 * \r)
                    (0.5 * \r, 1.5 * \r) -- (\r, 2 * \r); 
      \begin{scope}[shift = {(0, \h - 2 * \r)}] 
        \begin{scope} 
          \clip (0, 2 * \r) -- (0.5 * \r, 1.5 * \r) -- (\r, 2 * \r) -- cycle;
          \mazoyerTwoLevels(\r);
        \end{scope}
        \draw[thawingThaw] (0.5 * \r, 1.5 * \r) -- (0, 2 * \r)
                    (0.5 * \r, 1.5 * \r) -- (\r, 2 * \r); 
      \end{scope}
    \end{scope}

    \draw[slowed] (g) -- (mlf);
    \draw[slowed] (g) -- (mrf);
    \draw[slowed] (\l, \r + \r) -- (0.5 * \w, {\r + \r + 1.5 * (\l - \r)});

    \draw[white, semithick] (mlf) -- (lfg)
                            (mrf) -- (rfg);
    \draw[freeze] (mlf) -- (lfg)
                  (mlf) -- (lfb)
                  (mrf) -- (rfb)
                  (mrf) -- (rfg);

    \draw[traversingThaw] (m) -- (mlt)
                          (m) -- (mrt);

    \draw[white, semithick] (0, \h) -- (0.5 * \l, \h - 0.5 * \l) -- (\l, \h);
    \draw[white, semithick] (\w, \h) -- (\l + 0.5 * \r, \h - 0.5 * \r) -- (\w - \r, \h);
    \draw[thawingThaw] (0.5 * \l, \h - 0.5 * \l) -- (0, \h)
                       (0.5 * \l, \h - 0.5 * \l) -- (\l, \h);
    \draw[thawingThaw] (\l + 0.5 * \r, \h - 0.5 * \r) -- (\w, \h)
                       (\l + 0.5 * \r, \h - 0.5 * \r) -- (\w - \r, \h);

    \draw[midpoint] (mlf) -- (mlt)
                    (mrf) -- (mrt);

    \draw[vertex] (0, 0) -- (0, \h)
                  (g) -- (\l, \h)
                  (\w, 0) -- (\w, \h);

    \draw[edge] (0, 0) -- (g) -- (\w, 0)
                (0, \h) -- (\l, \h) -- (\w, \h);
  \end{tikzpicture}
}

\def\figureFSSPWithTwoEdgesAndGeneralInBetween{%
  \begin{tikzpicture}[> = To, x = 1cm, y = {-1cm}] 
    \pgfmathsetmacro{\l}{3} 
    \pgfmathsetmacro{\r}{2} 
    \pgfmathsetmacro{\o}{1.333} 
    \pgfmathsetmacro{\x}{max(\l, \r)} 
    \pgfmathsetmacro{\y}{min(\l, \r)} 
    \pgfmathsetmacro{\w}{\l + \r} 
    \pgfmathsetmacro{\h}{\x + \x + \y} 

    \begin{scope}[shift = {(\l, -1.5)}, scale = 0.5]
      \draw (0, 0) -- (-2.1213, 2.1213);
      \draw (0, 0) -- (1.414, 1.414);
    \end{scope}

    \coordinate (g) at (\l, 0); 
    \coordinate (lb) at (0, \l); 
    \coordinate (rb) at (\w, \r); 
    \coordinate (lg) at (\l, 2 * \l); 
    \coordinate (rg) at (\l, 2 * \r); 
    \coordinate (m) at (0.5 * \w, \x + 0.5 * \w); 
    \coordinate (mlf) at (0.5 * \l, \l + 0.5 * \l); 
    \coordinate (mlt) at (0.5 * \l, \h - 0.5 * \l); 
    \coordinate (mrf) at (\l + 0.5 * \r, \r + 0.5 * \r); 
    \coordinate (mrt) at (\l + 0.5 * \r, \h - 0.5 * \r); 
    \coordinate (lfb) at ($(mlf) + (- 0.5 * \l, 0.5 * \l)$); 
    \coordinate (rfb) at ($(mrf) + (0.5 * \r, 0.5 * \r)$); 
    \coordinate (lfg) at ($(mlf) + (0.5 * \l, 0.5 * \l)$); 
    \coordinate (rfg) at ($(mrf) + (- 0.5 * \r, 0.5 * \r)$); 

    \draw[syncLeft] (g) -- (lb) -- (lfb) -- (mlf) -- (lg) -- cycle;
    \draw[syncRight] (g) -- (rb) -- (rfb) -- (mrf) -- (rg) -- cycle;

    \draw[syncLeft, thawingThaw]
        (0, \h) -- (0.5 * \l, \h - 0.5 * \l)
                -- (\l, \h);
    \draw[syncRight, thawingThaw]
        (\w, \h) -- (\l + 0.5 * \r, \h - 0.5 * \r)
                 -- (\w - \r, \h);

    \draw[find] (g) -- (lb);
    \draw[reflected] (lb) -- (mlf);
    \draw[find] (g) -- (rb);
    \draw[reflected] (rb) -- (\x, \y + \y);
    \draw[slowed] (\x, \y + \y) --  (0.5 * \w, {\y + \y + 1.5 * (\x - \y)});

    \draw[slowed] (g) -- (mlf);
    \draw[slowed] (g) -- (mrf);

    \draw[midpoint] (mlf) -- (mlt)
                    (mrf) -- (mrt);

    \draw[traversingThaw] (m) -- (mlt)
                          (m) -- (mrt);

    \draw[freeze] (mlf) -- (lfg)
                  (mlf) -- (lfb)
                  (mrf) -- (rfb)
                  (mrf) -- (rfg);

    \draw[vertex] (0, 0) -- (0, \h)
                  (g) -- (\l, \h)
                  (\w, 0) -- (\w, \h);

    \draw[edge] (0, 0) -- (g) -- (\w, 0)
                (0, \h) -- (\l, \h) -- (\w, \h);
  \end{tikzpicture}
}

\def\figureFSSPWithTwoEdgesAndGeneralAtTheLeft{%
  \begin{tikzpicture}[> = To, x = 1cm, y = {-1cm}] 
    \pgfmathsetmacro{\l}{3} 
    \pgfmathsetmacro{\r}{2} 
    \pgfmathsetmacro{\o}{1.333} 
    \pgfmathsetmacro{\w}{\l + \r} 
    \pgfmathsetmacro{\h}{2 * \w} 

    \coordinate (g) at (0, 0); 
    \coordinate (v) at (\l, 0); 
    \coordinate (iv) at (\l, \l); 
    \coordinate (lb) at (\l, \l); 
    \coordinate (rb) at (\w, \l + \r); 
    \coordinate (lg) at (0, 2 * \l); 
    \coordinate (rv) at (\l, \l + 2 * \r); 
    \coordinate (m) at (0.5 * \w, \w + 0.5 * \w); 
    \coordinate (mlf) at (0.5 * \l, \l + 0.5 * \l); 
    \coordinate (mlt) at (0.5 * \l, \h - 0.5 * \l); 
    \coordinate (mrf) at (\l + 0.5 * \r, \l + \r + 0.5 * \r); 
    \coordinate (mrt) at (\l + 0.5 * \r, \h - 0.5 * \r); 
    \coordinate (lfv) at ($(mlf) + (0.5 * \l, 0.5 * \l)$); 
    \coordinate (rfb) at ($(mrf) + (0.5 * \r, 0.5 * \r)$); 
    \coordinate (lfg) at ($(mlf) + (- 0.5 * \l, 0.5 * \l)$); 
    \coordinate (rfg) at ($(mrf) + (- 0.5 * \r, 0.5 * \r)$); 

    \draw[syncLeft] (g) -- (lb) -- (lfv) -- (mlf) -- (lg) -- cycle;
    \draw[syncRight] (iv) -- (rb) -- (rfb) -- (mrf) -- (rv) -- cycle;

    \draw[syncLeft, thawingThaw]
        (0, \h) -- (0.5 * \l, \h - 0.5 * \l)
                -- (\l, \h);
    \draw[syncRight, thawingThaw]
        (\w, \h) -- (\l + 0.5 * \r, \h - 0.5 * \r)
                 -- (\w - \r, \h);

    \draw[find] (g) -- (lb);
    \draw[reflected] (lb) -- (mlf); 
    \draw[find] (iv) -- (rb);
    \draw[reflected] (rb) -- (rv) -- (m);

    \draw[traversingThaw] (m) -- (mlt);
    \draw[traversingThaw] (m) -- (mrt);

    \draw[midpoint] (mlf) -- (mlt);
    \draw[midpoint] (mrf) -- (mrt);

    \draw[slowed] (g) -- (mlf)
                      -- (\w/2, \h - \w/2);
    \draw[slowed] (iv) -- (mrf);

    \draw[freeze] (mlf) -- (lfg);
    \draw[freeze] (mlf) -- (lfv);
    \draw[freeze] (mrf) -- (rfb);
    \draw[freeze] (mrf) -- (rfg);

    \draw[vertex] (v) -- (\l, \h);
    \draw[vertex] (0, 0) -- (0, \h);
    \draw[vertex] (\w, 0) -- (\w, \h);

    \draw[edge] (0, 0) -- (v); 
    \draw[edge] (v) -- (\w, 0); 
    \draw[edge] (0, \h) -- (\l, \h); 
    \draw[edge] (\l, \h) -- (\w, \h); 
  \end{tikzpicture}
}

\def\drawMidpointAtOf(#1,#2,#3){%
  \begin{scope}[shift = #1]
    \begin{scope}
      \clip (0, 3pt) rectangle (-3pt, -3pt);
      \fill[#2, draw = black] (0, 0) circle (2pt);
    \end{scope}
    \begin{scope}
      \clip (0, -3pt) rectangle (3pt, 3pt);
      \fill[#3, draw = black] (0, 0) circle (2pt);
    \end{scope}
  \end{scope}
}

\def\drawMidpointAtOfThreeEdges(#1,#2,#3,#4){%
  \begin{scope}[shift = #1]
    \fill[#3, draw = none] (-1pt, -2pt) rectangle (1pt, 2pt);
    \draw[black] (-1pt, -2pt) -- (1pt, -2pt)
                 (-1pt, 2pt) -- (1pt, 2pt);
    \begin{scope}[shift = {(-1pt,0)}]
      \clip (0, 3pt) rectangle (-3pt, -3pt);
      \fill[#2, draw = black] (0, 0) circle (2pt);
    \end{scope}
    \begin{scope}[shift = {(1pt,0)}]
      \clip (0, -3pt) rectangle (3pt, 3pt);
      \fill[#4, draw = black] (0, 0) circle (2pt);
    \end{scope}
  \end{scope}
}

\def\figureFSSPWithThreeEdgesInARowAndGeneralAtTheSecondVertexFromTheLeft{%
  \begin{tikzpicture}[> = To, x = 1cm, y = {-1cm}] 
    \pgfmathsetmacro{\l}{1} 
    \pgfmathsetmacro{\b}{3} 
    \pgfmathsetmacro{\r}{2} 
    \pgfmathsetmacro{\w}{\l + \b + \r} 
    \pgfmathsetmacro{\h}{\w + \b + \r} 

    \coordinate (g) at (\l, 0); 
    \coordinate (m) at (\w/2, \h - \w/2); 

    \coordinate (mf2) at ({(\l + \b)/2}, {\b + (\l + \b)/2}); 
    \coordinate (ml2) at ({\l + (\b + \r)/2}, {1.5 * (\b + \r)}); 
    \coordinate (ml) at (\l/2, 1.5 * \l); 
    \coordinate (mr) at (\l + \b + \r/2, \b + 1.5 * \r); 
    \coordinate (mb) at (\l + \b/2, 1.5 * \b); 

    \coordinate (mf2e) at ({(\l + \b)/2}, {\h - (\l + \b)/2}); 
    \coordinate (ml2e) at ({\l + (\b + \r)/2}, {\h - (\b + \r)/2}); 
    \coordinate (mle) at (\l/2, \h - \l/2); 
    \coordinate (mre) at (\l + \b + \r/2, \h - \r/2); 
    \coordinate (mbe) at (\l + \b/2, \h - \b/2); 

    \draw[syncLeft] (g) -- ($(ml) + (\l/2, \l/2)$) -- (ml) -- ($(ml) + (-\l/2, \l/2)$) -- (0, \l) -- cycle;
    \draw[syncMiddle] (g) -- (\l + \b, \b) -- ($(mb) + (\b/2, \b/2)$) -- (mb) -- ($(mb) + (-\b/2, \b/2)$) -- cycle;
    \draw[syncRight] (\l + \b, \b) -- (\w, \b + \r) -- ($(mr) + (\r/2, \r/2)$) -- (mr) -- ($(mr) + (-\r/2, \r/2)$) -- cycle;

    \draw[find] (g) -- (0, \l);
    \draw[reflected] (0, \l) -- (\l, 2 * \l);
    \draw[slowed] (\l, 2 * \l) -- (m); 
    \draw[midpoint] (mf2) -- (mf2e); 

    \draw[find] (g) -- (\l + \b, \b);
    \draw[reflected] (\l + \b, \b) -- (mf2); 
    \draw[find] (\l + \b, \b) -- (\w, \b + \r);
    \draw[reflected] (\w, \b + \r) -- (m);
    \draw[slowed] (\l + \b, \b) -- (mr);
    \draw[midpoint] (mr) -- (mre);

    \draw[slowed] (g) -- (\l/2, 1.5 * \l);
    \draw[midpoint] (ml) -- (mle);

    \draw[slowed] (g) -- (mb) -- (ml2);
    \draw[midpoint] (mb) -- (mbe);
    \draw[midpoint] (ml2) -- (ml2e);

    \draw[freeze] (ml) -- ($(ml) + (\l/2, \l/2)$)
                  (ml) -- ($(ml) + (-\l/2, \l/2)$)
                  (mb) -- ($(mb) + (\b/2, \b/2)$)
                  (mb) -- ($(mb) + (-\b/2, \b/2)$)
                  (mr) -- ($(mr) + (\r/2, \r/2)$)
                  (mr) -- ($(mr) + (-\r/2, \r/2)$);

    \draw[traversingThaw] (m) -- (mf2e)
                          (m) -- (ml2e)
                          (mf2e) -- (mle)
                          (mf2e) -- (mbe)
                          (ml2e) -- (mbe)
                          (ml2e) -- (mre);

    \draw[syncLeft] (mle) -- ($(mle) + (\l/2, \l/2)$) -- ($(mle) + (-\l/2, \l/2)$);
    \draw[thawingThaw] (mle) -- ($(mle) + (\l/2, \l/2)$)
                       (mle) -- ($(mle) + (-\l/2, \l/2)$);
    \draw[syncMiddle] (mbe) -- ($(mbe) + (\b/2, \b/2)$) -- ($(mbe) + (-\b/2, \b/2)$);
    \draw[thawingThaw] (mbe) -- ($(mbe) + (\b/2, \b/2)$)
                       (mbe) -- ($(mbe) + (-\b/2, \b/2)$);
    \draw[syncRight] (mre) -- ($(mre) + (\r/2, \r/2)$) -- ($(mre) + (-\r/2, \r/2)$);
    \draw[thawingThaw] (mre) -- ($(mre) + (\r/2, \r/2)$)
                       (mre) -- ($(mre) + (-\r/2, \r/2)$);

    \drawMidpointAtOfThreeEdges((m),leftFill,overlayLeftFill,rightFill);
    \drawMidpointAtOf((mf2),leftFill,overlayLeftFill);
    \drawMidpointAtOf((mf2e),leftFill,overlayLeftFill);
    \drawMidpointAtOf((mr),rightFill,rightFill);
    \drawMidpointAtOf((mre),rightFill,rightFill);
    \drawMidpointAtOf((ml),leftFill,leftFill);
    \drawMidpointAtOf((mle),leftFill,leftFill);
    \drawMidpointAtOf((mb),overlayLeftFill,overlayLeftFill);
    \drawMidpointAtOf((mbe),overlayLeftFill,overlayLeftFill);
    \drawMidpointAtOf((ml2),overlayLeftFill,rightFill);
    \drawMidpointAtOf((ml2e),overlayLeftFill,rightFill);

    \draw[vertex] (0, 0) -- (0, \h)
                  (\l, 0) -- (\l, \h)
                  (\l + \b, 0) -- (\l + \b, \h)
                  (\w, 0) -- (\w, \h);
    \draw[edge, leftColour] (0, 0) -- (\l, 0)
                            (0, \h) -- (\l, \h);
    \draw[edge, overlayLeftColour] (\l, 0) -- (\l + \b, 0)
                              (\l, \h) -- (\l + \b, \h);
    \draw[edge, rightColour] (\l + \b, 0) -- (\w, 0)
                             (\l + \b, \h) -- (\w, \h);
  \end{tikzpicture}
}

\def\figureTheTreeThatIsSynchronised{%
  \begin{tikzpicture}[> = To, x = 1cm, y = {-1cm}]
    \draw[leftColour] (0, 0) -- (-0.707, 0.707);
    \draw[overlayLeftColour] (0, 0) -- (2.1213, 2.1213);
    \draw[rightColour] (2.1213, 2.1213) -- (2.1213, 4.1213);
  \end{tikzpicture}
}

\def\lengthsForFSSPWithThreeEdgesIncidentToOneVertex{%
  \pgfmathsetmacro{\l}{3} 
  \pgfmathsetmacro{\r}{2} 
  \pgfmathsetmacro{\o}{1.333} 
  \pgfmathsetmacro{\w}{\l + \r} 
}

\def\verticesAndEdgesForFSSPWithThreeEdgesIncidentToOneVertex{%
  \draw[vertex] (v) -- (\l, \h);
  \draw[vertex, leftColour] (0, 0) -- (0, \h);
  \draw[vertex, rightColour] (\w, 0) -- (\w, \h);
  \draw[vertex, overlayLeftColour] (\l - \o, 0) -- (\l - \o, \h);
  \draw[vertex, overlayRightColour] (\l + \o, 0) -- (\l + \o, \h);

  \draw[edge, leftColour] (0, 0) -- (v) 
                          (0, \h) -- (\l, \h); 
  \draw[edge, rightColour] (v) -- (\w, 0) 
                           (\l, \h) -- (\w, \h); 
  \draw[edge, overlayLeftColour] ($(v) - (0, 0.4pt)$) -- ($(v) + (-\o, 0) - (0, 0.4pt)$)
                                 ($(v) + (0, \h) - (0, 0.4pt)$) -- ($(v) + (-\o, \h) - (0, 0.4pt)$);
  \draw[edge, overlayRightColour] ($(v) - (0, 0.4pt)$) -- ($(v) + (\o, 0) - (0, 0.4pt)$)
                                  ($(v) + (0, \h) - (0, 0.4pt)$) -- ($(v) + (\o, \h) - (0, 0.4pt)$);
}

\def\figureFSSPWithThreeEdgesIncidentToTheGeneralVertex{%
  \begin{tikzpicture}[> = To, x = 1cm, y = {-1cm}] 
    \lengthsForFSSPWithThreeEdgesIncidentToOneVertex;
    \pgfmathsetmacro{\h}{\l + \l + \r} 

    \begin{scope}[shift = {(0.8, -0.8)}, scale = 0.2]
      \draw[leftColour] (0, 0) -- (-2.1213, 2.1213);
      \draw[overlayLeftColour] (0, 0) -- (0, 1.333);
      \draw[overlayRightColour] ($(0, 0) + (0.4pt, 0)$) -- ($(0, 1.333) + (0.4pt, 0)$);
      \draw[rightColour] (0, 0) -- (1.414, 1.414);
    \end{scope}
    \begin{scope}[shift = {(4.3, -0.8)}, scale = 0.2]
      \draw[leftColour] (0, 0) -- (-2.1213, 2.1213);
      \draw[overlayLeftColour] (0, 0) -- (0, -1.333);
      \draw[overlayRightColour] ($(0, 0) + (0.4pt, 0)$) -- ($(0, -1.333) + (0.4pt, 0)$);
      \draw[rightColour] (0, 0) -- (1.414, 1.414);
    \end{scope}

    \coordinate (v) at (\l, 0); 
    \coordinate (lb) at (0, \l); 
    \coordinate (rb) at (\w, \r); 
    \coordinate (obl) at (\l - \o, \o); 
    \coordinate (obr) at (\l + \o, \o); 
    \coordinate (lg) at (\l, 2 * \l); 
    \coordinate (rg) at (\l, 2 * \r); 
    \coordinate (og) at (\l, 2 * \o); 
    \coordinate (m) at (0.5 * \w, \l + 0.5 * \w); 
    \coordinate (m1) at ({0.5 * (\l + \o)}, {\l + 0.5 * (\l + \o)}); 
    \coordinate (m2) at ({\l - \o + 0.5 * (\r + \o)}, {\r + 0.5 * (\r + \o)}); 
    \coordinate (m1t) at ({0.5 * (\l + \o)}, {\h - 0.5 * (\l + \o)}); 
    \coordinate (m2t) at ({\l - \o + 0.5 * (\r + \o)}, {\h - 0.5 * (\r + \o)}); 
    \coordinate (mlf) at (0.5 * \l, \l + 0.5 * \l); 
    \coordinate (mlt) at (0.5 * \l, \h - 0.5 * \l); 
    \coordinate (mrf) at (\l + 0.5 * \r, \r + 0.5 * \r); 
    \coordinate (mrt) at (\l + 0.5 * \r, \h - 0.5 * \r); 
    \coordinate (mofl) at (\l - 0.5 * \o, \o + 0.5 * \o); 
    \coordinate (motl) at (\l - 0.5 * \o, \h - 0.5 * \o); 
    \coordinate (mofr) at (\l + 0.5 * \o, \o + 0.5 * \o); 
    \coordinate (motr) at (\l + 0.5 * \o, \h - 0.5 * \o); 
    \coordinate (lfb) at ($(mlf) + (- 0.5 * \l, 0.5 * \l)$); 
    \coordinate (rfb) at ($(mrf) + (0.5 * \r, 0.5 * \r)$); 
    \coordinate (ofbl) at ($(mofl) + (- 0.5 * \o, 0.5 * \o)$); 
    \coordinate (ofbr) at ($(mofr) + (0.5 * \o, 0.5 * \o)$); 
    \coordinate (lfg) at ($(mlf) + (0.5 * \l, 0.5 * \l)$); 
    \coordinate (rfg) at ($(mrf) + (- 0.5 * \r, 0.5 * \r)$); 
    \coordinate (ofg) at ($(mofr) + (- 0.5 * \o, 0.5 * \o)$); 

    \fill[syncLeft, leftColour, draw = none] (v) -- (lb) -- (lfb) -- (mlf) -- (lg) -- cycle;
    \fill[syncRight, rightColour, draw = none] (v) -- (rb) -- (rfb) -- (mrf) -- (rg) -- cycle;
    \fill[syncMiddle, overlayLeftColour, draw = none] (v) -- (obl) -- (ofbl) -- (mofl) -- (og) -- cycle;
    \fill[syncMiddle, overlayRightColour, draw = none] (v) -- (obr) -- (ofbr) -- (mofr) -- (og) -- cycle;

    \draw[find, leftColour] (v) -- (lb);
      \draw[reflected, leftColour] (lb) -- (m); 
    \draw[find, rightColour] (v) -- (rb);
      \draw[reflected, rightColour] (rb) -- (rg);
    \draw[find, overlayLeftColour] (v) -- (obl);
      \draw[reflected, overlayLeftColour] (obl) -- (og);
      \draw[slowed, rightColour] (og) -- (m2);
    \draw[find, overlayRightColour] (v) -- (obr);
      \draw[reflected, overlayRightColour] (obr) -- (og);
      \draw[slowed, leftColour] (og) -- (m1);

    \draw[thawingThaw, syncLeft, leftColour]
        (0, \h) -- (0.5 * \l, \h - 0.5 * \l)
                -- (\l, \h);
    \draw[thawingThaw, syncRight, rightColour]
        (\w, \h) -- (\l + 0.5 * \r, \h - 0.5 * \r)
                 -- (\w - \r, \h);
    \draw[thawingThaw, syncMiddle, overlayLeftColour]
        (\l, \h) -- (\l - 0.5 * \o, \h - 0.5 * \o)
                 -- (\l - \o, \h);
    \draw[thawingThaw, syncMiddle, overlayRightColour]
        (\l, \h) -- (\l + 0.5 * \o, \h - 0.5 * \o)
                 -- (\l + \o, \h);

    \draw[traversingThaw, leftColour] (m) -- (m1t);
      \draw[traversingThaw, leftColour] (m1t) -- (mlt);
      \draw[traversingThaw, leftColour] (m1t) -- ($(m1t) + ({\l - 0.5 * (\l + \o)}, {\l - 0.5 * (\l + \o)})$); 
          \draw[traversingThaw, overlayRightColour] ($(m1t) + ({\l - 0.5 * (\l + \o)}, {\l - 0.5 * (\l + \o)})$) -- (motr); 

    \draw[traversingThaw, leftColour] (m) -- ($(m) + ({\l - 0.5 * (\l + \r)}, {\l - 0.5 * (\l + \r)})$); 
      \draw[traversingThaw, rightColour] ($(m) + ({\l - 0.5 * (\l + \r)}, {\l - 0.5 * (\l + \r)})$) -- (m2t); 
          \draw[traversingThaw, rightColour] (m2t) -- (mrt); 
          \draw[traversingThaw, rightColour] (m2t) -- ($(m2t) + ({-(\l - \o + 0.5 * (\r + \o) - \l)}, {\l - \o + 0.5 * (\r + \o) - \l})$); 
              \draw[traversingThaw, overlayLeftColour] ($(m2t) + ({-(\l - \o + 0.5 * (\r + \o) - \l)}, {\l - \o + 0.5 * (\r + \o) - \l})$) -- (motl); 

    \draw[slowed, leftColour] (v) -- (mlf); 
    \draw[slowed, rightColour] (v) -- (mrf); 
    \draw[slowed, overlayLeftColour] ($(v) + (0.4pt, 0)$) -- ($(mofl) + (0.4pt, 0)$); 
    \draw[slowed, overlayRightColour] ($(v) - (0.4pt, 0)$) -- ($(mofr) - (0.4pt, 0)$); 
    \draw[slowed, leftColour] (\l, \r + \r) -- (0.5 * \w, {\r + \r + 1.5 * (\l - \r)});

    \draw[freeze, leftColour] (mlf) -- (lfg)
                              (mlf) -- (lfb);
    \draw[freeze, rightColour] (mrf) -- (rfb)
                               (mrf) -- (rfg);
    \draw[freeze, overlayLeftColour] (mofl) -- (ofbl)
                                     (mofl) -- (ofg);
    \draw[freeze, overlayRightColour] (mofr) -- (ofbr)
                                      (mofr) -- (ofg);

    \drawMidpointAtOf((m),leftFill,rightFill);
    \draw[midpoint, leftColour] (mlf) -- (mlt); 
        \drawMidpointAtOf((mlf),leftFill,leftFill);
        \drawMidpointAtOf((mlt),leftFill,leftFill);
    \draw[midpoint, rightColour] (mrf) -- (mrt); 
        \drawMidpointAtOf((mrf),rightFill,rightFill);
        \drawMidpointAtOf((mrt),rightFill,rightFill);
    \draw[midpoint, overlayLeftColour] (mofl) -- (motl); 
        \drawMidpointAtOf((mofl),overlayLeftFill,overlayLeftFill);
        \drawMidpointAtOf((motl),overlayLeftFill,overlayLeftFill);
    \draw[midpoint, overlayRightColour] (mofr) -- (motr); 
        \drawMidpointAtOf((mofr),overlayRightFill,overlayRightFill);
        \drawMidpointAtOf((motr),overlayRightFill,overlayRightFill);
    \draw[midpoint, leftColour] (m1) -- (m1t); 
        \drawMidpointAtOf((m1),leftFill,overlayRightFill);
        \drawMidpointAtOf((m1t),leftFill,overlayRightFill);
    \draw[midpoint, rightColour] (m2) -- (m2t); 
        \drawMidpointAtOf((m2),overlayLeftFill,rightFill);
        \drawMidpointAtOf((m2t),overlayLeftFill,rightFill);

    \verticesAndEdgesForFSSPWithThreeEdgesIncidentToOneVertex;
  \end{tikzpicture}
}

\def\figureFSSPWithThreeEdgesIncidentToTheSameVertexAndTheGeneralIsNotOnTheLongestPath{%
  \begin{tikzpicture}[> = To, x = 1cm, y = {-1cm}] 
    \lengthsForFSSPWithThreeEdgesIncidentToOneVertex
    \pgfmathsetmacro{\h}{\o + \l + \l + \r} 

    \coordinate (v) at (\l, 0); 

    \coordinate (gl) at (\l - \o, 0); 
    \coordinate (gr) at (\l + \o, 0); 
    \coordinate (lgv) at (\l, \o); 
    \coordinate (fov) at (\l, 2 * \o); 
    \coordinate (folg) at (\l - \o, 2 * \o); 
    \coordinate (forg) at (\l + \o, 2 * \o); 
    \coordinate (mol) at (\l - \o/2, 1.5 * \o); 
    \coordinate (mor) at (\l + \o/2, 1.5 * \o); 
    \path let \p1 = (mol) in coordinate (stol) at (\x1, \h - \o/2); 
    \path let \p1 = (mor) in coordinate (stor) at (\x1, \h - \o/2); 
    \coordinate (sv) at ($(mol) + (\o/2, 3 * \o/2)$); 

    \coordinate (lvl) at ($(lgv) + (-\l, \l)$); 
    \coordinate (rlv) at ($(lvl) + (\l, \l)$); 
    \coordinate (ml) at ($(lvl) + (\l/2, \l/2)$); 
    \coordinate (fll) at ($(ml) + (-\l/2, \l/2)$); 
    \coordinate (flv) at ($(ml) + (\l/2, \l/2)$); 
    \path let \p1 = (ml) in coordinate (stl) at (\x1, \h - \l/2); 

    \coordinate (lvr) at ($(lgv) + (\r, \r)$); 
    \coordinate (rfv) at ($(lvr) + (-\r, \r)$); 
    \coordinate (mr) at ($(lvr) + (-\r/2, \r/2)$); 
    \coordinate (frr) at ($(mr) + (\r/2, \r/2)$); 
    \coordinate (frv) at ($(mr) + (-\r/2, \r/2)$); 
    \path let \p1 = (mr) in coordinate (str) at (\x1, \h - \r/2); 

    \coordinate (mlo) at ({(\l + \o)/2}, {1.5 * (\l + \o)}); 
    \coordinate (mro) at ({\l - \o + (\o + \r)/2}, {1.5 * (\o + \r)}); 
    \coordinate (m) at ({(\l + \r)/2}, {\o + \l + (\l + \r)/2}); 
    \coordinate (stlo) at ({(\l + \o)/2}, {\h - (\l + \o)/2}); 
    \coordinate (stro) at ({\l - \o + (\o + \r)/2}, {\h - (\o + \r)/2}); 

    \coordinate (tv) at (\l, \h); 
    \coordinate (tr) at (\w, \h); 
    \coordinate (tl) at (0, \h); 
    \coordinate (tol) at (\l - \o, \h); 
    \coordinate (tor) at (\l + \o, \h); 
    \path let \p1 = (m) in coordinate (tmhv) at (\l, \y1 + \x1 - \l cm); 
    \path let \p1 = (stlo) in coordinate (tmlohv) at (\l, \y1 + \x1 - \l cm); 
    \path let \p1 = (stro) in coordinate (tmrohv) at (\l, \y1 - \x1 + \l cm); 

    \fill[syncLeft, leftColour, draw = none] (lgv) -- (lvl) -- (fll) -- (ml) -- (flv) -- cycle;
    \fill[syncRight, rightColour, draw = none] (lgv) -- (lvr) -- (frr) -- (mr) -- (frv) -- cycle;
    \fill[syncMiddle, overlayLeftColour, draw = none] (gl) -- (lgv) -- (fov) -- (mol) -- (folg) -- cycle;
    \fill[syncMiddle, overlayRightColour, draw = none] (gr) -- (lgv) -- (fov) -- (mor) -- (forg) -- cycle;

    \fill[syncLeft, leftColour, draw = none] (tl) -- (stl) -- (tv) -- cycle;
    \fill[syncRight, rightColour, draw = none] (tr) -- (str) -- (tv) -- cycle;
    \fill[syncMiddle, overlayLeftColour, draw = none] (tol) -- (stol) -- (tv) -- cycle;
    \fill[syncMiddle, overlayRightColour, draw = none] (tor) -- (stor) -- (tv) -- cycle;

    \draw[find, overlayLeftColour] (gl) -- (lgv); 
    \draw[find, rightColour] (lgv) -- (lvr); 
    \draw[reflected, overlayLeftColour] (lgv) -- (mol); 
    \draw[slowed, rightColour] (lgv) -- (mr); 
    \draw[reflected, rightColour] (lvr) -- ($(lvr) + (-\r, \r)$); 
        \draw[slowed, leftColour] ($(lvr) + (-\r, \r)$) -- (m); 

    \draw[find, overlayRightColour] (gr) -- (lgv); 
    \draw[find, leftColour] (lgv) -- (lvl); 
    \draw[reflected, overlayRightColour] (lgv) -- (mor); 
    \draw[slowed, leftColour] (lgv) -- (ml); 
    \draw[reflected, leftColour] (lvl) -- (m); 

    \draw[slowed, overlayLeftColour] (gl) -- (mol) -- (sv);
        \draw[slowed, rightColour] (sv) -- (mro);
    \draw[slowed, overlayRightColour] (gr) -- (mor) -- (sv);
        \draw[slowed, leftColour] (sv) -- (mlo);

    \draw[midpoint, leftColour] (ml) -- (stl)
                                (mlo) -- (stlo);
    \draw[midpoint, rightColour] (mr) -- (str)
                                 (mro) -- (stro);
    \draw[midpoint, overlayLeftColour] (mol) -- (stol);
    \draw[midpoint, overlayRightColour] (mor) -- (stor);

    \draw[freeze, leftColour] (ml) -- (fll)
                              (ml) -- (flv);
    \draw[freeze, rightColour] (mr) -- (frr)
                               (mr) -- (frv);
    \draw[freeze, overlayLeftColour] (mol) -- (fov)
                                     (mol) -- (folg);
    \draw[freeze, overlayRightColour] (mor) -- (fov)
                                      (mor) -- (forg);

    \draw[traversingThaw, leftColour] (m) -- (stlo) -- (stl);
    \draw[traversingThaw, leftColour] (m) -- (tmhv);
        \draw[traversingThaw, rightColour] (tmhv) -- (stro) -- (str);
    \draw[traversingThaw, leftColour] (stlo) -- (tmlohv);
        \draw[traversingThaw, overlayRightColour] (tmlohv) -- (stor);
    \draw[traversingThaw, rightColour] (stro) -- (tmrohv);
        \draw[traversingThaw, overlayLeftColour] (tmrohv) -- (stol);

    \draw[thawingThaw, leftColour] (stl) -- (tl)
                                   (stl) -- (tv);
    \draw[thawingThaw, rightColour] (str) -- (tr)
                                    (str) -- (tv);
    \draw[thawingThaw, overlayLeftColour] (stol) -- (tol)
                                          (stol) -- (tv);
    \draw[thawingThaw, overlayRightColour] (stor) -- (tor)
                                           (stor) -- (tv);

    \drawMidpointAtOf((ml),leftFill,leftFill);
        \drawMidpointAtOf((stl),leftFill,leftFill);
    \drawMidpointAtOf((mr),rightFill,rightFill);
        \drawMidpointAtOf((str),rightFill,rightFill);
    \drawMidpointAtOf((mol),overlayLeftFill,overlayLeftFill);
        \drawMidpointAtOf((stol),overlayLeftFill,overlayLeftFill);
    \drawMidpointAtOf((mor),overlayRightFill,overlayRightFill);
        \drawMidpointAtOf((stor),overlayRightFill,overlayRightFill);
    \drawMidpointAtOf((mlo),leftFill,overlayRightFill);
        \drawMidpointAtOf((stlo),leftFill,overlayRightFill);
    \drawMidpointAtOf((mro),overlayLeftFill,rightFill);
        \drawMidpointAtOf((stro),overlayLeftFill,rightFill);
    \drawMidpointAtOf((m),leftFill,rightFill);

    \verticesAndEdgesForFSSPWithThreeEdgesIncidentToOneVertex
  \end{tikzpicture}
}

\def\figureTheTimeAtWhichTheMidpointsOfLongestPathsAreFound{%
  \begin{tikzpicture}
    \fill (0, 1) circle (2pt) node[right] {$\general$};
    \fill (0, -1) circle (2pt) node[below] {$v''$};
    \fill (-3, -1) circle (2pt) node[below] {$\sourceOf(\hat{p})$};
    \fill (4, -1) circle (2pt) node[below] {$\targetOf(\hat{p})$};
    \draw (0, 1) -- (0, -1)
          (-3, -1) -- (0, -1) -- (4, -1);

    \fill (-2, 0.5) circle (2pt) node[left] {$v$};
    \fill (0, 0.5) circle (2pt) node[right] {$v'$};
    \draw[loosely dashed] (-2, 0.5) -- (0, 0.5);

    \fill (-2, -2) circle (2pt) node[below] {$v$};
    \fill (-1, -1) circle (2pt) node[above] {$v'$};
    \draw[thick, loosely dotted] (-2, -2) -- (-1, -1);

    \fill (3, -3) circle (2pt) node[below] {$v$};
    \fill (2, -1) circle (2pt) node[above] {$v'$};
    \draw[dashdotted] (3, -3) -- (2, -1);
  \end{tikzpicture}
}

\def\figureRemarkMidpointsOfMaximumWeightPathsAreRecognised{%
  \begin{tikzpicture}[tree/.style = {dashed, gray}]
    \coordinate (g) at (0, 0);
    \coordinate (u1) at (1, 0);
    \coordinate (u+) at (2, 0);
    \coordinate (uk) at (3, 0);
    \coordinate (hatv) at (4, 0);
    \coordinate (w1) at (4.5, -0.25);
    \coordinate (wl) at (5.25, -0.6125);
    \coordinate (hatv1) at (4 + 5, 5/2); 
    \coordinate (hatv2) at (4 + 8, -8/2); 
    \coordinate (m) at (4 + 1.47987, -1.47987/2); 

    \draw[tree] (g) -- ++(-0.8, -0.4) -- ++(0, 0.8) -- cycle;
        \draw[tree] (g) -- ++(-1, -0.7) -- ++(0, 0.9) -- cycle;
    \draw[tree] (u1) -- ++(-0.2, -1.5) -- ++(0.4, 0) -- cycle;
        \draw[tree] (u1) -- ++(-0.4, -1.7) -- ++(0.4, 0) -- cycle;
    \draw[tree] (uk) -- ++(-0.2, -2) -- ++(0.4, 0) -- cycle;
        \draw[tree] (uk) -- ++(-0.4, -2.2) -- ++(0.4, 0) -- cycle;
    \draw[tree] (hatv) -- ++(-0.2, -1.8) -- ++(0.4, 0) -- cycle;
        \draw[tree] (hatv) -- ++(-0.4, -2) -- ++(0.4, 0) -- cycle;
    \draw[tree] (w1) -- ++(-0.2, -1) -- ++(0.4, 0) -- cycle;
        \draw[tree] (w1) -- ++(-0.1, -1.1) -- ++(0.4, 0) -- cycle;
    \draw[tree] (wl) -- ++(-0.2, -2) -- ++(0.4, 0) -- cycle;
        \draw[tree] (wl) -- ++(-0.1, -2.1) -- ++(0.4, 0) -- cycle;
    \draw[tree] (hatv) -- ($(hatv1) + (-1, 1)$) -- ($(hatv1) - (-1, 1)$) -- cycle;
    \draw[tree] (hatv) -- ($(hatv2) + (-1, -1)$) -- ($(hatv2) - (-1, -1)$) -- cycle;

    \draw[dotted, thick] (g) -- (hatv);
    \draw (hatv) -- (hatv1)
          (hatv) -- (hatv2);

    \fill (g) circle (2pt) node[above] {$\general$};
    \fill (u1) circle (2pt) node[above] {$u_1$};
    \fill (u+) circle (0pt) node[above] {$\dotsb$};
    \fill (uk) circle (2pt) node[above] {$u_k$};
    \fill (hatv) circle (2pt) node[above] {$\hat{v}$};
    \fill (w1) circle (2pt) node[above right] {$w_1$};
    \fill (wl) circle (2pt) node[above right] {$w_\ell$}; 
    \fill (hatv1) circle (2pt) node[above right] {$\hat{v}_1$};
    \fill (hatv2) circle (2pt) node[below right] {$\hat{v}_2$};
    \fill[draw = black, fill = white] (m) circle (2pt) node[below] {$\hat{\midpoint}$};
  \end{tikzpicture}
}

\def\figureWhenDoesCountSignalMemoriseThePenultimateDirectionBaseCase{%
  \begin{tikzpicture}[tree/.style = {dashed, gray}]
    \coordinate (g) at (0, 0);
    \draw[tree] (g) -- ++(-0.8, -0.4) -- ++(0, 0.8) -- cycle;
        \draw[tree] (g) -- ++(-1, -0.7) -- ++(0, 0.9) -- cycle;
    \fill (g) circle (2pt) node[above] {$\general$};

    \coordinate (w) at (1.5, 0);
    \draw[tree] (w) -- ++(-0.2, -1.5) -- ++(0.4, 0) -- cycle;
        \draw[tree] (w) -- ++(-0.4, -1.7) -- ++(0.4, 0) -- cycle;
    \fill (w) circle (2pt) node[above] {$w$};

    \coordinate (v) at (5, 0);
    \fill (v) circle (2pt) node[below left] {$v$};
    \coordinate (v') at ($(v) + (60: 0.5)$);
    \fill (v') circle (2pt) node[left] {$v'$};
    \coordinate (v'') at ($(v) + (0: 1)$);
    \fill (v'') circle (2pt) node[below] {$v''$};
    \coordinate (v''') at ($(v) + (-60: 1.5)$);
    \fill (v''') circle (2pt) node[left] {$v'''$};

    \draw (g) -- (w)
          (w) -- (v)
          (v) -- (v')
          (v) -- (v'')
          (v) -- (v''');
  \end{tikzpicture}
}

\def\figureWhenDoesCountSignalMemoriseThePenultimateDirectionInductiveStep{%
  \begin{tikzpicture}[tree/.style = {dashed, gray}]
    \coordinate (g) at (0, 0);
    \draw[tree] (g) -- ++(-0.8, -0.4) -- ++(0, 0.8) -- cycle;
        \draw[tree] (g) -- ++(-1, -0.7) -- ++(0, 0.9) -- cycle;
    \fill (g) circle (2pt) node[above] {$\general$};

    \coordinate (w) at (1.5, 0);
    \draw[tree] (w) -- ++(-0.2, -1.5) -- ++(0.4, 0) -- cycle;
        \draw[tree] (w) -- ++(-0.4, -1.7) -- ++(0.4, 0) -- cycle;
    \fill (w) circle (2pt) node[above] {$w$};

    \coordinate (v) at (5, 0);
    \fill (v) circle (2pt) node[below left] {$v$};

    \coordinate (v') at ($(v) + (60: 0.5)$);
    \draw[tree] (v') -- ++(0.6, 1) -- ++(0.4, 0) -- cycle;
        \draw[tree] (v') -- ++(0.4, 1.1) -- ++(0.5, 0) -- cycle;
    \fill (v') circle (2pt) node[left] {$v'$};

    \coordinate (v'') at ($(v) + (0: 1)$);
    \draw[tree] (v'') -- ++(1, -0.3) -- ++(0, 0.8) -- cycle;
        \draw[tree] (v'') -- ++(1.2, -0.5) -- ++(0, 0.9) -- cycle;
    \fill (v'') circle (2pt) node[below] {$v''$};

    \coordinate (v''') at ($(v) + (-60: 1.5)$);
    \draw[tree] (v''') -- ++(0.2, -1) -- ++(0.4, 0) -- cycle;
        \draw[tree] (v''') -- ++(0.4, -1.1) -- ++(0.5, 0) -- cycle;
    \fill (v''') circle (2pt) node[left] {$v'''$};

    \draw (g) -- (w)
          (w) -- (v)
          (v) -- (v')
          (v) -- (v'')
          (v) -- (v''');
  \end{tikzpicture}
}

\def\figureWGeneralV{%
  \begin{tikzpicture}[tree/.style = {dashed, gray}]
    \coordinate (g) at (0, 0);
    \draw[tree] (g) -- ++(-0.8, -0.4) -- ++(0, 0.8) -- cycle;
        \draw[tree] (g) -- ++(-1, -0.7) -- ++(0, 0.9) -- cycle;
    \fill (g) circle (2pt) node[above] {$\general$};

    \coordinate (v) at (3, 0);
    \draw[tree] (v) -- ++(7, -1) -- ++(0, 1.7) -- cycle;
    \draw[tree] (v) -- ++(-0.2, -1.5) -- ++(0.4, 0) -- cycle;
        \draw[tree] (v) -- ++(-0.4, -1.7) -- ++(0.4, 0) -- cycle;
    \fill (v) circle (2pt) node[above] {$v$};

    \coordinate (w) at (2.5, 0);
    \draw[tree] (w) -- ++(-0.3, -0.7) -- ++(0.5, 0) -- cycle;
    \fill (w) circle (2pt) node[above] {$w$};

    \coordinate (w') at (2, 0);
    \draw[tree] (w') -- ++(-0.6, -1) -- ++(0.4, 0) -- cycle;
        \draw[tree] (w') -- ++(-0.4, -1.1) -- ++(0.5, 0) -- cycle;
    \fill (w') circle (2pt) node[above] {$w'$};

    \coordinate (w'') at (0.5, 0);
    \draw[tree] (w'') -- ++(-1, -2.3) -- ++(1.5, 0) -- cycle;
        \draw[tree] (w'') -- ++(-0.8, -1.1) -- ++(0.8, 0) -- cycle;
    \fill (w'') circle (2pt) node[above] {$w''$};

    \draw (g) -- (v);
  \end{tikzpicture}
}

\def\figureMarkedSignalsInnerVerticesInducitveStep{%
  \begin{tikzpicture}[tree/.style = {dashed, gray}]
    \coordinate (g) at (0, 0);
    \draw[tree] (g) -- ++(-0.8, -0.4) -- ++(0, 0.8) -- cycle;
        \draw[tree] (g) -- ++(-1, -0.7) -- ++(0, 0.9) -- cycle;
    \fill (g) circle (2pt) node[above] {$\general$};

    \coordinate (v) at (3, 0);
    \draw[tree] (v) -- ++(7, -1) -- ++(0, 1.7) -- cycle;
    \draw[tree] (v) -- ++(-0.2, -1.5) -- ++(0.4, 0) -- cycle;
        \draw[tree] (v) -- ++(-0.4, -1.7) -- ++(0.4, 0) -- cycle;
    \fill (v) circle (2pt) node[above] {$v$};

    \coordinate (v') at (2, 0);
    \draw[tree] (v') -- ++(-0.6, -1) -- ++(0.4, 0) -- cycle;
        \draw[tree] (v') -- ++(-0.4, -1.1) -- ++(0.5, 0) -- cycle;
    \fill (v') circle (2pt) node[above] {$v'$};

    \draw (g) -- (v')
          (v') -- (v);
  \end{tikzpicture}
}

\def\figureMidpointsOfMaximumWeightPathsAreRecognised{%
  \subfloat[$g = \hat{v} = \hat{\midpoint}$]{%
    \begin{tikzpicture}[tree/.style = {dashed, gray}]
      \coordinate (g) at (0, 0);
      \coordinate (v1) at (-0.5, 0);
      \coordinate (v2) at (1, 0);
      \coordinate (hatv1) at (-4, 0);
      \coordinate (hatv2) at (4, 0);

      \draw[tree] (v1) -- ($(hatv2) + (0, -1)$) -- ($(hatv2) + (0, 1)$) --node[above] {$\maxTree_{v_1}$} cycle;
      \draw[tree] (v2) -- ($(hatv1) + (0, -1.5)$) -- ($(hatv1) + (0, 1.5)$) --node[above] {$\maxTree_{v_2}$} cycle;

      \draw[dotted, thick]
          (hatv1) -- (v1)
          (hatv2) -- (v2);
      \draw (v1) -- (g) -- (v2);

      \fill (g) circle (2pt) node[above] {$\general$}; 
          \path (g) circle (2pt) node[below] {$\hat{\midpoint}$};
      \fill (v1) circle (2pt) node[above] {$v_1$};
      \fill (v2) circle (2pt) node[above] {$v_2$};
      \fill (hatv1) circle (2pt) node[left] {$\hat{v}_1$};
      \fill (hatv2) circle (2pt) node[right] {$\hat{v}_2$};
    \end{tikzpicture}
    \label{figure:midpoints-of-maximum-weight-paths-are-recognised:general-lies-on-path-and-midpoint-is-nearest-to-general}
  }

  \subfloat[$g = \hat{v} \neq \hat{\midpoint}$]{%
    \begin{tikzpicture}[tree/.style = {dashed, gray}]
      \coordinate (g) at (-2, 0);
      \coordinate (m) at (1.5, 0);
      \coordinate (v1) at (1, 0);
      \coordinate (v2) at (2.5, 0);
      \coordinate (hatv1) at (-4, 0);
      \coordinate (hatv2) at (7, 0);

      \draw[tree] (v1) -- ($(hatv2) + (0, -1)$) -- ($(hatv2) + (0, 1)$) --node[above] {$\maxTree_{v_1}$} cycle;
      \draw[tree] (v2) -- ($(hatv1) + (0, -1.5)$) -- ($(hatv1) + (0, 1.5)$) --node[above] {$\maxTree_{v_2}$} cycle;

      \draw[dotted, thick]
          (g) -- (v1)
          (hatv1) -- (g)
          (hatv2) -- (v2);
      \draw (v1) -- (v2);

      \fill (g) circle (2pt) node[above] {$\general$}; 
      \fill[draw = black, fill = white] (m) circle (2pt) node[below] {$\hat{\midpoint}$};
      \fill (v1) circle (2pt) node[above] {$v_1$};
      \fill (v2) circle (2pt) node[above] {$v_2$};
      \fill (hatv1) circle (2pt) node[left] {$\hat{v}_1$};
      \fill (hatv2) circle (2pt) node[right] {$\hat{v}_2$};
    \end{tikzpicture}
    \label{figure:midpoints-of-maximum-weight-paths-are-recognised:general-lies-on-path-and-midpoint-is-not-nearest-to-general}
  }

  \subfloat[$g \neq \hat{v} = \hat{\midpoint}$]{%
    \begin{tikzpicture}[tree/.style = {dashed, gray}]
      \coordinate (g) at (-1, 0);
      \coordinate (v1) at (0.5, 0.25);
      \coordinate (v2) at (1, -0.5);
      \coordinate (hatv) at (0, 0);
      \coordinate (hatv1) at (4, 2);
      \coordinate (hatv2) at (4, -2);

      \draw[tree] (g) -- ++(-0.8, -0.4) -- ++(0, 0.8) -- cycle;
          \draw[tree] (g) -- ++(-1, -0.7) -- ++(0, 0.9) -- cycle;


      \draw[dotted, thick]
          (g) -- (hatv)
          (hatv1) -- (v1)
          (hatv2) -- (v2);
      \draw (hatv) -- (v1)
            (hatv) -- (v2);

      \fill (g) circle (2pt) node[above] {$\general$}; 
      \fill (v1) circle (2pt) node[above] {$v_1$};
      \fill (v2) circle (2pt) node[above] {$v_2$};
      \fill (hatv) circle (2pt) node[above] {$\hat{v}$};
          \path (hatv) circle (2pt) node[below] {$\hat{\midpoint}$};
      \fill (hatv1) circle (2pt) node[right] {$\hat{v}_1$};
      \fill (hatv2) circle (2pt) node[right] {$\hat{v}_2$};
    \end{tikzpicture}
    \label{figure:midpoints-of-maximum-weight-paths-are-recognised:midpoint-is-nearest-to-general}
  }

  \subfloat[$g \neq \hat{v} \neq \hat{\midpoint}$]{%
    \begin{tikzpicture}[tree/.style = {dashed, gray}]
      \coordinate (g) at (0, 0);
      \coordinate (hatv) at (4, 0);
      \coordinate (v1) at (5.25, -0.6125);
      \coordinate (v2) at (6.5, -1.25);
      \coordinate (hatv1) at (4 + 5, 5/2); 
      \coordinate (hatv2) at (4 + 8, -8/2); 
      \coordinate (m) at (4 + 1.47987, -1.47987/2); 

      \draw[tree] (g) -- ++(-0.8, -0.4) -- ++(0, 0.8) -- cycle;
          \draw[tree] (g) -- ++(-1, -0.7) -- ++(0, 0.9) -- cycle;

      \draw[dotted, thick]
          (g) -- (hatv)
          (hatv) -- (v1)
          (hatv) -- (hatv1)
          (v2) -- (hatv2);
      \draw
          (v1) -- (m)
          (m) -- (v2);

      \fill (g) circle (2pt) node[above] {$\general$};
      \fill (hatv) circle (2pt) node[above] {$\hat{v}$};
      \fill (v1) circle (2pt) node[above] {$v_1$};
      \fill (v2) circle (2pt) node[above] {$v_2$};
      \fill (hatv1) circle (2pt) node[above right] {$\hat{v}_1$};
      \fill (hatv2) circle (2pt) node[below right] {$\hat{v}_2$};
      \fill[draw = black, fill = white] (m) circle (2pt) node[below] {$\hat{\midpoint}$};
    \end{tikzpicture}
    \label{figure:midpoints-of-maximum-weight-paths-are-recognised:midpoint-is-not-nearest-to-general}
  }
}


\begin{document}
  \frenchspacing
  \raggedbottom
  \selectlanguage{british}

  \pagestyle{plain}
  \title{\rmfamily\normalfont\spacedallcaps{Signal Machine And Cellular Automaton Time-Optimal Quasi-Solutions Of The Firing Squad/Mob Synchronisation Problem On Connected Graphs}} 
  \author{\spacedlowsmallcaps{Simon Wacker} \\ \small\href{mailto:simon.wacker@kit.edu}{simon.wacker@kit.edu}} 
  \date{} 


  \maketitle

  \begin{abstract}
    We construct a time-optimal quasi-solution of the firing mob synchronisation problem over finite, connected, and undirected multigraphs whose maximum degrees are uniformly bounded by a constant. It is only a quasi-solution because its number of states depends on the graph or, from another perspective, does not depend on the graph but is countably infinite. To construct this quasi-solution we introduce signal machines over continuum representations of such multigraphs and construct a signal machine whose discretisation is a cellular automaton that quasi-solves the problem. This automaton uses a time-optimal solution of the firing squad synchronisation problem in dimension one with one general at one end to synchronise edges, and freezes and thaws the synchronisation of edges in such a way that all edges synchronise at the same time.
  \end{abstract}


  \paragraph{Introduction.} The firing squad synchronisation problem in dimension one with one general at one end is to synchronise each finite one-dimensional array of cells starting from one end of the array and the cell at this end is called \emph{general}. It was proposed by John R. Myhill in 1957, solved by John McCarthy and Marvin Lee Minsky, and published by Edward Forrest Moore in 1962 (see \cite{moore:1964}). The first time-optimal several-thousand-states solution was found by Eiichi Goto in 1962 (see \cite{goto:1962}), reduced to $16$ states by Abraham Waksman in 1966 (see \cite{waksman:1966}), and reduced to $8$ states by Robert Balzer in 1967 (see \cite{balzer:1967}). Hein D. Gerken found another time-optimal $7$-states solution in 1987 (see \cite{gerken:1987}) and Jacques Mazoyer found a time-optimal $6$-states solution also in 1987 (see \cite{mazoyer:1987}). It is unknown whether there is a time-optimal $5$-states solution but it is known that there is no time-optimal $4$-states solution, a result due to Robert Balzer and Peter Sanders (see \cite{balzer:1967,sanders:1994}). 

  The firing mob synchronisation problem is to synchronise each finite, connected, and undirected graph whose maximum degree is bounded by a fixed constant starting from any vertex and this vertex is called \emph{general}. It was solved by P. Rosenstiehl, J.\;R. Fiksel, and A. Holliger in 1972 (see \cite{rosenstiehl:1972}) and also by Francesco Romani in 1976 (see \cite{romani:1976}), where the latter solution achieves better running times than the former. The problem for specific classes of graphs were for example studied by Kojiro Kobayashi in 1977 and 1978 (see \cite{kobayashi:1977,kobayashi2:1978,kobayashi:1978}) and by Zsuzsanna Róka in 2000 (see \cite{roka:2000}). Karel Culik II and Simant Dube presented a solution of the general case in 1991 (see \cite{culik-dube:1991}). It needs $3.5 r$-many steps, where $r$ is the maximal distance of the general to a vertex and is called \emph{radius of the graph with respect to the general}. By using more and more states, the solution can be adjusted such that the number of steps it needs approaches $3 r$.

  It was shown that $r + d$ is a lower bound for the number of steps that solutions of the firing mob synchronisation problem need by John J. Grefenstette in 1983 (see \cite{grefenstette:1983}), where $d$ is the maximal distance between two vertices of the graph and is called \emph{diameter of the graph}. Because there are graphs and choices of generals such that the diameter is $2 r$, the solutions by Karel Culik II and Simant Dube approach the optimal number of steps, namely $3 r$, if $r$ is taken as problem size. However, if $r + d$ is taken as problem size, then their solutions do not approach the optimal number of steps.

  In the present chapter we construct a time-optimal quasi-solution that needs exactly $r + d$ steps but whose number of states depends on the graph or, from another perspective, does not depend on the graph but is countably infinite (this is why we call it a quasi-solution). It can also be turned into a time-optimal quasi-solution of the firing squad synchronisation problem for any region in any dimension with one general at any position by regarding each region to be synchronised as a graph, where cells in the region are vertices and edges are neighbourhood relationships. 

  However, restricted to specific classes of problems, the quasi-solution may not be time-optimal. For example, restricted to rectangular regions with one general at one corner, the quasi-solution needs $2 (k + \ell - 2)$-many steps whereas $(k + \ell + \max\setOf{k, \ell} - 3)$-many steps is optimal (see for example \cite{umeo:2009}), where $k$ and $\ell$ are the side lengths of the rectangle. Nevertheless, because the quasi-solution is (trivially) embeddable in the sense of \cite{grefenstette:1983}, according to theorem~1 in \cite{grefenstette:1983}, it can be combined with finitely many embeddable time-optimal solutions for specific classes of problems to get one quasi-solution that is also time-optimal for those classes. Examples of solutions for specific classes, like rectangular regions with one general at the upper left corner, are given in sections~5 and~6 in \cite{grefenstette:1983}.

  To design, explain, and draw solutions of firing squad/mob synchronisation problems, it is convenient to think about, talk about, and draw continuous space-time diagrams of different kinds of signals that move across the cell space, vanish or give rise to new signals upon reaching boundaries or junctions of the space or upon colliding with each other. This is mostly done in an informal way, but the idea of signals has also been formalised for one-dimensional cellular automata by Jérôme Olivier Durand-Lose in 2005 (see \cite{durand-lose:2005}).

  This formalisation however does not handle accumulations of events like collisions and does not allow infinitely many signals of different speeds, which naturally occur and are necessary in descriptions of many solutions of the firing squad synchronisation problem by signals. For example, collisions accumulate at the time synchronisation finishes and infinitely many signals of different speeds may originate from the general. In the time evolutions of the actual cellular automata, the accumulations of collisions disappear due to the discreteness of space and time, and the infinitely many signals are cleverly produced by finitely many states (see for example \cite{mazoyer:1987}).

  Because we want to describe our quasi-solution in terms of signals in a formal way, we first introduce continuum representations of finite and connected multigraphs (without self-loops), we secondly introduce signal machines over such representations that allow infinitely many signals of different speeds and seamlessly handle accumulations of events and accumulations of accumulations of events and so forth, and we thirdly construct a signal machine for the continuous firing mob synchronisation problem over such representations and shortly note how to discretise it to get a cellular automaton quasi-solution of the firing mob synchronisation problem.

  \paragraph{Contents.} In \cref{section:firing-squad-problem} we state the firing squad and the firing mob synchronisation problems. In \cref{section:undirected-multigraphs} we introduce undirected multigraphs (without self-loops) and direction-preserving paths in such graphs, which are paths that do not make U-turns. In \cref{section:continuum-representation} we introduce continuum representations of undirected multigraphs, which are in a sense drawings of graphs in a high-dimensional Euclidean space. In \cref{section:signal-machines} we introduce signal machines, which can be studied in their own right, but which can also be thought of as high-level views of time evolutions of cellular automata over graphs, like cellular automata over finitely right generated cell spaces, that are restricted to configurations with a fixed finite support. In \cref{section:firing-squad-solution} we construct a signal machine whose discretisation is a cellular automaton that quasi-solves the firing mob synchronisation problem in $(r + d)$-many steps. And in \cref{section:proof-sketch} we sketch a proof for that statement. The impatient may right now have a look at the continuous space-time diagrams of the synchronisations of small trees as performed by the quasi-solution: See \cref{figure:MazoyerWithOneAndTwoEdges,figure:fsspWithTwoEdgesAndGeneralInBetweenAndAtTheLeft,figure:fsspWithThreeEdgesInARowAndGeneralAtTheSecondVertexFromTheLeft,figure:fsspWithThreeEdgesIncidentToTheSameVertexAndGeneralAtTheVertexAndAtTheLeafOfTheShortestEdge} on \cpageref{figure:MazoyerWithOneAndTwoEdges,figure:fsspWithTwoEdgesAndGeneralInBetweenAndAtTheLeft,figure:fsspWithThreeEdgesInARowAndGeneralAtTheSecondVertexFromTheLeft,figure:fsspWithThreeEdgesIncidentToTheSameVertexAndGeneralAtTheVertexAndAtTheLeafOfTheShortestEdge}.

  \paragraph{Preliminary Notions.} The affinely extended real numbers\graffito{affinely extended real numbers $\overline{\R}$} $\R \cup \setOf{-\infty, +\infty}$ are denoted by $\overline{\R}$\index[symbols]{Roverline@$\overline{\R}$}. For each tuple $(r, r') \in \overline{\R} \times \overline{\R}$ such that $r \leq r'$, the closed, open, and the two half-open extended real intervals\graffito{extended real intervals $\closedInterval{r, r'}$, $\openInterval{r, r'}$, and $[r, r'[$ and $]r, r']$} with the endpoints $r$ and $r'$ are denoted by $\closedInterval{r, r'}$, $\openInterval{r, r'}$, and $[r, r'[$ and $]r, r']$ respectively. And, for each tuple $(z, z') \in \Z \times \Z$ such that $z \leq z'$, the closed integer-valued interval\graffito{closed integer-valued interval $\discreteInterval{z}{z'}$} with the endpoints $z$ and $z'$, namely $\setOf{z, z + 1, \dotsc, z'}$ or equivalently $\closedInterval{z, z'} \cap \Z$, is denoted by $\discreteInterval{z}{z'}$. 

  \section{The Firing Squad/Mob Synchronisation Problems}
  \label{section:firing-squad-problem}

  We formally state the problems for generalised cellular automata as introduced in \cite{wacker:automata:2016} over spaces as introduced in \cite{wacker:growth:2017}. However, you do not need to be familiar with these automata and spaces to understand the problems on an intuitive level or to understand the quasi-solutions in detail. Such an automaton is essentially a cellular automaton over a vertex-transitive graph, which means that each vertex of the graph is a copy of a fixed possibly-infinite-state machine, its inputs are the states of the vertex's neighbours, and all vertices change their states synchronously.

  In this section, let $\mathcal{R} = \ntuple{\ntuple{M, G, \actsOnPoint}, \ntuple{m_0, \family{g_{m_0, m}}_{m \in M}}}$ be a finitely right generated cell space, let $N$ be a finite right generating set of $\mathcal{R}$ that contains $G_0$, where $G_0$ is the stabiliser of $m_0$ under $\actsOnPoint$, let $\mathcal{G} = \ntuple{M, E}$ be the coloured $N$-Cayley graph of $\mathcal{R}$, let $\mathcal{C}$ be a semi-cellular automaton over $\mathcal{R}$ with state set $Q$, neighbourhood $N$, and local transition function $\delta$, and let $\Delta$ be the global transition function of $\mathcal{C}$. 

  To state the problems succinctly we introduce the notions of passive subsets of states, dead states, supports of global configurations with respect to a distinguished dead state, and what it means for a global configuration to be of the form of a pattern in the following four definitions.

  \begin{definition}
    Let $P$ be a subset of $Q$. It is called \define{passive}\graffito{passive set of states} if and only if, for each local configuration $\ell \in Q^N$ with $\imageOf(\ell) \subseteq P$, we have $\delta(\ell) = \ell(G_0)$.
  \end{definition}

  \begin{definition}
    Let $q$ be a state of $Q$. It is called \define{dead}\graffito{dead state} if and only if, for each local configuration $\ell \in Q^N$ with $\ell(G_0) = q$, we have $\delta(\ell) = q$.
  \end{definition}

  In the remainder of this section, let $Q$ contain a distinguished dead state named $\deadState$.

  \begin{definition}
    Let $c$ be a global configuration of $Q^M$. The set $\supportOf(c) = M \smallsetminus c^{-1}(\deadState)$ is called \define{support of $c$}\graffito{support $\supportOf(c)$ of $c$}\index[symbols]{suppc@$\supportOf(c)$}.
  \end{definition}

  \begin{definition} 
    Let $A$ be a subset of $M$, let $p$ be a pattern of $Q^A$, and let $c$ be a global configuration of $Q^M$. The global configuration $c$ is said to \define{be of the form $p$}\graffito{global configuration is of the form $p$} if and only if there is an element $g \in G$ such that $c\restrictedTo_{g \actsOnPoint A} = g \actsOnMap p$ and $c\restrictedTo_{M \smallsetminus (g \actsOnPoint A)} \equiv \deadState$. 
  \end{definition}


  We state the firing squad synchronisation problem in

  \begin{definition} 
    Let $\deadState$, $\generalState$, $\soldierState$, and $\fireState$ be four distinct states, and let $Q'$ be the set that consists of those states. A solution of the \graffito{firing squad synchronisation problem in dimension one with one general at the left end}\define{firing squad synchronisation problem in dimension one with one general at the left end} is a cellular automaton $\mathcal{C}$ over $\ntuple{\ntuple{\Z, \Z, +}, \ntuple{0, \family{z}_{z \in \Z}}}$ with neighbourhood $\setOf{-1, 0, 1}$ and finite set of states that includes $Q'$ such that the state $\deadState$ is dead and the set $\setOf{\deadState, \soldierState}$ is passive, and whose global transition function $\Delta$ has the following property:

    For each global configuration $c$ with finite support of the form $\generalState \soldierState \soldierState \dotsb \soldierState$, there is a non-negative integer $k$ such that the global configuration $\Delta^k(c)$ is of the form $\fireState \fireState \dotsb \fireState$ and has the same support as $c$, and such that the state $\fireState$ does not occur in any of the global configurations $\Delta^j(c)$, for $j \in \N_0$ with $j < k$.
  \end{definition}

  \begin{remark}
    Let $\mathcal{C}$ be a solution of the above problem, let $c$ be a global configurations of the form $\generalState \soldierState \soldierState \dotsb \soldierState$, and let $k$ be the non-negative integer from the problem definition. Then, because the state $\deadState$ is dead and the support of $\Delta^k(c)$ is the same as the one of $c$, for each non-negative integer $j$ with $j \leq k$, the support of $\Delta^j(c)$ is the same as the one of $c$. Broadly speaking, in the time evolution of solutions, the support of initial configurations can neither shrink nor grow before synchronisation is finished. Moreover, because the set $\setOf{\deadState, \soldierState}$ is passive, if the support of $c$ consists of at least $3$ cells, then $\Delta(c)$ cannot be of the form $\fireState \fireState \dotsb \fireState$. Broadly speaking, the problem cannot be solved trivially.
  \end{remark}

  \begin{remark}
  \label{rem:one-dimensional-array}
    As mentioned above, for each global configuration $c$ of the form $\generalState \soldierState \soldierState \dotsb \soldierState$, the supports of the global configurations that are observable in the time evolutions that begin in the configuration $c$ of cellular automata that solve the above problem, are included in the support of $c$. Hence, we can regard such cellular automata as automata over one-dimensional arrays with one dummy neighbour in the state $\deadState$ at each end.
  \end{remark}

  \begin{remark}
    The above problem can be generalised in many ways. For example, by allowing the general to be placed anywhere or by allowing more than one general.
  \end{remark}

  We state the firing mob synchronisation problem in

  \begin{definition} 
    Let $\deadState$, $\generalState$, $\soldierState$, and $\fireState$ be four distinct states, and let $Q'$ be the set that consists of those states. A solution of the \graffito{firing mob synchronisation problem in $\mathcal{R}$ with respect to $S$}\define{firing mob synchronisation problem in $\mathcal{R}$ with respect to $S$} is a semi-cellular automaton over $\mathcal{R}$ with neighbourhood $S$ and finite set of states that includes $Q'$ such that the state $\deadState$ is dead and the set $\setOf{\deadState, \soldierState}$ is passive, and whose global transition function $\Delta$ has the following property:

    For each finite subset $A$ of $M$ such that the subgraph $\mathcal{G}[A]$ of $\mathcal{G}$ induced by $A$ is connected, each element $a \in A$, each pattern $p \in Q^A$ such that $p(a) = \generalState$ and $p\restrictedTo_{A \smallsetminus \setOf{a}} \equiv \soldierState$, and each global configuration $c$ of the form $p$, there is a non-negative integer $k$ such that the global configuration $\Delta^k(c)$ is of the form $A \to Q$, $a \mapsto \fireState$, and such that the state $\fireState$ does not occur in any of the global configurations $\Delta^j(c)$, for $j \in \N_0$ with $j < k$.
  \end{definition}

  \begin{remark}
    The firing squad synchronisation problem with one general at an arbitrary position is the firing mob synchronisation problem in $\ntuple{\ntuple{\Z, \Z, +}, \ntuple{0, \family{z}_{z \in \Z}}}$ with respect to $\setOf{-1, 0, 1}$. Note that the notions of semi-cellular and cellular automata are identical over $\ntuple{\Z, \Z, +}$.
  \end{remark}

  \begin{remark}
    Each semi-cellular automaton over $\mathcal{R}$ with neighbourhood $S$ is equivalent to a cellular automaton over the coloured $S$-Cayley graph of $\mathcal{R}$ acted upon by its automorphism group, in the sense that, for each of the former kind of automata, there is one of the latter kind with the same global transition function, and vice versa. Note that the stabilisers of coloured $S$-Cayley graphs of $\mathcal{R}$ are trivial, and hence the notions of semi-cellular and cellular automata are identical over such graphs.
  \end{remark}

  \begin{remark}
    We can regard semi-cellular automata that solve the above problem as semi-cellular automata over subgraphs of $\mathcal{G}$ that are induced by finite subsets of $M$ with one dummy neighbour in the state $\deadState$ at each edge that leads out of the subgraph. Note that, because the graph $\mathcal{G}$ is of bounded degree, the maximum degrees of the subgraphs it induces are uniformly bounded by a constant.
  \end{remark}

  \begin{remark}
    Ideally we would like an abstract description of a semi-cellular automaton that does not depend on any specifics of $\mathcal{R}$ and $S$ and that yields a solution for each choice of $\mathcal{R}$ and $S$ or at least for as huge a class of such choices as possible.
  \end{remark}

  \section{Undirected Multigraphs}
  \label{section:undirected-multigraphs}

  Undirected multigraphs without self-loops are introduced in

  \begin{definition} 
    Let $\Vertices$ and $\Edges$ be two disjoint sets, and let $\eendsOf$ be a map from $\Edges$ to $\setOf{\setOf{v, v'} \subseteq \Vertices \suchThat v \neq v'}$. The triple $\Graph = \ntuple{\Vertices, \Edges, \eendsOf}$ is called \define{undirected multigraph}\graffito{undirected multigraph $\Graph = \ntuple{\Vertices, \Edges, \eendsOf}$}\index[symbols]{Gcalligraphic@$\Graph$}\index[symbols]{V@$\Vertices$}\index[symbols]{E@$\Edges$}; each element $v \in \Vertices$ is called \define{vertex}\graffito{vertex $v$}\index[symbols]{v@$v$}; each element $e \in \Edges$ is called \define{edge}\graffito{edge $e$}\index[symbols]{e@$e$}; and, for each edge $e \in \Edges$, each vertex of $\eendsOf(e)$ is called \define{end of $e$}\graffito{ends $\eendsOf(e)$ of $e$}\index[symbols]{epsilonvare@$\eendsOf(e)$}.
  \end{definition}

  \begin{remark}
    Because each set in the codomain of $\eendsOf$ consists of exactly two distinct vertices, there are no self-loops in the undirected multigraph $\Graph$. With minor modifications the theory and the automata presented in this chapter also work if there are self-loops. They were merely excluded to make the presentation a little simpler. 
  \end{remark}

  In the remainder of this section, let $\Graph = \ntuple{\Vertices, \Edges, \eendsOf}$ be an undirected multigraph.

  What being finite means for multigraphs is said in

  \begin{definition}
    The multigraph $\Graph$ is called \define{finite}\graffito{finite multigraph} if and only if the sets $\Vertices$ and $\Edges$ are both finite.
  \end{definition}

  Isolated vertices are the ones without incident edges as introduced in

  \begin{definition}
    Let $v$ be a vertex of $\Graph$. It is called \define{isolated}\graffito{isolated vertex} if and only if, for each edge $e \in \Edges$, we have $v \notin \eendsOf(e)$.
  \end{definition}

  Directed edges are edges with distinguished source and target vertices as introduced in

  \begin{definition} 
    Let $e$ be an edge of $\Graph$, and let $v_1$ and $v_2$ be two vertices of $\Graph$ such that $\setOf{v_1, v_2} = \eendsOf(e)$. The triple $\direct{e} = (v_1, e, v_2)$ is called \define{directed edge from $v_1$ through $e$ to $v_2$}\graffito{directed edge $\direct{e}$ from $v_1$ through $e$ to $v_2$}\index{edge!directed}\index[symbols]{earrowontop@$\direct{e}$}; the vertex $\sourceOf(\direct{e}) = v_1$ is called \define{source of $\direct{e}$}\graffito{source $\sourceOf(\direct{e})$ of $\direct{e}$}\index[symbols]{sigmaearrowontop@$\sourceOf(\direct{e})$}; the edge $\bedOf(\direct{e}) = e$ is called \define{bed of $\direct{e}$}\graffito{bed $\bedOf(\direct{e})$ of $\direct{e}$}\index[symbols]{betaearrowontop@$\bedOf(\direct{e})$}; and the vertex $\targetOf(\direct{e}) = v_2$ is called \define{target of $\direct{e}$}\graffito{target $\targetOf(\direct{e})$ of $\direct{e}$}\index[symbols]{tauearrowontop@$\targetOf(\direct{e})$}.
  \end{definition}

  At each vertex there is an empty path that starts and ends at the vertex, and non-empty paths are concatenations of directed edges with matching source and target vertices as introduced in

  \begin{definition}
    \begin{aenumerate}
      \item Let $v$ be a vertex of $\Graph$. The singleton $p = (v)$ is called \define{empty path in $v$}\graffito{empty path $(v)$ in $v$}\index{path!empty}\index[symbols]{vleftparenrightparen@$(v)$}, the vertex $\sourceOf(p) = \targetOf(p) = v$ is called \defineX{source}{source of $p$}\index[symbols]{sigmap@$\sourceOf(p)$} and \define{target of $p$}\graffito{source $\sourceOf((v))$ and target $\targetOf((v))$ of $(v)$}\index[symbols]{taup@$\targetOf(p)$}, and the non-negative integer $\lengthOfPath{p} = 0$ is called \define{length of $p$}\graffito{length $\lengthOfPath{(v)}$ of $(v)$}\index[symbols]{absolutep@$\lengthOfPath{p}$}.
      \item Let $n$ be a positive integer and, for each index $i \in \discreteInterval{1}{n}$, let $\direct{e}_i$ be a directed edge of $\Graph$ such that, if $i \neq 1$, then $\sourceOf(\direct{e}_i) = \targetOf(\direct{e}_{i - 1})$. The $(2n + 1)$-tuple $p = (\sourceOf(\direct{e}_1), \bedOf(\direct{e}_1), \targetOf(\direct{e}_1), \dotsc, \bedOf(\direct{e}_n), \targetOf(\direct{e}_n))$ is called \define{path from $\sourceOf(\direct{e}_1)$ to $\targetOf(\direct{e}_n)$}\graffito{path $p$ from $\sourceOf(\direct{e}_1)$ to $\targetOf(\direct{e}_n)$}\index[symbols]{p@$p$}; the vertex $\sourceOf(p) = \sourceOf(\direct{e}_1)$ is called \define{source of $p$}\graffito{source $\sourceOf(p)$ of $p$}\index[symbols]{sigmap@$\sourceOf(p)$}; the vertex $\targetOf(p) = \targetOf(\direct{e}_n)$ is called \define{target of $p$}\graffito{target $\targetOf(p)$ of $p$}\index[symbols]{taup@$\targetOf(p)$}; and the positive integer $\lengthOfPath{p} = n$ is called \define{length of $p$}\graffito{length $\lengthOfPath{p}$ of $p$}\index[symbols]{absolutep@$\lengthOfPath{p}$}.
      \item The set of paths is denoted by $\Paths$\graffito{set $\Paths$ of paths}\index[symbols]{P aths@$\Paths$}. \qedhere
    \end{aenumerate}
  \end{definition}

  \begin{remark}
    Each directed edge is a path of length $1$.
  \end{remark}

  Subpaths are connected parts of paths as introduced in

  \begin{definition}
    Let $p = (v_0, e_1, v_1, \dotsc, e_n, v_n)$ be a path of $\Graph$, and let $k$ and $\ell$ be two indices of $\discreteInterval{0}{n}$ such that $k \leq \ell$. The path $(v_k)$, if $k = \ell$, or $(v_k, e_{k + 1}, v_{k + 1}, \dotsc, e_\ell, v_\ell)$, otherwise, is called \define{subpath of $p$}\graffito{subpath $p_{\discreteInterval{k}{\ell}}$ of $p$} and is denoted by $p_{\discreteInterval{k}{\ell}}$\index[symbols]{pklsubscript@$p_{\discreteInterval{k}{\ell}}$}.
  \end{definition}

  Direction-preserving paths are the ones without U-turns as introduced in

  \begin{definition}
    Let $p = (v_0, e_1, v_1, \dotsc, e_n, v_n)$ be a path of $\Graph$. It is called \define{direction-preserving}\graffito{direction-preserving} if and only if, for each index $i \in \discreteInterval{1}{n - 1}$, we have $e_i \neq e_{i + 1}$. The \graffito{set $\Paths_{\directionPreserving}$ of direction-preserving paths}set of direction-preserving paths is denoted by $\Paths_{\directionPreserving}$\index[symbols]{Parrowrightshortsubscript@$\Paths_{\directionPreserving}$}.
  \end{definition}


  Two paths with matching target and source vertices can be concatenated as introduced in

  \begin{definition}
    Let $p = (v_0, e_1, v_1, \dotsc, e_n, v_n)$ and $p' = (v_0', e_1', v_1',\allowbreak \dotsc, e_{n'}', v_{n'}')$ be two paths of $\Graph$ such that $v_n = v_0'$. The path $p \concat p' = (v_0, e_1, v_1, \dotsc, e_n, v_n, e_1', v_1', \dotsc, e_{n'}', v_{n'}')$ is called \graffito{concatenation of $p$ and $p'$}\define{concatenation $p \concat p'$ of $p$ and $p'$}\index[symbols]{bullet@$\bullet$}.
  \end{definition}

  \begin{remark}
    Each non-empty path is the concatenation of directed edges.
  \end{remark}

  What being connected means for multigraphs is said in

  \begin{definition}
    The multigraph $\Graph$ is called \define{connected}\graffito{connected multigraph} if and only if, for each tuple $(v, v') \in \Vertices \times \Vertices$, there is a path from $v$ to $v'$.
  \end{definition}

  \begin{remark}
    The multigraph $\Graph$ is connected if and only if, for each tuple $(v, v') \in \Vertices \times \Vertices$, there is a direction-preserving path from $v$ to $v'$.
  \end{remark}


  We can assign weights to edges as done in

  \begin{definition} 
    Let $\weightOf$ be a map from $\Edges$ to $\R_{> 0}$. It is called \graffito{edge weighting $\weightOf$ of $\Graph$}\define{edge weighting of $\Graph$}\index[symbols]{omega@$\weightOf$}, and, for each edge $e \in E$, the element $\weightOf(e)$ is called \define{edge weight of $e$}\graffito{edge weight $\weightOf(e)$ of $e$}\index[symbols]{omegae@$\weightOf(e)$}.
  \end{definition}


  Edge weights induce weights of paths as introduced in

  \begin{definition}
    Let $\weightOf$ be an edge weighting of $\Graph$ and let $p = (v_0, e_1, v_1, \dotsc, e_n, v_n)$ be a path of $\Graph$. The sum $\weightOf(p) = \sum_{i = 1}^n \weightOf(e_i)$ is called \define{weight of $p$}\graffito{weight $\weightOf(p)$ of $p$}\index[symbols]{omegap@$\weightOf(p)$}. 
  \end{definition}

  \begin{remark}
    Each empty path has weight $0$.
  \end{remark}

  \begin{remark}
    Each directed edge has the same weight as its bed.
  \end{remark}

  \section{Continuum Representation}
  \label{section:continuum-representation}

  In this section, let $\Graph = \ntuple{\Vertices, \Edges, \eendsOf}$ be an undirected multigraph without isolated vertices and let $\weightOf$ be an edge weighting of $\Graph$. 

  An orientation is a choice of source and target vertices for each edge as introduced in

  \begin{definition}
    Let $\sigma$ and $\tau$ be two maps from $\Edges$ to $\Vertices$ such that, for each edge $e \in \Edges$, we have $\setOf{\sigma(e), \tau(e)} = \eendsOf(e)$. The tuple $(\sigma, \tau)$ is called \define{orientation of $\Graph$}\graffito{orientation $(\sigma, \tau)$ of $\Graph$}\index[symbols]{sigmatautuple@$(\sigma, \tau)$}.
  \end{definition}

  Realising weighted edges as disjoint intervals and gluing these intervals together at shared ends yields a continuum representation of $\Graph$ and is done in

  \begin{definition}
  \label{definition:continuum-representation}
    Let $(\sigma, \tau)$ be an orientation of $\Graph$, let 
    \begin{align*}
      \zeta \from \R \smallsetminus \setOf{0} &\to \setOf{\sigma, \tau}, \mathnote{map $\zeta$ from $\R \smallsetminus \setOf{0}$ to $\setOf{\sigma, \tau}$}\index[symbols]{zeta@$\zeta$}\\
      r &\mapsto \begin{dcases*}
                   \sigma, &if $r < 0$,\\
                   \tau,   &if $r > 0$,
                 \end{dcases*}
    \end{align*}
    let
    \begin{equation*}
      \left\{
        \begin{aligned}
          \phi \from \Edges &\to \R_{< 0}, \index[symbols]{phi@$\phi$}\\
          e &\mapsto - \frac{\weightOf(e)}{2},
        \end{aligned}
      \right\}
      \text{ and }
      \left\{
        \begin{aligned}
          \psi \from \Edges &\to \R_{> 0}, \index[symbols]{psi@$\psi$}\\
          e &\mapsto \frac{\weightOf(e)}{2},
        \end{aligned}
      \right\}
      \mathnote{maps $\phi$ and $\psi$ from $\Edges$ to $\R_{< 0}$ and $\R_{> 0}$}
    \end{equation*}
    and let $\sim$\graffito{equivalence relation $\sim$ on $\R \times \Edges$}\index[symbols]{tilde@$\sim$} be the equivalence relation on $\R \times \Edges$ such that, for each tuple $(r, e) \in \R \times \Edges$ and each tuple $(r', e') \in \R \times \Edges$, 
    \begin{align*}
      (r, e) \sim (r', e') \ifAndOnlyIf
      &r \in \setOf{\phi(e), \psi(e)}\\
      &\land r' \in \setOf{\phi(e'), \psi(e')}\\
      &\land \zeta(r)(e) = \zeta(r')(e').
    \end{align*}
    The set $\continuumRepresentationOf{\Graph} = (\bigcup_{e \in \Edges} \closedInterval{\phi(e), \psi(e)} \times \setOf{e}) \modulo {\sim}$ is called \graffito{continuum representation $\continuumRepresentationOf{\Graph}$ of $\Graph$}\define{continuum representation of $\Graph$}\index[symbols]{Gcalligraphicbar@$\continuumRepresentationOf{\Graph}$}.
  \end{definition} 

  \begin{remark}
    Each weighted edge is realised as a closed interval whose length is the edge's weight. These intervals are made disjoint by taking the Cartesian product with the respective edge. And they are glued together at shared ends by taking the set of all these disjoint intervals modulo the equivalence relation $\sim$. The vertices are implicitly realised as end points or junctions of the glued disjoint intervals.
  \end{remark}

  \begin{remark}
    If the graph $\Graph$ contained isolated vertices, then they would not be represented in $\continuumRepresentationOf{\Graph}$. With minor modifications the theory presented in this chapter also works if there are isolated vertices. They were merely excluded to make the presentation a little simpler.
  \end{remark}

  In the remainder of this section, let $\continuumGraph$ be a continuum representation of $\Graph$ with respect to an orientation $(\sigma, \tau)$, and let $\zeta$, $\phi$, $\psi$, and $\sim$ be the maps and the equivalence relation from \cref{definition:continuum-representation}.

  Vertices are canonically embedded into $\continuumGraph$ as is done in

  \begin{definition}
    The map
    \begin{align*}
      \continuumRepresentationOf{\phantom{v}} \from V &\to \continuumGraph, \mathnote{vertex embedding $\continuumRepresentationOf{\phantom{v}}$}\index[symbols]{vbar@$\continuumRepresentationOf{v}$}\\
      \sourceOf(e) &\mapsto \equivalenceClassOf{(\phi(e), e)}_\sim,\\
      \targetOf(e) &\mapsto \equivalenceClassOf{(\psi(e), e)}_\sim,
    \end{align*}
    embeds vertices of $\Graph$ into $\continuumGraph$. Its image is denoted by $\continuumVertices$\graffito{vertices $\continuumVertices$}\index[symbols]{Vfraktur@$\continuumVertices$} and each element $\mathfrak{v} \in \continuumVertices$ is called \define{vertex}\graffito{vertex $\mathfrak{v}$}\index[symbols]{vfraktur@$\mathfrak{v}$}. 
  \end{definition}

  \begin{remark}
    The embedding is well-defined due to the definition of the equivalence relation $\sim$. 
  \end{remark}

  Edges are canonically embedded into the power set of $\continuumGraph$ as is done in

  \begin{definition}
    The map
    \begin{align*}
      \continuumRepresentationOf{\phantom{e}} \from E &\to \powerSetOf(\continuumGraph), \mathnote{edge embedding $\continuumRepresentationOf{\phantom{e}}$}\index[symbols]{ebar@$\continuumRepresentationOf{e}$}\\
      e &\mapsto ([\phi(e), \psi(e)] \times \setOf{e}) \modulo {\sim},
    \end{align*}
    embeds edges of $\Graph$ into $\continuumGraph$. Its image is denoted by $\continuumEdges$\graffito{edges $\continuumEdges$}\index[symbols]{Efraktur@$\continuumEdges$} and each element $\mathfrak{e} \in \continuumEdges$ is called \define{edge}\graffito{edge $\mathfrak{e}$}\index[symbols]{efraktur@$\mathfrak{e}$}. 
  \end{definition}

  At each point of $\continuumGraph$ there is at least one direction to move: In a vertex of degree $k$, there are $k$ directions; and on an edge but not in one of its endpoints, there are $2$ directions. An inefficient but immediate way to represent these directions is as in

  \begin{definition}
    The set $\setOf{-1, 1} \times \Edges$ is denoted by $\Directions$\graffito{set $\Directions$ of directions}\index[symbols]{Dir@$\Directions$}, each element $d = (o, e) \in \Directions$ is called \defineX{direction on $e$}{direction}\graffito{direction $d$ on $e$}\index[symbols]{direction@$d$}, the element $o$ is called \define{orientation of $d$}\graffito{orientation $o$ of $d$}\index[symbols]{orientation@$o$}, the involution
    \begin{align*}
      \reverse \from \Directions &\to \Directions, \mathnote{orientation reversing involution $\reverse$}\index[symbols]{minus@$\reverse$}\\
      (o, e) &\mapsto (-o, e),
    \end{align*}
    reverses the orientation of directions, and the map
    \begin{align*}
      \directionOf \from \continuumGraph &\to \powerSetOf(\Directions), \mathnote{map $\directionOf$ that assigns directions}\index[symbols]{direction@$\directionOf$}\\
      \equivalenceClassOf{(r, e)}_\sim &\mapsto
        \left\{\begin{aligned} 
          &\setOf{(-1, e), (1, e)}, \text{ if $r \in \openInterval{\phi(e), \psi(e)}$},\\ 
          &\setOf{(- \signOf(r'), e') \suchThat (r', e') \in \equivalenceClassOf{(r, e)}_\sim}, \text{ otherwise}, 
        \end{aligned}\right.
    \end{align*}
    assigns to each point in $\continuumGraph$ the set of possible directions in which someone standing on that point can move.
  \end{definition}

  \begin{remark}
  \label{remark:efficient-representation-of-directions}
    This representation of directions is inefficient in the following sense: If we stand on an edge but not on one of its endpoints, then the orientation is enough directional information; and if we stand on a vertex, then the edge is enough directional information, because the orientation is implicit in the fact that we can only move onto the edge but not off it since we are in one end of the edge. On a vertex we do not even need the edge itself but only an identifier for the edge that is locally unique; for example, we could colour the edges such that no two edges of the same colour are incident to the same vertex and use this colour instead.
  \end{remark} 

  Like vertices, paths of $\Graph$ are also canonically embedded into $\continuumGraph$ and each embedding can be unit-speed parametrised by the interval from $0$ to the path's weight as is inductively done in

  \begin{definition}
    The map 
    \begin{align*}
      \continuumRepresentationOf{\phantom{p}} \from \Paths &\to \continuumGraph^{\setOf{\closedInterval{0, r} \suchThat r \in \R_{\geq 0}}}, \mathnote{path embedding $\continuumRepresentationOf{\phantom{p}}$}\index[symbols]{pbar@$\continuumRepresentationOf{p}$}\\
      (v) &\mapsto \left[
                     \begin{aligned}
                       \closedInterval{0, 0} &\to \continuumGraph,\\
                       r &\mapsto \continuumRepresentationOf{v}
                     \end{aligned}
                   \right]\\
      (\sourceOf(e), e, \targetOf(e)) &\mapsto \left[
                                             \begin{aligned}
                                               \closedInterval{0, \weightOf(e)} &\to \continuumGraph,\\
                                               r &\mapsto \equivalenceClassOf{(\phi(e) + r, e)}_\sim,
                                             \end{aligned}
                                           \right]\\
      (\targetOf(e), e, \sourceOf(e)) &\mapsto \left[
                                             \begin{aligned}
                                               \closedInterval{0, \weightOf(e)} &\to \continuumGraph,\\
                                               r &\mapsto \equivalenceClassOf{(\psi(e) - r, e)}_\sim,
                                             \end{aligned}
                                           \right]\\
      (v_0, e_1, v_1) \concat p' &\mapsto \left[ 
                                            \begin{aligned}
                                              &\closedInterval{0, \weightOf((v_0, e_1, v_1) \concat p')} \to \continuumGraph,\\
                                              &r \mapsto \begin{dcases*}
                                                           \continuumRepresentationOf{(v_0, e_1, v_1)}(r), &if $r \leq \weightOf(e_1)$,\\
                                                           \continuumRepresentationOf{p'}(r - \weightOf(e_1)), &otherwise,
                                                         \end{dcases*} 
                                            \end{aligned}
                                          \right]
    \end{align*}
    maps paths of $\Graph$ to unit-speed parametrisations of them in $\continuumGraph$.
  \end{definition}

  \begin{remark}
    The base cases of the inductive definition do not overlap because there are no self-loops, and the inductive step is well-defined because $\weightOf((v_0, e_1, v_1) \concat p') = \weightOf(e_1) + \weightOf(p')$.
  \end{remark}

  The images $\continuumRepresentationOf{\Paths}$ and $\continuumRepresentationOf{\Paths_{\directionPreserving}}$ consist broadly speaking of paths and direction-preserving paths in $\continuumGraph$ from vertices to vertices that only change direction at vertices. Restricting the parametrisation intervals of paths in $\continuumRepresentationOf{\Paths_{\directionPreserving}}$ to subintervals and doing a reparametrisation such that the new parametrisation starts at $0$ yields all direction-preserving paths in $\continuumGraph$ and is done in

  \begin{definition}
    The set
    \begin{equation*}
      \setOf{\continuumRepresentationOf{p}\restrictedTo_{\closedInterval{r, s}}(\blank + r) \suchThat p \in \Paths_{\directionPreserving} \text{ and } r, s \in \closedInterval{0, \weightOf(p)} \text{ with } r \leq s}
    \end{equation*} 
    is denoted by $\continuumPaths_{\directionPreserving}$\graffito{set $\continuumPaths_{\directionPreserving}$ of direction-preserving paths $\mathfrak{p}$}\index[symbols]{Parrowrightsubscriptfraktur@$\continuumPaths_{\directionPreserving}$}; each element $\mathfrak{p} \in \continuumPaths_{\directionPreserving}$ is called \define{direction-preserving path}\index[symbols]{pfraktur@$\mathfrak{p}$}, the length of the interval $\domainOf(\mathfrak{p})$ is called \define{length of $\mathfrak{p}$}\graffito{length $\length(\mathfrak{p})$ of $\mathfrak{p}$} and is denoted by $\length(\mathfrak{p})$\index[symbols]{omegapfraktur@$\length(\mathfrak{p})$}, the point $\sourceOf(\mathfrak{p}) = \mathfrak{p}(0)$ is called \define{source of $\mathfrak{p}$}\graffito{source $\sourceOf(\mathfrak{p})$ of $\mathfrak{p}$}\index[symbols]{sigmapfraktur@$\sourceOf(\mathfrak{p})$}, the point $\targetOf(\mathfrak{p}) = \mathfrak{p}(\length(\mathfrak{p}))$ is called \define{target of $\mathfrak{p}$}\graffito{target $\targetOf(\mathfrak{p})$ of $\mathfrak{p}$}\index[symbols]{taupfraktur@$\targetOf(\mathfrak{p})$}, and the path $\mathfrak{p}$ is called \define{empty}\graffito{empty path} if and only if $\length(\mathfrak{p}) = 0$.
  \end{definition} 

  \begin{remark}
    Doing the same with the paths in $\continuumRepresentationOf{\Paths}$ does not yield all paths in $\continuumGraph$ but only those that change direction at vertices and not on edges. Because we only need direction-preserving paths in what is to come, we do not define what a general path on $\continuumGraph$ is.
  \end{remark}

  \begin{remark}
    Sources and targets of direction-preserving paths in $\continuumGraph$ are in general not vertices.
  \end{remark}

  The distance between two points is the length of the shortest path between the points as introduced in

  \begin{definition}
    The map
    \begin{align*}
      \distanceOf \from \continuumGraph \times \continuumGraph &\to \R_{\geq 0} \cup \setOf{\infty}, \mathnote{distance $\distanceOf$}\index[symbols]{d@$\distanceOf$}\\
      (\mathfrak{m}, \mathfrak{m}') &\mapsto \inf\setOf{\length(\mathfrak{p}) \suchThat \mathfrak{p} \in \continuumPaths_{\directionPreserving} \text{ with } \sourceOf(\mathfrak{p}) = \mathfrak{m} \text{ and } \targetOf(\mathfrak{p}) = \mathfrak{m}'} 
    \end{align*}
    is called \define{distance}, where the infimum of the empty set is infinity.
  \end{definition}

  \begin{remark}
    If the graph $\Graph$ is finite and connected, then the distance map $\distanceOf$ is a metric. Otherwise, it may not be a metric. For example, if there are two distinct vertices $v$, $v' \in \Vertices$ such that, for each $n \in \N_+$, there is an edge $e \in \Edges$ whose weight is $1/n$, then the distance of $\continuumRepresentationOf{v}$ and $\continuumRepresentationOf{v'}$ is $0$ although $\continuumRepresentationOf{v} \neq \continuumRepresentationOf{v'}$. Or, if the graph $\Graph$ is not connected, then there are two points $\mathfrak{m}$, $\mathfrak{m}' \in \continuumGraph$ whose distance is $\infty$.
  \end{remark}

  Each non-zero vector of a vector space is uniquely determined by its magnitude and its direction, and the zero vector is already uniquely determined by its magnitude, which is $0$, and can be thought of as pointing in every direction, which can be represented by the set of directions. A generalisation of vector spaces is given in

  \begin{definition} 
    Let $\every$ be the set $\Directions$. The set
    \begin{equation*}
      \Arrows = \setOf{(0, \every)} \cup (\R_{> 0} \times \Directions) \mathnote{arrow space $\Arrows$}\index[symbols]{Arr@$\Arrows$}
    \end{equation*}
    is called \define{arrow space}; each element $a \in \Arrows$ is called \define{arrow}\graffito{arrow $a$}\index[symbols]{a@$a$}; the set $\every$ is called \define{semi-direction}\graffito{semi-direction $\every$}\index{direction!semi-}\index[symbols]{vry@$\every$}; for each element $a = (r, d) \in \Arrows$, the real number $\normOf{a} = r$ is called \define{magnitude of $a$}\graffito{magnitude $\normOf{a}$ of $a$}\index[symbols]{norma@$\normOf{a}$}, and the (semi-)direction $\directionOf(a) = d$ is called \defineX{(semi-)direction of $a$}{direction of $a$ semi@(semi-)direction of $a$}\graffito{(semi-)direction $\directionOf(a)$ of $a$}\index{semi-direction of $a$@(semi-)direction of $a$}\index[symbols]{dira@$\directionOf(a)$}.
  \end{definition}

  Arrow spaces will be used to represent both velocities, which are directed speeds, and directed distances. Multiplying a velocity by a time yields a directed distance. This scalar multiplication is introduced in

  \begin{definition}
    The map
    \begin{align*}
      \multipli \from \Arrows \times \R_{\geq 0} &\to \Arrows, \mathnote{scalar multiplication $\multipli$}\index[symbols]{dotcentre@$\multipli$}\index[symbols]{centredot@$\multipli$}\\
      ((r, d), s) &\mapsto \begin{dcases*}
                             (0, \every), &if $s = 0$,\\
                             (r \multipli s, d), &if $s > 0$,
                           \end{dcases*}
    \end{align*}
    is called \define{scalar multiplication}.
  \end{definition}

  When we stand at the beginning of a non-empty direction-preserving path and walk along it until we reach its end, we start our walk on the first edge of the path in a certain direction and we end it on the last edge of the path in a certain direction. These directions are introduced in

  \begin{definition}
    Let $\mathfrak{p}$ be a direction-preserving path of $\continuumPaths_{\directionPreserving}$. If $\mathfrak{p}$ is empty, let $\directionOf_\sourceOf(\mathfrak{p}) = \every$ and let $\directionOf_\targetOf(\mathfrak{p}) = \every$.

    Otherwise, there are two edges $e_\sourceOf$, $e_\targetOf \in E$, which may be the same, and there are two positive real numbers $\xi_\sourceOf$, $\xi_\targetOf \in \leftOpenAndRightClosedInterval{0, \length(\mathfrak{p})}$ such that $\mathfrak{p}(\closedInterval{0, \xi_\sourceOf}) \subseteq \continuumRepresentationOf{e}_\sourceOf$ and $\mathfrak{p}(\closedInterval{\length(\mathfrak{p}) - \xi_\targetOf, \length(\mathfrak{p})}) \subseteq \continuumRepresentationOf{e}_\targetOf$. Moreover, there are four real numbers $r_\sourceOf$, $r_\sourceOf'$, $r_\targetOf$, and $r_\targetOf'$ such that $\equivalenceClassOf{(r_\sourceOf, e_\sourceOf)}_\sim = \mathfrak{p}(0)$, $\equivalenceClassOf{(r_\sourceOf', e_\sourceOf)}_\sim = \mathfrak{p}(\xi_\sourceOf)$, $\equivalenceClassOf{(r_\targetOf, e_\targetOf)}_\sim = \mathfrak{p}(\length(\mathfrak{p}) - \xi_\targetOf)$, and $\equivalenceClassOf{(r_\targetOf', e_\targetOf)}_\sim = \mathfrak{p}(\length(\mathfrak{p}))$. Let $\directionOf_\sourceOf(\mathfrak{p}) = (\signOf(r_\sourceOf' - r_\sourceOf), e_\sourceOf)$ and let $\directionOf_\targetOf(\mathfrak{p}) = (\signOf(r_\targetOf' - r_\targetOf), e_\targetOf)$.

    In both cases, the (semi-)direction $\directionOf_\sourceOf(\mathfrak{p})$ is called \graffito{source direction $\directionOf_\sourceOf(\mathfrak{p})$ of $\mathfrak{p}$}\define{source direction of $\mathfrak{p}$}\index[symbols]{dirsigmapfraktur@$\directionOf_\sourceOf(\mathfrak{p})$} and the (semi-)direction $\directionOf_\targetOf(\mathfrak{p})$ is called \define{target direction of $\mathfrak{p}$}\graffito{target direction $\directionOf_\targetOf(\mathfrak{p})$ of $\mathfrak{p}$}\index[symbols]{dirtaupfraktur@$\directionOf_\targetOf(\mathfrak{p})$}.
  \end{definition} 

  \begin{remark}
    The edge $e_\sourceOf$ is the first edge of the path $\mathfrak{p}$ and the edge $e_\targetOf$ is its last edge. The numbers $\xi_\sourceOf$ and $\xi_\targetOf$ are two positive real numbers such that the first $\xi_\sourceOf$ length units of the path run on its first edge and the last $\xi_\targetOf$ length units of the path run on its last edge. The numbers $r_\sourceOf$ and $r_\sourceOf'$ are the positions of the path on its first edge at its very beginning and after $\xi_\sourceOf$ length units, and the numbers $r_\targetOf$, and $r_\targetOf'$ are the positions of the path on its last edge $\xi_\targetOf$ length units before its end and at its very end. Therefore, the signum of $r_\sourceOf' - r_\sourceOf$ is the start direction on the first edge of the path and the signum of $r_\targetOf' - r_\targetOf$ is the end direction on the last edge of the path.
  \end{remark}

  \begin{remark}
    For each non-empty path $\mathfrak{p} \in \continuumPaths_{\directionPreserving}$, we have $\directionOf_\sourceOf(\mathfrak{p}) \in \directionOf(\sourceOf(\mathfrak{p}))$ and $\directionOf_\targetOf(\mathfrak{p}) \in \reverse \directionOf(\targetOf(\mathfrak{p}))$.
  \end{remark}

  \section{Signal Machines}
  \label{section:signal-machines}

  In this section, let $\Graph = \ntuple{\Vertices, \Edges, \eendsOf}$ be a non-trivial, finite, and connected undirected multigraph, let $\weightOf$ be an edge weighting of $\Graph$, and let $M$ be a continuum representation of $\Graph$. Recall that, according to our definition of undirected multigraphs, there are no self-loops in $\Graph$. To motivate the definitions in this section, we talk as if there were a signal machine in front of us whose time evolution we can observe, although this evolution is not completely defined until the end of this section. 

  If you observe the time evolution of a signal machine on the graph $M$, you see signals of different kinds and various speeds each carrying some data move along edges. When signals collide, they may be reflected, removed, new signals may be created, and so on. Similarly, when signals reach a vertex, they may be removed, copies of them may be sent onto all incident edges, new signals may be created, and so on. You may also see stationary signals and signals that travel side-by-side at the same speed. What happens when signals collide or reach a vertex is decided by two local transition functions, one that handles such \defineX{events}{event}\graffito{event} in vertices and one that handles them on edges.

  The only events on edges are collisions. In each collision on an edge, there are at least two signals involved, the involved signals are either stationary or they move in one of the two possible directions, and at least two of the signals collide head-on or rear-end. Such a collision results in a set of signals that are either stationary or move in one of the two possible directions.

  A vertex may be reached by just one signal or multiple signals may collide in it. In both cases, there is at least one signal involved, the involved signals are either stationary or they moved towards the vertex just before the event, and at least one signal is moving. Such an event results in a set of signals that are either stationary or move away from the vertex along incident edges.

  \begin{definition}
    Let $\Kinds$\graffito{set $\Kinds$}\index[symbols]{Knd@$\Kinds$} be a set, let $\speedOf$\graffito{map $\speedOf$} be a map from $\Kinds$ to $\R_{\geq 0}$, let $\family{\Data_k}_{k \in \Kinds}$\graffito{family $\family{\Data_k}_{k \in \Kinds}$} be a family of sets, let\graffito{set $\Signals$}\graffito{set $\Directions_e$} 
    \begin{equation*}
      \Signals = \setOf{(k, d, u) \suchThat k \in \Kinds \text{, } (\speedOf(k), d) \in \Arrows \text{, and } u \in \Data_k}, \index[symbols]{Sgnl@$\Signals$}
    \end{equation*}
    let $\Directions_e = \setOf{\setOf{d, \reverse d} \suchThat d \in \Directions}$\index[symbols]{Dire@$\Directions_e$}, let
    \begin{align*}
      \domainOf(\localTransitionFunction_e) = \{&S \in \powerSetOf(\Signals) \suchThat
          \Exists D \in \Directions_e \SuchThat
              \cardinalityOf{S} \geq 2 \text{ and }\mathnote{set $\domainOf(\localTransitionFunction_e)$}\index[symbols]{domdeltae@$\domainOf(\localTransitionFunction_e)$}\\ 
              &\ForEach (k, d, u) \in S \Holds d \in \setOf{\every} \cup D \text{ and }\\ 
              &\Exists (k, d, u) \in S \Exists (k', d', u') \in S \SuchThat
                  \begin{aligned}[t]
                    &d \neq d' \text{ or }\\
                    &\speedOf(k) \neq \speedOf(k')\},
                  \end{aligned} 
    \end{align*}
    let $\localTransitionFunction_e$\graffito{map $\localTransitionFunction_e$}\index[symbols]{deltae@$\localTransitionFunction_e$} be a map from $\domainOf(\localTransitionFunction_e)$ to $\powerSetOf(\Signals)$ such that
    \begin{equation*}
      \ForEach S \in \domainOf(\localTransitionFunction_e) \Exists D \in \Directions_e \SuchThat \ForEach (k, d, u) \in S \cup \localTransitionFunction_e(S) \Holds d \in \setOf{\every} \cup D,
    \end{equation*}
    let $\Directions_v = \directionOf(\continuumVertices)$\graffito{set $\Directions_v$}\index[symbols]{Dirv@$\Directions_v$}, let
    \begin{align*}
      \domainOf(\localTransitionFunction_v) = \setOf{&(D, S) \in \Directions_v \times \powerSetOf(\Signals) \suchThat
          \cardinalityOf{S} \geq 1 \text{ and }\mathnote{set $\domainOf(\localTransitionFunction_v)$}\index[symbols]{domdeltav@$\domainOf(\localTransitionFunction_v)$}\\
          &\ForEach (k, d, u) \in S \Holds d \in \setOf{\every} \cup \reverse D \text{ and }\\ 
          &\Exists (k, d, u) \in S \SuchThat \speedOf(k) > 0},
    \end{align*}
    and let $\localTransitionFunction_v$\graffito{map $\localTransitionFunction_v$}\index[symbols]{deltav@$\localTransitionFunction_v$} be a map from $\domainOf(\localTransitionFunction_v)$ to $\powerSetOf(\Signals)$ such that
    \begin{equation*}
      \ForEach (D, S) \in \domainOf(\localTransitionFunction_v) \ForEach (k, d, u) \in \localTransitionFunction_v(D, S) \Holds d \in \setOf{\every} \cup D.
    \end{equation*}

    The quadruple $\mathcal{S} = \ntuple{\Kinds, \speedOf, \family{\Data_k}_{k \in \Kinds}, (\localTransitionFunction_e, \localTransitionFunction_v)}$\graffito{signal machine $\mathcal{S}$}\index[symbols]{Scalligraphic@$\mathcal{S}$} is called \define{signal machine}; each element $k \in \Kinds$ is called \define{kind}\graffito{kind $k$}\index[symbols]{k@$k$}; for each kind $k \in \Kinds$, the non-negative real number $\speedOf(k)$ is called \define{speed of $k$}\graffito{speed $\speedOf(k)$ of $k$}\index[symbols]{spdk@$\speedOf(k)$} and the set $\Data_k$ is called \define{data set of $k$}\graffito{data set $\Data_k$ of $k$}\index[symbols]{Dtk@$\Data_k$}; each element $s \in \Signals$ is called \define{signal}\graffito{signal $s$}\index[symbols]{s@$s$}; and the maps $\localTransitionFunction_e$ and $\localTransitionFunction_v$ are called \define{local transition function on edges} and \define{in vertices}\graffito{local transition functions $\localTransitionFunction_e$ and $\localTransitionFunction_v$ on edges and in vertices} respectively.
  \end{definition} 

  \begin{remark}
    The local transition function $\localTransitionFunction_e$ is used to handle events on edges but not in their endpoints. It gets the signals that are involved in the event and returns the resulting signals. Because in each event at least one moving signal is involved, the direction of this signal and the map that reverses orientation can be used by $\localTransitionFunction_e$ to determine the two possible directions the resulting signals may have.

    The local transition function $\localTransitionFunction_v$ is used to handle events in vertices. It gets the directions signals may take at the respective vertex and the signals that are involved in the event and returns the resulting signals.

    The local transition functions $\localTransitionFunction_e$ and $\localTransitionFunction_v$ are supposed to regard directions as black boxes that can merely be distinguished and whose orientation can be reversed. They must not determine edges or vertices by deconstructing directions, which is possible with the chosen representation of directions. If they did something like that, the signal machine would not be uniform.
  \end{remark} 

  \begin{remark}
    At the beginning of this section we fixed a general multigraph. This multigraph should be regarded as the blueprint of a multigraph. A signal machine depends only on that blueprint and not on any specific properties that a concrete choice of a multigraph may have. So, one and the same signal machine can be instantiated for any multigraph, each instantiation results in a machine on a concrete multigraph, and these instantiations are uniform in the chosen multigraphs. In other words, a signal machine is a map from the set of all multigraphs to the set of quadruples that describe instantiations of the machine on concrete multigraphs and this map depends only in a trivial way on its argument. 

    The quadruple that describes signal machines could be made independent of multigraphs by choosing a different representation for directions. They could for example be represented by integers or vectors or colours equipped with an involution to switch the orientation of directions.

    Even the global transition function, which is introduced below and describes the time evolution of a signal machine, could be made independent of multigraphs by representing them as patterns in high-dimensional Euclidean spaces (think of the drawing of a graph on a piece of paper). The directions are then vectors that are tangential to edges.

    In classical solutions of the firing squad synchronisation problem the regions to be synchronised are actually represented as patterns in integer lattices: The cells outside the region are in the same state, say $0$, and all cells inside the region are not in state $0$, more precisely, one cell inside the region is in a state that distinguishes it as the general, say $1$, and the other cells inside the region are all in the same state, say $2$.
  \end{remark}

  In the remainder of this section, let $\mathcal{S} = \ntuple{\Kinds, \speedOf, \family{\Data_k}_{k \in \Kinds}, (\localTransitionFunction_e, \localTransitionFunction_v)}$ be a signal machine.

  To describe the time evolution of our signal machine, the following notions are convenient.

  \begin{definition}
    Let $\Times$\graffito{set $\Times$}\index[symbols]{T@$\Times$} be the set $\R_{\geq 0}$, let $\extendedTimes$\graffito{set $\extendedTimes$}\index[symbols]{Tbar@$\extendedTimes$} be the set $\Times \cup \setOf{\infty}$, let $\States$\graffito{set $\States$}\index[symbols]{Q@$\States$} be the set $\powerSetOf(\Signals)$, and let $\Configurations$\graffito{set $\Configurations$}\index[symbols]{Cnf@$\Configurations$} be the set $Q^M$. Each element $t \in \extendedTimes$ is called \define{time}\graffito{time $t$}\index[symbols]{t@$t$} and the time $\infty$ is called \define{improper}\graffito{improper time}\index[symbols]{infinity@$\infty$}, each element $q \in \States$ is called \define{state}\graffito{state $q$}\index[symbols]{q@$q$}, and each element $c \in \Configurations$ is called \define{configuration}\graffito{configuration $c$}\index[symbols]{c@$c$}.
  \end{definition}

  The components of signals and some compounds of them are named in

  \begin{definition}
    Let $s = (k, d, u)$ be a signal of $\Signals$. The kind $k$ of $s$ is denoted by $\kindOf(s)$\graffito{kind $\kindOf(s)$ of $s$}\index[symbols]{knds@$\kindOf(s)$}; the speed $\speedOf(k)$ of $s$ is denoted by $\speedOf(s)$\graffito{speed $\speedOf(s)$ of $s$}\index[symbols]{spds@$\speedOf(s)$}; the (semi-)direction $d$ of $s$ is denoted by $\directionOf(s)$\graffito{(semi-)direction $\directionOf(s)$ of $s$}\index[symbols]{dirs@$\directionOf(s)$}; the velocity $(\speedOf(k), d)$ of $s$ is denoted by $\velocityOf(s)$\graffito{velocity $\velocityOf(s)$ of $s$}\index[symbols]{vels@$\velocityOf(s)$}; and the datum $u$ of $s$ is denoted by $\datumOf(s)$\graffito{datum $\datumOf(s)$ of $s$}\index[symbols]{dts@$\datumOf(s)$}.
  \end{definition}

  \begin{remark}
    For each time $t \in \Times$, the arrow $\velocityOf(s) \multipli t$ is equal to the arrow $(\speedOf(k) \multipli t, d)$, which can be interpreted as a directed distance.
  \end{remark}

  Signals of speed $0$ are named in

  \begin{definition}
    Let $s$ be a signal of $\Signals$. It is called \define{stationary}\graffito{stationary signal} if and only if $\speedOf(s) = 0$.
  \end{definition}

  When we stand on a point facing in a direction and from there we walk a fixed distance without making U-turns but otherwise making arbitrary choices at each vertex, then there is a finite number of points that we may reach and we reach them walking in some direction. The set of these points with and without target-directions is introduced in

  \begin{definition}
    Let $m$ be a point of $M$ and let $(\ell, d)$ be an arrow. The set of points with target-directions that can be reached from $m$ by a direction-preserving path of length $\ell$ with source-direction $d$ is
    \begin{equation*}
      \ReachOf_{(\ell, d)}^{\directed}(m) =
          \begin{aligned}[t]
            \setOf{(\targetOf(\mathfrak{p}), \directionOf_{\xend}(\mathfrak{p})) \suchThat{}
                 &\mathfrak{p} \in \continuumPaths_{\directionPreserving} \text{, } \length(\mathfrak{p}) = \ell \text{, }\\
                 &\directionOf_{\start}(\mathfrak{p}) = d \text{, and } \sourceOf(\mathfrak{p}) = m}
          \end{aligned}
      \mathnote{$\ReachOf_{(\ell, d)}^{\directed}(m)$}
    \end{equation*}
    and without target-directions it is
    \begin{equation*}
      \ReachOf_{(\ell, d)}(m) = \setOf{m' \suchThat \Exists d \in \Directions \SuchThat (m', d) \in \ReachOf_{(\ell, d)}^{\directed}(m)}.\mathnote{$\ReachOf_{(\ell, d)}(m)$} \qedhere 
    \end{equation*}
  \end{definition}

  An event occurs when a signal reaches a vertex or two signals coming from different points collide. The time of the next event is given a name in

  \begin{definition}
    Let $c$ be a configuration of $\Configurations$. The minimum time until a signal in $c$ reaches a vertex is
    \begin{equation*}
      t' = \inf_{m \in M} \inf_{\substack{s \in c(m)\\ \speedOf(s) > 0}} \inf \setOf{t \in \R_{> 0} \suchThat \ReachOf_{\velocityOf(s) \multipli t}(m) \cap \continuumVertices \neq \emptyset}.
    \end{equation*}
    The minimum time until at least two signals in $c$ collide is
    \begin{equation*}
      t'' = \inf_{\substack{m, m' \in M\\ m \neq m'}} \inf_{\substack{s \in c(m)\\ s' \in c(m')}} \inf \setOf{t \in \R_{> 0} \suchThat \ReachOf_{\velocityOf(s) \multipli t}(m) \cap \ReachOf_{\velocityOf(s') \multipli t}(m') \neq \emptyset}.
    \end{equation*}
    The minimum time until the next event(s) in $c$ occurs is \graffito{next event(s) time $t_0(c)$}$t_0(c) = \min\setOf{t', t''}$\index[symbols]{t0c@$t_0(c)$}.
  \end{definition}

  \begin{remark}
    A stationary signal at a vertex does never reach the vertex (it is already there) and two signals that already are at the same vertex do never collide (they may have collided when they got there but now they just are at the same vertex; if they have a non-zero velocity, then they will leave the vertex without interfering each other, and, if they have the same positive velocity, then they will travel alongside each other).
  \end{remark}

  \begin{remark}
    The next event time may be $0$, which means that events accumulate at time $0$, or $\infty$, which means that there are no more events in the future; note that $\inf\emptyset = \infty$. It is for example $0$ if there is a sequence of signals moving at the same velocity towards the same vertex each one being already a little closer to the vertex than the previous one. And it is for example $\infty$ if there are no signals at all or there are only stationary signals.
  \end{remark}

  If each signal moves with its velocity and upon reaching a vertex propagates to all incident edges except the one it came from (making copies of itself if necessary) and if collisions of signals are ignored, then the set of points an event occurs in at a time in the future is given a name in 

  \begin{definition} 
    Let $c$ be a configuration of $\Configurations$ and let $t$ be a time of $\Times$. The set of vertices that signals in $c$ reach at time $t$ (under the assumptions given in the introduction to the present definition) is
    \begin{equation*}
      M' = \begin{dcases*}
             \emptyset, &if $t = 0$,\\ 
             \bigcup_{\substack{m \in M\\ s \in c(m)\\ \speedOf(s) > 0}} \ReachOf_{\velocityOf(s) \multipli t}(m) \cap \continuumVertices, &otherwise.
           \end{dcases*}
    \end{equation*}
    The set of points that signals in $c$ collide in at time $t$ is
    \begin{equation*}
      M'' = \bigcup_{\substack{m, m' \in M\\ m \neq m'}} \bigcup_{\substack{s \in c(m)\\ s' \in c(m')}} \ReachOf_{\velocityOf(s) \multipli t}(m) \cap \ReachOf_{\velocityOf(s') \multipli t}(m').
    \end{equation*}
    The set of points that signals in $c$ are involved in an event in at time $t$ is $M_t(c) = M' \cup M''$\graffito{set $M_t(c)$ of points that signals in $c$ are involved in an event in at time $t$}\index[symbols]{Mtc@$M_t(c)$}.
  \end{definition}

  \begin{remark}
    For times $t$ before and including the time $t_0(c)$ of the next event, the definition of $M_t(c)$ is natural. And for other times, it is plausible with the explanation given before the definition and it is used to handle accumulations of events and accumulations of accumulations of events and so on.
  \end{remark}

  As above, if each signal moves with its velocity and upon reaching a vertex propagates to all incident edges except the one it came from (making copies of itself if necessary) and if collisions of signals are ignored, then starting our signal machine in a configuration $c$ and letting it run for a time $t$ without handling propagation of signals in vertices at time $t$ yields a new configuration $\gtfIgnoreCollisions(t)(c)$ as defined in 

  \begin{definition}
    \begin{align*}
      \gtfIgnoreCollisions \from \Times &\to (\Configurations \to \Configurations), \mathnote{map $\gtfIgnoreCollisions$ from $\Times$ to $\Configurations \to \Configurations$}\index[symbols]{boxplus@$\gtfIgnoreCollisions$}\\
      0 &\mapsto \identityMap_{\Configurations},\\ 
      t &\mapsto [c \mapsto [m \mapsto \begin{aligned}[t]
                                         \setOf{s \in \Signals \suchThat{} &\Exists m' \in M \Exists s' \in c(m') \SuchThat\\
                                                                           &(m, \directionOf(s)) \in \ReachOf_{\velocityOf(s') \multipli t}^{\directed}(m') \text{, }\\
                                                                           &\kindOf(s) = \kindOf(s') \text{, and }\\
                                                                           &\datumOf(s) = \datumOf(s')}]]. \qedhere
                                       \end{aligned}
    \end{align*}
  \end{definition}

  \begin{remark}
    For each time $t \in \Times$ and each configuration $c \in \Configurations$, if there is a signal in $c$ that reaches a vertex in time $t$ (or one of its duplicates does), then in the configuration $\gtfIgnoreCollisions(t)(c)$ that signal is at the vertex and its velocity is the one it had just before reaching the vertex. The direction of that velocity points away from the edge the signal came from and it does not point to any edge that is incident to the vertex. It is then up to the local transition function to decide what to do with the signal and, if it is not removed, what its direction shall be.
  \end{remark}

  \begin{remark}
    The map $\gtfIgnoreCollisions$ is used to determine future configurations before or right until the next event occurs and needs to be handled, and also to make crude predictions of future configurations beyond the next event time by propagating at vertices and ignoring collisions as explained above. These predictions will be used to handle accumulations of events and accumulations of accumulations of events and so on.
  \end{remark}

  Until the first events occur, signals move along edges without colliding. At the time the first events occur, a signal reached a vertex or two signals collided (on a vertex or edge) or multiple such events happened. An event in a vertex is handled by the local transition function $\localTransitionFunction_v$ and on an edge by $\localTransitionFunction_e$. This global behaviour can be described by a map that maps a configuration to the configuration right after the first events occurred and have been handled. This map is given in 

  \begin{definition}
      \index[symbols]{Deltadot0@$\gtfJump_0$}\begin{align*}
        \gtfJump_0 \from \Configurations &\to \Configurations, \mathnote{map $\gtfJump_0$ from $\Configurations$ to $\Configurations$}\\
        c &\mapsto
          \left\{
            \begin{aligned}
              &c, \text{ if $t_0(c) \in \setOf{0, \infty}$},\\
              &[m \mapsto
                \left\{
                  \begin{aligned}
                    &c'(m), &&\text{if $m \notin M_{t_0(c)}(c)$,}\\
                    &\localTransitionFunction_v(\directionOf(m), c'(m)), &&\text{if $m \in M_{t_0(c)}(c) \cap \continuumVertices$,}\\
                    &\localTransitionFunction_e(c'(m)), &&\text{if $m \in M_{t_0(c)}(c) \smallsetminus \continuumVertices$,}
                  \end{aligned}
                \right\}
              ],\\
              &&\llap{otherwise,\quad}
            \end{aligned}
          \right.\\
          &\phantom{\mapsto{}}\text{ where } c' = \gtfIgnoreCollisions(t_0(c))(c). \qedhere
      \end{align*}
  \end{definition}

  \begin{remark}
    The map $\gtfJump_0$ maps a configuration to itself if the next event time is $0$, which means that event times accumulated at $0$, or $\infty$, which means that there is no next event. And it maps a configuration to the configuration that is reached after the first events have been handled, by first using $\gtfIgnoreCollisions$ to determine the configuration in which the events occur and then handling all occurring events with $\localTransitionFunction_v$ and $\localTransitionFunction_e$.
  \end{remark}

  If the next event time is $0$, which we call \index{singularity}\graffito{singularity of order $-1$}\defineX{singularity of order $-1$}{singularity!of order $-1$} (see \cref{figure:singularity-of-order-minus-one}), then the machine is sometimes stuck, in the sense that there is no natural way to define what configuration the machine is in at any time in the future, and sometimes the machine can go forward in time, in the sense that there is a natural way to define what configuration the machine is in at least until some time in the future; the latter case is handled later and largely ignored for now. If the next event time is $\infty$, then the machine does nothing for eternity.

  Otherwise, the machine can at least proceed until the next event time and handle the occurring events, which we call \graffito{singularity of order $0$}\defineX{singularities of order $0$}{singularity!of order $0$}, and then the next event time may again be $0$, $\infty$, or something in between. It may happen that the next event times are never $0$ or $\infty$ but accumulate at some time in the future, which we call \graffito{singularity of order $j$, for $j \in \N_+$}\defineX{singularity of order $1$}{singularity!of order $j$, for $j \in \N_+$} (see \cref{figure:singularity-of-order-one}). In that case repeated applications of $\gtfJump_0$ never reach a configuration at that future time or a time beyond. But we can in a sense take the limit of the sequence of configurations that repeated applications of $\gtfJump_0$ yield. Yet, it may even happen that singularities of order $1$ accumulate at some time in the future, which we call \defineX{singularity of order $2$}{singularity!of order $j$, for $j \in \N_+$}. Again, we can in the same sense as before take the limit of the sequence of configurations at these singularities. It may continue this way ad infinitum. In precise terms this is done in

  \begin{figure}
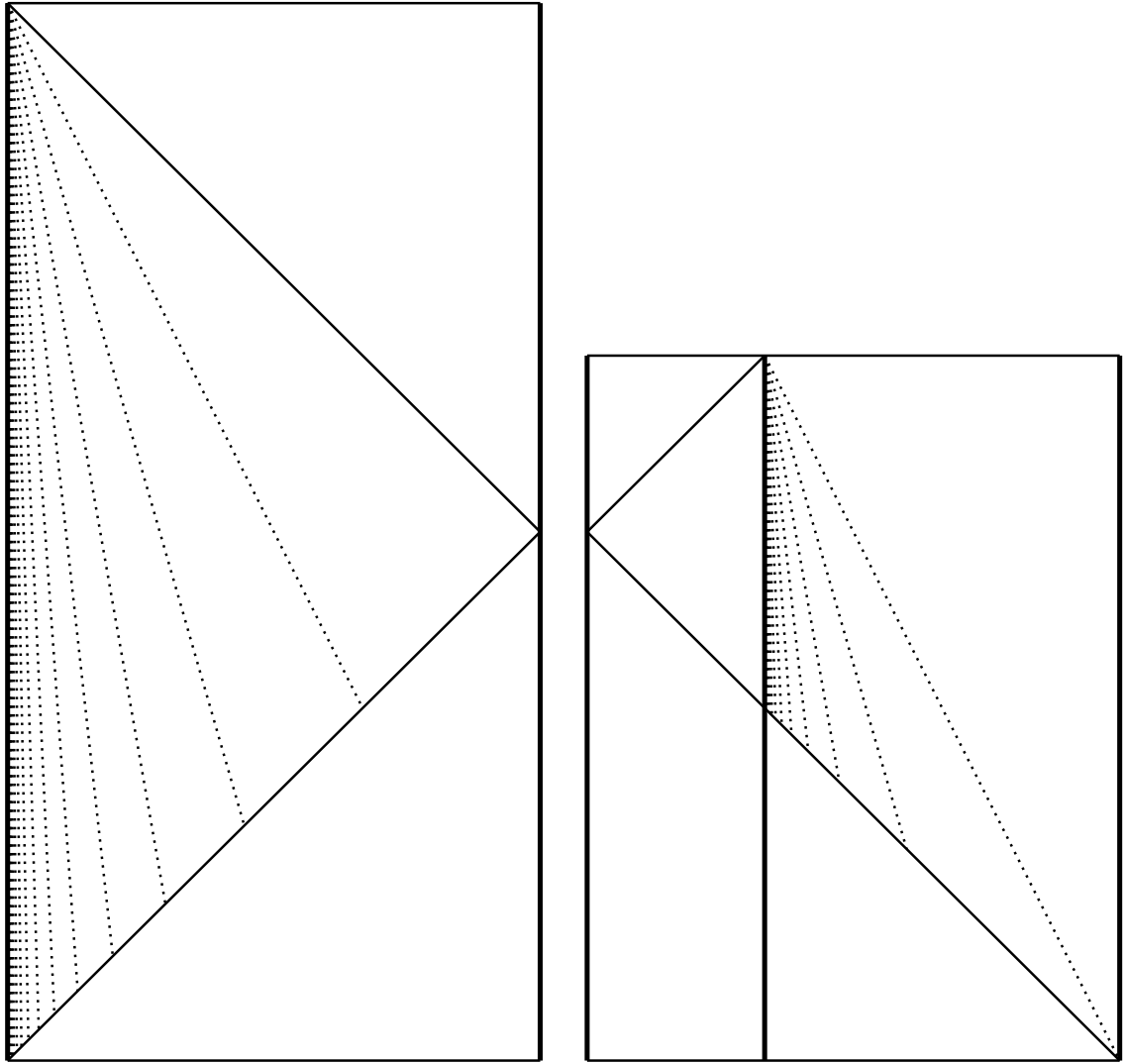

    \myfloatalign
    \begin{wide}
      \figureSingularities
      \caption{Both subfigures depict a space-time diagram of the time evolution of an unspecified signal machine, where space is drawn on the horizontal axis and time on the vertical axis evolving from top to bottom. In \cref{figure:singularity-of-order-one}, there are infinitely many signals of various speeds arbitrarily close to $0$ and slower than $1$ emanating from the left vertex, the fastest signal, which is the one of speed $1$, is reflected at the right vertex, and the reflected signal collides with all other signals at shorter and shorter time spans between collisions, resulting in a singularity of order $1$ at the last depicted time. In \cref{figure:singularity-of-order-minus-one}, there are infinitely many signals of various speeds arbitrarily close to $0$ and slower than $1$ emanating to the right from the middle vertex, and there is one signal of speed $1$ emanating to the left from the middle vertex, this signal is reflected at the left vertex, and at the time the reflected signal reaches the middle vertex is a singularity of order $-1$ because it collides with all signals that emanated to the right and are arbitrarily close to the vertex.}
      \label{figure:singularities}
    \end{wide}
  \end{figure}

  \begin{definition}
    The sequence $\sequence{t_{j - 1}^n}_{n \in \N_0}$, where the $n$ in $t_{j - 1}^n$ is an upper index and does not stand for exponentiation, the map $t_j$, and the map $\gtfJump_j$, for $j \in \N_+$, are defined by mutual induction as follows: The maps $t_0$ and $\gtfJump_0$ have already been defined and, for each positive integer $j$, let
    \begin{equation*}
      \sequence*{
        \begin{aligned}
          t_{j - 1}^n \from \Configurations &\to \extendedTimes, \index[symbols]{tj-1n@$t_{j - 1}^n$}\\ 
          c &\mapsto \sum_{i = 0}^{n - 1} t_{j - 1}(\gtfJump_{j - 1}^i(c)),
        \end{aligned}
      }_{n \in \N_0} \mathnote{map $t_{j - 1}^n$ from $\Configurations$ to $\extendedTimes$}
    \end{equation*} 
    (note that $t_{j - 1}^0 = 0$), let
    \begin{align*}
      t_j \from \Configurations &\to \extendedTimes, \mathnote{map $t_j$ from $\Configurations$ to $\extendedTimes$}\index[symbols]{tjsubscript@$t_j$}\\ 
      c &\mapsto \lim_{n \to \infty} t_{j - 1}^n(c),
    \end{align*}
    and let\index[symbols]{Deltadotj@$\gtfJump_j$}
    \begin{align*}
      \gtfJump_j \from \Configurations &\to \Configurations, \mathnote{map $\gtfJump_j$ from $\Configurations$ to $\Configurations$}\\
      c &\mapsto
        \left\{
          \begin{aligned}
            &c, \text{ if $t_j(c) \in \setOf{0, \infty}$},\\
            &[m \mapsto
              \left\{
                \begin{aligned}
                  &c'(m), &&\text{if $m \notin M_j(c)$,}\\
                  &\localTransitionFunction_v(\directionOf(m), c'(m)), &&\text{if $m \in M_j(c) \cap \continuumVertices$,}\\
                  &\localTransitionFunction_e(c'(m)), &&\text{if $m \in M_j(c) \smallsetminus \continuumVertices$,}
                \end{aligned}
              \right\}
            ],\\
            &&\llap{otherwise,\quad}
          \end{aligned}
        \right.\\
        &\phantom{\mapsto{}}\text{ where } c' = \liminf_{n \to \infty} \gtfIgnoreCollisions(t_j(c) - t_{j - 1}^n(c))(\gtfJump_{j - 1}^n(c)),\\
        &\phantom{\mapsto{}}\text{ and } M_j(c) = \liminf_{n \to \infty} M_{t_j(c) - t_{j - 1}^n(c)}(\gtfJump_{j - 1}^n(c)),
    \end{align*}
    where the first limit inferior is the pointwise limit inferior of sequences of set-valued maps and the second limit inferior is the limit inferior of sequences of sets. In greater detail, for each sequence $\sequence{c_n}_{n \in \N_0}$ of set-valued maps from $M$ to $Q$, the pointwise limit inferior of $\sequence{c_n}_{n \in \N_0}$ is the map $c \from M \to Q$, $m \mapsto \liminf_{n \to \infty} c_n(m)$ and is denoted by $\liminf_{n \to \infty} c_n$; and, for each sequence $\sequence{A_n}_{n \in \N_0}$ of subsets of $M$, the limit inferior of $\sequence{A_n}_{n \in \N_0}$ is the subset $\bigcup_{n \in \N_0} \bigcap_{k \geq n} A_k$ of $M$ and is denoted by $\liminf_{n \to \infty} A_n$.
  \end{definition} 

  \begin{remark}
    Let $c$ be a configuration of $\Configurations$. Then, for each positive integer $j$, we have $t_{j - 1}^0(c) = 0$ and $t_{j - 1}^1(c) = t_{j - 1}(c)$, and the sequence $\sequence{t_{j - 1}^n(c)}_{n \in \N_0}$ in $\extendedTimes$ is non-decreasing and hence converges in $\extendedTimes$. And, the sequence $\sequence{t_j(c)}_{j \in \N_0}$ in $\extendedTimes$ is non-decreasing and hence converges in $\extendedTimes$.
  \end{remark}

  \begin{remark} 
    Let the signal machine be in a configuration $c$ at time $0$ and let there be no future configuration whose next event time is $0$ or $\infty$. The latter is the case if and only if the sequences $\sequence{t_{j - 1}^n(c)}_{n \in \N_0}$, for $j \in \N_+$, and hence also the sequence $\sequence{t_j(c)}_{j \in \N_0}$, are strictly increasing sequences in $\Times$.

    Then, for each positive integer $j$ and each non-negative integer $n$, at time $t_{j - 1}^n(c)$ the machine is in configuration $\gtfJump_{j - 1}^n(c)$. And, the time $t_0^n(c)$ is the $n$-th time an event occurs (singularity of order $0$), the time $t_1^n(c)$ is the $n$-th time an accumulation of events occurs (singularity of order $1$), the time $t_2^n(c)$ is the $n$-th time an accumulation of accumulation of events occurs (singularity of order $2$), and so forth.

    Moreover, for each non-negative integer $j$, at time $t_j(c)$ the machine is in configuration $\gtfJump_j(c)$. And, the time $t_0(c)$ is the next time an event occurs, the time $t_1(c)$ is the next time an accumulation of events occurs, the time $t_2(c)$ is the next time an accumulation of accumulations of events occurs, and so forth. 

    Furthermore, for each positive integer $j$, the map $\gtfJump_j$ maps the configuration $c$ to the configuration that is reached after an accumulation of singularities of order $j - 1$, which is a singularity of order $j$. First, it calculates the configurations that accumulate, namely $\gtfJump_{j - 1}^n(c)$; secondly, for each of these configurations, it uses $\gtfIgnoreCollisions$ to determine the configuration that would be reached at the accumulation time if there were no further events, which is a crude prediction of the future that becomes better the greater $n$ is; thirdly, it calculates the pointwise limit inferior of these configurations, which is essentially the configuration that contains the signals that all but finitely many of the configurations have in common (in particular, if for a point $m$ the sequence of signals at $m$ become constant, then the limit at $m$ is that set of signals); lastly, it handles collisions.
  \end{remark}

  \begin{remark}
  \label{remark:limit-of-jumps-is-singularity-of-order-minus-1}
    Let the signal machine be in a configuration $c$ at time $0$ and let there be a future configuration whose next event time is $0$. Then, there is a least positive integer $j$ such that the sequence $\sequence{t_{j - 1}^n(c)}_{n \in \N_0}$ is eventually constant. And, there is a least non-negative integer $n$ such that $t_{j - 1}^n(c) = t_{j - 1}^{n + 1}(c)$. The time $t' = t_{j - 1}^n(c)$ is the first time at which the signal machine is in a configuration whose next event time is $0$ and this configuration is $c' = \gtfJump_{j - 1}^n(c)$.

    For each non-negative integer $n'$ such that $n' \geq n$, we have $t_{j - 1}^{n'}(c) = t'$ and $\gtfJump_{j - 1}^{n'}(c) = c'$. And, for each positive integer $j'$ such that $j' \geq j$, the time $t_{j'}(c)$ is equal to $t'$ and the configuration $\gtfJump_{j'}(c)$ is equal to $c'$, and the sequences $\sequence{t_{j'}^n(c)}_{n \in \N_0}$ and $\sequence{\gtfJump_{j'}^n(c)}_{n \in \N_0}$ are the constant sequences $\sequence{t'}_{n \in \N_0}$ and $\sequence{c'}_{n \in \N_0}$. In particular, the limit $\lim_{j \to \infty} t_j(c)$ is equal to $t'$ and the limit inferior $\liminf_{j \to \infty} \gtfJump_j(c)$ is equal to $c'$.
  \end{remark}

  \begin{remark}
    The limit of sequences of configurations and of sets of points does in general not exist. However, the limit inferior and the limit superior always exist. We decided not to use the limit, to avoid case distinctions that would have to be made. Instead, we decided to use the limit inferior; we could as well have decided to use the limit superior. Which of the two has the desired outcome depends on the specific use case.

    For the signal machine that solves the firing squad synchronisation problem that we construct in the next section and the configurations it is initialised with and the configurations it encounters during its time evolution, the encountered limit inferiors and superiors are actually always the same, which means that the limits exist, and hence the choice of limit inferior or superior is irrelevant in that use case.
  \end{remark}

  If the machine never assumes a configuration in which events accumulate at time $0$ and if non-negative singularities of ever higher orders ad infinitum do not accumulate, then the machine can be observed for eternity. Otherwise, it can for now only be observed for all times before $\lim_{j \to \infty} t_j(c)$, where $c$ is the initial configuration of the machine. This time is given a name in 

  \begin{definition}
    For each configuration $c \in \Configurations$, the non-negative real number or infinity $t_\infty(c) = \lim_{j \to \infty} t_j(c)$ is called \defineX{$\infty$-existence time of $c$}{existence time infinity@$\infty$-existence time of $c$}\graffito{$\infty$-existence time $t_\infty(c)$ of $c$}\index[symbols]{tinftyc@$t_\infty(c)$}, and the closed interval $\closedInterval{0, t_\infty(c)}$ is called \defineX{$\infty$-existence interval of $c$}{existence interval infinity@$\infty$-existence interval of $c$}\graffito{$\infty$-existence interval $\closedInterval{0, t_\infty(c)}$ of $c$}.
  \end{definition}

  Repeated applications of powers of the maps $\gtfJump_j$, for decreasing $j \in \N_0$, let us jump to and from configurations right after singularities of decreasing orders. The resulting map is given a name in

  \begin{definition}
    For each non-negative integer $j$, each non-negative integer $k$, and each finite sequence $\sequence{n_i}_{i = j}^k$ of non-negative integers, let
    \begin{equation*} 
      \gtfJump_{\sequence{n_i}_{i = j}^k}
      = \begin{dcases*}
          \identityMap_{\Configurations}, &if $j > k$,\\
          \gtfJump_j^{n_j} \after \gtfJump_{j + 1}^{n_{j + 1}} \after \dotsb \after \gtfJump_k^{n_k}, &otherwise, 
        \end{dcases*}
      \mathnote{map $\gtfJump_{\sequence{n_i}_{i = j}^k}$ from $\Configurations$ to $\Configurations$}
      \index[symbols]{Deltadotniijksequence@$\gtfJump_{\sequence{n_i}_{i = j}^k}$}
    \end{equation*}
    and let
    \begin{equation*}
      t_{\sequence{n_i}_{i = j}^k} = \sum_{i = j}^k t_i^{n_i} \after \gtfJump_{\sequence{n_\ell}_{\ell = i + 1}^k}.
      \mathnote{map $t_{\sequence{n_i}_{i = j}^k}$ from $\Configurations$ to $\extendedTimes$}
      \index[symbols]{tniijksequence@$t_{\sequence{n_i}_{i = j}^k}$} \qedhere
    \end{equation*}
  \end{definition}

  \begin{remark}
    Let the signal machine be in a configuration $c$ at time $0$ and let there be no future configuration whose next event time is $0$ or $\infty$. The map $\gtfJump_{\sequence{n_i}_{i = j}^k}$ applied to $c$, first applies $\gtfJump_k^{n_k}$ to jump from $c$ to the configuration right after the $n_k$-th time a singularity of order $k$ occurs, secondly it applies $\gtfJump_{k - 1}^{n_{k - 1}}$ to jump from that configuration, namely $\gtfJump_{\sequence{n_i}_{i = k}^k}(c)$, to the configuration right after the $n_{k - 1}$-th time a singularity of order $k - 1$ occurs (counting from the time at which the machine is in configuration $\gtfJump_{\sequence{n_i}_{i = k}^k}(c)$), and so forth until it finally applies $\gtfJump_j^{n_j}$ to jump from the configuration $\gtfJump_{\sequence{n_i}_{i = j + 1}^k}(c)$ to the configuration right after the $n_j$-th time a singularity of order $j$ occurs (counting from the time at which the machine is in configuration $\gtfJump_{\sequence{n_i}_{i = j + 1}^k}(c)$), where in the case that $j = 0$, a singularity of order $0$ is nothing but an event.

    The time it takes the machine to get from $c$ to the configuration $\gtfJump_k^{n_k}(c)$ is $t_k^{n_k}(c)$, the time it takes to get from $\gtfJump_k^{n_k}(c)$ to $\gtfJump_{k - 1}^{n_{k - 1}}(\gtfJump_k^{n_k}(c))$ is $t_{k - 1}^{n_{k - 1}}(\gtfJump_k^{n_k}(c))$, and so forth; in total, the time it takes to get from $c$ to $\gtfJump_{\sequence{n_i}_{i = j}^k}(c)$ is $t_{\sequence{n_i}_{i = j}^k}(c)$.
  \end{remark}

  To compute the configuration the machine is in at the $\infty$-existence time of the initial configuration, we can use the maps $\gtfJump_j$ for increasing $j$ to jump from singularities of non-negative lower orders to singularities of ever higher orders, which in the case there are any singularities of order $-1$ comes to a halt at the first such singularity at the $\infty$-existence time and in the other case yields in a sense the limit configuration. And, to compute the configuration at time $t$ before the $\infty$-existence time, we can use one of the maps $\gtfJump_{\sequence{n_i}_{i = 0}^k}$ to jump to the configuration right after the last event before time $t$ and then we can use the map $\gtfIgnoreCollisions$ to jump from there to time $t$. The resulting map describes the time evolution of the signal machine before or at $\infty$-existence times and it is given in

  \begin{definition}
    For each time $t \in \extendedTimes$, the set of configurations whose $\infty$-existence interval contains $t$ is
    \begin{equation*}
      \Configurations_t^\infty = \setOf{c \in \Configurations \suchThat t \leq t_\infty(c)}.
      \mathnote{set $\Configurations_t^\infty$} 
      \index[symbols]{Cnftinfinity@$\Configurations_t^\infty$}
    \end{equation*}
    Let
    \begin{align*} 
      \gtfNonNegativeSingularities \from \extendedTimes &\to \bigcup_{t \in \extendedTimes} (\Configurations_t^\infty \to \Configurations), \mathnote{map $\gtfNonNegativeSingularities$ from $\extendedTimes$ to $\bigcup_{t \in \extendedTimes} (\Configurations_t^\infty \to \Configurations)$}\index[symbols]{boxminus@$\gtfNonNegativeSingularities$}\\ 
      t &\mapsto
        \left[
          \begin{aligned}
            \Configurations_t^\infty &\to \Configurations,\\
            c &\mapsto
              \begin{dcases*}
                \liminf_{j \to \infty} \gtfJump_j(c), &if $t = t_\infty(c)$,\\ 
                \gtfIgnoreCollisions(t - t_{\sequence{n_i}_{i = 0}^k}(c))(\gtfJump_{\sequence{n_i}_{i = 0}^k}(c)), &otherwise,
              \end{dcases*}\\
              &\phantom{\mapsto{}}\text{ for the least } k \in \N_0 \text{ and the } \sequence{n_i}_{i = 0}^k \in \N_0^{k + 1}\\
              &\phantom{\mapsto{}}\text{ with } t \in \leftClosedAndRightOpenInterval{t_{\sequence{n_i}_{i = 0}^k}(c), t_{(n_0 + 1, n_1, n_2, \dotsc, n_k)}(c)}.
          \end{aligned}
        \right]
    \end{align*} 
    Note that the least index $k$ and the finite sequence $\sequence{n_i}_{i = 0}^k$ that occur above are uniquely determined by the time $t$ and the configuration $c$. 
  \end{definition}

  \begin{remark}
    In the second case in the definition of $\gtfNonNegativeSingularities$, we have $t \in \leftClosedAndRightOpenInterval{t_k^{n_k}(c), t_k^{n_k + 1}(c)}$, and $t - t_k^{n_k}(c) \in \leftClosedAndRightOpenInterval{t_{k - 1}^{n_{k - 1}}(\gtfJump_k^{n_k}(c)), t_{k - 1}^{n_{k - 1} + 1}(\gtfJump_k^{n_k}(c))}$, and so forth.
  \end{remark}

  \begin{remark}
    For each configuration $c \in \Configurations$, the map $\gtfNonNegativeSingularities(\blank)(c)$ is defined on the closed interval $\closedInterval{0, t_\infty(c)}$.
  \end{remark}

  \begin{remark}
    Let $t$ be a time of $\Times$ and let $c$ be a configuration of $\Configurations_t^\infty$ such that $t \neq t_\infty(c)$. If events do not accumulate for the initial configuration $c$, then $\gtfNonNegativeSingularities(t)(c) = \gtfIgnoreCollisions(t - t_0^{n_0}(c))(\gtfJump_0^{n_0}(c))$, for the $n_0 \in \N_0$ with $t \in \leftClosedAndRightOpenInterval{t_0^{n_0}(c), t_0^{n_0 + 1}(c)}$; in words, we apply $\gtfJump_0$ repeatedly, jumping from event to event, until we reach the configuration $\gtfJump_0^{n_0}(c)$ at time $t_0^{n_0}(c)$ with the property that the next event (if there even is one) occurs after $t$, at which point we use $\gtfIgnoreCollisions$ to move signals along edges for the remaining time $t - t_0^{n_0}(c)$.

    If events do accumulate for $c$ but singularities of order $1$ do not accumulate, then we apply $\gtfJump_1$ repeatedly, jumping from singularity to singularity, until we reach the configuration $\gtfJump_1^{n_1}(c)$ at time $t_1^{n_1}(c)$ with the property that the next singularity (if there even is one) occurs after $t$, at which point we apply $\gtfJump_0$ repeatedly, jumping from event to event, until we reach the configuration $\gtfJump_0^{n_0}(\gtfJump_1^{n_1}(c))$ at time $t_0^{n_0}(c) + t_1^{n_1}(c)$ with the property that the next event (if there even is one) occurs after $t$, at which point we use $\gtfIgnoreCollisions$ to move signals along edges for the remaining time $t - t_0^{n_0}(c) - t_1^{n_1}(c)$.

    And so forth.
  \end{remark}

  \begin{remark}
  \label{remark:gtf-at-infinity-existence-time-sometimes-yields-the-configuration-at-a-singulairty-of-order-minus-1}
    Let the signal machine be in a configuration $c$ at time $0$. If there is a singularity of order $-1$ in the future, then, according to \cref{remark:limit-of-jumps-is-singularity-of-order-minus-1}, the time $t_\infty(c)$ is the first time at which the signal machine is at such a singularity and the corresponding configuration is $\gtfNonNegativeSingularities(t_\infty(c))(c)$. Otherwise, the time $t_\infty(c)$, which may be the improper time $\infty$, is the time just after all singularities and the corresponding configuration is $\gtfNonNegativeSingularities(t_\infty(c))(c)$. In either case, in what is to come, for simplicity, if $t_\infty(c)$ is finite, then we talk as if there is a singularity of order $-1$ at time $t_\infty(c)$. 
  \end{remark}


  At the time of a singularity of order $1$, there are infinitely many events that occur just \emph{before} that time and arbitrarily close to it, and the problem is to define the configuration at the time of the singularity. At the time of a singularity of order $-1$, there are infinitely many events that occur just \emph{after} that time and arbitrarily close to it, and the problem is to define the configurations at all times after the singularity. Analogous problems exist for accumulations of singularities of order $1$ or $-1$, and accumulations of accumulations of singularities of order $1$ or $-1$, and so forth. For singularities of positive orders these problems have been solved above but not for singularities of order $-1$ and its accumulations.

  At a singularity of order $-1$ at time $0$, to compute the configuration at a small enough time $t$, first, we make crude predictions of the future with $\gtfIgnoreCollisions$ by jumping past the singularity to future times $\varepsilon$ ignoring events, secondly, we extrapolate these predictions to the time $t$ with $\gtfNonNegativeSingularities$ by letting the machine evolve them until the time $t$, and, lastly, we take the limit inferior of these predictions as $\varepsilon$ tends to $0$. This does not work for all singularities of order $-1$ regardless of how small we choose $t$ (for example if at each point in time of a time span after and including $0$ an event occurs), but it does work for the singularities of order $-1$ that occur in our quasi-solution of the firing squad synchronisation problem. 

  If the $\infty$-existence time of the current configuration is $\infty$, then the machine can be observed for eternity. Otherwise, it can be observed until the $\infty$-existence time using $\gtfNonNegativeSingularities$ and from there at least until the least time until which all the crude predictions mentioned above can be observed; this time span is named in

  \begin{definition}
    \begin{align*}
      t_{-1} \from \Configurations &\to \extendedTimes, \mathnote{map $t_{-1}$ from $\Configurations$ to $\extendedTimes$}\index[symbols]{t-1@$t_{-1}$}\\
      c &\mapsto
        \begin{dcases*}
          \infty, &if $t_\infty(c) = \infty$,\\
          \inf_{\varepsilon \in \R_{> 0}}((\varepsilon + t_\infty(c)) + t_\infty(c_\varepsilon)), &otherwise,
        \end{dcases*}\\
        &\phantom{\mapsto{}}\text{ where } c_\varepsilon = \gtfIgnoreCollisions(\varepsilon)(\gtfNonNegativeSingularities(t_\infty(c))(c)). \qedhere
    \end{align*}
  \end{definition}

  How the machine evolves until and beyond $t_\infty$ and at most until $t_{-1}$ is given in

  \begin{definition}
    For each time $t \in \extendedTimes$, let
    \begin{equation*}
      \Configurations_t^{-1} = \setOf{c \in \Configurations \suchThat t \leq t_{-1}(c)}
      \mathnote{set $\Configurations_t^{-1}$} 
      \index[symbols]{Cnft-1@$\Configurations_t^{-1}$}
    \end{equation*}
    and let
    \begin{align*}
      \gtfNegativeSingularityOfOrderMinusOne \from \extendedTimes &\to \bigcup_{t \in \extendedTimes} (\Configurations_t^{-1} \to \Configurations), \mathnote{map $\gtfNegativeSingularityOfOrderMinusOne$ from $\extendedTimes$ to $\bigcup_{t \in \extendedTimes} (\Configurations_t^{-1} \to \Configurations)$}\index[symbols]{boxast@$\gtfNegativeSingularityOfOrderMinusOne$}\\
      t &\mapsto
        \left[
          \begin{aligned}
            \Configurations_t^{-1} &\to \Configurations,\\
            c &\mapsto
              \begin{dcases*}
                \gtfNonNegativeSingularities(t)(c), &if $t \leq t_\infty(c)$,\\
                \liminf_{\varepsilon \downTo 0} \gtfNonNegativeSingularities(t - (\varepsilon + t_\infty(c)))(c_\varepsilon), &otherwise,
              \end{dcases*}\\
              &\phantom{\mapsto{}}\text{ where } c_\varepsilon = \gtfIgnoreCollisions(\varepsilon)(\gtfNonNegativeSingularities(t_\infty(c))(c)),
          \end{aligned}
        \right]
    \end{align*}
    where the limit inferior is the pointwise limit inferior of set-valued maps, in greater detail, for each family $\family{c_\varepsilon}_{\varepsilon \in \R_{> 0}}$ of set-valued maps from $M$ to $Q$, the pointwise limit inferior $\liminf_{\varepsilon \downTo 0} c_\varepsilon$ is the map $c \from M \to Q$, $m \mapsto \bigcup_{r \in \R_{> 0}} \bigcap_{s \geq r} c_{1/s}(m)$.
  \end{definition}

  \begin{remark}
    For each configuration $c \in \Configurations$, we have $t_\infty(c) \leq t_{-1}(c)$; for each time $t \in \extendedTimes$, we have $\Configurations_t^\infty \subseteq \Configurations_t^{-1}$; and for each configuration $c \in \Configurations$ and each positive real number $\varepsilon$, we have $t_{-1}(c) - (\varepsilon + t_\infty(c)) \leq t_\infty(c_\varepsilon)$.
  \end{remark}

  \begin{remark}
    Let the signal machine be in a configuration $c$ at time $0$. If $t_\infty(\gtfNonNegativeSingularities(t_\infty(c))(c)) = 0$, then there is a singularity of order $-1$ at time $t_\infty(c)$. Otherwise, there is not. In the latter case, as one would hope, $t_{-1}(c) = t_\infty(\gtfNonNegativeSingularities(t_\infty(c))(c))$, and, for each time $t \in \closedInterval{t_\infty(c), t_{-1}(c)}$, we have $\gtfNegativeSingularityOfOrderMinusOne(t)(c) = \gtfNonNegativeSingularities(t - t_\infty(c))(\gtfNonNegativeSingularities(t_\infty(c))(c))$. 
  \end{remark}

  Unlike for a singularity of a non-negative order, for a singularity of order $-1$ it is not possible to jump to the time right after the singularity as there is no such time. However, we may jump over the singularity at time $t_\infty$ to the time $t_{-1}$. This is done by the map 

  \begin{definition}
    \index[symbols]{Deltadot-1@$\gtfJump_{-1}$}\begin{align*}
      \gtfJump_{-1} \from \Configurations &\to \Configurations, \mathnote{map $\gtfJump_{-1}$ from $\Configurations$ to $\Configurations$}\\
      c &\mapsto
        \begin{dcases*}
          c, &if $t_\infty(c) = \infty$,\\
          \gtfNegativeSingularityOfOrderMinusOne(t_{-1}(c))(c), &otherwise. \qedhere
        \end{dcases*}
    \end{align*}
  \end{definition}

  Like for singularities of non-negative orders, such of order $-1$ may accumulate, which is a \index{singularity}\defineX{singularity of order $-2$}{singularity!of order $-j$, for $j \in \Z_{\geq 2}$}\graffito{singularity of order $-j$, for $j \in \Z_{\geq 2}$}, such of order $-2$ my accumulate, which is a \defineX{singularity of order $-3$}{singularity!of order $-j$, for $j \in \Z_{\geq 2}$}, and so forth. The map to jump over singularities of order $-1$ has already been given and the maps to jump over singularities of smaller orders are introduced in

  \begin{definition}
    The sequence $\sequence{t_{-(j - 1)}^n}_{n \in \N_0}$, where the $n$ in $t_{-(j - 1)}^n$ is an upper index and does not stand for exponentiation, the map $t_{-j}$, and the map $\gtfJump_{-j}$, for $j \in \Z_{\geq 2}$, are defined by mutual induction as follows: The maps $t_{-1}$ and $\gtfJump_{-1}$ have already been defined and, for each integer $j$ such that $j \geq 2$, let
    \begin{equation*}
      \sequence*{
        \begin{aligned}
          t_{-(j - 1)}^n \from \Configurations &\to \extendedTimes, \index[symbols]{tj-1nz@$t_{-(j - 1)}^n$}\\
          c &\mapsto \sum_{i = 0}^{n - 1} t_{-(j - 1)}(\gtfJump_{-(j - 1)}^i(c)),
        \end{aligned}
      }_{n \in \N_0} \mathnote{map $t_{-(j - 1)}^n$ from $\Configurations$ to $\extendedTimes$}
    \end{equation*}
    (note that $t_{-(j - 1)}^0 = 0$), let
    \begin{align*}
      t_{-j} \from \Configurations &\to \extendedTimes, \mathnote{map $t_{-j}$ from $\Configurations$ to $\extendedTimes$}\index[symbols]{tjsubscriptz@$t_{-j}$}\\
      c &\mapsto \lim_{n \to \infty} t_{-(j - 1)}^n(c),
    \end{align*}
    and let\index[symbols]{Deltadotj-@$\gtfJump_{-j}$}
    \begin{align*}
      \gtfJump_{-j} \from \Configurations &\to \Configurations, \mathnote{map $\gtfJump_{-j}$ from $\Configurations$ to $\Configurations$}\\
      c &\mapsto
        \left\{
          \begin{aligned}
            &c, \text{ if $t_{-j}(c) = \infty$},\\
            &\liminf_{n \to \infty} \gtfIgnoreCollisions(t_{-j}(c) - t_{-(j - 1)}^n(c))(\gtfJump_{-(j - 1)}^n(c)),\\
            &&\llap{otherwise.} \qedhere
          \end{aligned}
        \right.
    \end{align*}
  \end{definition}

  The machine can be observed for all times before $\lim_{j \to \infty} t_{-j}(c)$, where $c$ is the initial configuration of the machine. This time is given a name in

  \begin{definition}
    For each configuration $c \in \Configurations$, the non-negative real number or infinity $t_{-\infty}(c) = \lim_{j \to \infty} t_{-j}(c)$ is called \graffito{$(-\infty)$-existence time $t_{-\infty}(c)$ of $c$}\defineX{$(-\infty)$-existence time of $c$}{existence time infinity minus@$(-\infty)$-existence time of $c$}\index[symbols]{tinftycminus@$t_{-\infty}(c)$}, and the closed interval $\closedInterval{0, t_{-\infty}(c)}$ is called \graffito{$(-\infty)$-existence interval $\closedInterval{0, t_{-\infty}(c)}$ of $c$}\defineX{$(-\infty)$-existence interval of $c$}{existence interval infinity minus@$(-\infty)$-existence interval of $c$}.
  \end{definition}

  Like for singularities of non-negative orders, repeated applications of powers of the maps $\gtfJump_{-j}$, for decreasing $j \in \N_+$, let us jump over singularities of decreasing negative orders down to order $-1$. The resulting map is given a name in

  \begin{definition}
    For each positive integer $j$, each non-negative integer $k$, and each finite sequence $\sequence{n_i}_{i = -j}^{-k}$ of non-negative integers that is indexed from $-j$ down to $-k$, let
    \begin{equation*}
      \gtfJump_{\sequence{n_i}_{i = -j}^{-k}}
      = \begin{dcases*}
          \identityMap_{\Configurations}, &if $-j < -k$,\\
          \gtfJump_{-j}^{n_{-j}} \after \gtfJump_{-j - 1}^{n_{-j - 1}} \after \dotsb \after \gtfJump_{-k}^{n_{-k}}, &otherwise, 
        \end{dcases*}
      \mathnote{map $\gtfJump_{\sequence{n_i}_{i = -j}^{-k}}$ from $\Configurations$ to $\Configurations$}
      \index[symbols]{Deltadotniijksequenceminus@$\gtfJump_{\sequence{n_i}_{i = -j}^{-k}}$}
    \end{equation*}
    and let
    \begin{equation*}
      t_{\sequence{n_i}_{i = -j}^{-k}} = \sum_{i = -j}^{-k} t_i^{n_i} \after \gtfJump_{\sequence{n_\ell}_{\ell = i - 1}^{-k}}. \qedhere
      \mathnote{map $t_{\sequence{n_i}_{i = -j}^{-k}}$ from $\Configurations$ to $\extendedTimes$}
      \index[symbols]{tniijksequenceminus@$t_{\sequence{n_i}_{i = -j}^{-k}}$}
    \end{equation*}
  \end{definition}

  Like for singularities of non-negative orders, to compute the configuration the machine is in at the $(-\infty)$-existence time of the initial configuration, we can use the maps $\gtfJump_{-j}$ for increasing $j$ to jump from singularities of negative lower orders to singularities of ever higher orders ad infinitum. And, to compute the configuration at time $t$ before the $(-\infty)$-existence time, we can use one of the maps $\gtfJump_{\sequence{n_i}_{i = -1}^{-k}}$ to jump over all the singularities before time $t$ such that the next jump over a singularity of order $-1$ would be beyond time $t$ and then we can use the map $\gtfNegativeSingularityOfOrderMinusOne$ to jump from there to time $t$ whereby we may cross a singularity of order $-1$. The resulting map describes the time evolution of the signal machine before or at $(-\infty)$-existence times, which for the purposes of this treatise is the complete time evolution, and this map is given in

  \begin{definition}
    For each time $t \in \extendedTimes$, the set of configurations whose $(-\infty)$-existence interval contains $t$ is
    \begin{equation*}
      \Configurations_t^{-\infty} = \setOf{c \in \Configurations \suchThat t \leq t_{-\infty}(c)}.
      \mathnote{set $\Configurations_t^{-\infty}$} 
      \index[symbols]{Cnftinfinityminus@$\Configurations_t^{-\infty}$}
    \end{equation*}
    The map
    \begin{align*}
      \globalTransitionFunction \from \extendedTimes &\to \bigcup_{t \in \extendedTimes} (\Configurations_t^{-\infty} \to \Configurations), \mathnote{global transition function $\globalTransitionFunction$ from $\extendedTimes$ to $\bigcup_{t \in \extendedTimes} (\Configurations_t^{-\infty} \to \Configurations)$}\index[symbols]{boxdot@$\globalTransitionFunction$}\\ 
      t &\mapsto
        \left[
          \begin{aligned}
            \Configurations_t^{-\infty} &\to \Configurations,\\
            c &\mapsto
              \left\{
                \begin{aligned}
                  &\liminf_{j \to \infty} \gtfJump_{-j}(c), \text{ if $t = t_{-\infty}(c)$},\\
                  &\gtfNegativeSingularityOfOrderMinusOne(t - t_{\sequence{n_i}_{i = -1}^{-k}}(c))(\gtfJump_{\sequence{n_i}_{i = -1}^{-k}}(c)), \text{ otherwise},
                \end{aligned}
              \right.\\
              &\phantom{\mapsto{}}\text{ for the least } k \in \N_+ \text{ and the } \sequence{n_i}_{i = -1}^{-k} \in \N_0^k\\
              &\phantom{\mapsto{}}\text{ with } t \in \leftClosedAndRightOpenInterval{t_{\sequence{n_i}_{i = -1}^{-k}}(c), t_{(n_{-1} + 1, n_{-2}, n_{-3}, \dotsc, n_{-k})}(c)},
          \end{aligned}
        \right]
    \end{align*}
    is called \define{global transition function}.
  \end{definition} 

  \begin{remark}
    Jérôme Olivier Durand-Lose introduces and studies another notion of signal machines in his paper \enquote{\citetitle*{durand-lose:2008}}\cite{durand-lose:2008}. The notable differences are the following: While our machines are defined over continuum graphs, his machines are defined over the real number line; while our machines may have infinitely many different kinds of signals, his machines may only have signals of a finite number of kinds (which he calls meta-signals); while in a configuration of our machines there may exist infinitely many signals (even at the same point), in configurations of his machines there may only exist finitely many signals; while the time evolution of our machines can be observed beyond singularities of any order, the time evolution of his machines already stops before singularities of order $1$, that is, before accumulations of collisions (as there are no vertices, other events do not exist for his machines).
  \end{remark}


  \section{Firing Squad Signal Machines}
  \label{section:firing-squad-solution}

  In this section, let $\Graph = \ntuple{\Vertices, \Edges, \eendsOf}$ be a non-trivial, finite, and connected undirected multigraph, let $\weightOf$ be an edge weighting of $\Graph$, let $M$ be a continuum representation of $\Graph$, and let $\general$ be a vertex of $M$, which we call \define{general}\graffito{general vertex $\general$}. When we say \emph{path}, we either mean \emph{path in $\Graph$} or \emph{path in $M$}; when we say \emph{longest path}, we either mean \emph{maximum-weight path in $\Graph$} or \emph{longest path in $M$}; and, when we say \emph{path}, we often mean \emph{non-empty direction-preserving path from vertex to vertex}; in all cases it should be clear from the context what is meant.

  From a broad perspective, the signal machine we construct in this section performs the following tasks: It cuts the graph such that the graph turns into a virtual tree; it starts synchronisation of edges as soon as possible and freezes it as late as possible; it determines the midpoints of all non-empty direction-preserving paths from vertices to vertices; it determines which midpoints are the ones of the longest paths; starting from the midpoints of the longest paths, it traverses midpoints of shorter and shorter paths and upon reaching midpoints of edges, it thaws synchronisation of the respective edges; all edges finish synchronisation at the same time with the creation of fire signals that lie dense in the graph (see \cref{figure:determinedMidpointOfOneEdgeOrOfTwoEdges,figure:MazoyerWithOneAndTwoEdges,figure:fsspWithTwoEdgesAndGeneralInBetweenAndAtTheLeft,figure:fsspWithThreeEdgesInARowAndGeneralAtTheSecondVertexFromTheLeft,figure:fsspWithThreeEdgesInARowAndGeneralAtTheSecondVertexFromTheLeft,figure:fsspWithThreeEdgesIncidentToTheSameVertexAndGeneralAtTheVertexAndAtTheLeafOfTheShortestEdge}). A more detailed account is given in

  \begin{figure}
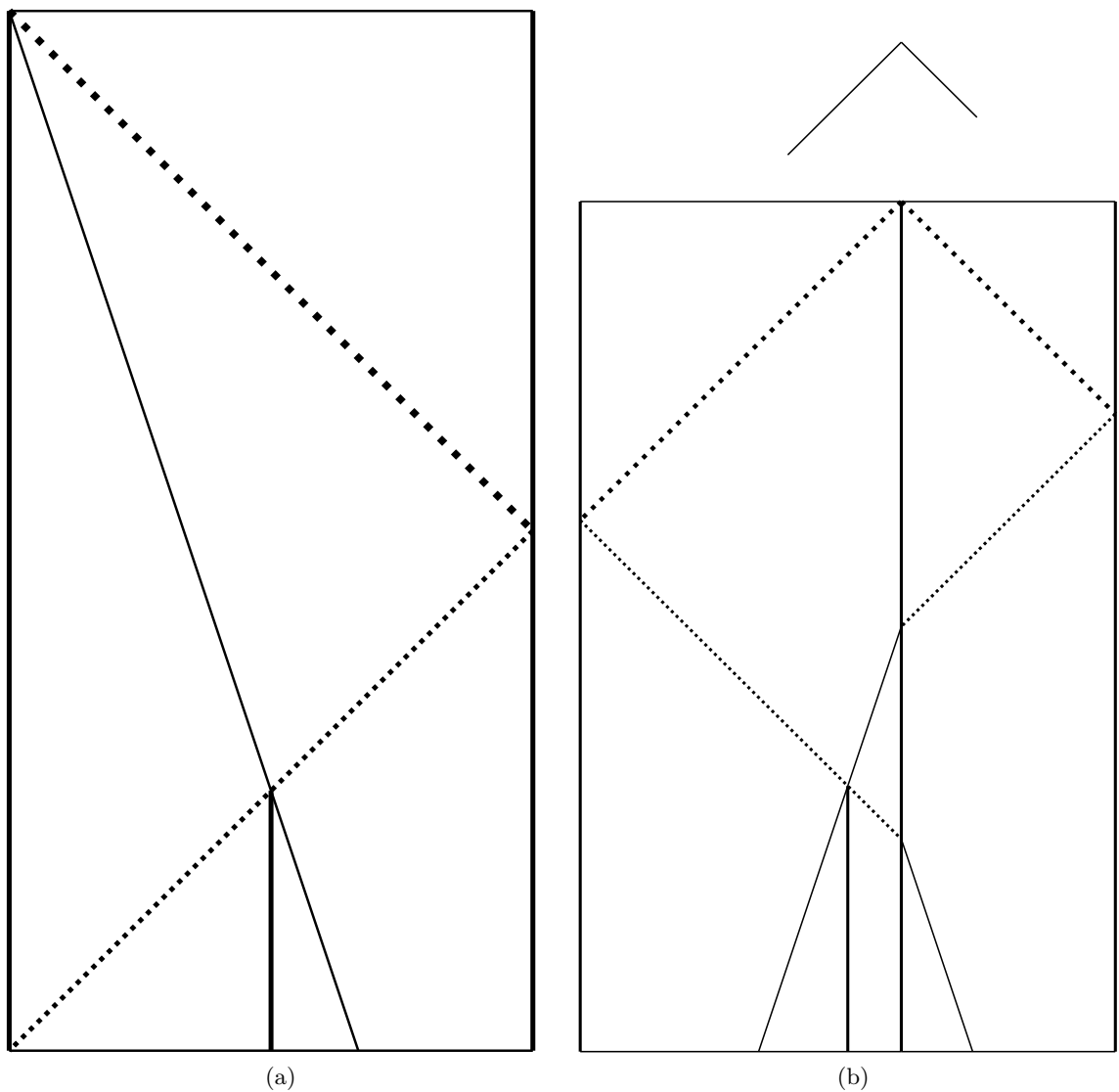

    \begin{wide}
      \mbox{
        \subfloat[]{
          \resizebox{(\linewidth-1em)/2}{!}{\figureMidpointOneEdge}
          \label{figure:determineMidpointOfOneEdge}
        }
        \subfloat[]{
          \resizebox{(\linewidth-1em)/2}{!}{\figureMidpointTwoEdges}
          \label{figure:determineMidpointOfTwoEdges}
        }
      }
      \caption{Both subfigures depict a space-time diagram of the time evolution of the signal machine that we construct in \cref{section:firing-squad-solution}. The diagram in \cref{figure:determineMidpointOfOneEdge} illustrates how, beginning from one end of an edge, the midpoint of the edge is found, which works analogously for paths instead of edges, and the diagram in \cref{figure:determineMidpointOfTwoEdges} illustrates how, beginning from the inner vertex of a path consisting of two edges, the midpoint of the path is found, which works analogously for longer paths; above the right space-time diagram is a scaled-down depiction of the two-edged path. Vertices are thick and solid lines; find-midpoint signals of speed $1$ are thick and dotted lines; reflected find-midpoint signals of speed $1$ are densely dotted lines; slowed-down find-midpoint signals of speed $1/3$ are solid lines; stationary midpoint signals are thick and solid lines.} 
      \label{figure:determinedMidpointOfOneEdgeOrOfTwoEdges}
    \end{wide}
  \end{figure}

  \begin{figure}
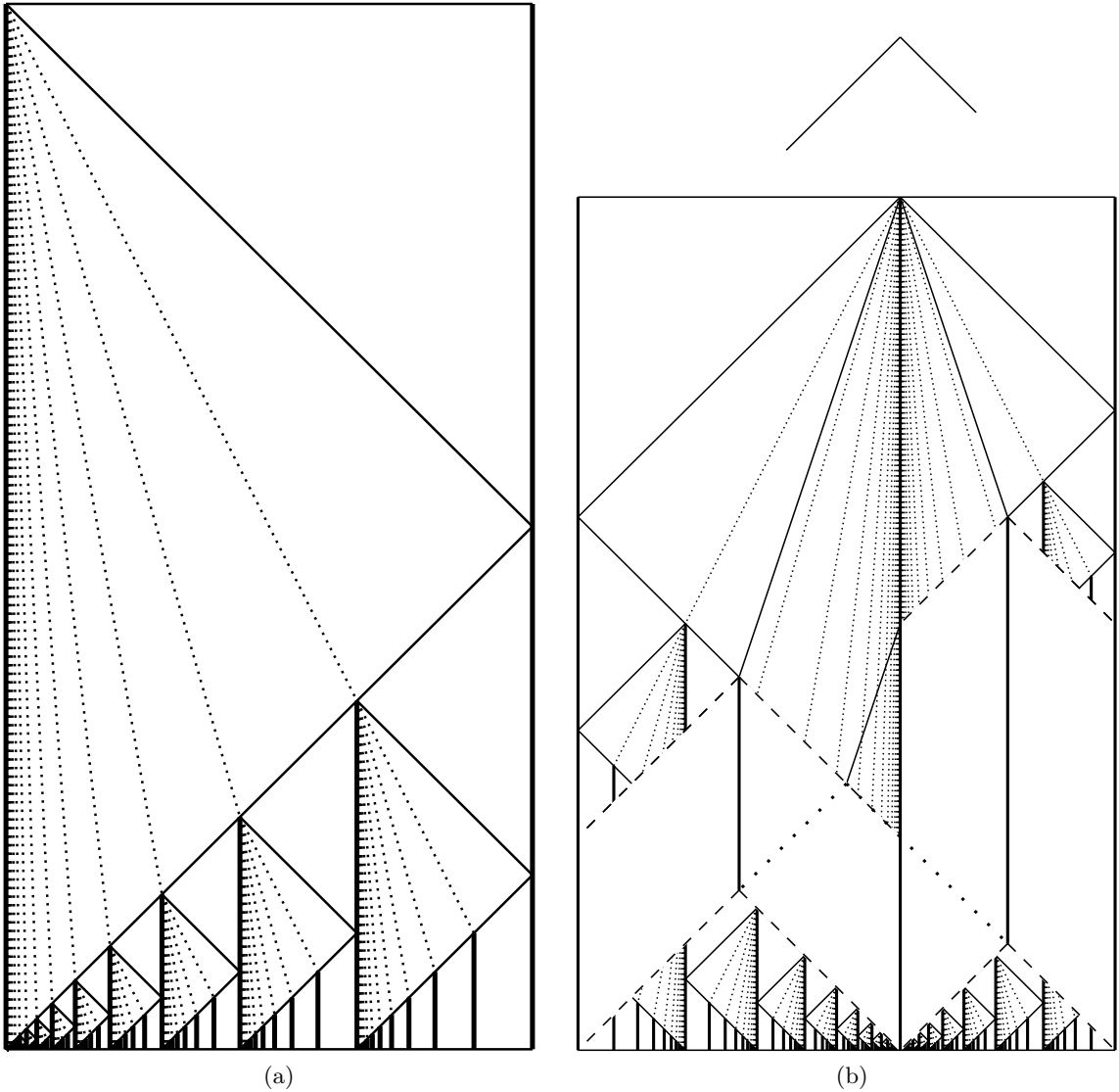

    \begin{wide}
      \mbox{
        \subfloat[]{
          \resizebox{(\linewidth-1em)/2}{!}{\figureMazoyer}
          \label{figure:Mazoyer}
        }
        \subfloat[]{
          \resizebox{(\linewidth-1em)/2}{!}{\figureFSSPMazoyerWithTwoEdges}
          \label{figure:MazoyerWithTwoEdges}
        }
      }
      \caption{Both subfigures depict a space-time diagram of the time evolution of the signal machine that we construct in \cref{section:firing-squad-solution}. The diagram in \cref{figure:Mazoyer} illustrates how, beginning from one end of an edge, the edge is synchronised by recursively dividing it into two parts, where one part is two-thirds the superpart's length and the other part is one-third the superpart's length, and creating fire signals when reflected divide signals reach the boundaries they originated at, which happens for all at the same time and these fire signals lie dense in the edge. And, the diagram in \cref{figure:MazoyerWithTwoEdges} illustrates how, beginning from the inner vertex of a path consisting of two edges, the path is synchronised by synchronising both edges, freezing the synchronisation of individual edges as late as possible and thawing it as early as possible such that both edges finish at the same time. Note that in both subfigures, at each boundary, there is a singularity of order $1$ at the last depicted time, and in \cref{figure:MazoyerWithTwoEdges}, there are additional singularities of order $1$ at time twice the longer edge's length and at time twice the shorter edge's length, and of order $-1$ at various points of the freeze and thaw signals. Only the most relevant signals are depicted and these only for the most relevant time spans. Initiate signals, divide signals of type $0$, and find-midpoint signals, all of speed $1$, are solid lines; reflected divide signals and reflected find-midpoint signals, both of speed $1$, are solid lines; divide signals of type $n \in \N_+$, which have speed $(2/3)^n / (2 - (2/3)^n)$, are densely dotted lines; slowed-down find-midpoint signals of speed $1/3$ are solid lines; stationary boundary and midpoint signals are thick and solid lines; freeze signals of speed $1$ are dashed lines; thaw signals of speed $1$ that do not thaw synchronisation of the edge they are on are thick and loosely dotted lines, and the other thaw signals of speed $1$ are dashed lines.} 
      \label{figure:MazoyerWithOneAndTwoEdges}
    \end{wide}
  \end{figure}

  \begin{figure}
    \begin{wide}
      \mbox{
        \subfloat[]{
          \resizebox{(\linewidth-1em)/2}{!}{\figureFSSPWithTwoEdgesAndGeneralInBetween}
          \label{figure:fsspWithTwoEdgesAndGeneralInBetween}
        }
        \subfloat[]{
          \resizebox{(\linewidth-1em)/2}{!}{\figureFSSPWithTwoEdgesAndGeneralAtTheLeft}
          \label{figure:fsspWithTwoEdgesAndGeneralAtTheLeft}
        }
      }
      \caption{Both subfigures depict the space-time diagram of the time evolution of the signal machine that we construct in \cref{section:firing-squad-solution} for the scaled-down depicted tree with two edges, beginning with the initial configuration for the firing squad synchronisation problem --- which is the configuration in which initiate signals, find-midpoint signals, and slowed-down find-midpoint signals emanate from the general onto all incident edges --- and ending with the final configuration of the problem --- which is the configuration in which the fire signals lie dense in the multigraph ---, where in \cref{figure:fsspWithTwoEdgesAndGeneralInBetween} the general is the vertex that is incident to both edges and in \cref{figure:fsspWithTwoEdgesAndGeneralAtTheLeft} it is the leaf of the longer edge. Only the most relevant signals are depicted and these only for the most relevant time spans. Initiate signals and find-midpoint signals, which in the depicted cases always travel alongside each other, are thick and dotted lines; reflected find-midpoint signals are densely dotted lines; slowed-down find-midpoint signals are solid lines; midpoint signals are thick and solid lines; freeze signals are dashed lines; thaw signals that do not thaw synchronisation of the edge they are on are thick and loosely dotted lines, and the other thaw signals are dashed lines. The synchronisation of individual edges, before it is frozen and after it is thawed, is schematically represented by hatch patterns.} 
      \label{figure:fsspWithTwoEdgesAndGeneralInBetweenAndAtTheLeft}
    \end{wide}
  \end{figure}

  \begin{figure}
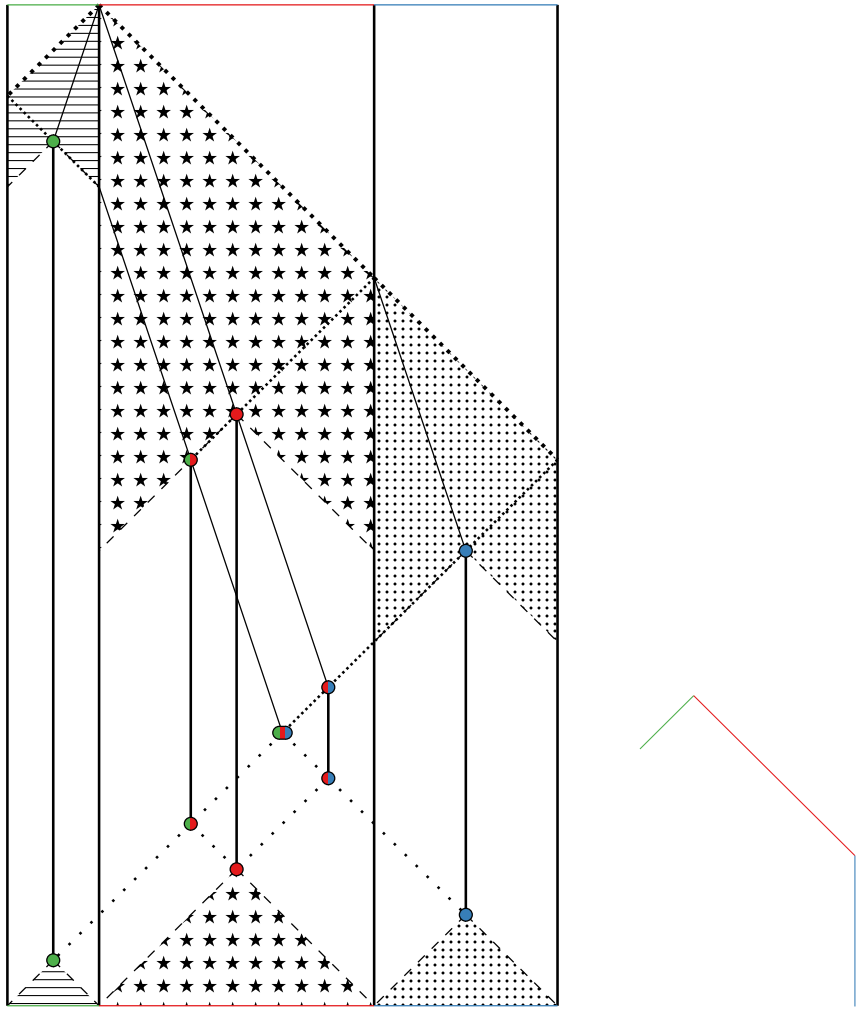

    \begin{wide}
      \resizebox{(\linewidth-1em)/2}{!}{\figureFSSPWithThreeEdgesInARowAndGeneralAtTheSecondVertexFromTheLeft}
      \qquad
      \figureTheTreeThatIsSynchronised
      \caption{The figure depicts the space-time diagram of the time evolution of the signal machine that we construct in \cref{section:firing-squad-solution} for the tree with three edges depicted on the right, beginning with the initial configuration for the firing squad synchronisation problem --- which is the configuration in which initiate signals, find-midpoint signals, and slowed-down find-midpoint signals emanate from the general onto all incident edges --- and ending with the final configuration of the problem --- which is the configuration in which the fire signals lie dense in the multigraph ---, where the general is the second vertex from the left. Only the most relevant signals are depicted and these only for the most relevant time spans. Initiate signals and find-midpoint signals are thick and dotted lines; reflected find-midpoint signals are densely dotted lines; slowed-down find-midpoint signals are solid lines; midpoint signals are thick and solid lines; freeze signals are dashed lines; thaw signals that do not thaw synchronisation of the edge they are on are thick and loosely dotted lines, and the other thaw signals are dashed lines. When a midpoint is found and when the corresponding midpoint signal collides with a matching thaw signal, the path it is the midpoint of is represented by the colours of the path's edges in a little disk or rounded rectangle. The synchronisation of individual edges, before it is frozen and after it is thawed, is schematically represented by hatch patterns.}
      \label{figure:fsspWithThreeEdgesInARowAndGeneralAtTheSecondVertexFromTheLeft}
    \end{wide}
  \end{figure}

  \begin{figure}
    \begin{wide}
      \mbox{
        \subfloat[]{
          \resizebox{(\linewidth-1em)/2}{!}{\figureFSSPWithThreeEdgesIncidentToTheGeneralVertex} 
          \label{figure:fsspWithThreeEdgesIncidentToTheSameVertexAndGeneralAtTheVertex}
        }
        \subfloat[]{
          \resizebox{(\linewidth-1em)/2}{!}{\figureFSSPWithThreeEdgesIncidentToTheSameVertexAndTheGeneralIsNotOnTheLongestPath}
          \label{figure:fsspWithThreeEdgesIncidentToTheSameVertexAndGeneralAtTheShortestEdge}
        }
      }
      \caption{Both subfigures depict the space-time diagram of the time evolution of the signal machine that we construct in \cref{section:firing-squad-solution} for the scaled-down depicted tree with three edges, beginning with the initial configuration for the firing squad synchronisation problem --- which is the configuration in which initiate signals, find-midpoint signals, and slowed-down find-midpoint signals emanate from the general onto all incident edges --- and ending with the final configuration of the problem --- which is the configuration in which the fire signals lie dense in the multigraph. In \cref{figure:fsspWithThreeEdgesIncidentToTheSameVertexAndGeneralAtTheVertex} the general is the one that is incident to all edges; and in \cref{figure:fsspWithThreeEdgesIncidentToTheSameVertexAndGeneralAtTheShortestEdge} it is the leaf of the shortest edge. Only the most relevant signals are depicted and these only for the most relevant time spans. Each edge has a colour, leaves are coloured according to the edge they are incident to, and signals and hatch patterns are coloured according to the edge they are on. The shortest edge is depicted twice, once coloured red and overlaying the green edge, and once coloured violet and overlaying the blue edge; this makes it easier to follow signals that travel from or onto the shortest edge. Initiate signals and find-midpoint signals, which in the depicted cases always travel alongside each other, are thick and dotted lines; reflected find-midpoint signals are densely dotted lines; slowed-down find-midpoint signals are solid lines; midpoint signals are thick and solid lines; freeze signals are dashed lines; thaw signals that do not thaw synchronisation of the edge they are on are thick and loosely dotted lines, and the other thaw signals are dashed lines. When a midpoint is found and when the corresponding midpoint signal collides with a matching thaw signal, the path it is the midpoint of is represented by the colours of the path's edges in a little disk. The synchronisation of individual edges, before it is frozen and after it is thawed, is schematically represented by hatch patterns.}
      \label{figure:fsspWithThreeEdgesIncidentToTheSameVertexAndGeneralAtTheVertexAndAtTheLeafOfTheShortestEdge}
    \end{wide}
  \end{figure}

  \begin{remark} 
    \begin{aenumerate}
      \item Turn the graph into a tree: Initiate signals of speed $1$ spread from the general throughout the graph without making U-turns and vanish at leaves. When they collide on an edge but not in one of its ends, the edge is cut in two at the point of collision by two stationary leaf signals (also called \emph{virtual leaves}), one for each of the two new ends. When they collide in a vertex coming from all incident edges, all edges are cut off by leaf signals; and when the do not come from all incident edges, all but one of the edges the signals come from are cut off. In the latter case, the initiate signal that comes from the edge that is not cut off spreads onto the edges from which no initiate signals came.

            In this way all cycles are eventually broken up and the graph is turned into a virtual tree. Because the virtual leaves are created as soon as possible and because they are treated just like normal leaves, we may and will assume in the description of the other tasks that the graph is a tree.
      \item Start and freeze synchronisation of edges (see \cref{figure:MazoyerWithOneAndTwoEdges}): When initiate signals reach a vertex, for each incident edge, synchronisation of the edge is started from the vertex immediately and freezing of this synchronisation is started from the midpoint of the edge after $3/2$ times the edge's length time units by sending freeze signals of speed $1$ to both ends of the edge and finishes after twice times the edge's length time units.

            Each edge is synchronised by recursively dividing it into two parts, one having two-third its length and the other having one-third its length. This division procedure becomes finer and finer, in other words, divisions accumulate, the closer the time evolution gets to the time $2$ times the edge's length after of the edge's synchronisation started.

            The division of one part is performed by sending divide signals of speed $(2/3)^n / (2 - (2/3)^n)$, for $n \in \N_0$, from one boundary onto the part, reflecting the divide signal of speed $1$ at the other boundary, and, upon collision of the reflected divide signal with a divide signal, letting the reflected divide signal move on, removing the involved divide signal, creating a stationary boundary signal and sending divide signals of the above speeds onto the newly created part the reflected divide signal comes from. 
      \item Determine the midpoints of all direction-preserving paths from vertices to vertices (see \cref{figure:determinedMidpointOfOneEdgeOrOfTwoEdges}): The general and initiate signals that reach vertices send find-midpoint signals of speed $1$ and slowed-down find-midpoint signals of speed $1/3$ in all directions to determine the midpoints of all paths that contain the general and the reached vertex respectively. These signals spread throughout the graph without making U-turns \emph{memorising the paths they take} and (fast) find-midpoint signals are additionally reflected. 

            Reflected find-midpoint signals of speed $1$ take the paths back they took before being reflected \emph{memorising the path to the vertex they were reflected at and the remaining path they have to take}. Upon finishing their path at the vertex they originated at, slowed-down find-midpoint signals of speed $1/3$ are sent onto all edges except the one the reflected signal comes from, \emph{they memorise the path from the origin vertex to the reflection vertex of the reflected find-midpoint signal}, and they spread throughout the graph without making U-turns \emph{memorizing the paths they take} and vanish at leaves.

            When a reflected find-midpoint signal collides with a slowed-down find-midpoint signal that originated at the same vertex, the point of collision is the midpoint of the concatenation of the paths the two signals took after being reflected, that is, the path from the vertex the reflected find-midpoint signal was reflected at over the point of collision over the vertex both signals originated at to the vertex the reflected find-midpoint signal that spawned the slowed-down find-midpoint signal was reflected at. Each midpoint is designated by a stationary midpoint signal that \emph{memorises the path it is the midpoint of along with its position on the path}.

            When a reflected find-midpoint signal collides with a reflected find-midpoint signal that originated at the same vertex and the point of collision is the origin vertex itself, both reflection vertices have the same distance to the origin vertex and this vertex is the midpoint of the concatenation of the paths the two signals took after being reflected. 

            To get a clearer picture of how midpoints are determined, let us focus on only a few signals and let us ignore boundary cases: When an initiate signal reaches a vertex, one find-midpoint signal is sent along one incident edge, and another find-midpoint signal is sent along another incident edge. Both signals travel along edges and upon reaching a vertex they are either reflected and travel back or they take one of the incident edges that leads them further away. At the latest, they are reflected upon reaching a leaf. One of the reflected find-midpoint signals returns first to the vertex it originated at, is slowed down there, and the slowed down signal travels towards the other (reflected) find-midpoint signal. At some time in the future, the slowed-down find-midpoint signal collides with the other, now reflected, find-midpoint signal and the point of collision is the midpoint of the concatenation of the paths the two signals took after being reflected. 
      \item Determine the midpoints of the longest direction-preserving paths (see \cref{figure:fsspWithTwoEdgesAndGeneralInBetweenAndAtTheLeft,figure:fsspWithThreeEdgesInARowAndGeneralAtTheSecondVertexFromTheLeft}): The midpoints of the longest paths are eventually found. But this is not sufficient, they also need to be recognised as such. To that end, each reflected find-midpoint signal and each slowed-down find-midpoint signal carries a boolean that indicates whether the path it took from the vertex it was reflected at may be the subpath of a longest path that has the same source (or target), and whether the signal would be the first one to find the longest path's midpoint (recall that a slowed-down find-midpoint signal was either spawned by the general vertex or an initiate signal that reached a vertex, in which case we can think of this vertex as the one the slowed-down find-midpoint signal was reflected at; or it was originally a reflected find-midpoint signal and knows where that signal was reflected at). Let us call signals that carry the boolean $\yes$ \emph{marked} and the others not.

            At each vertex that is not a leaf, a stationary count signal memorises the directions from which marked reflected find-midpoint signals that originated at the vertex or from which marked slowed-down find-midpoint signals with any origin have already returned. Because longest paths always end at leaves, for each leaf, find-midpoint signals are marked when they are the first ones to be reflected at the leaf and not otherwise. When a marked signal reaches a vertex, it stays marked, if, including itself, from all but one direction have marked signals already returned (which is the case if and only if it is the last signal to return from its direction and the penultimate signal to return among such signals from either direction), and it is unmarked, otherwise.

            This means that a marked reflected find-midpoint signal is unmarked, if there are at least two other signals that started out at the same time at the same vertex but take longer to return (because they have a longer way; the combined paths that two of the other signals take may be a longest path), and it stays marked, if all other signals except for one that all started out at the same time at the same vertex returned earlier (because they had a shorter way).

            And it means that a marked slowed-down find-midpoint signal is unmarked, if it is unclear which of the incident edges belong to longest paths, and it stays marked, otherwise, which is the case if from precisely one direction no slowest reflected find-midpoint signal that originated at the vertex has returned yet. This is not only pessimistic, meaning that we do not falsely consider midpoints of paths that are not among the longest as such, but also correct, meaning that we still find the midpoints of all longest paths in time (see \cref{subsection:midpoints-of-maximum-weight-paths-are-recognised}).

            When a marked reflected find-midpoint signal reaches its origin and is the penultimate such signal to do so, it turns into a marked slowed-down find-midpoint signal that travels in the one direction from which no marked signal has returned yet and the point at which this signal collides with the one signal that has not yet returned will be the midpoint of a longest path, if both signals are still marked at the time of collision.
      \item Traverse midpoints, thaw synchronisation of edges, and fire (see \cref{figure:fsspWithThreeEdgesInARowAndGeneralAtTheSecondVertexFromTheLeft,figure:fsspWithThreeEdgesIncidentToTheSameVertexAndGeneralAtTheVertexAndAtTheLeafOfTheShortestEdge}): The midpoints of the longest paths are found at the same time, at which, from each such midpoint, two thaw signals of speed $1$ are sent that travel along the midpoint's path towards both of its ends. When a thaw signal collides with the midpoint of a path such that one of the two subpaths from the midpoint to either end of the path coincides with the remaining path the thaw signal travels along, an additional thaw signal is created that travels along the other subpath. On its way from the midpoint of the last edge of the path a thaw signal travels on to the end of the path, the thaw signal thaws all frozen signals it collides with. All thaw signals reach the ends of their paths at the same time, which is also the time all edges finish synchronisation with the creation of stationary fire signals that lie dense in the graph. \qedhere
    \end{aenumerate} 
  \end{remark}

  We introduce a typographic convention in

  \begin{definition}
    Each word of letters of the Latin alphabet that is written in typewriter font shall denote the word itself and shall for example not be the name of a variable.
  \end{definition}

  We introduce a boolean algebra in

  \begin{definition} 
    Let $\booleans = \setOf{\no, \yes}$\graffito{booleans $\booleans = \setOf{\no, \yes}$}\index[symbols]{Bblackboard@$\booleans$}. Each element $b \in \booleans$ is called \define{boolean}\index[symbols]{b@$b$}. The map
    \begin{align*}
      \neg \from \booleans &\to \booleans, \mathnote{negation $\neg$}\index[symbols]{negate@$\neg$}\\
      \no &\mapsto \yes,\\
      \yes &\mapsto \no,
    \end{align*}
    is called \define{negation}. The map
    \begin{align*}
      \land \from \booleans \times \booleans &\to \booleans, \mathnote{conjunction $\land$}\index[symbols]{wedge@$\land$}\\
      (b, b') &\mapsto \begin{dcases*}
        \no, &if $\no \in \setOf{b, b'}$,\\
        \yes, &otherwise,
      \end{dcases*}
    \end{align*}
    is called \define{conjunction}.
  \end{definition}

  We introduce finite lists of directions in

  \begin{definition}
    Let $\Directions^*$ be the set $\setOf{w \from \discreteInterval{1}{n} \to \Directions \suchThat n \in \N_0}$\graffito{set $\Directions^*$}\index[symbols]{Dirstarsuperscript@$\Directions^*$}. Each element $w \in \Directions^*$ is called \define{word over $\Directions$}\graffito{word $w$ over $\Directions$}\index[symbols]{w@$w$}; for each word $w \in \Directions^*$, the non-negative integer $\lengthOf{w} = \cardinalityOf{\domainOf(w)}$ is called \define{length of $w$}\graffito{length $\lengthOf{w}$ of $w$}\index[symbols]{absolutew@$\lengthOf{w}$}; the word $\emptyWord \from \emptyset \to \Directions$ is called \define{empty}\graffito{empty word $\emptyWord$}\index[symbols]{lambda@$\emptyWord$}; the map
    \begin{align*}
      \concat \from \Directions^* \times \Directions^* &\to \Directions^*, \mathnote{concatenation $\bullet$}\index[symbols]{bullet@$\concat$}\\
      (w, w') &\mapsto \left[
                         \begin{aligned}
                           \discreteInterval{1}{\lengthOf{w} + \lengthOf{w'}} &\to \Directions,\\
                           i &\mapsto \begin{dcases*}
                                        w(i), &if $i \leq \lengthOf{w}$,\\
                                        w'(i - \lengthOf{w}), &otherwise,
                                      \end{dcases*}
                         \end{aligned}
                       \right]
    \end{align*}
    is called \define{concatenation}.
  \end{definition}

  \begin{remark}
    The empty word is the only word of length $0$ and it is the neutral element of $\concat$.
  \end{remark}

  The signal machine we construct in this section has infinitely many kinds of signals, which are explained in \cref{remark:explanation-of-signals}, and given names, speeds, and data sets in

  \begin{definition}
    Let
    \begin{equation*}
      \Kinds = \setOf{\initiateKind, \leafKind, \countKind, \midpointKind, \findMidpointKind, \reflectedFindMidpointKind, \slowedDownFindMidpointKind, \freezeKind, \thawKind} \cup \parens[\big]{\bigcup_{n \in \N_0} \setOf{\divideKind_n, \frozenDivideKind_n}} \cup \setOf{\reflectedDivideKind, \boundaryKind, \fireKind, \frozenFireKind}, \mathnote{set $\Kinds$} 
    \end{equation*}
    let
    \begin{align*}
      \speedOf \from \Kinds &\to \R_{\geq 0}, \mathnote{map $\speedOf$}\\
            \initiateKind &\mapsto 1,\\
                \leafKind &\mapsto 0,\\
               \countKind &\mapsto 0,\\
            \midpointKind &\mapsto 0,\\
            \findMidpointKind &\mapsto 1,\\
        \reflectedFindMidpointKind &\mapsto 1,\\
            \slowedDownFindMidpointKind &\mapsto \frac{1}{3},\\
              \freezeKind &\mapsto 1,\\
                \thawKind &\mapsto 1,\\
            \divideKind_n &\mapsto \frac{\parens*{\frac{2}{3}}^n}{2 - \parens*{\frac{2}{3}}^n}, \text{ for } n \in \N_0,\\
      \frozenDivideKind_n &\mapsto 0, \text{ for } n \in \N_+,\\
          \reflectedDivideKind &\mapsto 1,\\
            \boundaryKind &\mapsto 0,\\
                \fireKind &\mapsto 0,\\
          \frozenFireKind &\mapsto 0,
    \end{align*}
    and let
    \begin{gather*}
      \NoData = \setOf{0}, \mathnote{set $\NoData$}\\ 
      \Data_{\initiateKind} = \NoData, \mathnote{family $\family{\Data_k}_{k \in \Kinds}$}\\
      \Data_{\leafKind} = \Directions,\\
      \Data_{\countKind} = \powerSetOf(\Directions),\\
      \Data_{\midpointKind} = \setOf{\setOf{w, w'} \subseteq \Directions^* \suchThat w \neq w'},\\
      \Data_{\findMidpointKind} = \Directions^*,\\
      \Data_{\reflectedFindMidpointKind} = \Directions^* \times \Directions^* \times \booleans,\\
      \Data_{\slowedDownFindMidpointKind} = \Directions^* \times \Directions^* \times \booleans,\\
      \Data_{\freezeKind} = \NoData,\\
      \Data_{\thawKind} = \Directions^* \times \booleans,\\
      \Data_{\divideKind_n} = \NoData, \text{ for } n \in \N_0,\\
      \Data_{\frozenDivideKind_n} = \Directions, \text{ for } n \in \N_+,\\
      \Data_{\reflectedDivideKind} = \NoData,\\
      \Data_{\boundaryKind} = \NoData,\\
      \Data_{\fireKind} = \NoData,\\
      \Data_{\frozenFireKind} = \NoData. \qedhere
    \end{gather*}
  \end{definition}

  The kinds together with their speeds and data sets determine the possible signals, which are recalled and given abbreviations in

  \begin{definition}
    The set of signals is
    \begin{equation*}
      \Signals = \setOf{(k, d, u) \suchThat k \in \Kinds \text{, } (\speedOf(k), d) \in \Arrows \text{, and } u \in \Data_k}. \mathnote{set $\Signals$}
    \end{equation*}
    Let
    \begin{gather*}
      \initiateSignal{d} = (\initiateKind, d, 0), \text{ for } d \in \Directions, \mathnote{abbreviations of signals like $\initiateSignal{d}$, $\reflectedFindMidpointSignal{w_o}{d}{w_r}{b}$ and $\frozenFireSignal$}\\
      \leafSignal{d} = (\leafKind, \every, d), \text{ for } d \in \Directions,\\
      \countSignal{D} = (\countKind, \every, D), \text{ for } D \in \Data_{\countKind},\\
      \midpointSignal{w}{w'} = (\midpointKind, \every, \setOf{w, w'}), \text{ for } \setOf{w, w'} \in \Data_{\midpointKind},\\
      \findMidpointSignal{w_o}{d} = (\findMidpointKind, d, w_o), \text{ for } w_o \in \Data_{\findMidpointKind},\\
      \reflectedFindMidpointSignal{w_o}{d}{w_r}{b} = (\reflectedFindMidpointKind, d, (w_o, w_r, b)), \text{ for } d \in \Directions \text{ and } (w_o, w_r, b) \in \Data_{\reflectedFindMidpointKind},\\
      \slowedDownFindMidpointSignal{d}{w_o}{w_r}{b} = (\slowedDownFindMidpointKind, d, (w_o, w_r, b)), \text{ for } d \in \Directions \text{ and } (w_o, w_r, b) \in \Data_{\slowedDownFindMidpointKind},\\
      \freezeSignal{d} = (\freezeKind, d, 0), \text{ for } d \in \Directions,\\
      \thawSignal{d}{w}{b} = (\thawKind, d, (w, b)), \text{ for } d \in \Directions \text{ and } (w, b) \in \Data_{\thawKind},\\
      \divideSignal{n}{d} = (\divideKind_n, d, 0), \text{ for } n \in \N_0 \text{ and } d \in \Directions,\\
      \frozenDivideSignal{n}{d} = (\frozenDivideKind_n, \every, d), \text{ for } n \in \N_+ \text{ and } d \in \Directions,\\
      \reflectedDivideSignal{d} = (\reflectedDivideKind, d, 0), \text{ for } d \in \Directions,\\
      \boundarySignal = (\boundaryKind, \every, 0),\\
      \fireSignal = (\fireKind, \every, 0),\\
      \frozenFireSignal = (\frozenFireKind, \every, 0). \qedhere
    \end{gather*}
  \end{definition}

  What signals of various kinds do when they reach a vertex or collide with one another, what the data they carry means, and what we call them is given a glimpse at in

  \begin{remark} 
  \label{remark:explanation-of-signals}
    \begin{aenumerate}
      \item Each signal of kind $\initiateKind$ has speed $1$; at each vertex it reaches it spreads in all directions that lead away from where it comes from, it initiates synchronisation of all incident edges except the one it comes from by sending divide signals onto them, it initiates the search for midpoints of paths that contain the vertex by sending (slowed-down) find-midpoint signals in all directions, and it initiates one component of the search for the longest paths of the graph by marking slowed-down find-midpoint signals if the vertex is a leaf; it carries no data; and it is called \graffito{initiate signal}\define{initiate signal}\index{signal!initiate}. The very first initiate signals spread from the general in all directions.
      \item Each signal of kind $\leafKind$ is stationary, designates a virtual leaf, carries the direction that leads onto the edge that is incident to the virtual leaf, and is called \define{leaf signal}\index{signal!leaf}\graffito{leaf signal}. Such signals are created when initiate signals collide in a vertex (or on an edge), which means that there is a cycle in the graph, and this cycle is broken up by virtually terminating the involved edge(s) with leaf signals. Each leaf signal is treated like a leaf in the following way: When signals collide with each other and with leaf signals, for each involved leaf signal, the collision of the signals that move in the opposite direction than the one the leaf signal carries is handled as if those signals collided in a leaf. Because leaf signals are created at points at the same time or before any other signal reaches them, the graph looks like a tree for all other signals.
      \item Each signal of kind $\countKind$ is stationary; is positioned at a vertex that is not a leaf; memorises the directions from which find-midpoint signals that originated at the vertex, were reflected, and may be on longest paths and would be the first ones to find their midpoints have already returned, in other words, it memorises the directions from which the slowest find-midpoint signals that originated at the vertex and travelled alongside initiate signals before they were reflected at a leaf have already returned; and is called \define{count signal}\index{signal!count}\graffito{count signal}. When an initiate signal reaches a vertex that is not a leaf, a count signal is created. 

            Note that, for the data set of count signals, instead of the infinite set $\powerSetOf(\Directions)$, we could have used the finite set $\powerSetOf(\setOf{1, 2, \dotsc, k})$, where $k$ is the maximum degree of the graph or the upper bound of the maximum degrees of the graphs to be considered and the numbers represent directions that lead away from vertices. The first choice of $k$ would make the signal machine depend on the graph, which is unconventional, whereas the latter would not and would also fit to the fact that solutions of firing mob synchronisation problems are usually considered for graphs whose maximum degrees are uniformly bounded by a constant.
      \item Each signal of kind $\midpointKind$ is stationary, designates the midpoint of a path, carries the directions that lead from its position to both ends of the path, and is called \define{midpoint signal}\index{signal!midpoint}\graffito{midpoint signal}. Such a signal is created when a reflected find-midpoint signal collides with a slowed-down find-midpoint signal that originated at the same vertex, or when two reflected find-midpoint signals that originated at the same vertex collide with each other, which only happens at the origin vertex itself. See \cref{figure:determinedMidpointOfOneEdgeOrOfTwoEdges}.
      \item Each signal of kind $\findMidpointKind$ has speed $1$, at each vertex it reaches it spreads in all directions that lead away from where it comes from (in the sense that, in each such direction, a signal of its kind is sent) and it is also reflected (in the sense that a reflected find-midpoint signal is sent in the direction from where it comes from), it carries the directions that lead from its position to the vertex the signal originated at, and it is called \define{find-midpoint signal}\index{signal!find-midpoint}\graffito{find-midpoint signal}. When an initiate signal reaches a vertex, for each incident edge, a find-midpoint signal whose origin is the vertex is created that travels onto the edge.
      \item Each signal of kind $\reflectedFindMidpointKind$ has speed $1$; is the reflection of a find-midpoint signal at a vertex, travels back along the path this signal took before it was reflected and slows down when it reaches the vertex the find-midpoint signal originated at; carries the directions that lead from its position to the vertex the find-midpoint signal originated at, the directions that lead from its position to the vertex the find-midpoint signal was reflected at, and a boolean that indicates whether the path described by its position and both directions, which leads from the reflection vertex to the origin vertex, may be the subpath of a longest path that has the same source (or target), and the boolean also indicates whether the signal would be the first one to find the longest path's midpoint; and is called \define{reflected find-midpoint signal}\index{signal!reflected find-midpoint}\index{signal!reflected find-midpoint!marked}\graffito{(marked) reflected find-midpoint signal} and, if the boolean it carries is $\yes$, it is called \define{marked}.

            As has already been pointed at, when a find-midpoint signal reaches a vertex, a reflected find-midpoint signal is created that travels onto the edge the find-midpoint signal comes from. If the vertex is a leaf and the find-midpoint signal is one of the first signals to reach it, which is precisely the case if the signal reaches the leaf together with an initiate signal, then its reflection is marked, and otherwise, not. The reasons are that both ends of a longest path are leaves and that a find-midpoint signal that is not among the first signals to reach one end of a longest path would not find its midpoint after another one has already found it.

            When a marked reflected find-midpoint signal reaches a vertex that is not a leaf, the count signal at the vertex memorises the direction the marked signal comes from, and the signal stays marked, if the memory of the count signal contains each but one direction that leads away from the vertex, and it is unmarked, otherwise. Why is that? Each vertex that is not the general is reached precisely once by an initiate signal, at which point find-midpoint signals are sent in all directions; for each direction, the marked reflected find-midpoint signal to return from that direction is memorised, which is the slowest one or, in other words, the last one or the one that had the longest way (note that although only one find-midpoint signal is sent in a direction, multiple reflected find-midpoint signals may return from that direction); the penultimate marked reflected find-midpoint signal to return may come from one edge of a longest path that runs through the vertex and hence it stays marked (note that the other edge of the longest path that is incident to the vertex would be the one from which the marked signal has not yet returned); the signals that return before the penultimate one are too fast to be on a longest path and the last signal to return has already collided with the slowed-down penultimate signal that returned before it (if they do not return at the same time) and hence they are unmarked.
      \item Each signal of kind $\slowedDownFindMidpointKind$ has speed $1/3$; is the slow-down of a reflected find-midpoint signal at the vertex the find-midpoint signal originated at; at each vertex it reaches it spreads in all directions that lead away from where it comes from; it carries the directions that lead from its position to the vertex the find-midpoint signal originated at, the directions that lead from the origin vertex to the vertex the find-midpoint signal was reflected at, and a boolean that indicates whether the path described by its position and both directions, which leads from the reflection vertex over the origin vertex to its position, may be the subpath of a longest path that has the same source (or target), and the boolean also indicates whether the signal would be the first one to find the longest path's midpoint; and it is called \define{slowed-down find-midpoint signal}\index{signal!slowed-down find-midpoint}\index{signal!slowed-down find-midpoint!marked}\graffito{(marked) slowed-down find-midpoint signal} and, if the boolean it carries is $\yes$, it is called \define{marked}.

            As has already been pointed at, when a reflected find-midpoint signal reaches the vertex the find-midpoint signal originated at, for each incident edge except the one the reflected find-midpoint signal comes from, a slowed-down find-midpoint signal is created that travels onto the edge. Additionally, when an initiate signal reaches a vertex, for each incident edge, a slowed-down find-midpoint signal is created that travels onto the edge.

            The latter case is in the following senses the boundary or limiting case of the former: Imagine that the vertex the initiate signal reaches is in fact two vertices that are infinitesimally close; then a find-midpoint signal is created at one vertex, this signal immediately reaches the infinitesimally close other vertex, there it is reflected, the reflected find-midpoint signal immediately reaches the infinitesimally close other vertex, and there it is slowed down. Or, analogously, imagine the limit of the cascade of the creation of a find-midpoint signal, its reflection, and slow-down for shorter and shorter distances between the vertex the find-midpoint signal originates at and the one it is reflected at; then in the limit the find-midpoint signal and its reflection vanish and only the slowed-down find-midpoint signal remains.

            When a marked slowed-down find-midpoint signal reaches a vertex that is not a leaf, the count signal at the vertex memorises the direction the marked signal comes from, and the signal stays marked, if the memory of the count signal contains each but one direction that leads away from the vertex, and it is unmarked, otherwise. Why is that? The slowed-down signal reaches the vertex at the same time and from the same direction as the slowest reflected find-midpoint signal from that direction that originated at the vertex. The latter signal is however not marked, because it did not travel alongside initiate signals before it was reflected at a leaf (the reason is that if it had travelled alongside initiate signals, then it would have been reflected at the same time as the find-midpoint signal whose reflection turned into the marked slowed-down find-midpoint signal and hence, because the paths from the reflection leaves to the vertex they reach together have the same lengths, the find-midpoint signal that reaches it slowed down would have taken longer, and therefore the signals would not reach the vertex at the same time).

            Therefore, the memory of the count signal contains each but one direction if and only if from each but one direction the slowest reflected find-midpoint signals that originated at the vertex have already returned. If this is the case, then the incident edge belonging to that direction may be the edge of a longest path that runs through the vertex and the slowed-down find-midpoint signal may be the one to collide with the not yet returned signal somewhere on or beyond the edge precisely at the midpoint of the longest path. If from more than one direction signals are overdue, then the paths running through each pair of these directions are longer than the paths running through any of these directions and the direction the slowed-down find-midpoint signal comes from. And, if all signals have already returned, then they have already collided with the slowed-down signal and found the midpoints of the longest paths whose determination involves the slowed-down signal if there are any. 
      \item Each signal of kind $\freezeKind$ has speed $1$, is created at the midpoint of an edge, moves towards one end of the edge, and freezes synchronisation of the edge, carries no data, and is called \define{freeze signal}\index{signal!freeze}\graffito{freeze signal}. When an initiate signal reaches a vertex, for each incident edge, a find-midpoint signal and a slowed-down find-midpoint signal are created that travel onto the edge, the former is reflected at the other end of the edge and collides with the latter at the midpoint of the edge, at which point two freeze signals are created that travel to both ends of the edge. See \cref{figure:fsspWithTwoEdgesAndGeneralInBetweenAndAtTheLeft}
      \item Each signal of kind $\thawKind$ has speed $1$, is created at the midpoint of a path, travels along the path towards one of end of the path, creates a new signal of its kind when it collides with the midpoint signal that designates the midpoint of a path such that one of the two subpaths from the midpoint to either end of the path (a \graffito{half-path}\define{half-path}\index{path!half-}) coincides with the path it takes itself and the new signal travels along the other half-path, and thaws synchronisation of an edge if it collided with or was created at the midpoint signal that designates the midpoint of the last edge of the path it takes, carries the directions of the path it takes and a boolean that indicates whether it thaws synchronisation of the edge it is on or not, and is called \define{thaw signal}\index{signal!thaw}\graffito{thaw signal}.

            The first thaw signals are created simultaneously at the midpoints of longest paths. For each such midpoint, two thaw signals are created, one that travels along the path to one end of the path and the other that travels to the other end of the path. When a thaw signal collides with the midpoint of a path whose one half-path coincides with the remaining path the thaw signal travels along, an additional thaw signal is created that travels along the other half-path. On its way from the midpoint of the last edge of the path a thaw signal travels on to the end of the path, the thaw signal thaws all frozen signals it collides with.

            In this way, starting at the midpoints of longest paths, thaw signals traverse the midpoints of shorter and shorter paths, reach the ends of their paths at the same time, and thaw synchronisation of edges, which finishes at the same time. See \cref{figure:fsspWithTwoEdgesAndGeneralInBetweenAndAtTheLeft,figure:fsspWithThreeEdgesInARowAndGeneralAtTheSecondVertexFromTheLeft,figure:fsspWithThreeEdgesIncidentToTheSameVertexAndGeneralAtTheVertexAndAtTheLeafOfTheShortestEdge}
      \item Each signal of kind $\divideKind_0$ has speed $1$, moves from one boundary (which may be one end of an edge or a boundary signal) to the next boundary (which may be the other end of the edge or a boundary signal) and is reflected there, carries no data, and is called \define{divide signal of type $0$}\graffito{divide signal of type $0$}. When an initiate signal reaches a vertex, for each incident edge, a divide signal of type $0$ is created that travels onto the edge. And, when a divide signal of any type collides with a reflected divide signal, a divide signal of type $0$ is created that travels in the same direction as the (non-reflected) divide signal.
      \item Each signal of kind $\divideKind_n$, for $n \in \N_+$, has speed $(2/3)^n / (2 - (2/3)^n)$, moves from one boundary (which may be one end of an edge or a boundary signal) towards the next boundary (which may be the other end of the edge or a boundary signal) but never reaches it and can be frozen, carries no data, and is called \graffito{divide signal of type $n$}\define{divide signal of type $n$}. When an initiate signal reaches a vertex, for each incident edge, and for each $n \in \N_+$, a divide signal of type $n$ is created that travels onto the edge. And, when a divide signal of any type collides with a reflected divide signal, for each $n \in \N_+$, a divide signal of type $n$ is created that travels in the same direction as the (non-reflected) divide signal.

            Note that although $\divideKind_n$, for $n \in \N_+$, are different kinds, events that involve signals of these kinds are handled the same way, in other words, signals of these kinds are not differentiated by the two local transition functions of the signal machine. The only reason they are different kinds is because we need them to have different speeds and by definition all signals of the same kind have the same speed.
      \item Each signal of kind $\frozenDivideKind_n$, for $n \in \N_+$, has speed $0$, is a frozen divide signal of type $n$, carries the direction the non-frozen divide signal had, and is called \define{frozen divide signal of type $n$}\graffito{frozen divide signal of type $n$}. When a freeze signal collides with or is created at the same time as a divide signal of type $n \in \N_+$, the divide signal is frozen.
      \item Each signal of kind $\reflectedDivideKind$ has speed $1$, is the reflection of a divide signal of type $0$, creates a boundary signal when it collides with a divide signal of type $n$, creates a fire signal when it reaches the end of the edge it traverses, carries no data, and is called \graffito{reflected divide signal}\define{reflected divide signal}\index{signal!reflected divide}. When a divide signal of type $0$ reaches a boundary (which may be a vertex or a boundary signal), a reflected divide signal is created that travels in the opposite direction.
      \item Each signal of kind $\boundaryKind$ is stationary, designates a boundary for the synchronisation of an edge, carries no data, and is called \graffito{boundary signal}\define{boundary signal}\index{signal!boundary}. Such signals are created when divide signals collide with reflected divide signals.

            On each edge, the interplay of divide signals, reflected divide signals, and boundary signals has the following effect: At first a divide signal of type $1$ collides with a reflected divide signal that originated at the same end of the edge. This collision results in the creation of a boundary signal that divides the edge into two parts. The length of the part from the origin vertex to the boundary signal is $2/3$ times the length of the edge and the length of the part from the boundary signal to the other end of the edge is $1/3$ times the length of the edge.

            In the same manner as the edge itself, the $(1/3)$-part is recursively divided further and further. In the $(2/3)$-part, a signal of type $2$ collides with the reflected divide signal from before. This collision results in the creation of a boundary signal that divides the $(2/3)$-part into two subparts. One has $(2/3) \cdot (2/3)$ the length of the edge and the other has $(1/3) \cdot (2/3)$ the length of the edge. The $((1/3) \cdot (2/3))$-part is recursively divided further and further. The $((2/3) \cdot (2/3))$-part is divided into a $((2/3) \cdot (2/3) \cdot (2/3))$-part and a $((1/3) \cdot (2/3) \cdot (2/3))$-part and so forth.

            If there were no freeze and thaw signals, after twice the time the edge is long --- which is precisely the time it took the divide signal of type $0$ to reach the other end of the edge, to be reflected there, and to return to the end it originated at --- the boundary signals together are dense on the edge, which means that each point on the edge is arbitrarily close to a boundary signal, and, at this point in time, each boundary signal collides with a reflected divide signal, which results in the creation of fire signals that designate that synchronisation has finished. 

            However, because the synchronisation of each edge is started at different times and takes different times depending on how far away the edge is from the general and how long the edge is, synchronisation of each edge is frozen at the last possible moment --- the freezing starts from the midpoint of the edge $3/2$ times the edge's length many time units after synchronisation of the edge was initiated --- and it is thawed such that all edges finish synchronisation at the same time --- the thawing starts from the midpoint of the edge $1/3$ times the edge's length many time units before the total synchronisation finishes, which is the sum of the radius of the graph and its diameter. Recall that the radius is the longest distance from the general to another vertex and that the diameter is the longest distance between two vertices.

            Note that, for each edge, collisions of divide signals with reflected divide signals and with boundary signals accumulate at the times the two freeze signals and the two thaw signals reach the ends of the edge. These accumulations are singularities of order $1$. See \cref{figure:MazoyerWithOneAndTwoEdges}.
      \item Each signal of kind $\fireKind$ is stationary, designates that synchronisation has finished and can be frozen, carries no data, and is called \define{fire signal}\index{signal!fire}\graffito{fire signal}. Such signals are created when reflected divide signals collide with boundary signals or reach vertices.
      \item Each signal of kind $\frozenFireKind$ is stationary, is a frozen fire signal, carries no data, and is called \define{frozen fire signal}\index{signal!frozen fire}\graffito{frozen fire signal}. On each edge, one of the two freeze signals reaches an end of the edge at the same time as the reflected divide signal, at which point a frozen fire signal is created; it is thawed at the same time at which all other fire signals are created, which happens on all edges at the same time and the fire signals lie dense in the multigraph. \qedhere 
    \end{aenumerate}
  \end{remark}

  In the forthcoming definitions of maps we make extensive use of pattern matching. To make the exposition concise and readable we introduce some pattern matching conventions in

  \begin{definition}
    \begin{aenumerate}
      \item In\graffito{order matters} the case that patterns of multiple rules overlap, the rule that occurs first is the one to use. For example, the map $f \from \Z \to \Z$, $0 \mapsto 1$, $1 \mapsto 0$, $z \mapsto z$, maps $0$ to $1$ and $1$ to $0$ and each other integer to itself.
      \item In\graffito{wildcard $\blank$}\index[symbols]{underscore@$\blank$} the case that we do not care to name some part of a matched structure, we write $\blank$ instead of a name for the part. For example, the pattern $(E, \blank, \initiateSignal{d})$ matches each triple whose last component is an initiate signal, gives the first component the name $E$, does not give a name to the second component, and gives the direction of the initiate signal the name $d$. And, the pattern $\findMidpointSignal{w_o}{\blank}$ matches each find-midpoint signal, gives the directions to the vertex the signal originated at the name $w_o$, but gives no name to the direction of the signal.
      \item To\graffito{same names express equality} express equality of different parts of a matched structure, we give those parts the same name. For example, the pattern $\setOf{\reflectedDivideSignal{\reverse d}, \divideSignal{n}{d}}$ matches each set that consists of a reflected divide signal and a divide signal of any type such that the direction of the reflected divide signal is the opposite of the direction of the divide signal, gives the direction of the divide signal the name $d$, and gives its type the name $n$.
      \item To\graffito{$@$-notation} name both a structure and its parts, we use a Haskell-like $@$-notation. For example, the pattern $s@\reflectedFindMidpointSignal{\emptyWord}{d}{\emptyWord}{b}$ matches each reflected find-midpoint signal whose directions to the vertex the signal originated at and to the vertex the signal was reflected at are empty, gives the direction of the signal the name $d$, gives the boolean that indicates whether the signal may be on a longest path and would be the first to find its midpoint the name $b$, and gives the signal itself the name $s$. \qedhere
    \end{aenumerate}
  \end{definition}

  Some of the maps we define below are actually partial maps. We represent them by (total) maps as specified in

  \begin{definition}
    Let\graffito{representations of partial maps using the bottom symbol $\bot$} $X$ and $X'$ be two sets, let $Y$ be a subset of $X$, let $\bot$ be an element that is not in $X'$, which we call \define{bottom}, and let $f$ be a map from $X$ to $X' \cup \setOf{\bot}$ such that, for each element $x \in X$, we have $f(x) = \bot$ if and only if $x \notin Y$. The map $f$ represents a partial map whose domain of definition is $Y$, whose domain is $X$, and whose codomain is $X'$.

    In the following, for maps like $f$, we do not explicitly specify the domain $Y$ of definition (it is the set $X \smallsetminus f^{-1}(\bot)$) and we implicitly assume that $\bot$ does not occur in the codomain $X'$ of the represented partial map. 
  \end{definition}

  To define the local transition functions, we begin with definitions for special cases and use those to gradually arrive at definitions for the general case. For trees and without freezing and thawing, the map $\localTransitionFunction_{v, 1}^{\tree}$ handles the event that precisely one signal reaches a vertex, and the maps $\localTransitionFunction_{v, 2}^{\tree}$ and $\localTransitionFunction_{e, 2}^{\tree}$ handle the event that precisely two signals collide in a vertex and an edge respectively (see \cref{definition:for-trees:single-signals-reach-a-vertex-and-pairs-collide}). For trees, the maps $\localTransitionFunction_v^{\tree}$ and $\localTransitionFunction_e^{\tree}$ handle events involving arbitrarily many signals by considering unordered pairs of signals and applying $\localTransitionFunction_{v, 1}^{\tree}$, $\localTransitionFunction_{v, 2}^{\tree}$, and $\localTransitionFunction_{e, 2}^{\tree}$, and by also freezing and thawing signals if needed (see \cref{definition:for-trees:local-transition-functions}). For virtual trees, which means that edges of the graph have been virtually cut by leaf signals to remove circles, the maps $\localTransitionFunction_v^{\virtualTree}$ and $\localTransitionFunction_e^{\virtualTree}$ handle events by partitioning signals at virtual cuts into those belonging to one or the other leaf and applying $\localTransitionFunction_v^{\tree}$ and $\localTransitionFunction_e^{\tree}$ (see \cref{definition:for-virtual-trees:local-transition-functions}). For general graphs, the maps $\localTransitionFunction_v$ and $\localTransitionFunction_e$ handle events by virtually cutting the graph, which eventually creates a virtual tree, and applying $\localTransitionFunction_v^{\virtualTree}$ and $\localTransitionFunction_e^{\virtualTree}$ (see \cref{definition:for-graphs:local-transition-functions}).

  Most of the forthcoming definitions and parts of them are annotated with intuitive explanations of what they mean. For example, after each rule that handles a specific kind of event, it is explained what kind of event in the time evolution of the signal machine is handled, how it is handled, and sometimes why.

  How events for trees and without freezing and thawing, with one or two signals involved are handled is given in

  \begin{definition}
  \label{definition:for-trees:single-signals-reach-a-vertex-and-pairs-collide}
    The following map tells whether two words of directions are both empty:
    \begin{align*}
      \areEmpty \from \Directions^* \times \Directions^* &\to \booleans, \mathnote{map $\areEmpty$}\\
      (w, w') &\mapsto \begin{dcases*}
                         \yes, &if $\lengthOf{w} = 0$ and $\lengthOf{w'} = 0$,\\
                         \no, &otherwise.
                       \end{dcases*}
    \end{align*}

    For trees and without freezing and thawing, the case that precisely two signals collide on an edge but not in one of its ends is handled by the following map, which maps colliding unordered pairs of signals for which a collision rule is specified to the resulting signals and all other sets of signals to $\bot$:
    \begin{align*} 
      \localTransitionFunction_{e, 2}^{\tree} \from \powerSetOf(\Signals) &\to \powerSetOf(\Signals) \cup \setOf{\bot}, \mathnote{map $\localTransitionFunction_{e, 2}^{\tree}$}\\
      \setOf{\divideSignal{0}{d}, \boundaryKind}
          &\mapsto \setOf{\reflectedDivideSignal{\reverse d}, \boundarySignal}, 
          \intertext{(If a divide signal of type $0$ collides with a boundary signal, then reflect the divide signal.)}
      \setOf{\reflectedDivideSignal{\reverse d}, \divideSignal{n}{d}}
          &\mapsto \setOf{\boundarySignal} \cup \setOf{\divideSignal{n'}{d} \suchThat n' \in \N_0},
          \intertext{(If a reflected divide signal collides with a divide signal of any type, then create a boundary signal and send divide signals of all types in the direction of the original divide signal.)}
      \setOf{s@\reflectedFindMidpointSignal{\emptyWord}{d}{\emptyWord}{b}, s'@\slowedDownFindMidpointSignal{\reverse d}{\emptyWord}{\emptyWord}{b'}}
          &\mapsto\\
                   &&\llap{$\begin{dcases*}
                     \setOf{s, s', \midpointSignal{\reverse d}{d}, \freezeSignal{\reverse d}, \freezeSignal{d}}, &if $b = \no$ or $b' = \no$,\\
                     \emptyset, &otherwise,
                   \end{dcases*}$}
          \intertext{(If a reflected find-midpoint signal collides with a slowed-down find-midpoint signal that originated at the same end of an edge and only travelled on this edge, then the point of collision is the midpoint of the edge and, if the graph has at least two edges, then let the signals move on, designate the point by a midpoint signal, and send freeze signals to both ends of the edge to freeze synchronisation of the edge, and otherwise, do not create and send any signals.)}
      \setOf{s@\reflectedFindMidpointSignal{w_o}{d}{w_r}{b}, s'@\slowedDownFindMidpointSignal{\reverse d}{w_o}{w_r'}{b'}}
          &\mapsto\\
                   &&\llap{$\begin{dcases*}
                     \setOf{s, s', \midpointSignal{(\reverse d) \concat w_r}{d \concat w_o \concat w_r'}}, &if $b = \no$ or $b' = \no$,\\
                     \setOf{\thawSignal{\reverse d}{w_r}{\no}, \thawSignal{d}{w_o \concat w_r'}{\no}}, &otherwise,
                   \end{dcases*}$}
          \intertext{(If a reflected find-midpoint signal collides with a slowed-down find-midpoint signal that originated at the same vertex, then the point of collision is the midpoint of the shortest path in the virtual tree from the vertex the reflected find-midpoint signal was reflected at to the vertex the signal that spawned the slowed-down find-midpoint signal was reflected at and, if this midpoint is not the midpoint of a longest path, then let the signals move on and designate the point by a midpoint signal, and otherwise, send thaw signals along that longest path to both its ends.)}
      \setOf{\thawSignal{d}{w}{\no}, \midpointSignal{d \concat w}{d' \concat w'}}
          &\mapsto \setOf{\thawSignal{d}{w}{\areEmpty(w, w')}, \thawSignal{d'}{w'}{\areEmpty(w, w')}},
          \intertext{(If a thaw signal collides with a midpoint signal that designates the midpoint of a path whose one half-path coincides with the path the thaw signal is going to take, then send an additional thaw signal along the other half-path and, if this path is just a directed edge, then make the thaw signals thaw synchronisation of the edge.)}
      \blank &\mapsto \bot.
          \shortintertext{(If none of the above happened, then indicate that by returning bottom.)}
    \end{align*}

    A signal that reaches a vertex is in a leaf if and only if there is precisely one direction that leads away from the vertex. And a signal that reaches a vertex is the penultimate one to do so if and only if the number of signals that have already returned including the signal itself is one less than the number of directions that lead away from the vertex. The two maps that express this in an abstract way using booleans are
    \begin{align*} 
      \isInLeaf \from \powerSetOf(\Directions) &\to \booleans, \mathnote{map $\isInLeaf$}\\
      E &\mapsto \begin{dcases*}
                   \no,  &if $\cardinalityOf{E} \neq 1$,\\
                   \yes, &otherwise,
                 \end{dcases*}
    \end{align*}
    and
    \begin{align*}
      \isPenultimate \from \powerSetOf(\Directions) \times \N_0 &\to \booleans, \mathnote{map $\isPenultimate$}\\
      (E, n) &\mapsto \begin{dcases*}
                        \no,  &if $\cardinalityOf{E} - 1 \neq n$,\\
                        \yes, &otherwise.
                      \end{dcases*}
    \end{align*}

    For trees and without freezing and thawing, the case that precisely one signal reaches a vertex is handled by the following map, which maps quadruples --- consisting of first, the set of directions that lead away from the vertex; secondly, the number of the directions from which the slowest reflected find-midpoint signals that originated at the vertex have already returned or have just arrived; thirdly, a boolean that indicates whether the signal is among the first ones to reach the vertex; and lastly, the signal that reaches the vertex --- to the resulting signals:
    \begin{align*}
      \localTransitionFunction_{v, 1}^{\tree} \from \powerSetOf(\Directions) \times \N_0 \times \booleans \times \Signals &\to \powerSetOf(\Signals), \mathnote{map $\localTransitionFunction_{v, 1}^{\tree}$}\\
      (\blank, \blank, \blank, \divideSignal{0}{d})
          &\mapsto \setOf{\reflectedDivideSignal{\reverse d}},
          \intertext{(If a divide signal of type $0$ reaches a vertex, then reflect it.)}
      (\blank, \blank, \blank, \reflectedDivideSignal{d})
          &\mapsto \setOf{\fireSignal},
          \intertext{(If a reflected divide signal reaches a vertex, then create a fire signal.)}
      (E, \blank, \blank, \initiateSignal{d})
          &\mapsto \begin{aligned}[t]
                     &\parens[\big]{\bigcup_{e \in E \smallsetminus \setOf{\reverse d}} \setOf{\initiateSignal{e}}}\\
                     &\cup \parens[\big]{\bigcup_{e \in E \smallsetminus \setOf{\reverse d}} \setOf{\divideSignal{n}{e} \suchThat n \in \N_0}}\\
                     &\cup \parens[\big]{\bigcup_{e \in E} \setOf{\findMidpointSignal{\emptyWord}{e}, \slowedDownFindMidpointSignal{e}{\emptyWord}{\emptyWord}{\isInLeaf(E)}}},
                   \end{aligned}
          \intertext{(If an initiate signal reaches a vertex, then send initiate signals onto all incident edges except the one the original initiate signal comes from, send divide signals of all types onto all incident edges except the one the original initiate signal comes from, and send find-midpoint and slowed-down find-midpoint signals onto all incident edges to find all midpoints of paths that contain the vertex, where, in the case that the vertex is a leaf, the slowed-down find-midpoint signal is marked, where the mark means that it may be on a longest path and would be the first to find its midpoint.)}
      (E, \blank, b, \findMidpointSignal{w_o}{d})
          &\mapsto \begin{aligned}[t]
                     &\setOf{\reflectedFindMidpointSignal{w_o}{\reverse d}{\emptyWord}{b \land \isInLeaf(E)}}\\
                     &\cup \parens[\big]{\bigcup_{e \in E \smallsetminus \setOf{\reverse d}} \setOf{\findMidpointSignal{(\reverse d) \concat w_o}{e}}},
                   \end{aligned}
          \intertext{(If a find-midpoint signal reaches a leaf, then reflect it and, if it is one of the first signals to reach the leaf, then also mark it as a signal that may be on a longest path and would be the first to find its midpoint. And, if a find-midpoint signal reaches a vertex that is not a leaf, then reflect it and send find-midpoint signals onto all incident edges except the one the original signal comes from.)}
      (E, n, \blank, \reflectedFindMidpointSignal{\emptyWord}{d}{w_r}{b})
          &\mapsto \bigcup_{e \in E \smallsetminus \setOf{\reverse d}} \setOf{\slowedDownFindMidpointSignal{e}{\emptyWord}{(\reverse d) \concat w_r}{b \land \isPenultimate(E, n)}}, 
          \intertext{(If a reflected find-midpoint signal reaches the vertex it originated at, then send slowed-down find-midpoint signals onto all incident edges except the one the original signal comes from and, if the original signal is marked and is the penultimate marked signal that originated at and has returned to the vertex, then also mark the slowed-down signals as signals that may be on a longest path and would be the first to find their midpoints.)}
      (E, n, \blank, \reflectedFindMidpointSignal{e \concat w_o}{d}{w_r}{b})
          &\mapsto \setOf{\reflectedFindMidpointSignal{w_o}{e}{(\reverse d) \concat w_r}{b \land \isPenultimate(E, n)}},
          \intertext{(If a reflected find-midpoint signal reaches a vertex, then it takes the way back it took before it was reflected and, if it is not one of the penultimate marked signals that reaches the vertex, then it is unmarked.)}
      (E, n, \blank, \slowedDownFindMidpointSignal{d}{w_o}{w_r}{b})
          &\mapsto \bigcup_{e \in E \smallsetminus \setOf{\reverse d}} \setOf{\slowedDownFindMidpointSignal{e}{(\reverse d) \concat w_o}{w_r}{b \land \isPenultimate(E, n)}},
          \intertext{(If a slowed-down find-midpoint signal reaches a vertex, then send slowed-down find-midpoint signals onto all incident edges except the one the original signal comes from, and mark these signals, if the original signal is marked --- in which case it arrives at the same time and from the same direction as the slowest reflected find-midpoint signal that originated at the vertex, which however is not marked because it arrived too late at the leaf it was reflected at --- and from exactly one direction the slowest find-midpoint signal that originated at the vertex has not returned yet.)}
      (\blank, \blank, \blank, \freezeSignal{d})
          &\mapsto \emptyset,
          \intertext{(If a freeze signal reaches the end of the edge it freezes, then it vanishes.)}
      (\blank, \blank, \blank, \thawSignal{d}{\emptyWord}{\yes})
          &\mapsto \emptyset,
          \intertext{(If a thaw signal reaches the end of its path, then it vanishes.)}
      (\blank, \blank, \blank, \thawSignal{d}{e \concat w}{\no})
          &\mapsto \setOf{\thawSignal{e}{w}{\no}},
          \intertext{(If a thaw signal reaches a vertex, then it takes the direction that makes it stay on its path.)}
      (E, \blank, \blank, (s, d, y)) 
          &\mapsto \left\{
                     \begin{aligned}
                       &\setOf{(s, d, y)}, \text{ if $\speedOf(s) = 0$},\\
                       &\bigcup_{e \in E \smallsetminus \setOf{\reverse d}} \setOf{(s, e, y)}, \text{ otherwise}. 
                     \end{aligned}
                   \right.
          \shortintertext{(A stationary signal in a vertex stays there, and if a non-stationary signal reaches a vertex, then copies of it are sent onto each edge except the one the signal comes from.)}
    \end{align*}

    For trees and without freezing and thawing, the case that precisely two signals collide in a vertex is handled by the following map, which maps triples --- consisting of first, the set of directions that lead away from the vertex; secondly, the number of the directions from which the slowest reflected find-midpoint signals that originated at the vertex have already returned or have just arrived; and lastly, the set of colliding signals that is supposed to consist of precisely two signals --- to the resulting signals, if a collision rule is specified, and to $\bot$, otherwise:
      \begin{align*}
        \localTransitionFunction_{v, 2}^{\tree} \from \powerSetOf(\Directions) \times \N_0 \times \powerSetOf(\Signals) &\to \powerSetOf(\Signals) \cup \setOf{\bot}, \mathnote{map $\localTransitionFunction_{v, 2}^{\tree}$}\\
        (\blank, \blank, \setOf{\reflectedDivideSignal{d}, \boundarySignal})
            &\mapsto \setOf{\fireSignal},
            \intertext{(If a reflected divide signal collides with a boundary signal, then create a fire signal.)}
        (E, n, \setOf{s@\reflectedFindMidpointSignal{(\reverse d') \concat w_o'}{d}{w_r}{b}, s'@\slowedDownFindMidpointSignal{d'}{w_o'}{w_r'}{b'}})
            &\mapsto\\
                    &&\llap{$\begin{dcases*}
                       \left.\begin{aligned}
                         &\localTransitionFunction_{v, 1}^{\tree}(E, n, \no, s)\\
                         &\quad\cup \localTransitionFunction_{v, 1}^{\tree}(E, n, \no, s')\\
                         &\quad\cup \setOf{\midpointSignal{(\reverse d) \concat w_r}{(\reverse d') \concat w_o' \concat w_r'}},
                       \end{aligned}\right\}
                       &if $b = \no$ or $b' = \no$,\\ 
                       \setOf{\thawSignal{\reverse d}{w_r}{\no}, \thawSignal{\reverse d'}{w_o' \concat w_r'}{\no}}, &otherwise,
                     \end{dcases*}$}
            \intertext{(If a reflected find-midpoint signal collides with a slowed-down find-midpoint signal that originated at the same vertex, then the vertex of collision is the midpoint of the shortest path in the virtual tree from the vertex the reflected find-midpoint signal was reflected at to the vertex the signal that spawned the slowed-down find-midpoint signal was reflected at and, if this midpoint is not the midpoint of a longest path, then treat the original signals as if they reached the vertex alone and designate the point by a midpoint signal, and otherwise, send thaw signals along that longest path to both its ends.)}
        (E, n, \setOf{s@\reflectedFindMidpointSignal{\emptyWord}{d}{w_r}{b}, s'@\reflectedFindMidpointSignal{\emptyWord}{d'}{w_r'}{b'}})
            &\mapsto\\
                     &&\llap{$\begin{dcases*}
                       \left.\begin{aligned}
                         &\localTransitionFunction_{v, 1}^{\tree}(E, n, \no, s)\\
                         &\quad\cup \localTransitionFunction_{v, 1}^{\tree}(E, n, \no, s')\\
                         &\quad\cup \setOf{\midpointSignal{(\reverse d) \concat w_r}{(\reverse d') \concat w_r'}},
                       \end{aligned}\right\}
                       &if $b = \no$ or $b' = \no$ or $n \neq \cardinalityOf{E}$,\\ 
                       \setOf{\thawSignal{\reverse d}{w_r}{\no}, \thawSignal{\reverse d'}{w_r'}{\no}}, &otherwise,
                     \end{dcases*}$}
            \intertext{(If two reflected find-midpoint signals that originated at the same vertex collide with each other, then the vertex of collision is the vertex the signals originated at and it is the midpoint of the shortest path in the virtual tree between the vertices the signals were reflected at and, if this midpoint is not the midpoint of a longest path, then treat the original signals as if they reached the vertex alone and designate the point by a midpoint signal, and otherwise, send thaw signals along that longest path to both its ends.)}
        (\blank, \blank, \setOf{\thawSignal{d}{w}{\no}, \midpointSignal{w@(d' \concat w')}{d'' \concat w''}})
            &\mapsto \setOf{\thawSignal{d'}{w'}{\no}, \thawSignal{d''}{w''}{\no}}, 
            \intertext{(If a thaw signal collides with a midpoint signal that designates the midpoint of a path whose one half-path coincides with the path the thaw signal is going to take, then send an additional thaw signal along the other half-path.)}
        \blank &\mapsto \bot.
            \shortintertext{(If none of the above happened, then indicate that by returning bottom.) \qedhere}
      \end{align*}
  \end{definition}

  How events for trees are handled is given in

  \begin{definition}
  \label{definition:for-trees:local-transition-functions}
    The maps
    \begin{equation*}
      \left\{
        \begin{aligned}
          \xi \from \Signals &\to \Signals,\\
          \divideSignal{n}{d} &\mapsto \frozenDivideSignal{n}{d},\\
          \fireSignal &\mapsto \frozenFireSignal,\\
          s &\mapsto s,
        \end{aligned}
      \right\}
      \text{ and }
      \left\{
        \begin{aligned} 
          \chi \from \Signals &\to \Signals,\\
          \frozenDivideSignal{n}{d} &\mapsto \divideSignal{n}{d},\\
          \frozenFireSignal &\mapsto \fireSignal,\\
          s &\mapsto s,
        \end{aligned}
      \right\}
      \mathnote{maps $\xi$ and $\chi$}
    \end{equation*}
    freeze and thaw signals that can be frozen and thawed respectively.

    The map
    \begin{align*}
      \nu \from \powerSetOf(\Signals) \times \powerSetOf(\Signals) &\to \powerSetOf(\Signals), \mathnote{map $\nu$}\\
      (S, S') &\mapsto \begin{dcases*}
                         \xi(S'), &if $\Exists \freezeSignal{d} \in S \cup S'$\\
                                  &and $\notExists \thawSignal{d}{\emptyWord}{\yes} \in S \cup S'$,\\
                         \chi(S'), &if $\Exists \thawSignal{d}{\emptyWord}{\yes} \in S \cup S'$\\
                                   &and $\notExists \freezeSignal{d} \in S \cup S'$,\\
                         S', &otherwise,
                       \end{dcases*}
    \end{align*}
    takes a set of old signals and a set of new signals and freezes the new signals, if the old or new signals contain a freeze signal but not a thaw signal that thaws the synchronisation of an edge; thaws the new signals, if the old or new signals contain a thaw signal that thaws the synchronisation of an edge but not a freeze signal; and does nothing, otherwise. 

    The maps
    \begin{align*}
      \zeta_2 \from \powerSetOf(\Signals) &\to \powerSetOf(\powerSetOf(\Signals)), \mathnote{map $\zeta_2$}\\
      S &\mapsto \setOf{\setOf{s, s'} \subseteq S \suchThat s \neq s' \text{ and } \localTransitionFunction_{e, 2}^{\tree}(\setOf{s, s'}) \neq \bot},
    \end{align*}
    and
    \begin{align*}
      \eta_2 \from \powerSetOf(\Directions) \times \powerSetOf(\Signals) &\to \powerSetOf(\powerSetOf(\Signals)), \mathnote{map $\eta_2$}\\
      (D, S) &\mapsto \begin{aligned}[t]
                        \{
                          \setOf{s, s'} \subseteq S \suchThat{} &s \neq s' \text{ and }\\
                                                                &\localTransitionFunction_{v, 2}^{\tree}(D, x, \setOf{s, s'}) \neq \bot
                        \},
                      \end{aligned}
    \end{align*}
    both take a set of signals and return the set of unordered pairs of distinct signals from the given set for which a collision rule is specified in $\localTransitionFunction_{e, 2}^{\tree}$ and $\localTransitionFunction_{v, 2}^{\tree}$ respectively.

    %

    The map
    \begin{align*} 
      \localTransitionFunction_e^{\tree} \from \powerSetOf(\Signals) &\to \powerSetOf(\Signals), \mathnote{map $\localTransitionFunction_e^{\tree}$}\\
      S &\mapsto \nu(S,
                   \parens[\big]{ S \smallsetminus \bigcup_{P \in \zeta_2(S)} P } \cup
                   \parens[\big]{ \bigcup_{P \in \zeta_2(S)} \localTransitionFunction_{e, 2}^{\tree}(P) }
                 ),
    \end{align*}
    handles collisions of signals on edges by leaving signals for which no pairwise collision rule with any other signal is specified in $\localTransitionFunction_{e, 2}^{\tree}$ as is, by applying $\localTransitionFunction_{e, 2}^{\tree}$ to each unordered pair of distinct signals for which a collision rule is specified, and by applying $\nu$ to freeze or thaw signals if there are or were any freeze or thaw signals.

    The map
    \begin{align*} 
      \kappa \from \powerSetOf(\Signals) &\to \powerSetOf(\Directions), \mathnote{map $\kappa$}\\
      S &\mapsto \begin{aligned}[t]
                   &\parens{\bigcup_{\countSignal{D} \in S} D}\\
                   &\cup \setOf{d \in \Directions \suchThat \Exists w_r \in \Directions^* \SuchThat \reflectedFindMidpointSignal{\emptyWord}{d}{w_r}{\yes} \in S}\\
                   &\cup \setOf{d \in \Directions \suchThat \Exists w_o \in \Directions^* \Exists w_r \in \Directions^* \SuchThat \slowedDownFindMidpointSignal{d}{w_o}{w_r}{\yes} \in S}\\
                 \end{aligned}
    \end{align*} 
    takes a set of signals that are at a vertex and returns the directions from which the slowest reflected find-midpoint signals that originated at the vertex have already returned or have just arrived. The directions from which signals have already returned is memorised by a count signal, and the other directions are the ones from which marked reflected find-midpoint signals that originated at the vertex or marked slowed-down find-midpoint signals with any origin have just arrived (the slowest ones of the latter kind always arrive at the same time and from the same direction as the slowest but unmarked reflected find-midpoint signal that originated at the vertex arrive from that direction). Note that although we take the union of the memories of all count signals, there is actually gonna be no such signal in leaves (in which case the union is $\emptyset$) and precisely one such signal in each vertex that is not a leaf (in which case the union is the memory stored by this signal). 

    The map
    \begin{align*} 
      \varkappa \from \powerSetOf(\Signals) &\to \booleans, \mathnote{map $\varkappa$}\\
      S &\mapsto \begin{dcases*}
                   \yes, &if $\Exists d \in \Directions \SuchThat \initiateSignal{d} \in S$,\\
                   \no, &otherwise,
                 \end{dcases*}
    \end{align*}
    tells whether a set of signals contains any initiate signals or not. It is used by $\localTransitionFunction_v^{\tree}$ to tell whether a find-midpoint signal that reaches a leaf is among the first ones to do so, which is the case if and only if the signal travels alongside an initiate signal.

    The map
    \begin{align*}
      \localTransitionFunction_v^{\tree} \from{} &\powerSetOf(\Directions) \times \powerSetOf(\Signals) \to \powerSetOf(\Signals), \mathnote{map $\localTransitionFunction_v^{\tree}$}\\
      &(D, S) \begin{aligned}[t]
                {}\mapsto \nu(S,
                  &\left\{\begin{aligned} 
                    &\emptyset, &&\text{if $\isInLeaf(D) = \yes$},\\ 
                    &\setOf{\countSignal{\kappa(S)}}, &&\text{otherwise},
                  \end{aligned}\right\}\\
                  &\cup \localTransitionFunction_{v, 1}^{\tree}\parens[\big]{D, \cardinalityOf{\kappa(S)}, \varkappa(S), S' \smallsetminus \bigcup_{P \in \eta_2(D, S')} P}\\
                  &\cup \bigcup_{P \in \eta_2(D, S')} \localTransitionFunction_{v, 2}^{\tree}(D, \cardinalityOf{\kappa(S)}, P)
                ),\\
                &\text{ where } S' = S \smallsetminus \setOf{\countSignal{D} \in S \suchThat D \subseteq \Directions}.
              \end{aligned}
    \end{align*}
    handles events in vertices by updating the memory of the directions from which the slowest reflected find-midpoint signals that originated at the vertex have already returned or have just arrived, by applying $\localTransitionFunction_{v, 1}^{\tree}$ to non-count signals for which no pairwise collision rule with any other non-count signal is specified in $\localTransitionFunction_{v, 2}^{\tree}$, by applying $\localTransitionFunction_{v, 2}^{\tree}$ to each unordered pair of distinct non-count signals for which a collision rule is specified, and by applying $\nu$ to freeze or thaw signals if there are or were any freeze or thaw signals.
  \end{definition}

  How events for virtual trees, where virtual leaves already exist, are handled is given in

  \begin{definition}
  \label{definition:for-virtual-trees:local-transition-functions}
    When colliding signals in the graph and the involved directions are partitioned with respect to the virtual tree, some of the components may be degenerated and applying $\localTransitionFunction_e^{\tree}$ and $\localTransitionFunction_v^{\tree}$ to them may have unwanted effects. Such boundary cases are properly handled by the maps 
    \begin{align*}
      \localTransitionFunction_e^{\boundaryCases} \from \powerSetOf(\Signals) &\to \powerSetOf(\Signals), \mathnote{map $\localTransitionFunction_e^{\boundaryCases}$}\\
      S &\mapsto \begin{dcases*}
                   \emptyset, &if $\cardinalityOf{S} \leq 1$,\\
                   \localTransitionFunction_e^{\tree}(S), &otherwise,
                 \end{dcases*}
    \end{align*}
    and
    \begin{align*}
      \localTransitionFunction_v^{\boundaryCases} \from \powerSetOf(\Directions) \times \powerSetOf(\Signals) &\to \powerSetOf(\Signals), \mathnote{map $\localTransitionFunction_v^{\boundaryCases}$}\\
      (D, S) &\mapsto \begin{dcases*}
                        \emptyset, &if $D = \emptyset$ or $S = \emptyset$,\\
                        \localTransitionFunction_v^{\tree}(D, S), &otherwise.
                      \end{dcases*}
    \end{align*}

    To handle an event involving the signals $S$ in a virtual tree we do the following: We partition the signals into leaf signals, namely $S_{\leafKind}$, for each leaf signal $\leafSignal{d} \in S_{\leafKind}$, the signals coming from the direction $\reverse d$, namely $S_d$, and all other signals, namely $S_o$. And we denote the set of directions that no leaf signal has by $D'$. Intuitively, $S_{\leafKind}$ is the set of virtual leaves, $S_d$ is the set of signals that reach the virtual leaf $\leafSignal{d}$, and $D'$ is the set of directions that do not lead away from any virtual leaf. In the configurations we will encounter, in the case of a collision on an edge, there are either no leaf signals, in which case $S_o = S$, or there are two leaf signals, in which case $S_o = \emptyset$ (because there are no stationary signals in virtual leaves besides the leaf signals themselves) and hence either the signals simply collide, or one subset of signals reaches one virtual leaf and the other subset reaches the other virtual leaf. And, in the case of a collision in a vertex, signals coming from some edges may reach a virtual leaf and signals coming from other edges may reach the virtual vertex (we call it virtual because some of its original edges have been cut off; the directions onto the ones that have not been cut off are those in the set $D'$). Note that some collisions are not collisions in the virtual tree, because the signals came from different directions of virtual cuts. The maps that do what we just explained are 
    \begin{align*}
      \localTransitionFunction_e^{\virtualTree} \from \powerSetOf(\Signals) &\to \powerSetOf(\Signals), \mathnote{map $\localTransitionFunction_e^{\virtualTree}$}\\
      S &\mapsto \begin{aligned}[t]
                   S_{\leafKind}
                   &\cup \localTransitionFunction_e^{\boundaryCases}(S_o)\\
                   &\cup \parens[\big]{\bigcup_{d \in X} \localTransitionFunction_v^{\boundaryCases}(\setOf{d}, S_d)},
                 \end{aligned}
    \end{align*}
    and
    \begin{align*}
      \localTransitionFunction_v^{\virtualTree} \from \powerSetOf(\Directions) \times \powerSetOf(\Signals) &\to \powerSetOf(\Signals), \mathnote{map $\localTransitionFunction_v^{\virtualTree}$}\\
      (D, S) &\mapsto \begin{aligned}[t]
                        S_{\leafKind}
                        &\cup \localTransitionFunction_v^{\boundaryCases}(D', S_o)\\
                        &\cup \parens[\big]{\bigcup_{d \in X} \localTransitionFunction_v^{\boundaryCases}(\setOf{d}, S_d)},
                      \end{aligned}
    \end{align*}
    where $X = \setOf{d \in \Directions \suchThat \leafSignal{d} \in S}$, the set of directions that lead away from virtual leaves, $S_{\leafKind} = \setOf{\leafSignal{d} \suchThat d \in X}$, the set of virtual leaves, $\family{S_d}_{d \in X} = \family{\setOf{s \in S \suchThat \directionOf(s) = \reverse d}}_{d \in X}$, for each direction of a virtual leaf, the set of signals that reach the virtual leaf corresponding to the direction moving towards it, $S_o = S \smallsetminus (S_{\leafKind} \cup (\bigcup_{d \in X} S_d))$, the signals that are not virtual leaves and that do not reach a virtual leaf, and $D' = D \smallsetminus X$, the set of directions that do not lead away from virtual leaves. 
  \end{definition}

  How events for graphs are handled is given in

  \begin{definition}
  \label{definition:for-graphs:local-transition-functions}
    The map
    \begin{align*}
      \mu \from \powerSetOf(\Signals) &\to \powerSetOf(\Signals), \mathnote{map $\mu$}\\
      S &\mapsto \reverse \setOf{d \in \Directions \suchThat \initiateSignal{d} \in S},
    \end{align*}
    takes a set of signals and returns the set of the reverses of the directions that initiate signals have.

    The map
    \begin{align*}
      \varphi_{\leafKind}^e \from \powerSetOf(\Signals) &\to \powerSetOf(\Signals), \mathnote{map $\varphi_{\leafKind}^e$}\\
      S &\mapsto \begin{dcases*}
                   \emptyset, &if $\cardinalityOf{\mu(S)} \leq 1$,\\
                   \setOf{\leafSignal{d} \suchThat d \in \mu(S)}, &otherwise,
                 \end{dcases*}
    \end{align*}
    takes a set of signals and returns the empty set, if there is at most one initiate signal, and the set that consists of a virtual leaf for each initiate signal, otherwise.

    The map
    \begin{align*}
      &\varphi_{\leafKind}^v \from \powerSetOf(\Directions) \times \powerSetOf(\Signals) \to \powerSetOf(\Signals), \mathnote{map $\varphi_{\leafKind}^v$}\\
      &(D, S) \mapsto \left\{
                        \begin{aligned} 
                          &\emptyset, \text{ if $\cardinalityOf{\mu(S)} \leq 1$},\\
                          &\setOf{\leafSignal{d} \suchThat d \in D}, \text{ if $\cardinalityOf{\mu(S)} \geq 2$ and $\mu(S) = D$,}\\
                              &&\llap{and $\isInLeaf(D) = \no$},\\ 
                          &\setOf{\leafSignal{d} \suchThat d \in D \smallsetminus \setOf{d_x}}, \text{ if $\cardinalityOf{\mu(S)} \geq 2$ and $\mu(S) \neq D$},\\
                              &&\llap{for some $d_x \in D \smallsetminus \mu(S)$}, 
                        \end{aligned}
                      \right.
    \end{align*}
    takes a set of directions and a set of signals and returns the empty set, if there is at most one initiate signal, or the set that consists of a virtual leaf for each initiate signal, if there are at least two initiate signals and initiate signals reached the vertex from all incident edges, or the set that consists of a virtual leaf for each initiate signal but one, otherwise. Note that right now the choice of $d_x$ is non-deterministic; however, if the finite set of directions carried a total order, then we could deterministically choose for example the smallest direction; or if continuum representations of graphs were embedded in high-dimensional Euclidean spaces and a Cartesian coordinate system was chosen such that the occurring directions are unit vectors, then the lexicographic order is a total order on the set of directions.

    The map
    \begin{align*}
      \localTransitionFunction_e \from \domainOf(\localTransitionFunction_e) &\to \powerSetOf(\Signals), \mathnote{map $\localTransitionFunction_e$}\\
      S &\mapsto \localTransitionFunction_e^{\virtualTree}(S \cup \varphi_{\leafKind}^e(S))
    \end{align*}
    handles collisions on edges by creating two virtual leaves, if two initiate signals collide, which cuts the edge virtually, and then applying $\localTransitionFunction_e^{\virtualTree}$ to the maybe new set of signals.

    The map
    \begin{align*}
      \localTransitionFunction_v \from \domainOf(\localTransitionFunction_v) &\to \powerSetOf(\Signals), \mathnote{map $\localTransitionFunction_v$}\\
      (D, S) &\mapsto \localTransitionFunction_v^{\virtualTree}(D, S \cup \varphi_{\leafKind}^v(D, S))
    \end{align*}
    handles events in vertices by creating a virtual leaf for each incident edge, if the vertex is not a leaf and initiate signals reached the vertex from all incident edges, or by creating a virtual leaf for each incident edge from which an initiate signal arrived except for one such edge, otherwise, and then applying $\localTransitionFunction_v^{\virtualTree}$ to the maybe new set of signals.

    Intuitively, initiate signals are used to turn the graph into a virtual tree by cutting edges at points where such signals collide. These cuts create virtual leaves which are represented by leaf signals. More precisely: When two initiate signals collide on an edge, it is cut by two leaf signals, one for each of the two directions. And when at least two initiate signals collide in a vertex, there are two cases: If initiate signals arrive from all directions, then each incident edge is cut by a leaf signal; otherwise, the incident edges from which initiate signals arrive are cut except for one such edge --- the initiate signal from this excluded edge will spread to all edges that have not been cut. 
  \end{definition}

  \begin{main-theorem} 
  \label{theorem:signal-machine-is-quasi-solution-of-the-firing-mob-synchronisation-problem}
    The signal machine \graffito{signal machine $\mathcal{S}$}$\mathcal{S} = \ntuple{\Kinds, \speedOf, \family{\Data_k}_{k \in \Kinds}, (\localTransitionFunction_e, \localTransitionFunction_v)}$ is a time-optimal quasi-solution of the firing mob synchronisation problem over continuum representations of weighted, non-trivial, finite, and connected undirected multigraphs in the following sense: For each representation $M$ of such a graph, each vertex $\general \in M$, for the time $t = r + d$, where $r = \sup_{m \in M} \distanceOf(\general, m)$ is the radius of $M$ with respect to $\general$ and $d = \sup_{m, m' \in M} \distanceOf(m, m')$ is the diameter of $M$, for the instantiation of $\mathcal{S}$ for $M$, for the configuration $c \in \Configurations$ such that $c(\general) = \bigcup_{d \in \directionOf(\general)} \setOf{\initiateSignal{d}} \cup \setOf{\divideSignal{n}{d} \suchThat n \in \N_0} \cup \setOf{\findMidpointSignal{\emptyWord}{d}, \slowedDownFindMidpointSignal{d}{\emptyWord}{\emptyWord}{\isInLeaf(\directionOf(\general))}}$ and $c\restrictedTo_{M \smallsetminus \setOf{\general}} \equiv \emptyset$, the points in the configuration $\globalTransitionFunction(t)(c)$ at which a fire signal occurs lies dense in $M$ with respect to the metric $\distanceOf$, and no fire signals occur in any of the configurations $\globalTransitionFunction(s)(c)$, for $s \in \R_{\geq 0}$ with $s < t$.
  \end{main-theorem}

  \begin{proof-sketch}
    A proof is sketched in \cref{section:proof-sketch}.
  \end{proof-sketch}

  \begin{remark}
  \label{remark:data-sets-of-kinds-can-be-chosen-to-be-finite}
    For each positive integer $k$, under the restriction to multigraphs whose maximum degree is bounded by $k$, for each such multigraph, because directions only need to be locally unique (compare \cref{remark:efficient-representation-of-directions}), the set $\setOf{1, 2, \dotsc, 2 k}$ can be chosen as the set of directions, which makes the data sets of the kinds $\leafKind$ and $\frozenDivideKind$ finite and independent of the multigraph, the finite set $\powerSetOf(\setOf{1, 2, \dotsc, k})$ can be chosen as the data set of the kind $\countKind$, and, depending on the diameter $d$ of the multigraph, the finite set of words over $\Directions$ with maximum length $d$ can be chosen as the sets of words over $\Directions$ that occur in the data sets of the kinds $\midpointKind$, $\findMidpointKind$, $\reflectedFindMidpointKind$, $\slowedDownFindMidpointKind$, and $\thawKind$ --- altogether, the data sets of all kinds can be chosen to be finite but some depend on the multigraph.
  \end{remark}

  \begin{corollary}
    A discretisation of the signal machine $\mathcal{S}$ is a time-optimal cellular automaton quasi-solution of the firing mob synchronisation problem over non-trivial, finite, and connected undirected multigraphs.
  \end{corollary}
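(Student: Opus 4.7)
The plan is to derive the corollary from \cref{theorem:signal-machine-is-quasi-solution-of-the-firing-mob-synchronisation-problem} by exhibiting an explicit discretisation and verifying that it inherits correctness and time-optimality from $\mathcal{S}$. First I would restrict to multigraphs whose edge weights are positive integers and whose maximum degree is uniformly bounded by some fixed constant. Such a multigraph is turned into a cell space by subdividing each edge $e$ of weight $\weightOf(e)$ into $\weightOf(e)$ unit-length cells, viewed as the integer lattice points of the continuum representation $M$. Time is discretised into integer steps, and a cell's state records the finite set of signals that $\mathcal{S}$ places at the corresponding point of $M$ at the corresponding integer time. By \cref{remark:data-sets-of-kinds-can-be-chosen-to-be-finite}, under the fixed degree bound, for each concrete multigraph the data sets of all kinds can be chosen finite, so the state set is finite per graph, in line with the definition that a quasi-solution is allowed to have its state set depend on the graph.

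Secondly I would verify that the one-step update is local. Signals of speed $1$ --- initiate, find-midpoint, reflected find-midpoint, freeze, thaw, reflected divide, and divide of type $0$ --- traverse exactly one cell per step, so their transitions depend only on the states of the immediate neighbours. Signals of speed $1/3$ and of speeds $(2/3)^n / (2 - (2/3)^n)$ for $n \in \N$ with $n \geq 1$ are simulated using the classical Mazoyer construction of \cite{mazoyer:1987}, in which bounded counters stored inside cell states reproduce signals of those speeds. All non-singular events produced by \cref{definition:for-graphs:local-transition-functions} happen at integer times on integer-weighted graphs, because they are spawned at vertices or at integer-length midpoints and involve signals whose speeds yield integer meeting times; this can be checked case-by-case from the finite speed list. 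Hence one integer time step of $\mathcal{S}$ on integer-weighted graphs is computed by a uniform local rule depending only on each cell and its neighbours.

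The main obstacle will be the discrete treatment of the singularities of orders $1$ and $-1$, at which the configuration of $\mathcal{S}$ is defined only as a limit. On a single edge of length $\weightOf(e)$, the divide-and-boundary cascade accumulates at time $2\weightOf(e)$ with a singularity of order $1$ producing fire signals dense on the edge; in the discretisation this is exactly Mazoyer's classical firing squad on $\weightOf(e)$ cells, which is known to place the fire state in every cell at step $2\weightOf(e)$. I would lift this edge-by-edge argument to the full virtual-tree structure cut out by the initiate signals of $\mathcal{S}$, using that freeze and thaw signals have speed $1$ and are spawned at integer-length midpoints, so that every freezing and thawing event happens at an integer time on the corresponding cell. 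Combining this with \cref{theorem:signal-machine-is-quasi-solution-of-the-firing-mob-synchronisation-problem}, at step $r + d$ every cell of the discretised multigraph is in the fire state and no cell is in the fire state earlier, which matches the Grefenstette lower bound from \cite{grefenstette:1983}; therefore the discretisation is a time-optimal cellular automaton quasi-solution of the firing mob synchronisation problem.
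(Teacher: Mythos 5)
Your overall construction points in the same direction as the paper's: the paper groups maximal chains of degree-$2$ vertices of the given multigraph into weighted \emph{uber-edges} and lets signals jump from vertex to vertex, which is the inverse of your subdivision of an integer-weighted edge into unit cells, so the resulting cell structure is essentially the same. The deferral of the infinitely many divide-signal speeds to Mazoyer's finite-state construction also matches the paper's remark following the corollary.

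However, there is a genuine gap in the central claim that on integer-weighted graphs \enquote{all non-singular events happen at integer times} at integer positions, so that the one-step update is automatically local. This is false, and it is precisely the difficulty the paper's proof sketch is devoted to. The midpoint of an edge of odd weight $\weightOf(e)$ lies at a half-integer position, and the find-midpoint signal (speed $1$) meets the slowed-down find-midpoint signal (speed $1/3$) there at the non-integer time $(3/2)\,\weightOf(e)$; consequently freeze and thaw signals are \emph{not} in general \enquote{spawned at integer-length midpoints} as you assert. Likewise, the reflected divide signal meets the divide signal of type $1$ at position $(2/3)\,\weightOf(e)$ at time $(4/3)\,\weightOf(e)$, which is non-integral whenever $3 \nmid \weightOf(e)$, so boundary signals are created strictly between cells. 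The paper handles exactly this: collisions in the midst of an edge must have their resulting signals distributed onto both neighbouring cells according to direction, and each vertex must be virtually split into several parts, one of which plays the role of the midpoint of each incident edge. Without specifying how events at non-integer positions and times are resolved into a local rule on cells, your discretised transition function is not well defined, and the claim that fire states appear in every cell exactly at step $r + d$ does not yet follow.
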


  \begin{proof-sketch}
    Let $\Graph$ be a non-trivial, finite, and connected undirected multigraph. It is made up of paths whose source and target vertices are not of degree $2$ and whose other vertices are of degree $2$. Each such path together with its inverse can be regarded as an undirected \emph{uber-edge} whose weight is the length of the path and whose ends are the source and target vertices of the path or its inverse. We call vertices that are not of degree $2$ \emph{uber-vertices} and vertices that are of degree $2$ \emph{under-vertices}.

    Signals jump from vertices to vertices along edges. They \emph{collide} when they jump simultaneously from different vertices onto the same vertex or along the same edge but in different directions, or when signals jump onto vertices on which stationary signals reside. Collisions in uber-vertices are handled as collisions in vertices, collisions in under-vertices are handled as collisions on edges (namely on the uber-edges that contain the under-vertices), and collisions in the midst of edges are handled as collisions on edges (namely on the uber-edges that contain the edges). Signals \emph{reach a vertex} when they jump onto an uber-vertex, but not when they jump onto an under-vertex (because the latter just means that they travel along an uber-edge).

    When signals collide in a vertex, be it a uber- or under-vertex, the resulting signals are on the vertex. But when signals collide in the midst of an edge, the resulting signals must be distributed onto both its ends depending on their directions. This last case makes it rather cumbersome to write down the local transition functions explicitly. Vertices must be virtually divided into multiple parts: One part that plays the role of the vertex itself and, for each incident edge, an additional part that plays the role of the midpoint of the edge together with the corresponding part of the other end of the edge. And signals must be cleverly distributed onto these parts depending on their direction and how they came into being, and collisions of signals must also be cleverly handled taking the parts the involved signals came from and are on into account.

    This discretisation of the signal machine is actually a cellular automaton over the multigraph with appropriate dummy neighbours that are in a dead state (think for example of the multigraph as being embedded in a coloured $S$-Cayley graph with sufficient maximum degree and of the vertices that do not belong the multigraph as being in a dead state).
  \end{proof-sketch}

  \begin{remark}
    Jacques Mazoyer showed in 1987 that all infinitely many divide signals of type $n$, for $n \in \N_0$, that emanate from the same point can be generated by a cellular automaton with only finitely many states (see \cite{mazoyer:1987}). And, as illustrated in \cref{remark:data-sets-of-kinds-can-be-chosen-to-be-finite}, under the restriction to multigraphs whose maximum degrees are uniformly bounded by a constant, the data sets of all kinds can be chosen to be finite but some depend on the multigraph. Therefore, depending on the multigraph, the discretisation of $\mathcal{S}$ is a cellular automaton with a finite number of states.
  \end{remark}

  \begin{open-problem}
    Are there time-optimal signal machine and cellular automaton solutions of the firing mob synchronisation problem over non-trivial, finite, and connected undirected multigraphs whose maximum degrees are uniformly bounded by a constant? Or, more specifically, is it possible to adapt the signal machine $\mathcal{S}$ (and thereby its discretisation) such that the data sets of all kinds can be chosen to be finite and independent of the multigraph (and thereby making its discretisation have a finite set of states), for example by reducing the number of midpoints that are and need to be determined?
  \end{open-problem}

  \section{Proof Sketch of the Main Theorem}
  \label{section:proof-sketch}

  In this section, we sketch a proof of \cref{theorem:signal-machine-is-quasi-solution-of-the-firing-mob-synchronisation-problem}. To that end, let $\Graph = \ntuple{\Vertices, \Edges, \eendsOf}$ be a non-trivial, finite, and connected undirected multigraph, let $\weightOf$ be an edge weighting of $\Graph$, let $M$ be a continuum representation of $\Graph$, identify vertices of $\Graph$ and $M$ and direction-preserving paths from vertices to vertices of $\Graph$ and $M$, and let $\general$ be a vertex of $M$, which we call \define{general}\graffito{general vertex $\general$}. Furthermore, let $\mathcal{S}$ be the signal machine and let $c$ be the initial configuration of the firing mob synchronisation problem from \cref{theorem:signal-machine-is-quasi-solution-of-the-firing-mob-synchronisation-problem}, and, whenever we talk about time evolution, we mean the one of $\mathcal{S}$ that is in the configuration $c$ at time $0$, for example, \emph{at time $t$} either means \emph{in configuration $\globalTransitionFunction(t)(c)$} or \emph{essentially in configuration $\globalTransitionFunction(t)(c)$ but before events have been handled}.

  To proof \cref{theorem:signal-machine-is-quasi-solution-of-the-firing-mob-synchronisation-problem}, we need to ascertain that the signal machine performs the following tasks: First, it cuts the multigraph such that the multigraph turns into a virtual tree and looks like a tree to all other tasks; secondly, it starts synchronisation of edges and freezes it in time; thirdly, it determines the midpoints of all non-empty direction-preserving paths from vertices to vertices in time; fourthly, it determines which midpoints are the ones of the longest paths; fifthly, starting from the midpoints of the longest paths, it traverses midpoints of shorter and shorter paths and upon reaching midpoints of edges, it thaws synchronisation of the respective edges; sixthly, all edges finish synchronisation at time $r + d$ with the creation of fire signals that lie dense in the graph, where $r$ is the radius of the graph with respect to the general and $d$ is the diameter of the graph.

  That the first task is performed is evident from the definitions of $\localTransitionFunction_e$, $\localTransitionFunction_v$, $\localTransitionFunction_e^{\virtualTree}$, and $\localTransitionFunction_v^{\virtualTree}$. The only subtlety here is that besides leaf signals there cannot be stationary signals in virtual leaves or, more precisely, at points that are virtually cut by leaf signals, because stationary signals carry the semi-direction $\every$, which is insufficient to associate them with one or the other leaf signal as is done for non-stationary signals. This is no problem because the other tasks do not place stationary signals in (virtual) leaves. Therefore, we assume from now on, without loss of generality, that the multigraph $\Graph$ is a \emph{tree}.

  That the second task is performed can be seen from a careful examination of the definitions of $\localTransitionFunction_e^{\tree}$, $\localTransitionFunction_v^{\tree}$, $\localTransitionFunction_{e, 2}^{\tree}$, $\localTransitionFunction_{v, 1}^{\tree}$, and $\localTransitionFunction_{v, 2}^{\tree}$, where from the third task it is used that midpoints of edges are found in time to start the freezing process. 

  That the third, fourth, and fifth and sixth parts are performed is proven in \cref{subsection:midpoints-are-determined}, \cref{subsection:midpoints-of-maximum-weight-paths-are-recognised}, and \cref{subsection:thaw-signals-traverse-midpoints-and-thaw-synchronisation-of-edges-just-in-time}. 

  \subsection{Midpoints are Determined}
  \label{subsection:midpoints-are-determined}

  The midpoint of a path in a multigraph is the midpoint of its embedding in the continuum representation of the multigraph as introduced in

  \begin{definition} 
    Let $p$ be a path in $\Graph$. The point $\midpoint_p = \continuumRepresentationOf{p}(\weightOf(p) / 2)$ is called \define{midpoint of $p$}\graffito{midpoint $\midpoint_p$ of $p$}\index[symbols]{mpfraktur@$\midpoint_p$}.
  \end{definition}

  \begin{remark}
    The midpoint of the empty path in $v$ is the vertex $v$ itself.
  \end{remark}

  \begin{remark}
    Let $p$, $q$, and $q'$ be three paths such that $\targetOf(p) = \sourceOf(q) = \sourceOf(q')$, $\midpoint_{p \concat q} \in \imageOf p$, and $\weightOf(q) \geq \weightOf(q')$. Then, $\midpoint_{p \concat q'} \in \imageOf p$ and
    \begin{equation*}
      \distanceOf(\midpoint_{p \concat q}, \midpoint_{p \concat q'})
      = \weightOf(p \concat q) / 2 - \weightOf(p \concat q') / 2
      = \weightOf(q) / 2 - \weightOf(q') / 2.
    \end{equation*}

    Analogously, let $q$, $q'$, and $p$ be three paths such that $\targetOf(q) = \targetOf(q') = \sourceOf(p)$, $\midpoint_{q \concat p} \in \imageOf p$, and $\weightOf(q) \geq \weightOf(q')$. Then, $\midpoint_{q' \concat p} \in \imageOf p$ and
    \begin{equation*}
      \distanceOf(\midpoint_{q \concat p}, \midpoint_{q' \concat p})
      = \weightOf(q \concat p) / 2 - \weightOf(q' \concat p) / 2
      = \weightOf(q) / 2 - \weightOf(q') / 2. \qedhere
    \end{equation*} 
  \end{remark}

  When the midpoints of non-empty direction-preserving paths are found is stated in

  \begin{lemma}
  \label{lemma:the-time-at-which-the-midpoint-of-a-path-is-found}
    Let $p$ be a non-empty direction-preserving path in $\Graph$. The midpoint signal that designates the midpoint of $p$ is created at $\midpoint_p$ at time $t_p = \max\setOf{\distanceOf(\general, \sourceOf(p)),\allowbreak \distanceOf(\general, \targetOf(p))} + \weightOf(p)/2$.
  \end{lemma}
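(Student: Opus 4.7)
The plan is to identify, in the tree $\Graph$ (by the earlier reduction we may assume $\Graph$ is a tree, as the virtual-leaf machinery has made any cycles invisible to the find-midpoint machinery), the unique vertex $o^{*} \in p$ closest to $\general$. Set $u = \sourceOf(p)$, $v = \targetOf(p)$, $c = \distanceOf(\general, o^{*})$, $d_u = \distanceOf(o^{*}, u)$, $d_v = \distanceOf(o^{*}, v)$, and assume without loss of generality $d_u \leq d_v$. Because the unique direction-preserving path from $\general$ to any point of $p$ passes through $o^{*}$, we have $\distanceOf(\general, u) = c + d_u$ and $\distanceOf(\general, v) = c + d_v$, so the target time is
\begin{equation*}
  t_p \;=\; (c + d_v) + (d_u + d_v)/2 \;=\; c + (3 d_v + d_u)/2.
\end{equation*}

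First I would establish that at time $c$ the vertex $o^{*}$ is either the general itself or is reached by an initiate signal, and apply the initiate-handling clause of $\localTransitionFunction_{v,1}^{\tree}$ to conclude that a find-midpoint and a slowed-down find-midpoint signal are emitted in every incident direction. Tracking only the branches that stay on $p$: by induction on the number of vertices on the sub-path from $o^{*}$ to $u$, using the generic find-midpoint propagation/reflection clause, the find-midpoint branch along $p$ reaches $u$ at time $c + d_u$, is reflected, and returns along $p$ to $o^{*}$ at time $c + 2 d_u$ carrying $w_o = \emptyWord$ and $w_r$ equal to the word of directions encoding the embedded sub-path from $o^{*}$ back to $u$. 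The reflected-at-origin clause then spawns slowed-down find-midpoint signals in every direction except the one from which it arrived; the branch sent along $p$ toward $v$ has speed $1/3$, $w_o = \emptyWord$, and $w_r'$ equal to the word encoding the path from $o^{*}$ to $u$. Symmetrically, the find-midpoint branch along $p$ toward $v$ reaches $v$ at time $c + d_v$, is reflected, and returns along $p$ at speed $1$ carrying $w_o$ equal to the word from current position back to $o^{*}$ and $w_r$ equal to the word from current position back to $v$.

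A direct computation shows that the slowed-down signal (at distance $(t - c - 2 d_u)/3$ from $o^{*}$ on the $v$-side at time $t \geq c + 2 d_u$) and the $v$-reflected find-midpoint signal (at distance $2 d_v - (t - c)$ from $o^{*}$ on the $v$-side at time $t \geq c + d_v$) collide at time $t = c + (3 d_v + d_u)/2 = t_p$, at position $(d_v - d_u)/2$ from $o^{*}$ on the $v$-side, which is exactly $\midpoint_p$. One checks the inequalities $(3 d_v + d_u)/2 \geq 2 d_u$ and $(3 d_v + d_u)/2 \geq d_v$ so that the meeting falls in both signals' existence ranges. Applying the reflected-meets-slowed-down clause of $\localTransitionFunction_{e,2}^{\tree}$, the resulting midpoint signal is $\midpointSignal{(\reverse d) \concat w_r}{d \concat w_o \concat w_r'}$; by the bookkeeping established above, the two words are precisely the half-paths from $\midpoint_p$ to $u$ and from $\midpoint_p$ to $v$, so the signal is the midpoint signal designating $\midpoint_p$ as the midpoint of $p$.

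The main obstacle I expect is the bookkeeping step: verifying by a careful induction on the number of vertices traversed that the data $w_o$, $w_r$, $w_r'$ carried by the relevant find-midpoint, reflected find-midpoint, and slowed-down find-midpoint signals really encode the intended subpaths of $p$ in $\continuumGraph$, compatibly with the conventions by which direction labels in $\Directions$ are attached to signals as they cross vertices. Boundary cases also need separate but routine verification: when $d_u = 0$ (so $o^{*} = u$), the slowed-down signal on $p$ is the one emitted directly at $o^{*}$ by the initiate-handling rule rather than a spawn of a returning reflected signal, and the same arithmetic still gives $t = c + 3 d_v/2 = t_p$; when $d_u = d_v$, the collision occurs exactly at $o^{*}$ and the corresponding clause of $\localTransitionFunction_{v,2}^{\tree}$ produces the midpoint signal. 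Finally, freeze and thaw signals act (via $\xi$ and $\chi$) only on divide and fire signals, so the edge-synchronisation machinery does not interfere with the argument.
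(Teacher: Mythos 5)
Your proposal is correct and follows essentially the same route as the paper's own proof sketch: both track the find-midpoint signals emitted with origin the vertex of $p$ nearest to $\general$ (your $o^{*}$) once the initiate signal arrives there, their reflections at the two ends of $p$, the slow-down of the first returning reflection at that vertex, and then compute the collision at $\midpoint_p$ at time $t_p$. The only difference is presentational: the paper splits the argument into three cases (general an end of $p$, general an interior vertex of $p$, general off $p$, with the equidistant subcase handled by the vertex-collision clause), whereas you fold these into one computation with the boundary cases $d_u = 0$ and $d_u = d_v$ noted separately --- the underlying mechanism and arithmetic are identical.
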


  \begin{proof-sketch}
    First, let the general $\general$ be the source or target of $p$. Then, a find-midpoint signal with origin $\general$ of speed $1$ and a slowed-down find-midpoint signal with origin $\general$ of speed $1/3$ travel from $\general$ towards the other end of $p$. The find-midpoint signal is reflected at the other end at time $(2/2) \cdot \weightOf(p)$ and this reflection collides with the slowed-down find-midpoint signal at the midpoint of $p$ at time $(3/2) \cdot \weightOf(p)$ creating a midpoint signal for $p$ (because both signals have the same origin). Note that the time of collision is equal to $t_p$ (see \cref{figure:determineMidpointOfOneEdge}).

    Secondly, let the general $\general$ lie on $p$ without being its source or target. Then, two find-midpoint signals with origin $\general$ travel from $\general$ to the ends of $p$; the source of $p$ is reached at time $\distanceOf(\general, \sourceOf(p))$ and the target at time $\distanceOf(\general, \targetOf(p))$. When such a signal reaches its end, it is reflected and travels back. If $\distanceOf(\general, \sourceOf(p)) = \distanceOf(\general, \targetOf(p))$, then this distance is equal to $\weightOf(p)/2$, the reflected signals collide at time $\weightOf(p)$ at the midpoint of $p$ creating a midpoint signal for $p$; note that the time of collision is equal to $t_p$. Otherwise, the reflected signal that is nearer to $\general$ reaches this vertex first, where the signal is slowed down and travels towards the other reflected signal with which it collides at the midpoint of $p$ at time $t_p$ (see \cref{figure:determineMidpointOfTwoEdges}).

    Lastly, let the general $\general$ not lie on $p$. Then, an initiate signal travels from $\general$ to the nearest vertex $v$ on $p$, where it creates two find-midpoint signals with origin $v$ that travel to the ends of $p$, are reflected at these ends and travel back, one is slowed-down upon reaching $v$, and the slowed-down signal collides with the reflected signal in the midpoint of $p$ at time $\distanceOf(\general, v) + \max\setOf{\distanceOf(v, \sourceOf(p)),\allowbreak \distanceOf(v, \targetOf(p))} + \weightOf(p)/2$ creating a midpoint signal for $p$. Note that the time of collision is equal to $t_p$.
  \end{proof-sketch}

  \begin{remark} 
    When reflected/slowed-down find-midpoint signals with different origins collide, nothing happens, the signals just move on. Hence, no points are falsely found to be midpoints.
  \end{remark}

  That the midpoints of maximum-weight direction-preserving paths are identical and found at time $r + d$ is shown in

  \begin{lemma}
  \label{lemma:the-time-at-which-the-midpoints-of-longest-paths-are-found}
    All maximum-weight direction-preserving paths in $\Graph$ have the same midpoint $\hat{\midpoint}$ and the midpoint signals that designate the midpoints of such paths are created at $\hat{\midpoint}$ at time $r + d/2$, where $r = \max_{v \in \Vertices} \distanceOf(\general, v)$ is the radius of $\Graph$ with respect to $\general$ and $d = \max_{v, v' \in \Vertices} \distanceOf(v, v')$ is the diameter of $\Graph$. Note that $r$ is equal to the radius $\sup_{m \in M} \distanceOf(\general, m)$ of $M$ with respect to $\general$ and $d$ is equal to the diameter $\sup_{m, m' \in M} \distanceOf(m, m')$ of $M$.
  \end{lemma}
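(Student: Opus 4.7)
The plan is to combine \cref{lemma:the-time-at-which-the-midpoint-of-a-path-is-found} with two tree-geometric facts about $\Graph$: (i) all maximum-weight direction-preserving paths from vertices to vertices share a common midpoint, and (ii) for every such path $p$, at least one endpoint of $p$ lies at distance $r$ from $\general$. Granted these, for any maximum-weight path $p$ the previous lemma yields a midpoint signal created at $\midpoint_p$ at time $t_p = \max\setOf{\distanceOf(\general, \sourceOf(p)), \distanceOf(\general, \targetOf(p))} + \weightOf(p)/2 = r + d/2$, and (i) identifies $\midpoint_p$ with a single point $\hat{\midpoint}$ independent of the choice of $p$.

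For (i), I would argue by contradiction. Suppose two maximum-weight paths $p_1$ from $u_1$ to $w_1$ and $p_2$ from $u_2$ to $w_2$ had distinct midpoints $z_1 \neq z_2$. In the tree $\Graph$, consider the unique path $q$ from $z_1$ to $z_2$; it meets $p_1$ at a branch point $b_1$ and $p_2$ at a branch point $b_2$. Without loss of generality $b_1$ lies weakly on the $u_1$-side of $z_1$ on $p_1$ and $b_2$ lies weakly on the $w_2$-side of $z_2$ on $p_2$, so that concatenating the $u_1$-half of $p_1$ up to $b_1$ with $q$ from $b_1$ to $b_2$ and then the $w_2$-half of $p_2$ from $b_2$ to $w_2$ yields a direction-preserving path of weight at least $d/2 + \distanceOf(z_1, z_2) + d/2 > d$, contradicting that $d$ is the diameter. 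Hence $z_1 = z_2 =: \hat{\midpoint}$.

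For (ii), write $z = \hat{\midpoint}$ and $p = (u, \ldots, w)$. Let $b$ be the unique point on (the embedding of) $p$ closest to $\general$. Because $\Graph$ is a tree, the shortest path from $\general$ to every point on $p$ passes through $b$, so $\distanceOf(\general, x) = \distanceOf(\general, b) + \distanceOf(b, x)$ for every $x$ on $p$; in particular $b$ also lies on the shortest $\general$-to-$z$ path, giving $\distanceOf(\general, z) = \distanceOf(\general, b) + \distanceOf(b, z)$. A case distinction on whether $b$ equals $z$ or lies strictly on the $u$-side or $w$-side of $z$ then shows that the endpoint of $p$ farther from $\general$ is at distance $\distanceOf(\general, b) + \distanceOf(b, z) + d/2 = \distanceOf(\general, z) + d/2$ from $\general$, while the nearer endpoint is at most this distance. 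Combined with the standard eccentricity bound for the center of a tree, $\distanceOf(\general, v) \leq \distanceOf(\general, z) + \distanceOf(z, v) \leq \distanceOf(\general, z) + d/2$ for every vertex $v$, this gives $r = \distanceOf(\general, z) + d/2$ and hence $\max\setOf{\distanceOf(\general, \sourceOf(p)), \distanceOf(\general, \targetOf(p))} = r$, as required.

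The main obstacle I anticipate is (i): the rerouting construction must be verified to produce a genuinely direction-preserving path --- no U-turn at the branch points $b_1$ and $b_2$ --- which follows from the tree structure of $\Graph$ but needs care because $q$ is the unique tree-path from $z_1$ to $z_2$ and meets $p_1, p_2$ transversally; one also needs the auxiliary fact that the center $z$ of a tree has eccentricity exactly $d/2$, which is itself a short consequence of (i). Once (i) and this eccentricity bound are in place, (ii) and the time calculation follow mechanically from \cref{lemma:the-time-at-which-the-midpoint-of-a-path-is-found}.
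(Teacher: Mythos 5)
Your step (i) follows the paper's strategy (reroute through the two midpoints to obtain a direction-preserving vertex-to-vertex path heavier than $d$), but the explicit concatenation you describe does not deliver the claimed weight. With your labelling ($b_1$ weakly on the $u_1$-side of $z_1$, $b_2$ weakly on the $w_2$-side of $z_2$), the piece of $p_1$ from $u_1$ to $b_1$ has weight $d/2 - \distanceOf(z_1, b_1)$, the piece of $q$ from $b_1$ to $b_2$ has weight $\distanceOf(z_1, z_2) - \distanceOf(z_1, b_1) - \distanceOf(z_2, b_2)$, and the piece of $p_2$ from $b_2$ to $w_2$ has weight $d/2 - \distanceOf(z_2, b_2)$, so the total is $d + \distanceOf(z_1, z_2) - 2\distanceOf(z_1, b_1) - 2\distanceOf(z_2, b_2)$. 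This meets your bound $d/2 + \distanceOf(z_1, z_2) + d/2$ only when $b_1 = z_1$ and $b_2 = z_2$, and it need not exceed $d$ at all when $q$ initially overlaps $p_1$ or $p_2$ (for instance when $q$ runs a long way along $p_1$ before leaving it), so as written the contradiction is not obtained. The repair is to take the \emph{opposite} halves: go from $w_1$ along $p_1$ through $z_1$ (weight $d/2$), then follow all of $q$ from $z_1$ to $z_2$ (weight $\distanceOf(z_1, z_2)$), then continue from $z_2$ to $u_2$ along $p_2$ (weight $d/2$); the choice of sides is exactly what rules out U-turns at $z_1$ and $z_2$, and the weight is $d + \distanceOf(z_1, z_2) > d$. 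This is precisely the paper's construction, phrased there via the source- and target-directions of the connecting path $\mathfrak{p}_{\midpoint}$.

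Your step (ii) is correct but genuinely different from the paper's second half: the paper chooses a vertex $v$ with $\distanceOf(\general, v) = r$ and runs a three-case analysis on where the paths from $v$, $\sourceOf(\hat{p})$, and $\targetOf(\hat{p})$ to $\general$ branch, building a path heavier than $\hat{p}$; you instead derive $r = \distanceOf(\general, \hat{\midpoint}) + d/2$ from the eccentricity bound $\distanceOf(\hat{\midpoint}, v) \leq d/2$ for all vertices $v$. That bound is true, but it is not literally a consequence of (i): it needs its own short rerouting argument (if $\distanceOf(\hat{\midpoint}, v) > d/2$, attach the path from $v$ to $\hat{p}$ at its meeting point $c$ and continue through $\hat{\midpoint}$ to the far end of $\hat{p}$, producing two vertices at distance greater than $d$), again with the side chosen so that no U-turn occurs at $c$. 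With that supplied, your route is arguably cleaner than the paper's case analysis and yields the extra fact $r = \distanceOf(\general, \hat{\midpoint}) + d/2$; the final appeal to \cref{lemma:the-time-at-which-the-midpoint-of-a-path-is-found} then matches the paper's conclusion.
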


  \begin{proof} 
    We prove both statements by contradiction.

    First, suppose that there are two maximum-weight direction-pre\-serv\-ing paths $\hat{p}$ and $\hat{p}'$ in $\Graph$ that do not have the same midpoint. Then, there is a non-empty direction-preserving path $\mathfrak{p}_{\midpoint}$ in $M$ from the midpoint of $\hat{p}$ to the one of $\hat{p}'$. And, there is a direction-preserving subpath $\mathfrak{p}$ of $\continuumRepresentationOf{\hat{p}}$ in $M$ from one end of $\hat{p}$ to its midpoint whose target-direction is not the reverse of the source-direction of $p_{\midpoint}$. And, there is a direction-preserving subpath $\mathfrak{p'}$ of $\continuumRepresentationOf{\hat{p}'}$ in $M$ from the midpoint of $\hat{p}'$ to one of its ends whose source-direction is not the reverse of the target-direction of $p_{\midpoint}$. The concatenation of $\mathfrak{p}$, $\mathfrak{p}_{\midpoint}$, and $\mathfrak{p}'$ is a direction-preserving path from vertex to vertex in $M$ whose length is equal to $d/2 + \length(\mathfrak{p}_{\midpoint}) + d/2 > d$. It corresponds to a direction-preserving path $p$ in $\Graph$ whose weight is greater than $d$, which contradicts that $d$ is the diameter of $\Graph$. Therefore, all maximum-weight direction-preserving paths in $\Graph$ have the same midpoint, which we denote by $\hat{\midpoint}$.

    Secondly, suppose that there is a maximum-weight direction-pre\-serv\-ing path $\hat{p}$ in $\Graph$ such that $\max\setOf{\distanceOf(\general, \sourceOf(\hat{p})),\allowbreak \distanceOf(\general, \targetOf(\hat{p}))} < r$. Let $v$ be a vertex of $\Graph$ such that $\distanceOf(\general, v) = r$, let $p_v$, $p_{\sourceOf}$, and $p_{\targetOf}$ be the direction-preserving paths in $\Graph$ from $v$, $\sourceOf(\hat{p})$, and $\targetOf(\hat{p})$ to $\general$, let $v'$ be the vertex on $p_v$ and $p_{\sourceOf}$ or on $p_v$ and $p_{\targetOf}$ that is the furthest from $\general$, and let $v''$ be the vertex on $\hat{p}$ that is the nearest to $\general$. Then, the weight of $p_v$ is $r$, the one of $p_{\sourceOf}$ is $\distanceOf(\general, \sourceOf(\hat{p})) < r$, and the one of $p_{\targetOf}$ is $\distanceOf(\general, \targetOf(\hat{p})) < r$. If $v'$ lies on the subpath of $p_{\sourceOf}$ from $\sourceOf(\hat{p})$ to $v''$ (which is equal to the subpath of $\hat{p}$ with the same ends), then let $p$ be the direction-preserving path from $v$ over $v'$ over $v''$ to $\targetOf(\hat{p})$; if $v'$ lies on the subpath of $p_{\targetOf}$ from $\targetOf(\hat{p})$ to $v''$ (which is equal to the subpath of the inverse of $\hat{p}$ with the same ends), then let $p$ be the direction-preserving path from $v$ over $v'$ over $v''$ to $\sourceOf(\hat{p})$; and otherwise, let $p$ be the direction-preserving path from $v$ over $v'$ over $v''$ to $\targetOf(\hat{p})$ (we could have chosen $\sourceOf(\hat{p})$ as well). See \cref{figure:the-time-at-which-the-midpoints-of-longest-paths-are-found} for a schematic representation of the three cases.
    \begin{figure}
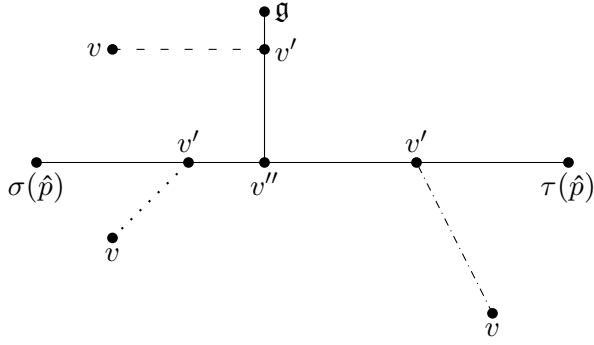

      \myfloatalign
      \figureTheTimeAtWhichTheMidpointsOfLongestPathsAreFound
      \caption{Schematic representation of the set-up of the proof of \cref{lemma:the-time-at-which-the-midpoints-of-longest-paths-are-found} with the three cases of where $v'$ may be located. The first case corresponds to the vertices $v$ and $v'$ that are incident to the dotted path, the second case to the dash-dotted path, and the third case to the dashed path.}
      \label{figure:the-time-at-which-the-midpoints-of-longest-paths-are-found}
    \end{figure}

    In the first case, because the subpaths of $p_v$ and $p_{\sourceOf}$ from $v''$ to $\general$ coincide and the weight of $p_v$ is greater than the weight of $p_{\sourceOf}$, the weight of the subpath of $p_v$ from $v$ over $v'$ to $v''$ (which is equal to the subpath of $p$ from $v$ over $v'$ to $v''$) is greater than the weight of the subpath of $p_{\sourceOf}$ from $\sourceOf(\hat{p})$ over $v'$ to $v''$ (which is equal to the subpath of $\hat{p}$ from $\sourceOf(\hat{p})$ over $v'$ to $v''$) and hence, because the subpaths of $p$ and $\hat{p}$ from $v''$ to $\targetOf(\hat{p})$ coincide, the weight of $p$ is greater than the weight of $\hat{p}$. In the second case, it follows analogously that the weight of $p$ is greater than the weight of $\hat{p}$. And in the third case, because the subpaths of $p_v$ and $p_{\sourceOf}$ from $v'$ to $\general$ coincide and the weight of $p_v$ is greater than the weight of $p_{\sourceOf}$, the weight of the subpath of $p_v$ from $v$ to $v'$ is greater than the weight of the subpath of $p_{\sourceOf}$ from $\sourceOf(\hat{p})$ over $v''$ to $v'$, hence the weight of the subpath of $p$ from $v$ over $v'$ to $v''$ is greater than the weight of the subpath of $\hat{p}$ from $\sourceOf(\hat{p})$ to $v''$, and therefore, the weight of $p$ is greater than the weight of $\hat{p}$.

    In either case, the inequality $\weightOf(p) > \weightOf(\hat{p})$ contradicts that $\hat{p}$ is a maximum-weight path. Therefore, for each maximum-weight direction-preserving path in $\Graph$, we have $\max\setOf{\distanceOf(\general, \sourceOf(\hat{p})),\allowbreak \distanceOf(\general, \targetOf(\hat{p}))} = r$. It follows from \cref{lemma:the-time-at-which-the-midpoint-of-a-path-is-found} that the midpoint signals that designate the midpoints of maximum-weight direction-preserving paths in $\Graph$ are created at $\hat{\midpoint}$ at time $r + d/2$.
  \end{proof}

  It follows that the midpoints of non-empty non-maximum-weight direction-preserving paths are found before the ones of maximum-weight direction-preserving paths as shown in

  \begin{corollary}
    Let $p$ be a non-empty direction-preserving path in $\Graph$. The midpoint signal that designates the midpoint of $p$ is created at $\midpoint_p$ before time $r + d/2$.
  \end{corollary}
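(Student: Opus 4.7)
The plan is to invoke Lemma~\ref{lemma:the-time-at-which-the-midpoint-of-a-path-is-found} and bound each of its two summands by the relevant extremal quantity. That lemma delivers
\begin{equation*}
  t_p = \max\{\distanceOf(\general, \sourceOf(p)), \distanceOf(\general, \targetOf(p))\} + \weightOf(p)/2,
\end{equation*}
and I would bound the two terms separately. The first summand is at most $r$ because $\sourceOf(p)$ and $\targetOf(p)$ are vertices of $\Graph$ and $r = \max_{v \in \Vertices} \distanceOf(\general, v)$. The second summand is at most $d/2$ because $p$ is a direction-preserving path between two vertices, so $\weightOf(p) \leq d$ by definition of the diameter. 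Combining the two gives $t_p \leq r + d/2$ for every non-empty direction-preserving path.

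To obtain the strict inequality advertised in the statement I would then note that, by the surrounding discussion, the corollary is meant to compare non-maximum-weight paths against maximum-weight ones: for any such $p$ the strict inequality $\weightOf(p) < d$ holds, so the bound on the second summand is strict and $t_p < r + d/2$ follows. For a maximum-weight path Lemma~\ref{lemma:the-time-at-which-the-midpoints-of-longest-paths-are-found} has already shown that the corresponding midpoint signal is created precisely at time $r + d/2$, which is the borderline case that ``before time $r + d/2$'' is designed to exclude.

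The argument is essentially bookkeeping on top of the two preceding lemmas; there is no serious obstacle. The one mild point to be careful about is distinguishing the two ways in which strictness can fail: either $\weightOf(p) = d$, or the max-distance summand attains $r$. The first possibility is exactly the maximum-weight case treated by Lemma~\ref{lemma:the-time-at-which-the-midpoints-of-longest-paths-are-found}, and the second poses no problem since even when $\max\{\distanceOf(\general, \sourceOf(p)), \distanceOf(\general, \targetOf(p))\} = r$, the strict inequality $\weightOf(p) < d$ still forces $t_p < r + d/2$. Hence the corollary reduces to a direct computation with no hidden difficulty.
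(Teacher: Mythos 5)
Your proposal is correct and takes essentially the same route as the paper, whose proof of this corollary is exactly the observation that \cref{lemma:the-time-at-which-the-midpoint-of-a-path-is-found} gives $t_p$ and that $\distanceOf(\general, \sourceOf(p)) \leq r$, $\distanceOf(\general, \targetOf(p)) \leq r$, and $\weightOf(p) < d$. Your explicit remark that strictness hinges on excluding maximum-weight paths (handled by \cref{lemma:the-time-at-which-the-midpoints-of-longest-paths-are-found}) is the same reading the paper makes implicitly in the sentence introducing the corollary.
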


  \begin{proof}
    This is a direct consequence of \cref{lemma:the-time-at-which-the-midpoint-of-a-path-is-found,lemma:the-time-at-which-the-midpoints-of-longest-paths-are-found}, because $\distanceOf(\general, \sourceOf(p)) \leq r$, $\distanceOf(\general, \targetOf(p)) \leq r$, and $\weightOf(p) < d$.
  \end{proof}

  \subsection{Midpoints of Maximum-Weight Paths are Recognised as Such}
  \label{subsection:midpoints-of-maximum-weight-paths-are-recognised}

  That the midpoints of maximum-weight direction-preserving paths are recognised as such is sketched in

  \begin{remark}
  \label{remark:midpoints-of-maximum-weight-paths-are-recognised}
    The first two reflected find-midpoint signals, or the first two reflected and slowed-down find-midpoint signals to collide that originated at the same vertex and were reflected at the ends of a maximum-weight direction-preserving path are marked at the time of collision and hence recognise that the midpoint of the path they collide at is the one of a maximum-weight direction-preserving path.
  \end{remark}

  \begin{proof-sketch} 
    Let $\hat{p}$ be a maximum-weight direction-preserving path in $\Graph$ (see \cref{figure:remark:midpoints-of-maximum-weight-paths-are-recognised}). Then, the ends $\hat{v}_1$ and $\hat{v}_2$ of $\hat{p}$ are leaves. And, among the first find-midpoint signals to reach the ends of $\hat{p}$ are the two that originated at the vertex $\hat{v}$ on $\hat{p}$ that is nearest to $\general$, and, because they travel alongside initiate signals, their reflections $s$ and $s'$ at the ends of $\hat{p}$ are marked. When one of them reaches a vertex on its way back it stays marked, because from all directions excluding from the direction it is headed but including the direction it is coming from have the marked reflected find-midpoint signals that originated at the vertex just or already returned, which is memorised by a count signal that is located at the vertex; the reason that such marked signals have already returned is that otherwise there would be a direction-preserving path with more weight than $\hat{p}$ that would be the concatenation the maximal subpath of $\hat{p}$ that lies in the direction the signal is headed and a path that begins with one of the edges from which no marked signal has returned yet. Note that the signals $s$ and $s'$ travel back alongside the marked reflected find-midpoint signals that originated at the vertices the signals $s$ and $s'$ passed by before they were reflected and that therefore, when they reach vertices on their way back, the count signals are just updated and hence up-to-date.
    \begin{figure}
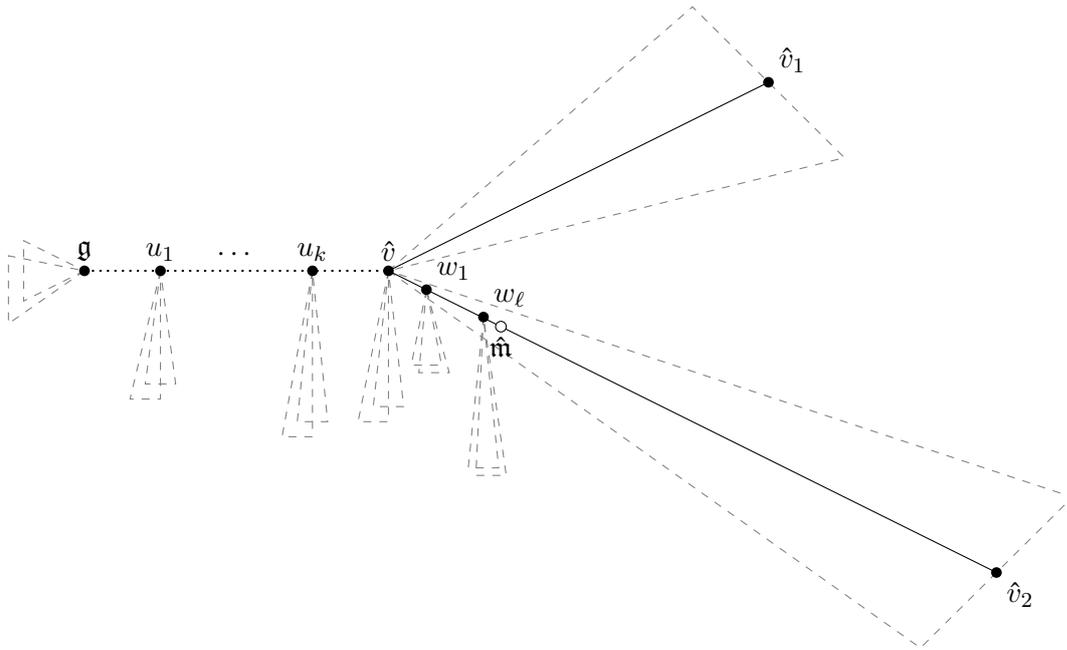

      \begin{wide}
        \figureRemarkMidpointsOfMaximumWeightPathsAreRecognised
        \caption{The path $\hat{p}$ is drawn solid, the direction-preserving path from $\mathfrak{g}$ to $\hat{v}$ is drawn dotted, the two subtrees in $\subtreesOf(\hat{v})$ that contain $\hat{v}_1$ and $\hat{v}_2$ are represented by dashed triangles, and, for each depicted vertex $v$, the possible existence of subtrees in $\subtreesOf(v)$ that correspond to non-depicted edges that are incident to $v$ is hinted at by dashed triangles.}
        \label{figure:remark:midpoints-of-maximum-weight-paths-are-recognised}
      \end{wide}
    \end{figure}

    When $s$ and $s'$ reach the vertex they originated at at the same time, then this vertex is found to be the midpoint of the path $\hat{p}$ and, because $s$ and $s'$ are marked, it is recognised as the midpoint of a maximum-weight path. When one of the signals, say $s$, reaches the vertex it originated at, namely $\hat{v}$, first, then it is slowed down and spreads throughout the graph away from the edge it comes from, in particular, one slowed-down find-midpoint signal, let us denote it by $s''$, travels towards $s'$. When $s''$ reaches a vertex on its way towards $s'$ it stays marked, because from all directions excluding from the direction it is headed have the \emph{slowest} reflected find-midpoint signals that originated at the vertex just or already returned, more precisely, from all directions excluding from the direction it is \emph{coming from and the one it is headed} have the marked reflected find-midpoint signals that originated at the vertex just or already returned and from the direction it is coming from has the slowest but unmarked reflected find-midpoint signal that originated at the vertex and the first marked slowed-down find-midpoint signal just returned. When $s''$ and $s'$ collide, the midpoint of $\hat{p}$ is found and, because $s''$ and $s'$ are marked, it is recognised as the midpoint of a maximum-weight path. 

    A detailed proof is given in the remainder of the present subsection.
  \end{proof-sketch}

  Symbolic notations for the predicates \emph{is a vertex of} and \emph{is an edge of} are introduced in

  \begin{definition}
    Let $\Tree = \ntuple{W, F}$ be a subtree of $\Graph$, let $v$ be a vertex of $\Graph$, and let $e$ be an edge of $\Graph$. We write $v \in \Tree$\graffito{$v \in \Tree$}\index[symbols]{vinTcalligraphic@$v \in \Tree$} instead of $v \in W$ and $e \in \Tree$\graffito{$e \in \Tree$}\index[symbols]{einTcalligraphic@$e \in \Tree$} instead of $e \in F$.
  \end{definition}

  The set of greatest subtrees of a vertex $v$ that correspond to its incident edges is named in

  \begin{definition}
    Let $v$ be a vertex of $\Graph$.
    \begin{aenumerate}
      \item Let $\Edges_v$ be the set of edges that are incident to $v$, and, for each edge $e \in \Edges_v$, let $\mathcal{T}_{v, e}$ be the greatest subtree of $\Graph$ that is rooted at $v$, contains the edge $e$, and does not contain any other edge of $\Edges_v$ (note that by \emph{greatest} we mean greatest with respect to the number of vertices). The set $\setOf{\mathcal{T}_{v, e} \suchThat e \in \Edges_v}$ is denoted by $\subtreesOf(v)$\graffito{set $\subtreesOf(v)$}\index[symbols]{sbtrsv@$\subtreesOf(v)$}.
      \item The set
            \begin{equation*}
              \maxSubtreesOf(v) = \argmax_{\Tree \in \subtreesOf(v)}\radiusOf{\Tree}_v
              \mathnote{set $\maxSubtreesOf(v)$ of trees}
              \index[symbols]{mxsbtrsv@$\maxSubtreesOf(v)$}
            \end{equation*}
            is the set of maximum-radius trees of $\subtreesOf(v)$. In the case that it is a singleton set, we denote its one and only element by $\maxTree_v$\graffito{tree $\maxTree_v$}\index[symbols]{Tvcalligraphichat@$\maxTree_v$}. 
      \item The set
            \begin{equation*}
              \secondMaxSubtreesOf(v) = \argmax_{\Tree \in \subtreesOf(v) \smallsetminus \maxSubtreesOf(v)}\radiusOf{\Tree}_v
              \mathnote{set $\secondMaxSubtreesOf(v)$ of trees}
              \index[symbols]{sbtrsvclosure@$\secondMaxSubtreesOf(v)$}
            \end{equation*}
            is the set of second-maximum-radius trees of $\subtreesOf(v)$. \qedhere
    \end{aenumerate}
  \end{definition}

  \begin{remark}
    For each leaf $v$ of $\Graph$, the set $\maxSubtreesOf(v)$ is a singleton and it is equal to $\subtreesOf(v)$, and the set $\secondMaxSubtreesOf(v)$ is empty.
  \end{remark}

  Things associated with a greatest subtree of a vertex are introduced in

  \begin{definition}
    Let $v$ be a vertex of $\Graph$ and let $\Tree$ be a tree of $\subtreesOf(v)$.
    \begin{aenumerate}
      \item Let $e$ be the edge of $\Tree$ that is incident to $v$. The direction that leads from $v$ onto $e$ is uniquely determined by $\Tree$ and is denoted by $\directionOf_v(\Tree)$\graffito{direction $\directionOf_v(\Tree)$}\index[symbols]{dirvTcalligraphic@$\directionOf_v(\Tree)$}. 
      \item The non-negative integer $\radiusOf{\Tree}_v = \max_{v' \in \Tree} \distanceOf(v, v')$ is called \graffito{radius $\radiusOf{\Tree}_v$ of $\Tree$ with respect to $v$}\define{radius of $\Tree$ with respect to $v$}\index[symbols]{normvsubscript@$\radiusOf{\blank}_v$}.
      \item Let $\Paths_{\Tree}$ be the set of direction-preserving paths in $\Tree$. The set
            \begin{equation*}
              \maxPaths_v(\Tree) = \setOf{p \in \Paths_{\Tree} \suchThat \sourceOf(p) = v \text{ and } \weightOf(p) = \radiusOf{\Tree}_v}
              \mathnote{set $\maxPaths_v(\Tree)$ of paths}
              \index[symbols]{mxpthsvTcalligraphic@$\maxPaths_v(\Tree)$}
            \end{equation*}
            is the set of maximum-weight direction-preserving paths from $v$ in $\Tree$. 
      \item The set
            \begin{equation*}
              \maxVertices_v(\Tree) = \setOf{\targetOf(p) \suchThat p \in \maxPaths_v(\Tree)}
              \mathnote{set $\maxVertices_v(\Tree)$ of vertices}
              \index[symbols]{mxvrtcsvTcalligraphic@$\maxVertices_v(\Tree)$}
            \end{equation*}
            is the set of vertices in $\Tree$ that are furthest away from $v$. \qedhere 
    \end{aenumerate}
  \end{definition}

  \begin{remark}
    We have $\cardinalityOf{\subtreesOf(v)} = \degreeOf(v)$ and $\directionOf_v(\subtreesOf(v)) = \directionOf(v)$.
  \end{remark}

  \begin{remark}
    Each vertex of $\maxVertices_v(\Tree)$ is a leaf.
  \end{remark}

  The unique direction-preserving path from one vertex to another is named in

  \begin{definition}
    Let $v$ and $v'$ be two vertices of $\Graph$. The direction-preserving path in $\Graph$ from $v$ to $v'$ is denoted by $p_{v, v'}$\graffito{path $p_{v, v'}$}\index[symbols]{pvvprime@$p_{v, v'}$} and the vertices on this path are denoted by $\Vertices_{v, v'}$\graffito{set $\Vertices_{v, v'}$ of vertices}\index[symbols]{Vvvprime@$\Vertices_{v, v'}$}.
  \end{definition}

  When and why count signals, initiate signals, and (maybe-marked slowed-down/reflected) find-midpoint signals are created and how they spread throughout the tree is said in

  \begin{remark} 
  \label{remark:how-initiate-and-find-midpoint-signals-spread}
    Let the signal machine $\mathcal{S}$ be in the configuration $c_{\initiateKind}$ at time $0$.
    \begin{aenumerate}
      \item At time $0$, a count signal with empty memory is created at $\general$, and initiate signals, find-midpoint signals with origin $\general$, and maybe-marked slowed-down find-midpoint signals with origin $\general$ and reflection vertex $\general$ are sent from $\general$ in all directions, where the slowed-down signal is marked if and only if $\general$ is a leaf. For each vertex $v \in \Vertices \smallsetminus \setOf{\general}$, at time $\distanceOf(\general, v)$, an initiate signal reaches $v$ from the direction towards $\general$, whereupon
            \begin{aenumerate}
              \item a count signal with empty memory is created at $v$,
              \item initiate signals are sent from $v$ in all directions away from $\general$, and
              \item find-midpoint signals with origin $v$, and maybe-marked slowed-down find-midpoint signals with origin $v$ and reflection vertex $v$ are sent from $v$ in all directions,
            \end{aenumerate}
            where the slowed-down signal is marked if and only if $v$ is a leaf.

            In short, initiate signals spread from $\general$ to all leaves, where they vanish, and they initiate the search for midpoints at all vertices. Note that to simplify the exposition of the forthcoming proofs, we also create count signals at leaves. 
      \item Let $v$ be a vertex of $\Graph$. As said above, at time $\distanceOf(\general, v)$, find-midpoint signals with origin $v$, and maybe-marked slowed-down find-midpoint signals with origin $v$ and reflection vertex $v$ are sent from $v$ in all directions, where the slowed-down signal is marked if and only if $v$ is a leaf. For each vertex $v' \in \Vertices \smallsetminus \setOf{v}$,
            \begin{aenumerate}
              \item at time $\distanceOf(\general, v) + \distanceOf(v, v')$, a find-midpoint signal with origin $v$ reaches $v'$ from the direction towards $v$, whereupon a maybe-marked reflected find-midpoint signal with origin $v$ and reflection vertex $v'$ is sent from $v'$ towards $v$ and find-midpoint signals with origin $v$ are sent from $v'$ in all directions away from $v$, where the reflected signal is marked if and only if $v$ is a leaf and an initiate signal just reached $v'$ (the latter is the case if and only if $\distanceOf(\general, v) + \distanceOf(v, v') = \distanceOf(\general, v')$), and,
              \item at time $\distanceOf(\general, v) + 3 \cdot \distanceOf(v, v')$, a maybe-marked slowed-down find-midpoint signal with origin $v$ and reflection vertex $v$ reaches $v'$ from the direction towards $v$, whereupon maybe-marked slowed-down find-midpoint signals with origin $v$ and reflection vertex $v$ are sent from $v'$ in all directions away from $v$. Note that even if the slowed-down signal that is sent from $v$ at time $\distanceOf(\general, v)$ is marked, the slowed-down signal that reaches $v'$ may be unmarked, and even if the latter signal is marked, the slowed-down signals that are sent from $v'$ may be unmarked.
            \end{aenumerate}
            In short, find-midpoint signals with origin $v$ and maybe-marked slowed-down find-midpoint signals with origin $v$ and reflection vertex $v$ spread from $v$ to all leaves, and whenever one of the former signals reaches a vertex, it is \emph{also} reflected. Note that to simplify the exposition of the forthcoming proofs, we talk as if maybe-marked slowed-down find-midpoint signals only vanish at leaves, although when a marked slowed-down find-midpoint signal collides with a marked reflected find-midpoint signal with the same origin, both signals vanish.
      \item Let $v$ and $v'$ be two vertices of $\Graph$ such that $v \neq v'$. As said above, at time $\distanceOf(\general, v) + \distanceOf(v, v')$ a maybe-marked reflected find-midpoint signal with origin $v$ and reflection vertex $v'$ is sent from $v'$ towards $v$. For each vertex $w$ on the direction-preserving path from $v'$ to $v$ except for $v'$, at time $\distanceOf(\general, v) + \distanceOf(v, v') + \distanceOf(v', w)$, the signal reaches $w$ from the direction towards $v'$, whereupon, if $w \neq v$, the signal is sent from $w$ towards $v$, and otherwise, maybe-marked slowed-down find-midpoint signals with origin $v$ and reflection vertex $v'$ are sent from $v$ in all directions away from $v'$. 

            And, for each vertex $v''$ of $\Graph$ that from the viewpoint of $v$ lies in a direction away from $v'$, which means that $v'' \neq v$ and there is a tree $\Tree \in \subtreesOf(v)$ such that $v' \notin \Tree$ and $v'' \in \Tree$, at time $\distanceOf(\general, v) + \distanceOf(v, v') + \distanceOf(v', v) + 3 \cdot \distanceOf(v, v'')$, a maybe-marked slowed-down find-midpoint signal with origin $v$ and reflection vertex $v'$ reaches $v''$, whereupon maybe-marked slowed-down find-midpoint signals with origin $v$ and reflection vertex $v'$ are sent from $v''$ in all directions away from $v$ (or, equivalently, away from $v'$). 

            In short, each maybe-marked reflected find-midpoint signal travels back to its origin and when it reaches its origin, it is slowed-down and spreads to all leaves away from its reflection vertex. Note that unmarked signals never become marked, but marked signals may become unmarked; precisely when the latter does or does not happen is answered in the present subsection.
    \end{aenumerate}
    What is said above is evident from the definition of the signal machine $\mathcal{S}$ (if it is carefully studied).
  \end{remark}

  When and why leaves sent \emph{marked} reflected/slowed-down find-midpoint signals is said in

  \begin{lemma}
  \label{lemma:when-do-leaves-sent-marked-signals}
    Let $v$ be a leaf of $\Graph$. At time $\distanceOf(\general, v)$, an initiate signal reaches $v$, for each vertex $w \in \Vertices_{\general, v} \smallsetminus \setOf{v}$, a find-midpoint signal with origin $w$ reaches $v$, and no other find-midpoint signal reaches $v$, whereupon
    \begin{aenumerate}
      \item for each vertex $w \in \Vertices_{\general, v} \smallsetminus \setOf{v}$, a marked reflected find-midpoint signal with origin $w$ and reflection vertex $v$ is sent from $v$ towards $w$, which is the only possible direction,
      \item a marked slowed-down find-midpoint signal with origin $v$ and reflection vertex $v$ is sent from $v$ towards $w$, and
      \item no other marked signal is sent from $v$.
    \end{aenumerate} 
    And before time $\distanceOf(\general, v)$ no signals reach and are sent from $v$, and after time $\distanceOf(\general, v)$ no marked signals are sent from $v$ (because after this time no initiate signal reaches $v$).
  \end{lemma}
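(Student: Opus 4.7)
The plan is to identify exactly which signals arrive at $v$ at time $\distanceOf(\general, v)$, apply $\localTransitionFunction_{v, 1}^{\tree}$ to each (no pair rule of $\localTransitionFunction_{v, 2}^{\tree}$ involves initiate signals or bare find-midpoint signals, so $\localTransitionFunction_{v, 2}^{\tree}$ is irrelevant), and then settle the ``before'' and ``after'' clauses by speed estimates together with a close look at the two rules that can emit marked signals.

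First I would invoke \cref{remark:how-initiate-and-find-midpoint-signals-spread} to enumerate the possible arrivals. A find-midpoint signal with origin $w$ reaches $v$ at time $\distanceOf(\general, w) + \distanceOf(w, v)$, which equals $\distanceOf(\general, v)$ precisely when $w \in \Vertices_{\general, v}$, because $\Graph$ is a tree and its direction-preserving paths between vertices are unique. The case $w = v$ describes a signal emitted at $v$, not received; the remaining origins $w \in \Vertices_{\general, v} \smallsetminus \setOf{v}$ account for the received find-midpoint signals listed in the statement. Analogous triangle-inequality estimates show that the earliest possible arrival time at $v$ of any reflected or slowed-down find-midpoint signal strictly exceeds $\distanceOf(\general, v)$, since both incorporate an excursion from the origin $w$ to a reflection vertex $v' \neq w$ with $\distanceOf(w, v') > 0$. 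Hence only the initiate signal and the listed find-midpoint signals arrive at time $\distanceOf(\general, v)$.

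Next I would apply the transition function. Each arriving signal is handled singly by $\localTransitionFunction_{v, 1}^{\tree}$ with $b = \varkappa(S) = \yes$ (the initiate signal is in $S$) and $\isInLeaf(\directionOf(v)) = \yes$. The initiate rule emits a marked slowed-down find-midpoint signal with origin $v$ into the unique direction $d_0 \in \directionOf(v)$; the find-midpoint rule, applied to each arriving signal with origin $w$, emits a marked reflected find-midpoint signal back towards $w$. Since $\directionOf(v) \smallsetminus \setOf{\reverse d_0}$ is empty, no further find-midpoint signals propagate onward past $v$. No other rule of $\localTransitionFunction_{v, 1}^{\tree}$ emits a signal carrying a marking boolean, so these are all the marked signals sent at time $\distanceOf(\general, v)$.

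For the ``before'' claim, every signal in the time evolution traces back to $\general$ at time $0$ and travels at speed at most $1$, so nothing can be at, or leave, $v$ before time $\distanceOf(\general, v)$. For the ``after'' claim, no subsequent initiate signal ever reaches $v$ (it is consumed at its unique arrival), so $\varkappa(S) = \no$ at every later event at $v$; hence the find-midpoint rule always emits reflected signals with boolean $\no \land \isInLeaf = \no$, i.e., unmarked. The only other source of slowed-down signals at $v$ is the return of a reflected find-midpoint signal to its origin, but that rule sends the new slowed-down signals into $\directionOf(v) \smallsetminus \setOf{\reverse d}$, which is empty at the leaf $v$ for the only possible incoming direction $d = \reverse d_0$. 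The main obstacle is just the bookkeeping of ruling out spurious simultaneous arrivals via the triangle inequality; the rest is a direct reading of the two relevant rules of the transition function.
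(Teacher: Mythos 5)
Your proposal is correct and takes essentially the same route as the paper, whose proof is a one-line appeal to \cref{remark:how-initiate-and-find-midpoint-signals-spread} and the definition of $\mathcal{S}$; you simply unfold that appeal into explicit arrival-time and transition-rule bookkeeping at the leaf. One negligible slip: slowed-down find-midpoint signals spawned directly by an initiate signal at a vertex $w$ have reflection vertex equal to $w$ (no excursion), so their arrival at $v$ strictly after $\distanceOf(\general, v)$ follows from their speed $1/3$ rather than from a detour, but the conclusion is unchanged.
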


  \begin{proof}
    This is a direct consequence of \cref{remark:how-initiate-and-find-midpoint-signals-spread} and the definition of $\mathcal{S}$.
  \end{proof}

  When does a signal that is sent from a vertex towards the general at a special time reach the next vertex is answered in

  \begin{lemma}
  \label{lemma:when-does-a-signal-that-is-sent-from-a-vertex-towards-the-general-reach-the-next-vertex}
    Let $v$ be a vertex of $\Graph$, let $\Tree$ be a tree of $\subtreesOf(v) \smallsetminus \maxSubtreesOf(v)$ such that either $v = \general$ or $\general \notin \Tree$, and let $v'$ be the one and only neighbour of $v$ in $\Tree$. Then,
    \begin{aenumerate}
      \item $\maxSubtreesOf(v')$ is a singleton set and its only element $\maxTree_{v'}$ contains $v$,
      \item $\displaystyle
              \maxVertices_v(\Tree) = \begin{dcases*}
                                        \setOf{v'}, &if $v'$ is a leaf,\\
                                        \bigDisjointUnionOf_{\Tree' \in \secondMaxSubtreesOf(v')} \maxVertices_{v'}(\Tree'), &otherwise,
                                      \end{dcases*}
            $
      \item $\radiusOf{\Tree}_v = \distanceOf(v, v') + \max_{\Tree' \in \secondMaxSubtreesOf(v')}\radiusOf{\Tree'}_{v'}$, and
      \item when a signal of speed $1$ is sent from $v'$ towards $v$ at time $\distanceOf(\general, v') + 2 \cdot \max_{\Tree' \in \secondMaxSubtreesOf(v')}\radiusOf{\Tree'}_{v'}$, it reaches $v$ at time $\distanceOf(\general, v) + 2 \cdot \radiusOf{\Tree}_v$, where in the case that $\secondMaxSubtreesOf(v')$ is empty, we define $\max_{\Tree' \in \secondMaxSubtreesOf(v')}\radiusOf{\Tree'}_{v'}$ as $0$. \qedhere
    \end{aenumerate}
  \end{lemma}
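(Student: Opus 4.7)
The plan is to exploit the hypothesis that either $v = \general$ or $\general \notin \Tree$, which forces every shortest $\general$-$v'$ path to pass through $v$, so that
\begin{equation*}
  \distanceOf(\general, v') = \distanceOf(\general, v) + \distanceOf(v, v').
\end{equation*}
This identity will turn (d) into a direct corollary of (c). So the core work is (a)--(c), which are statements purely about the tree structure of $\Graph$.

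First I would set up the notation $\Tree^{v' \to v}$ for the unique element of $\subtreesOf(v')$ that contains $v$, and observe the ``mirroring'' relation between the subtrees at $v$ and those at $v'$: the subtree $\Tree^{v' \to v}$ is obtained from $\Graph$ by removing all trees in $\subtreesOf(v) \smallsetminus \setOf{\Tree}$, while the subtrees in $\subtreesOf(v') \smallsetminus \setOf{\Tree^{v' \to v}}$ are exactly the trees one obtains by removing $\Tree^{v' \to v}$ from $\Tree$ and regarding $v'$ as the new root. Concretely this gives the two formulas
\begin{align*}
  \radiusOf{\Tree^{v' \to v}}_{v'} &= \distanceOf(v, v') + \max_{\Tree'' \in \subtreesOf(v) \smallsetminus \setOf{\Tree}} \radiusOf{\Tree''}_v,\\
  \radiusOf{\Tree}_v &= \distanceOf(v, v') + \max_{\Tree' \in \subtreesOf(v') \smallsetminus \setOf{\Tree^{v' \to v}}} \radiusOf{\Tree'}_{v'},
\end{align*}
where empty maxima are read as $0$. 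Part (a) then follows: since $\Tree \notin \maxSubtreesOf(v)$, the maximum in the first line is attained by some $\Tree^* \neq \Tree$ with $\radiusOf{\Tree^*}_v > \radiusOf{\Tree}_v$, hence
\begin{equation*}
  \radiusOf{\Tree^{v' \to v}}_{v'} = \distanceOf(v, v') + \radiusOf{\Tree^*}_v > \distanceOf(v, v') + \radiusOf{\Tree}_v > \radiusOf{\Tree}_v - \distanceOf(v, v') \geq \radiusOf{\Tree'}_{v'}
\end{equation*}
for every $\Tree' \in \subtreesOf(v') \smallsetminus \setOf{\Tree^{v' \to v}}$, so $\Tree^{v' \to v}$ is the \emph{strictly} unique maximiser and $\maxTree_{v'} = \Tree^{v' \to v} \ni v$.

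Next I would deduce (c) immediately from the second formula together with (a), because $\maxSubtreesOf(v') = \setOf{\Tree^{v' \to v}}$ turns $\subtreesOf(v') \smallsetminus \setOf{\Tree^{v' \to v}}$ into $\subtreesOf(v') \smallsetminus \maxSubtreesOf(v')$, and restricting the maximum to the argmax subset $\secondMaxSubtreesOf(v')$ does not change its value. For (b) I would argue that any direction-preserving path from $v$ in $\Tree$ must start with the edge $\setOf{v, v'}$ (as it is the only incident edge of $v$ in $\Tree$); a maximum-weight such path then continues, without U-turn, in some $\Tree' \in \subtreesOf(v') \smallsetminus \setOf{\Tree^{v' \to v}}$ and achieves weight $\distanceOf(v, v') + \radiusOf{\Tree'}_{v'}$, so it attains $\radiusOf{\Tree}_v$ precisely when $\Tree' \in \secondMaxSubtreesOf(v')$ and its endpoint lies in $\maxVertices_{v'}(\Tree')$; disjointness follows because distinct trees in $\subtreesOf(v')$ meet only at $v'$ while each $\maxVertices_{v'}(\Tree')$ consists of leaves different from $v'$. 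The singleton case in (b) is the degenerate sub-case in which $v'$ is a leaf, i.e.\ $\subtreesOf(v') = \setOf{\Tree^{v' \to v}}$, so the path stops at $v'$ and $\radiusOf{\Tree}_v = \distanceOf(v, v')$.

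Finally, (d) is a one-line computation: the signal, moving at speed $1$, reaches $v$ at time
\begin{equation*}
  \distanceOf(\general, v') + 2 \cdot \max_{\Tree' \in \secondMaxSubtreesOf(v')}\radiusOf{\Tree'}_{v'} + \distanceOf(v', v),
\end{equation*}
and substituting $\distanceOf(\general, v') = \distanceOf(\general, v) + \distanceOf(v, v')$ together with the formula from (c) rewrites this as $\distanceOf(\general, v) + 2 \radiusOf{\Tree}_v$. I expect the main obstacle to be pedantic rather than conceptual: one has to keep straight which subtrees are being compared at which vertex, and in particular see that the strict inequality $\radiusOf{\Tree}_v < \radiusOf{\maxTree^*}_v$ coming from $\Tree \notin \maxSubtreesOf(v)$ is exactly what promotes the subtree containing $v$ at $v'$ to being the \emph{unique} maximiser, which is what makes the name $\maxTree_{v'}$ meaningful and drives all four assertions.
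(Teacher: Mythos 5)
Your proposal is correct and follows essentially the same route as the paper, which simply declares items (a)–(c) evident and obtains (d) from (c) together with $\distanceOf(\general, v') = \distanceOf(\general, v) + \distanceOf(v, v')$ and the traversal time $\distanceOf(v', v)$ — exactly your final computation. You merely supply the structural details (the mirrored radius formulas at $v$ and $v'$, strictness from $\Tree \notin \maxSubtreesOf(v)$ and positive edge weights) that the paper leaves implicit.
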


  \begin{proof}
    The first item is evident, the second and third follow from it, and the fourth follows from the third with $\distanceOf(\general, v') = \distanceOf(\general, v) + \distanceOf(v, v')$ and the fact that the signal needs the time span $\distanceOf(v, v')$ to traverse the edge from $v'$ to $v$.
  \end{proof}

  The set of all non-leaf vertices whose unique maximum-radius tree contains the general is named in

  \begin{definition} 
    The set of all non-leaf vertices $v$ of $\Graph$ such that $\maxSubtreesOf(v)$ is a singleton set and its only element $\maxTree_v$ contains the vertex $\general$, is denoted by $V_{\general}$\graffito{set $V_{\general}$ of vertices}\index[symbols]{Vgsubscript@$V_{\general}$}.
  \end{definition}

  \begin{remark}
    For each vertex $v \in V_{\general}$, each tree $\Tree \in \subtreesOf(v) \smallsetminus \setOf{\maxTree_v}$, and each vertex $v'$ of $\Tree$, we have $v' \in V_{\general}$. And, for each vertex $v \in V_{\general}$, the set $\secondMaxSubtreesOf(v)$ is non-empty and, for each tree $\Tree \in \secondMaxSubtreesOf(v)$, we have $\radiusOf{\Tree}_v = \max_{\Tree \in \secondMaxSubtreesOf(v)}\radiusOf{\Tree}_v$.
  \end{remark}

  When and why non-leaf vertices whose unique maximum-radius tree contains the general sent \emph{marked} reflected/slowed-down find-midpoint signals is shown in

  \begin{lemma}
  \label{lemma:when-do-non-leaf-vertices-whose-unique-maximum-radius-tree-contains-the-general-sent-marked-signals}
    For each vertex $v \in V_{\general}$, at time $\distanceOf(\general, v) + 2 \cdot \max_{\Tree \in \secondMaxSubtreesOf(v)}\radiusOf{\Tree}_v$, 
    \begin{aenumerate}
      \item the count signal at $v$, before it is updated, has the memory $\directionOf(v) \smallsetminus \directionOf_v(\secondMaxSubtreesOf(v) \cup \setOf{\maxTree_v})$,
      \item for each tree $\Tree \in \secondMaxSubtreesOf(v)$, each leaf $\cev{v} \in \maxVertices_v(\Tree)$, and each vertex $w \in \Vertices_{\general, v}$, a marked reflected find-midpoint signal with origin $w$ and reflection vertex $\cev{v}$ reaches $v$ from direction $\directionOf_v(\Tree)$, and
      \item no other marked reflected find-midpoint signal reaches $v$, 
    \end{aenumerate}
    whereupon
    \begin{aenumerate}
      \item the count signal at $v$, after it is updated, has the memory $\directionOf(v) \smallsetminus \setOf{\directionOf_v(\maxTree_v)}$,
      \item for each tree $\Tree \in \secondMaxSubtreesOf(v)$, each leaf $\cev{v} \in \maxVertices_v(\Tree)$, and each vertex $w \in \Vertices_{\general, v} \smallsetminus \setOf{v}$, a marked reflected find-midpoint signal with origin $w$ and reflection vertex $\cev{v}$ is sent from $v$ in the direction $\directionOf_v(\maxTree_v)$ towards $w$, and no other marked reflected find-midpoint signal with origin $w$ and reflection vertex $\cev{v}$ is sent from $v$, and,
      \item for each tree $\Tree \in \secondMaxSubtreesOf(v)$, each leaf $\cev{v} \in \maxVertices_v(\Tree)$, and each direction $d \in \directionOf(v) \smallsetminus \setOf{\directionOf_v(\Tree)}$, a marked slowed-down find-midpoint signal with origin $v$ and reflection vertex $\cev{v}$ is sent from $v$ in direction $d$ (note that $\directionOf_v(\maxTree_v) \in \directionOf(v) \smallsetminus \setOf{\directionOf_v(\Tree)}$), and, no other marked slowed-down find-midpoint signal with origin $v$ and reflection vertex $\cev{v}$ is sent from $v$. 
    \end{aenumerate}
    And before time $\distanceOf(\general, v) + 2 \cdot \max_{\Tree \in \secondMaxSubtreesOf(v)}\radiusOf{\Tree}_v$, marked signals may reach $v$ but no marked signal is sent from $v$, and after that time, no marked signals as above are sent from $v$. 
  \end{lemma}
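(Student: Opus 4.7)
\begin{proof-sketch}
The plan is to prove the lemma by strong induction on the number of edges between $v$ and the leaves in $\bigcup \secondMaxSubtreesOf(v)$, with the main delicacy being to ensure that marked reflected find-midpoint signals stay marked as they traverse intermediate vertices between a leaf and $v$. First, I would establish the arrival times using \cref{remark:how-initiate-and-find-midpoint-signals-spread} and \cref{lemma:when-does-a-signal-that-is-sent-from-a-vertex-towards-the-general-reach-the-next-vertex}: for each $\Tree \in \subtreesOf(v) \smallsetminus \setOf{\maxTree_v}$, each leaf $\cev{v} \in \maxVertices_v(\Tree)$, and each $w \in \Vertices_{\general, v}$, the find-midpoint signal originating at $w$ travels through $v$ to $\cev{v}$, is reflected there at time $\distanceOf(\general, v) + \radiusOf{\Tree}_v$ (marked, by \cref{lemma:when-do-leaves-sent-marked-signals}), and, provided it stays marked throughout, returns to $v$ at time $\distanceOf(\general, v) + 2\radiusOf{\Tree}_v$. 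Consequently, signals from $\Tree \in \secondMaxSubtreesOf(v)$ arrive exactly at the target time $T_v := \distanceOf(\general, v) + 2\max_{\Tree' \in \secondMaxSubtreesOf(v)}\radiusOf{\Tree'}_v$, whereas signals from smaller subtrees arrive strictly before $T_v$.

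Secondly, I would verify that marked reflected signals remain marked along their way back. By the local transition rule for $\reflectedFindMidpointKind$ in $\localTransitionFunction_{v, 1}^{\tree}$, this requires $\isPenultimate(\directionOf(u), \cardinalityOf{\kappa(S)}) = \yes$ at each intermediate vertex $u$, i.e., that $u$'s count signal together with the batch of marked signals arriving simultaneously covers all but the direction leading towards $v$. The crucial observation is that any intermediate $u \in \Tree \smallsetminus \setOf{v}$ admits a unique maximum-radius subtree in $\subtreesOf(u)$---namely the subtree containing $v$---because the back-subtree has radius at least $\distanceOf(v, u) + \radiusOf{\maxTree_v}_v$, whereas every other subtree at $u$ has radius at most $\radiusOf{\Tree}_v - \distanceOf(v, u)$, and $\radiusOf{\maxTree_v}_v > \radiusOf{\Tree}_v$ since $v \in V_{\general}$. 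Thus $u$ satisfies an analogue of the $V_{\general}$-membership condition with $v$ playing the role of $\general$, so by the inductive hypothesis applied to $u$, its count signal is in the required state when the marked signal arrives.

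Thirdly, I would combine these ingredients. Strictly before $T_v$, marked reflected signals arrive at $v$ only from directions $\directionOf_v(\Tree)$ with $\Tree \in \subtreesOf(v) \smallsetminus (\secondMaxSubtreesOf(v) \cup \setOf{\maxTree_v})$, so the count signal at $v$ has accumulated exactly these directions; direction $\directionOf_v(\maxTree_v)$ is not recorded because any marked reflected signal from that direction would have to return from a leaf of $\maxTree_v$, whose round-trip exceeds $T_v$ given $\radiusOf{\maxTree_v}_v > \max_{\Tree \in \secondMaxSubtreesOf(v)}\radiusOf{\Tree}_v$. At time $T_v$, marked signals from all trees in $\secondMaxSubtreesOf(v)$ arrive simultaneously, making the post-update memory exactly $\directionOf(v) \smallsetminus \setOf{\directionOf_v(\maxTree_v)}$; the condition $\isPenultimate = \yes$ is then satisfied, so the arriving reflected signals stay marked and are forwarded in direction $\directionOf_v(\maxTree_v)$, and the corresponding marked slowed-down signals are spawned in all other directions, exactly as claimed.

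The hardest part will be formalising the auxiliary inductive claim used in step two. The lemma as stated concerns $v \in V_{\general}$, but intermediate vertices $u$ require an analogous property relative to a reference vertex that is not $\general$. I would therefore prove a strengthened auxiliary lemma parametrised by an arbitrary reference vertex in place of $\general$, establish it by induction on the number of edges of the subtree being traversed (which decreases strictly as one descends from $v$ towards the leaves of $\bigcup \secondMaxSubtreesOf(v)$), and obtain the stated lemma as the specialisation to the reference vertex $\general$.
\end{proof-sketch}
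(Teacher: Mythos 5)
Your overall structure is the same as the paper's: induction over the non-maximum subtrees of $v$, with \cref{lemma:when-do-leaves-sent-marked-signals} supplying the base case at leaves, \cref{lemma:when-does-a-signal-that-is-sent-from-a-vertex-towards-the-general-reach-the-next-vertex} supplying the timing transfer from a neighbour $v'$ back to $v$, and the penultimate/count-signal bookkeeping at $v$ (handled on the timeline in non-decreasing order of subtree radius) giving exactly the claimed pre- and post-update memories and the marking of the signals sent at time $\distanceOf(\general, v) + 2 \cdot \max_{\Tree \in \secondMaxSubtreesOf(v)}\radiusOf{\Tree}_v$. The arrival-time computations and the identification of the condition $\isPenultimate = \yes$ as what must be verified at intermediate vertices are correct.

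The gap is in your final paragraph. You claim that intermediate vertices $u$ in a non-maximum subtree $\Tree$ of $v$ only satisfy ``an analogue of the $V_{\general}$-membership condition with $v$ playing the role of $\general$'' and therefore require a strengthened auxiliary lemma parametrised by an arbitrary reference vertex. This is not needed, and taken literally it would not work. You correctly show that the unique maximum-radius subtree of such a $u$ is the one containing $v$; but since $v \in V_{\general}$ and $u$ lies in a subtree of $v$ not containing $\general$ (or $v = \general$), the vertex $\general$ lies behind $v$ as seen from $u$, so that same subtree of $u$ also contains $\general$. Hence every non-leaf intermediate vertex is itself an element of $V_{\general}$ (this is in effect item (a) of \cref{lemma:when-does-a-signal-that-is-sent-from-a-vertex-towards-the-general-reach-the-next-vertex}), the induction closes inside $V_{\general}$, and the lemma as stated applies directly to the neighbour $v'$ of $v$ in each tree of $\subtreesOf(v) \smallsetminus \setOf{\maxTree_v}$ --- this is exactly how the paper runs the induction, on $n_v = \max_{(W,F) \in \subtreesOf(v) \smallsetminus \setOf{\maxTree_v}} \cardinalityOf{F}$. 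By contrast, a version of the statement with $v$ substituted for $\general$ would be false as a statement about the actual time evolution: the times at which initiate signals arrive, the origins $\Vertices_{\general, v}$ of the find-midpoint signals, and which reflections get marked at leaves are all tied to the true general, not to an arbitrary reference vertex, so the ``strengthened'' lemma would either collapse back to the original one or assert timings that the machine does not realise. Replace that step by the observation that $u \in V_{\general}$ (or $u$ is a leaf) and your argument becomes the paper's proof. One further small point: your induction measure should be taken over all non-maximum subtrees of $v$, not only those in $\secondMaxSubtreesOf(v)$, since the pre-update memory claim requires applying the induction hypothesis to neighbours in the strictly smaller subtrees as well.
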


  \begin{proof}
    We prove this by induction on $n_v = \max_{(W, F) \in \subtreesOf(v) \smallsetminus \setOf{\maxTree_v}}\cardinalityOf{F}$, for $v \in V_{\general}$. For brevity though, we only treat the existence of signals and not their absence. 

    \proofPart{Base Case (see \cref{figure:when-does-count-signal-memorise-the-penultimate-direction:base-case})}
      Let $v \in V_{\general}$ such that $n_v = 1$. And, let $\Tree \in \subtreesOf(v) \smallsetminus \setOf{\maxTree_v}$. Then, $\Tree$ consists of one edge $e$ whose one end is $v$, whose other end is a leaf $v'$, and whose weight is $\radiusOf{\Tree}_v$. According to \cref{lemma:when-do-leaves-sent-marked-signals}, at time $\distanceOf(\general, v') = \distanceOf(\general, v) + \radiusOf{\Tree}_v$, for each vertex $w \in \Vertices_{\general, v'} \smallsetminus \setOf{v'} = \Vertices_{\general, v}$, a marked reflected find-midpoint signal with origin $w$ and reflection vertex $v'$ is sent from $v'$ towards $v$, in particular, one with origin $v$. These signals reach $v$ at time $\distanceOf(\general, v) + 2 \cdot \radiusOf{\Tree}_v$, whereupon
      \begin{aenumerate}
        \item the count signal at $v$ memorises the direction $\directionOf_v(\Tree)$ (because a marked reflected find-midpoint signal with origin $v$ reached $v$),
        \item for each vertex $w \in \Vertices_{\general, v} \smallsetminus \setOf{v}$, a maybe-marked reflected find-midpoint signal with origin $w$ and reflection vertex $v'$ is sent from $v$ in the direction $\directionOf_v(\maxTree_v)$ towards $w$, and
        \item for each direction $d \in \directionOf(v) \smallsetminus \setOf{\directionOf_v(\Tree)}$, a maybe-marked slowed-down find-midpoint signal with origin $v$ and reflection vertex $v'$ is sent from $v$ in direction $d$.
      \end{aenumerate}
      On the timeline, for the trees of $\subtreesOf(v) \smallsetminus \setOf{\maxTree_v}$ in non-decreasing order with respect to the radius and at the same time for trees with the same radius, the signals reach $v$ and are sent from $v$. For those trees whose radius is less than the second greatest radius among the trees of $\subtreesOf(v)$, which is $\max_{\Tree \in \secondMaxSubtreesOf(v)}\radiusOf{\Tree}_v$, the aforementioned maybe-marked signals that are sent from $v$ are unmarked (because the memory of the count signal, after it is updated, does neither contain the directions of $\directionOf_v(\secondMaxSubtreesOf(v))$ nor the direction $\directionOf_v(\maxTree_v)$). And, for the trees of $\secondMaxSubtreesOf(v)$, the aforementioned maybe-marked signals that are sent from $v$ are marked (because the count signal, after it is updated, has the memory $\directionOf(v) \smallsetminus \setOf{\directionOf_v(\maxTree_v)}$). In conclusion, at time $\distanceOf(\general, v) + 2 \cdot \max_{\Tree \in \secondMaxSubtreesOf(v)}\radiusOf{\Tree}_v$, what is to be proven holds.
      \begin{figure}
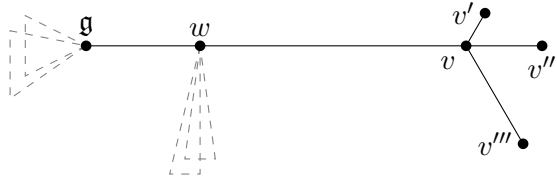

        \myfloatalign
        \figureWhenDoesCountSignalMemoriseThePenultimateDirectionBaseCase
        \caption{Schematic representation of the set-up of the base case of the proof of \cref{lemma:when-do-non-leaf-vertices-whose-unique-maximum-radius-tree-contains-the-general-sent-marked-signals}. The possible existence of subtrees of vertices is hinted at by dashed triangles.}
        \label{figure:when-does-count-signal-memorise-the-penultimate-direction:base-case}
      \end{figure}

    \proofPart{Inductive Step (see \cref{figure:when-does-count-signal-memorise-the-penultimate-direction:inductive-step})}
      Let $v \in V_{\general}$ such that $n_v \geq 2$ and such that what is to be proven holds for each vertex $v' \in V_{\general}$ with $n_{v'} < n_v$, which is the \emph{inductive hypothesis}. And, let $\Tree \in \subtreesOf(v) \smallsetminus \setOf{\maxTree_v}$, let $e$ be the edge of $\Tree$ whose one end is $v$, and let $v'$ be the other end of $e$. The vertex $v'$ is either a leaf or an element of $V_{\general}$ with $n_{v'} < n_v$.
      \begin{figure}
        \myfloatalign
        \figureWhenDoesCountSignalMemoriseThePenultimateDirectionInductiveStep
        \caption{Schematic representation of the set-up of the inductive step of the proof of \cref{lemma:when-do-non-leaf-vertices-whose-unique-maximum-radius-tree-contains-the-general-sent-marked-signals}. The possible existence of subtrees of vertices is hinted at by dashed triangles.}
        \label{figure:when-does-count-signal-memorise-the-penultimate-direction:inductive-step}
      \end{figure}

      In the first case, according to \cref{lemma:when-do-leaves-sent-marked-signals}, at time $\distanceOf(\general, v') = \distanceOf(\general, v) + \distanceOf(v, v')$, for each vertex $w \in \Vertices_{\general, v'} \smallsetminus \setOf{v'} = \Vertices_{\general, v}$, a marked reflected find-midpoint signal with origin $w$ and reflection vertex $v'$ is sent from $v'$ towards $v$; note that $v'$ is the one and only element of $\maxVertices_v(\Tree)$, the set $\secondMaxSubtreesOf(v')$ is empty, and we define $\max_{\Tree' \in \secondMaxSubtreesOf(v')}\radiusOf{\Tree'}_{v'}$ as $0$ (see \cref{lemma:when-does-a-signal-that-is-sent-from-a-vertex-towards-the-general-reach-the-next-vertex}). In the second case, according to the inductive hypothesis, at time $\distanceOf(\general, v') + 2 \cdot \max_{\Tree' \in \secondMaxSubtreesOf(v')}\radiusOf{\Tree'}_{v'}$, for each tree $\Tree' \in \secondMaxSubtreesOf(v')$, each leaf $\cev{v}' \in \maxVertices_{v'}(\Tree')$, and each vertex $w \in \Vertices_{\general, v'} \smallsetminus \setOf{v'} = \Vertices_{\general, v}$, a marked reflected find-midpoint signal with origin $w$ and reflection vertex $\cev{v}'$ is sent from $v'$ towards $v$; note that the set of all vertices $\cev{v}'$ is equal to $\maxVertices_v(\Tree)$ (see \cref{lemma:when-does-a-signal-that-is-sent-from-a-vertex-towards-the-general-reach-the-next-vertex}). 

      In both cases, the marked signals reach $v$ at time $\distanceOf(\general, v) + \distanceOf(v, v') + 2 \cdot \max_{\Tree' \in \secondMaxSubtreesOf(v')}\radiusOf{\Tree'}_{v'} + \distanceOf(v', v)$, which is equal to $\distanceOf(\general, v) + 2 \cdot \radiusOf{\Tree}_v$, whereupon 
      \begin{aenumerate}
        \item the count signal at $v$ memorises the direction $\directionOf_v(\Tree)$ (because at least one marked reflected find-midpoint signal with origin $v$ reached $v$),
        \item for each leaf $\cev{v} \in \maxVertices_v(\Tree)$ and each vertex $w \in \Vertices_{\general, v} \smallsetminus \setOf{v}$, a maybe-marked reflected find-midpoint signal with origin $w$ and reflection vertex $\cev{v}$ is sent from $v$ in the direction $\directionOf_v(\maxTree_v)$ towards $w$, and,
      \item for each leaf $\cev{v} \in \maxVertices_v(\Tree)$ and each direction $d \in \directionOf(v) \smallsetminus \setOf{\directionOf_v(\Tree)}$, a maybe-marked slowed-down find-midpoint signal with origin $v$ and reflection vertex $\cev{v}$ is sent from $v$ in direction $d$.
      \end{aenumerate}
      It follows verbatim as in the base case, that what is to be proven holds.
  \end{proof}

  %

  The set of all vertices whose unique maximum-radius tree does not contain the general is named in

  \begin{definition}
    The set of all vertices $v$ of $\Graph$ such that $\maxSubtreesOf(v)$ is a singleton set and its only element $\maxTree_v$ does \emph{not} contain the vertex $\general$, is denoted by $U_{\general}$\graffito{set $U_{\general}$ of vertices}\index[symbols]{Ugsubscript@$U_{\general}$}.
  \end{definition}

  \begin{remark}
    We have $\general \notin U_{\general}$. And, for each vertex $v \in U_{\general}$, we have $\Vertices_{\general, v} \smallsetminus \setOf{\general} \in U_{\general}$, the vertex $v$ is not a leaf and the set $\secondMaxSubtreesOf(v)$ is non-empty. 
  \end{remark}

  The maximal subtree of a non-general vertex that contains the general is named in

  \begin{definition}
    Let $v$ be a vertex of $\Graph$ such that $v \neq \general$. The tree of $\subtreesOf(v)$ that does contain the vertex $\general$ is denoted by $\Tree_v^{\general}$\graffito{tree $\Tree_v^{\general}$}\index[symbols]{Tvgcalligraphic@$\Tree_v^{\general}$}. And to avoid case differentiations, we define $\Tree_{\general}^{\general}$\graffito{$\Tree_{\general}^{\general} = 0$}\index[symbols]{Tggcalligraphic@$\Tree_{\general}^{\general}$} as the number $0$.
  \end{definition}

  The vertices of the maximal path from a vertex to the general whose vertices excluding its source have the subtree that contains the general as second-maximum-radius subtree is named in

  \begin{definition}
    Let $v$ be a vertex of $U_{\general}$, and let $p$ be the maximum-weight subpath of $p_{\general, v}$ such that $\targetOf(p) = v$ and, for each vertex $w$ on $p$ with $w \neq \sourceOf(p)$, we have $\Tree_w^{\general} \in \secondMaxSubtreesOf(w)$. The set of the vertices on $p$ is denoted by $W_{\general, v}$\graffito{set $W_{\general, v}$ of vertices}\index[symbols]{Wgvsubscript@$W_{\general, v}$}. See \cref{figure:W-general-v}.
    \begin{figure}
      \myfloatalign
      \figureWGeneralV
      \caption{Schematic representation of $\Graph$, where $v \in U_{\general}$, $W_{\general, v} = \setOf{w'', w', w, v}$, $W_{\general, w} = \setOf{w'', w', w}$, $W_{\general, w'} = \setOf{w'', w'}$, $W_{\general, w''} = \setOf{w''}$, and subtrees of vertices and their radii are depicted as dashed triangles of various sizes.}
      \label{figure:W-general-v}
    \end{figure}
  \end{definition}

  \begin{remark}
    We have $v \in W_{\general, v} \smallsetminus \setOf{\general} \subseteq U_{\general}$. And, for each $w \in W_{\general, v}$, we have $W_{\general, w} \subseteq W_{\general, v}$. And, if $\Tree_v^{\general} \in \secondMaxSubtreesOf(v)$, then $W_{\general, v} = W_{\general, v'} \cup \setOf{v}$, where $v'$ is the neighbour of $v$ on $p_{\general, v}$.
  \end{remark}

  When and why vertices whose unique maximum-radius tree does not contain the general sent \emph{marked} reflected/slowed-down find-midpoint signals is shown in

  \begin{lemma}
  \label{lemma:when-do-vertices-whose-unique-maximum-radius-tree-does-not-contain-the-general-sent-marked-signals}
    For each vertex $v \in U_{\general}$, at time $\distanceOf(\general, v) + 2 \cdot \max_{\Tree \in \secondMaxSubtreesOf(v)}\radiusOf{\Tree}_v$,
    \begin{aenumerate}
      \item the count signal at $v$, before it is updated, has the memory $\directionOf(v) \smallsetminus \directionOf_v(\secondMaxSubtreesOf(v) \cup \setOf{\maxTree_v})$,
      \item for each tree $\Tree \in \secondMaxSubtreesOf(v)$ and each leaf $\cev{v} \in \maxVertices_v(\Tree)$, a maybe-marked reflected find-midpoint signal with origin $v$ and reflection vertex $\cev{v}$ reaches $v$ from direction $\directionOf_v(\Tree)$, where the reflected signal is marked if and only if $\Tree \neq \Tree_v^{\general}$, and,
      \item for each vertex $w \in W_{\general, v} \smallsetminus \setOf{v}$, each tree $\Tree \in \secondMaxSubtreesOf(w) \smallsetminus \setOf{\Tree_w^{\general}}$, and each leaf $\cev{w} \in \maxVertices_w(\Tree)$, a marked slowed-down find-midpoint signal with origin $w$ and reflection vertex $\cev{w}$ reaches $v$ from the direction towards $w$, which is the direction $\directionOf_v(\Tree_v^{\general})$,
    \end{aenumerate}
    whereupon
    \begin{aenumerate}
      \item the count signal at $v$, after it is updated, has the memory $\directionOf(v) \smallsetminus \setOf{\directionOf_v(\maxTree_v)}$,
      \item for each tree $\Tree \in \secondMaxSubtreesOf(v) \smallsetminus \setOf{\Tree_v^{\general}}$, each leaf $\cev{v} \in \maxVertices_v(\Tree)$, and each direction $d \in \directionOf(v) \smallsetminus \setOf{\directionOf_v(\Tree)}$, a marked slowed-down find-midpoint signal with origin $v$ and reflection vertex $\cev{v}$ is sent from $v$ in direction $d$ (note that $\setOf{\directionOf_v(\maxTree_v), \directionOf_v(\Tree_v^{\general})} \subseteq \directionOf(v) \smallsetminus \setOf{\directionOf_v(\Tree)}$), and,
      \item for each vertex $w \in W_{\general, v} \smallsetminus \setOf{v}$, each tree $\Tree \in \secondMaxSubtreesOf(w) \smallsetminus \setOf{\Tree_w^{\general}}$, each leaf $\cev{w} \in \maxVertices_w(\Tree)$, and each direction $d \in \directionOf(v) \smallsetminus \setOf{\directionOf_v(\Tree_v^{\general})}$, a marked slowed-down find-midpoint signal with origin $w$ and reflection vertex $\cev{w}$ is sent from $v$ in direction $d$ (note that $\directionOf_v(\maxTree_v) \in \directionOf(v) \smallsetminus \setOf{\directionOf_v(\Tree_v^{\general})}$ and that, if $w = \general$, then $\secondMaxSubtreesOf(w) \smallsetminus \setOf{\Tree_w^{\general}} = \secondMaxSubtreesOf(w)$). \qedhere
    \end{aenumerate} 
  \end{lemma}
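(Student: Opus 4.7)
\begin{proof-sketch}
The plan is to mirror the inductive structure of \cref{lemma:when-do-non-leaf-vertices-whose-unique-maximum-radius-tree-contains-the-general-sent-marked-signals} while carefully tracking the signals that flow through the direction $\directionOf_v(\Tree_v^{\general})$ toward $\general$. Proceed by induction on $n_v = \max_{(W, F) \in \subtreesOf(v) \smallsetminus \setOf{\Tree_v^{\general}}}\cardinalityOf{F}$; in the base case each tree of $\subtreesOf(v) \smallsetminus \setOf{\Tree_v^{\general}}$ is a single edge joining $v$ to a leaf, and \cref{lemma:when-do-leaves-sent-marked-signals} directly describes the marked reflected find-midpoint signals returning to $v$. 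In the inductive step, for each neighbour $v'$ of $v$ in a tree $\Tree \in \subtreesOf(v) \smallsetminus \setOf{\Tree_v^{\general}}$, invoke \cref{lemma:when-do-leaves-sent-marked-signals} if $v'$ is a leaf, \cref{lemma:when-do-non-leaf-vertices-whose-unique-maximum-radius-tree-contains-the-general-sent-marked-signals} if $v' \in V_{\general}$, or the inductive hypothesis if $v' \in U_{\general}$ (observing that $n_{v'} < n_v$), and route the resulting marked reflected find-midpoint signals with origin $v$ back to $v$ via \cref{lemma:when-does-a-signal-that-is-sent-from-a-vertex-towards-the-general-reach-the-next-vertex}; these arrive at $v$ from direction $\directionOf_v(\Tree)$ at time $\distanceOf(\general, v) + 2\radiusOf{\Tree}_v$, coinciding with the critical time exactly when $\Tree \in \secondMaxSubtreesOf(v)$.

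The signals flowing in from the direction $\directionOf_v(\Tree_v^{\general})$ demand separate treatment. For the reflected find-midpoint signal with origin $v$ reflected at a leaf $\cev{v} \in \maxVertices_v(\Tree_v^{\general})$: the outgoing find-midpoint signal sent from $v$ at time $\distanceOf(\general, v)$ reaches $\cev{v}$ at time $\distanceOf(\general, v) + \radiusOf{\Tree_v^{\general}}_v$, whereas the initiate signal already arrived at $\cev{v}$ earlier, at time $\distanceOf(\general, \cev{v}) = \radiusOf{\Tree_v^{\general}}_v - \distanceOf(\general, v)$, so the reflection is unmarked; it returns to $v$ at time $\distanceOf(\general, v) + 2\radiusOf{\Tree_v^{\general}}_v$, matching the critical time whenever $\Tree_v^{\general} \in \secondMaxSubtreesOf(v)$. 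For the marked slowed-down signals claimed by item~(iii), the vertices $w \in W_{\general, v} \smallsetminus \setOf{v}$ are treated case by case: for intermediate $w$ the condition $\Tree_w^{\general} \in \secondMaxSubtreesOf(w)$ places $w$ in $U_{\general}$ with $n_w < n_v$, so the inductive hypothesis produces the required signals with origin $w$ sent in every direction except toward their reflection tree --- including the direction toward $v$; for $w = \sourceOf(p) \in V_{\general}$ one uses \cref{lemma:when-do-non-leaf-vertices-whose-unique-maximum-radius-tree-contains-the-general-sent-marked-signals}, and for $w = \general$ the initial emission rule at the general provides them.

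The main obstacle will be verifying that every such slowed-down signal arrives at $v$ at the common critical time $\distanceOf(\general, v) + 2\max_{\Tree \in \secondMaxSubtreesOf(v)}\radiusOf{\Tree}_v$. Sent from $w$ at time $\distanceOf(\general, w) + 2\max_{\Tree' \in \secondMaxSubtreesOf(w)}\radiusOf{\Tree'}_w$ and travelling at speed $1/3$ over the distance $\distanceOf(w, v)$, the signal arrives at $v$ at time $\distanceOf(\general, w) + 2\max_{\Tree' \in \secondMaxSubtreesOf(w)}\radiusOf{\Tree'}_w + 3\distanceOf(w, v)$; together with $\distanceOf(\general, v) = \distanceOf(\general, w) + \distanceOf(w, v)$, matching the critical time reduces to proving
\begin{equation*}
\max_{\Tree' \in \secondMaxSubtreesOf(w)}\radiusOf{\Tree'}_w + \distanceOf(w, v) = \max_{\Tree' \in \secondMaxSubtreesOf(v)}\radiusOf{\Tree'}_v
\end{equation*}
for every $w \in W_{\general, v} \smallsetminus \setOf{v}$. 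The plan is to establish this identity by first showing $v \in \maxTree_w$ for every such $w$ --- which follows by contradiction using the unique-maximality of $\maxTree_v$ at $v \in U_{\general}$ together with the fact that otherwise the subtree of $w$ toward $v$ would witness a strictly larger radius than $\maxTree_w$ --- and then combining the $W_{\general, v}$ condition $\Tree_w^{\general} \in \secondMaxSubtreesOf(w)$ with $\Tree_v^{\general} \in \secondMaxSubtreesOf(v)$ to rewrite both sides as $\radiusOf{\Tree_v^{\general}}_v$. Once the arrival set and the count-signal state at the critical time are characterised in this way, the count-memory update and the marked/unmarked classification of the outgoing signals follow by a direct application of $\localTransitionFunction_{v, 1}^{\tree}$ and $\localTransitionFunction_{v, 2}^{\tree}$, exactly as in the proof of the previous lemma.
\end{proof-sketch}
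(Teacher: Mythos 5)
Your induction is set up on the wrong parameter, and it fails exactly where induction is genuinely needed. You induct on $n_v = \max_{(W, F) \in \subtreesOf(v) \smallsetminus \setOf{\Tree_v^{\general}}} \cardinalityOf{F}$, the size of the largest subtree of $v$ pointing away from the general, and for item (iii) you invoke the inductive hypothesis at the vertices $w \in W_{\general, v} \smallsetminus \setOf{v}$, claiming $n_w < n_v$. The opposite holds: each such $w$ lies on $p_{\general, v}$, so the subtree of $w$ that contains $v$ belongs to $\subtreesOf(w) \smallsetminus \setOf{\Tree_w^{\general}}$ and contains every tree of $\subtreesOf(v) \smallsetminus \setOf{\Tree_v^{\general}}$ plus at least one more edge, whence $n_w \geq n_v + 1$. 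So the hypothesis cannot be applied at $w$, and the signals arriving from direction $\directionOf_v(\Tree_v^{\general})$ --- in particular the fact that the slowed-down find-midpoint signals are still \emph{marked} after passing every intermediate vertex between $w$ and $v$ --- are left unjustified; your speed-$1/3$ arrival-time computation presupposes exactly this. (The subtrees away from the general need no recursion at all: for $\Tree \in \subtreesOf(v) \smallsetminus \setOf{\Tree_v^{\general}, \maxTree_v}$ the neighbour of $v$ in $\Tree$ is a leaf or lies in $V_{\general}$, so \cref{lemma:when-do-leaves-sent-marked-signals} and \cref{lemma:when-do-non-leaf-vertices-whose-unique-maximum-radius-tree-contains-the-general-sent-marked-signals} already cover them.) The paper instead inducts on $\cardinalityOf{\Vertices_{\general, v}}$, i.e.\ on the distance of $v$ from the general, and applies the hypothesis to the neighbour $v'$ of $v$ in $\Tree_v^{\general}$, which lies in $U_{\general}$ and is strictly closer to $\general$; the hypothesis at $v'$ already asserts that $v'$ re-sends, at its critical time, marked slowed-down signals with all origins $w \in W_{\general, v'}$ towards $v$, so only the single edge from $v'$ to $v$ remains to be analysed, using the radius identity $\radiusOf{\Tree_v^{\general}}_v = \distanceOf(v, v') + \max_{\Tree' \in \secondMaxSubtreesOf(v')} \radiusOf{\Tree'}_{v'}$ (your observation that $v \in \maxTree_{v'}$ is the right ingredient here). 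Reorienting your induction in this way is the fix.

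Two smaller points. Your argument that the reflected find-midpoint signal returning through $\Tree_v^{\general}$ is unmarked rests on the equality $\distanceOf(\general, \cev{v}) = \radiusOf{\Tree_v^{\general}}_v - \distanceOf(\general, v)$, which is false in general (it would force the path from $v$ to $\cev{v}$ to pass through $\general$); what is true and suffices is the strict inequality $\distanceOf(\general, \cev{v}) < \distanceOf(\general, v) + \distanceOf(v, \cev{v})$, valid because $v \neq \general$ and $\cev{v} \in \Tree_v^{\general}$, so the reflection at $\cev{v}$ happens strictly after the only time an initiate signal visits $\cev{v}$. Also, your case split for the source of the $W_{\general, v}$-path is off: every vertex of $\Vertices_{\general, v} \smallsetminus \setOf{\general}$ lies in $U_{\general}$, so a non-general source is never in $V_{\general}$; and for $w = \general$ the marked slowed-down signals with origin $\general$ are not produced by the initial emission at time $0$ (which marks slowed-down signals only if $\general$ is a leaf) but by \cref{lemma:when-do-non-leaf-vertices-whose-unique-maximum-radius-tree-contains-the-general-sent-marked-signals} applied to $\general \in V_{\general}$ at time $2 \cdot \max_{\Tree \in \secondMaxSubtreesOf(\general)} \radiusOf{\Tree}_{\general}$, which is how the paper's base case handles it.
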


  \begin{proof}
    We prove this by induction on $n_v = \cardinalityOf{\Vertices_{\general, v}}$, for $v \in U_{\general}$.

    \proofPart{Base Case (compare \cref{figure:marked-signals-inner-vertices:inductive-step})} 
      Let $v \in U_{\general}$ such that $n_v = 1$.

      First, for each tree $\Tree \in \subtreesOf(v) \smallsetminus \setOf{\Tree_v^{\general}, \maxTree_v}$, according to \cref{lemma:when-do-leaves-sent-marked-signals}, if the neighbour of $v$ in $\Tree$ is a leaf, or \cref{lemma:when-do-non-leaf-vertices-whose-unique-maximum-radius-tree-contains-the-general-sent-marked-signals}, otherwise, and \cref{lemma:when-does-a-signal-that-is-sent-from-a-vertex-towards-the-general-reach-the-next-vertex}, at time $\distanceOf(\general, v) + 2 \cdot \radiusOf{\Tree}_v$, for each leaf $\cev{v} \in \maxVertices_v(\Tree)$, a marked reflected find-midpoint signal with origin $v$ and reflection vertex $\cev{v}$ reaches $v$ from direction $\directionOf_v(\Tree)$, whereupon
      \begin{aenumerate}
        \item the count signal at $v$ memorises the direction $\directionOf_v(\Tree)$,
        \item for each leaf $\cev{v} \in \maxVertices_v(\Tree)$ and each direction $d \in \directionOf(v) \smallsetminus \setOf{\directionOf_v(\Tree)}$, a maybe-marked slowed-down find-midpoint signal with origin $v$ and reflection vertex $\cev{v}$ is sent from $v$ in direction $d$.
      \end{aenumerate}

      Secondly, according to \cref{remark:how-initiate-and-find-midpoint-signals-spread}, at time $\distanceOf(\general, v) + 2 \cdot \radiusOf{\Tree_v^{\general}}_v$, for each leaf $\cev{v} \in \maxVertices_v(\Tree_v^{\general})$, a maybe-marked reflected find-midpoint signal with origin $v$ and reflection vertex $\cev{v}$ reaches $v$ from direction $\directionOf_v(\Tree_v^{\general})$. This signal is actually unmarked, because it was reflected at $\cev{v}$ at time $\distanceOf(\general, v) + \radiusOf{\Tree_v^{\general}}_v = \distanceOf(\general, v) + \distanceOf(v, \cev{v})$, which, because $v \neq \general$ and $\cev{v} \in \Tree_v^{\general}$, is greater than the only time, namely $\distanceOf(\general, \cev{v})$, at which an initiate signal reaches $\cev{v}$.

      Thirdly, because $U_{\general}$ is non-empty, we have $\general \in V_{\general}$ and $v \in \maxTree_{\general}$. According to \cref{lemma:when-do-non-leaf-vertices-whose-unique-maximum-radius-tree-contains-the-general-sent-marked-signals}, at time $2 \cdot \max_{\Tree \in \secondMaxSubtreesOf(\general)}\radiusOf{\Tree}_{\general}$, for each tree $\Tree \in \secondMaxSubtreesOf(\general)$, and each leaf $\cev{\general} \in \maxVertices_{\general}(\Tree)$, a marked slowed-down find-midpoint signal with origin $\general$ and reflection vertex $\cev{\general}$ is sent from $\general$ towards $v$; note that the set of all vertices $\cev{\general}$ is equal to $\maxVertices_v(\Tree_v^{\general})$ (compare \cref{lemma:when-does-a-signal-that-is-sent-from-a-vertex-towards-the-general-reach-the-next-vertex}). The marked signals reach $v$ from the direction towards $\general$, which is the direction $\directionOf_v(\Tree_v^{\general})$, at time $2 \cdot \max_{\Tree \in \secondMaxSubtreesOf(\general)}\radiusOf{\Tree}_{\general} + 3 \cdot \distanceOf(\general, v) = \distanceOf(\general, v) + 2 \cdot \radiusOf{\Tree_v^{\general}}_v$, whereupon
      \begin{aenumerate}
        \item the count signal at $v$ memorises the direction $\directionOf_v(\Tree_v^{\general})$ and,
        \item for each leaf $\cev{\general} \in \maxVertices_v(\Tree_v^{\general})$ and each direction $d \in \directionOf(v) \smallsetminus \setOf{\directionOf_v(\Tree_v^{\general})}$, a maybe-marked slowed-down find-midpoint signal with origin $\general$ and reflection vertex $\cev{\general}$ is sent from $v$ in direction $d$.
      \end{aenumerate} 

      Altogether, on the timeline, for the trees of $\subtreesOf(v) \smallsetminus \setOf{\maxTree_v}$ in non-decreasing order with respect to the radius and at the same time for trees with the same radius, the signals reach $v$ and are sent from $v$. For those trees whose radius is less than the second greatest radius among the trees of $\subtreesOf(v)$, which is $\max_{\Tree \in \secondMaxSubtreesOf(v)}\radiusOf{\Tree}_v$, the aforementioned maybe-marked signals that are sent from $v$ are unmarked (because the memory of the count signal, after it is updated, does neither contain the directions of $\directionOf_v(\secondMaxSubtreesOf(v))$ nor the direction $\directionOf_v(\maxTree_v)$). And, for the trees of $\secondMaxSubtreesOf(v)$, the aforementioned maybe-marked signals that are sent from $v$ are marked (because the count signal, after it is updated, has the memory $\directionOf(v) \smallsetminus \setOf{\directionOf_v(\maxTree_v)}$). Note that, if $\Tree_v^{\general} \in \secondMaxSubtreesOf(v)$, then $W_{\general, v} \smallsetminus \setOf{v} = \setOf{\general}$, and otherwise, $W_{\general, v} \smallsetminus \setOf{v} = \emptyset$. In conclusion, at time $\distanceOf(\general, v) + 2 \cdot \max_{\Tree \in \secondMaxSubtreesOf(v)}\radiusOf{\Tree}_v$, what is to be proven holds. 

    \proofPart{Inductive Step (see \cref{figure:marked-signals-inner-vertices:inductive-step})}
      Let $v \in U_{\general}$ such that $n_v \geq 2$ and such that what is to be proven holds for each vertex $v' \in U_{\general}$ with $n_{v'} < n_v$, which is the \emph{inductive hypothesis}.
      \begin{figure}
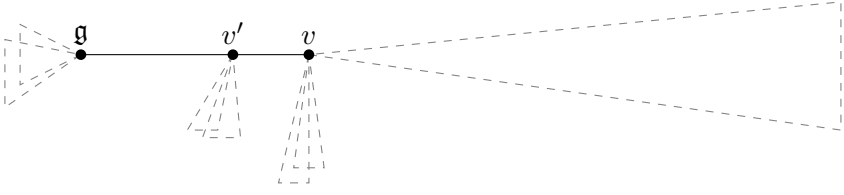

        \myfloatalign
        \figureMarkedSignalsInnerVerticesInducitveStep
        \caption{Schematic representation of the set-up of the inductive step of the proof of \cref{lemma:when-do-vertices-whose-unique-maximum-radius-tree-does-not-contain-the-general-sent-marked-signals}. The possible existence of subtrees of vertices and their radii is hinted at by dashed triangles of various sizes.}
        \label{figure:marked-signals-inner-vertices:inductive-step}
      \end{figure}

      First, for each tree $\Tree \in \subtreesOf(v) \smallsetminus \setOf{\Tree_v^{\general}, \maxTree_v}$, the same as in the base case happens.

      Secondly, as in the base case, at time $\distanceOf(\general, v) + 2 \cdot \radiusOf{\Tree_v^{\general}}_v$, for each leaf $\cev{v} \in \maxVertices_v(\Tree_v^{\general})$, an unmarked reflected find-midpoint signal with origin $v$ and reflection vertex $\cev{v}$ reaches $v$ from direction $\directionOf_v(\Tree_v^{\general})$.

      Thirdly, let $v'$ be the neighbour of $v$ in $\Tree_v^{\general}$. Then, $v' \in \Vertices_{\general, v} \smallsetminus \setOf{\general} \subseteq U_{\general}$ and $n_{v'} < n_v$. According to the inductive hypothesis, at time $\distanceOf(\general, v') + 2 \cdot \max_{\Tree' \in \secondMaxSubtreesOf(v')}\radiusOf{\Tree'}_{v'}$,
      \begin{aenumerate}
        \item for each tree $\Tree' \in \secondMaxSubtreesOf(v') \smallsetminus \setOf{\Tree_{v'}^{\general}}$ and each leaf $\cev{v}' \in \maxVertices_{v'}(\Tree')$, a marked slowed-down find-midpoint signal with origin $v'$ and reflection vertex $\cev{v}'$ is sent from $v'$ towards $v$, and,
        \item for each vertex $w' \in W_{\general, v'} \smallsetminus \setOf{v'}$, each tree $\Tree' \in \secondMaxSubtreesOf(w') \smallsetminus \setOf{\Tree_{w'}^{\general}}$, each leaf $\cev{w}' \in \maxVertices_{w'}(\Tree')$, a marked slowed-down find-midpoint signal with origin $w'$ and reflection vertex $\cev{w}'$ is sent from $v'$ towards $v$.
      \end{aenumerate}
      The marked signals reach $v$ from from direction $\directionOf_v(\Tree_v^{\general})$ at time $\distanceOf(\general, v') + 2 \cdot \max_{\Tree' \in \secondMaxSubtreesOf(v')}\radiusOf{\Tree'}_{v'} + 3 \cdot \distanceOf(v', v) = \distanceOf(\general, v) + 2 \cdot \radiusOf{\Tree_v^{\general}}_v$, whereupon
      \begin{aenumerate}
        \item the count signal at $v$ memorises the direction $\directionOf_v(\Tree_v^{\general})$ and,
        \item for each vertex $w \in W_{\general, v'}$, each tree $\Tree \in \secondMaxSubtreesOf(w) \smallsetminus \setOf{\Tree_w^{\general}}$, each leaf $\cev{w} \in \maxVertices_w(\Tree)$, and each direction $d \in \directionOf(v) \smallsetminus \setOf{\directionOf_v(\Tree_v^{\general})}$, a maybe-marked slowed-down find-midpoint signal with origin $w$ and reflection vertex $\cev{w}$ is sent from $v$ in direction $d$.
      \end{aenumerate}
      Note that, if $\Tree_v^{\general} \in \secondMaxSubtreesOf(v)$, then $W_{\general, v} \smallsetminus \setOf{v} = W_{\general, v'}$, and otherwise, $W_{\general, v} \smallsetminus \setOf{v} = \emptyset$. With that it follows verbatim as in the base case, that what is to be proven holds. 
  \end{proof}

  That midpoints of maximum-weight direction-preserving paths are recognised as such is shown in

  \begin{theorem} 
  \label{theorem:midpoints-of-maximum-weight-paths-are-recognised}
    Let $\hat{p}$ be a maximum-weight direction-preserving path in $\Graph$, let $\hat{v}$ be the vertex on $\hat{p}$ that is nearest to $\general$, let $\hat{v}_1$ and $\hat{v}_2$ be the two ends of $\hat{p}$ such that $\distanceOf(\hat{v}, \hat{v}_1) \leq \distanceOf(\hat{v}, \hat{v}_2)$, and let $\hat{\midpoint}$ be the midpoint of $\hat{p}$. At time $r + d/2$, at the midpoint $\hat{\midpoint}$,
    \begin{aenumerate}
      \item if $\distanceOf(\hat{v}, \hat{v}_1) = \distanceOf(\hat{v}, \hat{v}_2)$, two marked reflected find-midpoint signals with origin $\hat{v}$ and reflection vertices $\hat{v}_1$ and $\hat{v}_2$ collide, and
      \item otherwise, a marked slowed-down find-midpoint signal with origin $\hat{v}$ and reflection vertex $\hat{v}_1$ collides with a marked reflected find-midpoint signal with origin $\hat{v}$ and reflection vertex $\hat{v}_2$. \qedhere
    \end{aenumerate}
  \end{theorem}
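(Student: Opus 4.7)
The plan is to combine the timing result of \cref{lemma:the-time-at-which-the-midpoints-of-longest-paths-are-found} with the marking-preservation lemmata \cref{lemma:when-do-leaves-sent-marked-signals,lemma:when-do-non-leaf-vertices-whose-unique-maximum-radius-tree-contains-the-general-sent-marked-signals,lemma:when-do-vertices-whose-unique-maximum-radius-tree-does-not-contain-the-general-sent-marked-signals} by identifying, for each intermediate vertex on the journey of the signals back from $\hat{v}_1$ and $\hat{v}_2$, that the times at which those lemmata apply are precisely the times at which the signals pass through. First, I would invoke \cref{lemma:the-time-at-which-the-midpoint-of-a-path-is-found,lemma:the-time-at-which-the-midpoints-of-longest-paths-are-found} to pin down that the collision producing the midpoint signal at $\hat{\midpoint}$ happens at time $r + d/2$ and to identify (by a direct inspection of the proof of \cref{lemma:the-time-at-which-the-midpoint-of-a-path-is-found}) exactly which signals collide: in the equal-distance case both are reflected find-midpoint signals that meet at $\hat{v}=\hat{\midpoint}$, while in the unequal case the signal from $\hat{v}_1$ has returned earlier, been slowed-down at $\hat{v}$, and then meets the still-travelling reflected signal from $\hat{v}_2$. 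The base marking at both endpoints is supplied by \cref{lemma:when-do-leaves-sent-marked-signals}, since $\hat{v}_1$ and $\hat{v}_2$ are leaves and the find-midpoint signals with origin $\hat{v}$ travel alongside initiate signals to them.

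Next I would verify the structural classification of the intermediate vertices. For any $v'$ strictly between $\hat{\midpoint}$ and $\hat{v}_2$, one has $\distanceOf(v',\hat{v}_1)>\distanceOf(v',\hat{v}_2)$, and maximality of $\hat{p}$ forces every subtree in $\subtreesOf(v')\setminus\{\maxTree_{v'}\}$ to have radius at most $\distanceOf(v',\hat{v}_2)$; therefore $v'\in V_{\general}$ with $\maxTree_{v'}$ the subtree containing $\hat{v}$ (hence containing $\general$) and the unique second-maximum radius equal to $\distanceOf(v',\hat{v}_2)$. A parallel analysis handles vertices between $\hat{v}$ and $\hat{v}_1$ (in Case 2) and between $\hat{\midpoint}$ and $\hat{v}_1$ (in Case 1). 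A direct timing computation then shows that the marked reflected find-midpoint signal from $\hat{v}_i$ arrives at each such $v'$ at exactly the time $\distanceOf(\general,v')+2\cdot\max_{\Tree\in\secondMaxSubtreesOf(v')}\radiusOf{\Tree}_{v'}$ to which \cref{lemma:when-do-non-leaf-vertices-whose-unique-maximum-radius-tree-contains-the-general-sent-marked-signals} refers, so the count signal has memory $\directionOf(v')\setminus\{\directionOf_{v'}(\maxTree_{v'})\}$, the predicate $\isPenultimate$ evaluates to $\yes$, and the signal stays marked.

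For Case 2 I would then argue separately about the slowed-down signal created at $\hat{v}$. The vertex $\hat{v}$ lies in $U_{\general}$ when $\hat{v}\neq\general$ (because maximality of $\hat{p}$ bounds the radius of the $\general$-containing subtree of $\hat{v}$ by $\distanceOf(\hat{v},\hat{v}_1)<\distanceOf(\hat{v},\hat{v}_2)$, so $\maxTree_{\hat{v}}$ is the subtree toward $\hat{v}_2$) and in $V_{\general}$ when $\hat{v}=\general$. In either situation, \cref{lemma:when-do-non-leaf-vertices-whose-unique-maximum-radius-tree-contains-the-general-sent-marked-signals} or \cref{lemma:when-do-vertices-whose-unique-maximum-radius-tree-does-not-contain-the-general-sent-marked-signals}, applied at the time the reflected signal from $\hat{v}_1$ arrives at $\hat{v}$, confirms that the marking transfers to the slowed-down signal emitted toward $\hat{v}_2$. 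The same penultimate-direction analysis as above, applied now to intermediate vertices on the segment between $\hat{v}$ and $\hat{\midpoint}$, preserves the marking of this slowed-down signal until it reaches $\hat{\midpoint}$, where it meets the marked reflected signal from $\hat{v}_2$.

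The main obstacle will be bookkeeping the time-matching: verifying that at every intermediate vertex $v'$ the \emph{same} time $\distanceOf(\general,v')+2\cdot\max_{\Tree\in\secondMaxSubtreesOf(v')}\radiusOf{\Tree}_{v'}$ at which the two cited lemmata fire is exactly the time at which the marked signal in question arrives, and that in Case 2 the slowed-down signal's threefold slowdown synchronises with the correct second-maximum radius of each intermediate vertex on the $\hat{v}_2$-side. I expect this to reduce to a careful but routine computation using the identities $\distanceOf(\general,\hat{v}_i)=\distanceOf(\general,v')+\distanceOf(v',\hat{v}_i)$ and the characterisation of second-maximum radii in terms of $\distanceOf(v',\hat{v}_i)$ obtained from the maximality of $\hat{p}$; the subtle part is choosing the right case split so that the equal-distance collision at $\hat{v}$ is handled by the reflected-reflected rule in $\localTransitionFunction_{v,2}^{\tree}$ (whose marking check requires $n=|E|$, not $|E|-1$) rather than the reflected-slowed-down rule used elsewhere.
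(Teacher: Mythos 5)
Your plan is essentially the paper's own proof: it rests on the same three lemmata (\cref{lemma:when-do-leaves-sent-marked-signals,lemma:when-do-non-leaf-vertices-whose-unique-maximum-radius-tree-contains-the-general-sent-marked-signals,lemma:when-do-vertices-whose-unique-maximum-radius-tree-does-not-contain-the-general-sent-marked-signals}), the same case split on $\distanceOf(\hat{v}, \hat{v}_1) = \distanceOf(\hat{v}, \hat{v}_2)$, and the same timing identities, the only difference being that you propagate the marking vertex-by-vertex while the paper applies the lemmata just once at the neighbours $v_1$, $v_2$ of $\hat{\midpoint}$ (for the slowed-down signal via $\hat{v} \in W_{\general, v_1} \smallsetminus \setOf{v_1}$), since the lemmata already encapsulate the inductive propagation. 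One boundary case to absorb into your classification: if $\hat{v} = \general$ is itself a leaf (i.e.\ $\general = \hat{v}_1$), then $\hat{v} \notin V_{\general}$ and there is no returning reflected signal from $\hat{v}_1$; instead the marked slowed-down find-midpoint signal with origin and reflection vertex $\general$ is emitted at time $0$ by \cref{lemma:when-do-leaves-sent-marked-signals}, which is exactly how the paper handles it by working with the union of the leaves and $V_{\general}$.
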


  \begin{proof-sketch}
    From a broad perspective and ignoring boundary cases the following happens in the given order (see \cref{figure:midpoints-of-maximum-weight-paths-are-recognised:midpoint-is-not-nearest-to-general}). At time $0$, an initiate signal is sent from $\general$ towards $\hat{v}$. At time $\distanceOf(\general, \hat{v})$, this signal reaches $\hat{v}$, whereupon find-midpoint signals with origin $\hat{v}$ are sent from $\hat{v}$ in all directions. At time $\distanceOf(\general, \hat{v}) + \radiusOf{\Tree_{\hat{v}}^{\general}}_{\hat{v}}$, the slowest but unmarked reflected find-midpoint signal with origin $\hat{v}$ returns to $\hat{v}$ from the direction towards $\general$. At time $\distanceOf(\general, \hat{v}) + 2 \cdot \distanceOf(\hat{v}, \hat{v}_1)$, the slowest and marked reflected find-midpoint signal with origin $\hat{v}$ returns to $\hat{v}$ from the direction towards $\hat{v}_1$, whereupon a marked slowed-down find-midpoint signal with origin $\hat{v}$ is sent towards $\hat{v}_2$. At time $\distanceOf(\general, \hat{v}) + 2 \cdot \distanceOf(\hat{v}, \hat{v}_1) + 3 \cdot \distanceOf(\hat{v}, \hat{\midpoint}) = \distanceOf(\general, \hat{\midpoint}) + 2 \cdot \distanceOf(\hat{v}_1, \hat{\midpoint}) = r + d/2$, this signal reaches $\hat{\midpoint}$. And, at the same time, which is equal to $\distanceOf(\general, \hat{v}) + \distanceOf(\hat{v}, \hat{v}_2) + \distanceOf(\hat{v}_2, \hat{\midpoint})$, the slowest and marked reflected find-midpoint signal with origin $\hat{v}$ that is on its way to return to $\hat{v}$ from the direction towards $\hat{v}_2$ reaches $\hat{\midpoint}$. The two marked signals collide at $\hat{\midpoint}$ recognising it as the midpoint of a maximum-weight path.

    The slowest reflected find-midpoint signal with origin $\hat{v}$ that returns to $\hat{v}$ from the direction towards $\general$ is unmarked, because it reaches the leaf it is reflected at later than the initiate signal and is never marked in the first place. The slowest reflected find-midpoint signal with origin $\hat{v}$ that returns to $\hat{v}$ from the direction towards $\hat{v}_1$ is marked, because it reaches the leaf $\hat{v}_1$ alongside initiate signals and at each vertex on its way back, it is the penultimate marked signal to return (this is essentially \cref{lemma:when-do-non-leaf-vertices-whose-unique-maximum-radius-tree-contains-the-general-sent-marked-signals}). For the same reason, the signal from the direction towards $\hat{v}_2$ is marked when it reaches $\hat{\midpoint}$. And, the slowed-down find-midpoint signal with origin $\hat{v}$ that reaches $\hat{\midpoint}$ from the direction towards $\hat{v}$ is marked, because at each vertex it reaches on its way, it is the penultimate marked signal to do so (this is essentially \cref{lemma:when-do-vertices-whose-unique-maximum-radius-tree-does-not-contain-the-general-sent-marked-signals}). Note that, for each vertex $u$ on the path from $\general$ to $\hat{\midpoint}$ except for $\general$, the direction towards $\general$ is added to the memory of the count signal at $u$, because a marked slowed-down find-midpoint signal whose origin is \emph{not} $u$ reaches $u$ from that direction, and the other directions are added, because marked reflected find-midpoint signals with origin $u$ reach it from those directions; for all other vertices, the latter is the reason the directions are added.
  \end{proof-sketch}

  \begin{proof}
    Let $v_1$ and $v_2$ be the neighbours of $\hat{\midpoint}$ on $\hat{p}$ such that $\distanceOf(v_1, \hat{v}_1) < \distanceOf(v_2, \hat{v}_1)$ (or, equivalently, $\distanceOf(v_1, \hat{v}_2) > \distanceOf(v_2, \hat{v}_2)$).

    First, let $\distanceOf(\hat{v}, \hat{v}_1) = \distanceOf(\hat{v}, \hat{v}_2)$ (see \cref{figure:midpoints-of-maximum-weight-paths-are-recognised:general-lies-on-path-and-midpoint-is-nearest-to-general,figure:midpoints-of-maximum-weight-paths-are-recognised:midpoint-is-nearest-to-general}). Then, $\hat{v} = \hat{\midpoint}$, and, for each index $i \in \setOf{1, 2}$, we have $v_i \neq \general$ (because, if $\general$ lies on $\hat{p}$, then $\general = \hat{v} = \hat{\midpoint} \neq v_i$, and otherwise, $\general$ does not lie on $\hat{p}$ but $v_i$ does), and, $v_i$ is either a leaf or an element of $\Vertices_{\mathfrak{g}}$ (because $\maxSubtreesOf(v_i)$ consists of the tree of $\subtreesOf(v_i)$ that contains $\hat{v}_j$, where $j \in \setOf{1, 2} \smallsetminus \setOf{i}$, and this tree, namely $\hat{\Tree}_{v_i}$, contains $\general$). Hence, according to \cref{lemma:when-do-leaves-sent-marked-signals} or \cref{lemma:when-do-non-leaf-vertices-whose-unique-maximum-radius-tree-contains-the-general-sent-marked-signals}, for each index $i \in \setOf{1, 2}$, at time $\distanceOf(\general, v_i) + 2 \cdot \distanceOf(v_i, \hat{v}_i)$, a marked reflected find-midpoint signal with origin $\hat{\midpoint}$ and reflection vertex $\hat{v}_i$ is sent from $v_i$ towards $\hat{\midpoint}$, and at time $\distanceOf(\general, \hat{\midpoint}) + 2 \cdot \distanceOf(\hat{\midpoint}, \hat{v}_i) = r + d/2$, it reaches $\hat{\midpoint}$, where it collides with the other signal.
    \begin{figure}
      \myfloatalign
      \begin{wide}
        \figureMidpointsOfMaximumWeightPathsAreRecognised
        \caption{Schematic representation of the set-up of the proof of \cref{theorem:midpoints-of-maximum-weight-paths-are-recognised}. Vertices are depicted as dots, points that may or may not be vertices are depicted as circles, paths that may consist of more than one edge are drawn dotted, and paths that consist of precisely one edge are drawn solid.}
        \label{figure:midpoints-of-maximum-weight-paths-are-recognised}
      \end{wide}
    \end{figure}

    Secondly, let $\distanceOf(\hat{v}, \hat{v}_1) \neq \distanceOf(\hat{v}, \hat{v}_2)$ (see \cref{figure:midpoints-of-maximum-weight-paths-are-recognised:general-lies-on-path-and-midpoint-is-not-nearest-to-general,figure:midpoints-of-maximum-weight-paths-are-recognised:midpoint-is-not-nearest-to-general}). Furthermore, let $X_{\general}$ be the union of the set of leaves and the set $V_{\general}$. Then, $\hat{v} \neq \hat{\midpoint}$, and either $v_1 = \general = \hat{v} \in X_{\general}$ or $v_1 \in U_{\general}$, and $v_2 \in X_{\general}$ (because otherwise, there would be a direction-preserving path with more weight than $\hat{p}$, which contradicts that $\hat{p}$ has maximum-weight).

    According to \cref{lemma:when-do-leaves-sent-marked-signals} or \cref{lemma:when-do-non-leaf-vertices-whose-unique-maximum-radius-tree-contains-the-general-sent-marked-signals}, at time $\distanceOf(\general, v_2) + 2 \cdot \distanceOf(v_2, \hat{v}_2)$, a marked reflected find-midpoint signal with origin $\hat{v}$ and reflection vertex $\hat{v}_2$ is sent from $v_2$ towards $\hat{\midpoint}$, and at time $\distanceOf(\general, v_2) + 2 \cdot \distanceOf(v_2, \hat{v}_2) + \distanceOf(v_2, \hat{\midpoint}) = r + d/2$ it reaches $\hat{\midpoint}$.

    If $v_1 = \general = \hat{v} \in X_{\general}$, then, according to \cref{lemma:when-do-leaves-sent-marked-signals} or \cref{lemma:when-do-non-leaf-vertices-whose-unique-maximum-radius-tree-contains-the-general-sent-marked-signals}, at time $2 \cdot \distanceOf(\hat{v}, \hat{v}_1)$ (which is $0$ in the case that $v_1$ is a leaf, because in this case $v_1 = \hat{v} = \hat{v}_1$), a marked slowed-down find-midpoint signal with origin $\hat{v}$ and reflection vertex $\hat{v}_1$ is sent from $\hat{v}$ towards $\hat{\midpoint}$, and at time $2 \cdot \distanceOf(\hat{v}, \hat{v}_1) + \distanceOf(\hat{v}, \hat{\midpoint}) = r + d/2$, it reaches $\hat{\midpoint}$.

    Otherwise, we have $\hat{v} \in W_{\general, v_1} \smallsetminus \setOf{v_1}$ (because, for each vertex $v \in \Vertices_{\hat{v}, v_1} \smallsetminus \setOf{\hat{v}}$, the set $\maxSubtreesOf(v)$ consists of the tree of $\subtreesOf(v)$ that contains $\hat{v}_2$, the tree $\Tree_v^{\general}$ contains $\hat{v}_1$, and due to the maximality of $\hat{p}$, except for $\hat{\Tree}_v$, there can be no tree of $\subtreesOf(v)$ with a greater radius than $\Tree_v^{\general}$), and hence, according to \cref{lemma:when-do-vertices-whose-unique-maximum-radius-tree-does-not-contain-the-general-sent-marked-signals}, at time $\distanceOf(\general, v_1) + 2 \cdot \distanceOf(v_1, \hat{v}_1)$, a marked slowed-down find-midpoint signal with origin $\hat{v}$ and reflection vertex $\hat{v}_1$ is sent from $v_1$ towards $\hat{\midpoint}$, and at time $\distanceOf(\general, v_1) + 2 \cdot \distanceOf(v_1, \hat{v}_1) + \distanceOf(v_1, \hat{\midpoint}) = r + d/2$, it reaches $\hat{\midpoint}$.

    In either case, at time $r + d/2$, at the midpoint $\hat{\midpoint}$, a marked slowed-down find-midpoint signal with origin $\hat{v}$ and reflection vertex $\hat{v}_1$ collides with a marked reflected find-midpoint signal with origin $\hat{v}$ and reflection vertex $\hat{v}_2$.
  \end{proof}

  \begin{remark} 
    On the other hand, for each direction-preserving path that does \emph{not} have maximum-weight, whenever its midpoint is found, one of the signals is unmarked and hence the midpoint is not falsely thought to be one of a maximum-weight path. The interested reader may prove that for herself.
  \end{remark}

  \subsection{Thaw Signals Traverse Midpoints and Thaw Synchronisation of Edges Just in Time}
  \label{subsection:thaw-signals-traverse-midpoints-and-thaw-synchronisation-of-edges-just-in-time}

  The inverse of a path is its traversal from target to source as given in

  \begin{definition}
    Let $p = (v_0, e_1, v_1, \dotsc, e_n, v_n)$ be a path in $\Graph$. The path $\invert(p) = (v_n, e_n, \dotsc, v_1, e_1, v_0)$ is called \define{inverse of $p$}\graffito{inverse $\invert(p)$ of $p$}\index[symbols]{invp@$\invert(p)$}. 
  \end{definition}

  \begin{remark}
    The inverse of an empty path is the empty path itself.
  \end{remark}

  \begin{remark}
    The weight and the midpoint of the inverse of a path is the same as the weight and the midpoint of the path itself.
  \end{remark}

  An undirected path does not know which of its ends is its source and which is its target and it can be represented as in

  \begin{definition}
    Let $\leftrightarrow$\graffito{equivalence relation $\leftrightarrow$ on $\Paths$}\index[symbols]{arrowleftright@$\leftrightarrow$} be the equivalence relation on $\Paths$ given by $p \leftrightarrow \invert(p)$. Each equivalence class $\equivalenceClassOf{p}_\leftrightarrow \in \Paths \modulo {\leftrightarrow}$ is called \graffito{undirected path $\equivalenceClassOf{p}_\leftrightarrow$}\define{undirected path}\index{path!undirected} and the non-negative real number $\weightOf(\equivalenceClassOf{p}_\leftrightarrow) = \weightOf(p)$ is called \define{weight of $\equivalenceClassOf{p}_\leftrightarrow$}\graffito{weight $\weightOf(\equivalenceClassOf{p}_\leftrightarrow)$ of $\equivalenceClassOf{p}_\leftrightarrow$}\index[symbols]{omegapequivalenceclass@$\weightOf(\equivalenceClassOf{p}_\leftrightarrow)$}.
  \end{definition}

  \begin{remark}
    The equivalence class of an empty path is the singleton set that consists of the empty path.
  \end{remark}

  For each path $p$, the set(s) of paths with the same source (or target), less weight, and midpoints on the continuum representation of $p$ are named in

  \begin{definition}
    For each direction-preserving path $p \in \Paths_{\directionPreserving}$, let
    \begin{equation*}
      \Sigma_p = \setOf{p' \in \Paths_{\directionPreserving} \suchThat
                        \sourceOf(p') = \sourceOf(p) \text{, }
                        \weightOf(p') < \weightOf(p) \text{, and }
                        \midpoint_{p'} \in \imageOf\continuumRepresentationOf{p}} \mathnote{set $\Sigma_p$}\index[symbols]{Sigmap@$\Sigma_p$}
    \end{equation*}
    and let
    \begin{equation*}
      T_p = \setOf{p' \in \Paths_{\directionPreserving} \suchThat
                   \targetOf(p') = \targetOf(p) \text{, }
                   \weightOf(p') < \weightOf(p) \text{, and }
                   \midpoint_{p'} \in \imageOf\continuumRepresentationOf{p}}. \mathnote{set $T_p$}\index[symbols]{Tp@$T_p$} \qedhere
    \end{equation*}
  \end{definition}

  \begin{remark}
    Note that $T_p = \invert(\Sigma_{\invert(p)})$.
  \end{remark}

  \begin{remark}
    Let $p$ be a direction-preserving path of $\Graph$. For each path $p' \in \Sigma_p \cup T_p$, we have $\distanceOf(\midpoint_p, \midpoint_{p'}) = \weightOf(p)/2 - \weightOf(p')/2$.
  \end{remark}

  The set of undirected direction-preserving paths can be turned into a graph by having an edge from an undirected direction-preserving path to each such path that has one end in common with the path, has less weight than the path, and has the greatest weight among the paths with the former two properties. The edges can be weighted by the distance of the midpoints of the undirected paths. The graph and its edge-weighting are introduced in

  \begin{definition}
    Let $\Vertices_{\thawKind} = \Paths_{\directionPreserving} \modulo {\leftrightarrow}$, let
    \begin{equation*}
      \Edges_{\thawKind} = \setOf{(\equivalenceClassOf{p}_\leftrightarrow, \equivalenceClassOf{p'}_\leftrightarrow) \in \Vertices_{\thawKind} \times \Vertices_{\thawKind} \suchThat p' \in \argmax_{p'' \in \Sigma_p \cup T_p} \weightOf(p'')},
    \end{equation*} 
    and let
    \begin{align*}
      \weightOf_{\thawKind} \from \Edges_{\thawKind} &\to \R_{\geq 0},\\
      (\equivalenceClassOf{p}_\leftrightarrow, \equivalenceClassOf{p'}_\leftrightarrow) &\mapsto \distanceOf(\midpoint_p, \midpoint_{p'}).
    \end{align*}
    The triple $\Graph_{\thawKind} = \ntuple{\Vertices_{\thawKind}, \Edges_{\thawKind}, \weightOf_{\thawKind}}$\graffito{edge-weighted directed acyclic graph $\Graph_{\thawKind}$}\index[symbols]{GTcalligraphictypewriter@$\Graph_{\thawKind}$} is an edge-weighted directed acyclic graph.
  \end{definition}

  \begin{remark}
    For each maximum-weight direction-preserving path $\hat{p}$ in $\Graph$, the in-degree of $\equivalenceClassOf{\hat{p}}_\leftrightarrow$ in $\Graph_{\thawKind}$ is $0$, because there are only edges to equivalence classes of less-weight paths.
  \end{remark}

  \begin{remark}
  \label{remark:weight-of-edge-is-half-weight-of-longer-path-minus-half-weight-of-shorter-path}
    For each edge $(\equivalenceClassOf{p}_\leftrightarrow, \equivalenceClassOf{p'}_\leftrightarrow)$ of $\Graph_{\thawKind}$, we have $\weightOf_{\thawKind}(\equivalenceClassOf{p}_\leftrightarrow,\allowbreak \equivalenceClassOf{p'}_\leftrightarrow) = \weightOf(p)/2 - \weightOf(p')/2$.
  \end{remark}

  \begin{remark}
    There is a bijection between the vertices of $\Vertices_{\thawKind}$ and the set of all midpoint signals that are created by the signal machine $\mathcal{S}$. The reason is that each midpoint signal memorises two words of directions in a set, one that leads from its position to one end of the path it designates the midpoint of and the other to the other end; because these words are stored in a set, the midpoint signal does not differentiate between source and target of its path.

    Each maximum-weight vertex of $\Graph_{\thawKind}$ is a maximum-weight undirected direction-preserving path in $\Graph$ and is, under suitable identifications and definitions, a longest undirected direction-preserving path in the continuum representation $M$ of $\Graph$. And, each minimum-weight vertex of $\Graph_{\thawKind}$ is one of weight $0$, is an undirected empty path in $\Graph$, and is, under suitable identifications, an empty path in $\Graph$ and in $M$, and a vertex in $\Graph$ and in $M$.

    For each path $p_{\thawKind}$ in $\Graph_{\thawKind}$ that starts in a maximum-weight vertex and ends in a minimum-weight vertex, in the time evolution of the signal machine $\mathcal{S}$, there is a thaw signal that traverses the midpoint signals represented by the vertices on the path in the order they occur on the path such that the time the thaw signal takes to get from the vertex $v_{\thawKind}$ on the path to the next vertex $v_{\thawKind}'$ on the path is precisely the weight of the edge $(v_{\thawKind}, v_{\thawKind}')$, in particular, the time the thaw signal takes to reach the end of its path is the weight of $p_{\thawKind}$.

    To show that the synchronisation of all edges is thawed and finishes at the same time, we have to show that, for each edge, there is a thaw signal that collides with the midpoint signal of the edge and reaches the end of its path at one end of the edge, and that all thaw signals reach the ends of their paths at the same time. The former is equivalent to showing that, for each edge $e \in \Edges$ with the two ends $v_0$ and $v_1$, there is a maximum-weight direction-preserving path $p$ in $\Graph$ such that there is a path in $\Graph_{\thawKind}$ from $\equivalenceClassOf{p}_{\leftrightarrow}$ to $\equivalenceClassOf{(v_0, e, v_1)}_{\leftrightarrow}$. And the latter is equivalent to showing that the weights of all paths from maximum-weight vertices to minimum-weight vertices in $\Graph_{\thawKind}$ are the same. See \cref{corollary:the-weight-of-paths-to-edges-and-vertices-in-the-thaw-graph}.

    For each non-empty direction-preserving path $p$ in $\Graph$, the midpoint of $p$ is found at time $\max\setOf{\distanceOf(\general, \sourceOf(p)),\allowbreak \distanceOf(\general, \targetOf(p))} + \weightOf(p)/2$ (see \cref{lemma:the-time-at-which-the-midpoint-of-a-path-is-found}) and, for each maximum-weight direction-preserving path $\hat{p}$ in $\Graph$, the midpoint of $\hat{p}$ is found at time $r + d/2$ (see \cref{lemma:the-time-at-which-the-midpoints-of-longest-paths-are-found}). If the thaw signals that emanate from the midpoints of maximum-weight direction-preserving paths at the time $r + d/2$ reach the ends of their paths after the time span $d/2$, then synchronisation finishes at the optimal time $r + d$. The condition is equivalent to showing that the weights of all paths from maximum-weight vertices to minimum-weight vertices in $\Graph_{\thawKind}$ are equal to $d/2$. See \cref{corollary:the-weight-of-paths-to-edges-and-vertices-in-the-thaw-graph}.
  \end{remark}

  The thaw signals that spread from the midpoint signal of a path $p$ eventually collide with the midpoint signals of all paths that have the same source or target as $p$, less weight, and whose midpoints lie on $p$. This is what is essentially shown in

  \begin{lemma}
  \label{lemma:existence-of-path-in-midpoint-signal-graph}
    Let $p$ be a path in $\Graph$ and let $p' \in \Sigma_p \cup T_p$. There is a path from $\equivalenceClassOf{p}_\leftrightarrow$ to $\equivalenceClassOf{p'}_\leftrightarrow$ in $\Graph_{\thawKind}$.
  \end{lemma}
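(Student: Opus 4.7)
The plan is strong induction on $\weightOf(p)$; since by the reduction at the start of this proof sketch we may assume $\Graph$ to be a finite tree, the set of weights attained by direction-preserving paths is finite, so the induction is well-founded. The base case $\weightOf(p) = 0$ is vacuous because the strict inequality $\weightOf(p') < \weightOf(p)$ forces $\Sigma_p \cup T_p = \emptyset$.

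For the inductive step, without loss of generality $p' \in \Sigma_p$, so $\sourceOf(p') = \sourceOf(p) =: s$ and $\midpoint_{p'}$ is the point at distance $\weightOf(p')/2$ from $s$ along the image of $p$. I would pick $q$ to be a maximum-weight element of $\Sigma_p \cup T_p$ still sharing the endpoint $s$ with $p'$; such a $q$ exists because $p'$ itself qualifies, and by the definition of $\Graph_{\thawKind}$ the pair $(\equivalenceClassOf{p}_\leftrightarrow, \equivalenceClassOf{q}_\leftrightarrow)$ is an edge (here the $\argmax$ in the definition is to be read per shared endpoint, in agreement with the informal description and with the physical picture of separate thaw signals propagating from $\midpoint_p$ toward each end of $p$). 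If $q \leftrightarrow p'$ the single edge already realises the required path; otherwise $\weightOf(p') < \weightOf(q) < \weightOf(p)$ and I would apply the inductive hypothesis to $q$.

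The key geometric step is then $p' \in \Sigma_q$, for which only the condition $\midpoint_{p'} \in \imageOf \continuumRepresentationOf{q}$ requires work. Both $p$ and $q$ are direction-preserving paths in the tree $\Graph$ emanating from the common source $s$, and in a tree any two such paths share a unique initial segment before diverging. Because $q \in \Sigma_p$, the midpoint $\midpoint_q$ at distance $\weightOf(q)/2$ from $s$ along $q$ must lie on $p$, which forces the shared prefix of $p$ and $q$ to have length at least $\weightOf(q)/2 \geq \weightOf(p')/2$. Consequently $\midpoint_{p'}$, the unique point at distance $\weightOf(p')/2$ from $s$ along $p$, lies on this shared prefix and in particular on $q$. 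The inductive hypothesis applied to $q$ (whose weight is strictly less than that of $p$) then produces a path in $\Graph_{\thawKind}$ from $\equivalenceClassOf{q}_\leftrightarrow$ to $\equivalenceClassOf{p'}_\leftrightarrow$, which prepended with the edge $(\equivalenceClassOf{p}_\leftrightarrow, \equivalenceClassOf{q}_\leftrightarrow)$ finishes the construction.

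The main obstacle I expect is the choice of $q$: one needs simultaneously maximum weight, membership in $\Sigma_p \cup T_p$, and the matching endpoint with $p'$, and the first inductive step collapses if the $\argmax$ in $\Edges_{\thawKind}$ is taken globally and its unique maximiser happens to share only the endpoint of $p$ not shared by $p'$. Once the $\argmax$ is understood per shared endpoint, the choice of $q$ is possible and the remainder of the induction is closed cleanly using only the tree structure of $\Graph$ and the midpoint condition already built into the definitions of $\Sigma_p$ and $T_p$.
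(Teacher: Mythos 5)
Your proof is correct and follows essentially the same route as the paper's: the paper likewise reduces the case $p' \in T_p$ to $\Sigma_{\invert(p)}$ via inversion, picks a maximum-weight $q \in \Sigma_p$, takes the edge $(\equivalenceClassOf{p}_\leftrightarrow, \equivalenceClassOf{q}_\leftrightarrow)$, shows $p' \in \Sigma_q$ from the coincidence of the initial halves of $p$, $q$, and $p'$ (uniqueness of direction-preserving paths in the tree), and recurses on the strictly smaller weight. Your worry about the $\argmax$ is a fair comment on the stated definition of $\Edges_{\thawKind}$ rather than a defect of your argument: the paper's own proof also asserts edge membership from maximality within $\Sigma_p$ alone, i.e.\ it implicitly uses the per-endpoint reading you adopt, which is the one matching the thaw-signal dynamics (under the literal global $\argmax$ over $\Sigma_p \cup T_p$ the lemma can in fact fail when the two ends of $p$ see maxima of different weights). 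One cosmetic slip: if $p'$ itself has maximum weight in $\Sigma_p$ but your chosen $q$ is a different, non-equivalent maximiser, then your ``otherwise $\weightOf(p') < \weightOf(q)$'' is not strict; this is harmless, since in that case $(\equivalenceClassOf{p}_\leftrightarrow, \equivalenceClassOf{p'}_\leftrightarrow)$ is already an edge --- the paper sidesteps it by first checking whether $p'$ is a maximiser before choosing $q$.
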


  \begin{proof}
    \begin{description}
      \item[Case 1:] $p' \in \Sigma_p$. If $p' \in \argmax_{p'' \in \Sigma_p} \weightOf(p'')$, then $(\equivalenceClassOf{p}_\leftrightarrow,\allowbreak \equivalenceClassOf{p'}_\leftrightarrow) \in \Edges_{\thawKind}$ and the path consisting of this edge is one from $\equivalenceClassOf{p}_\leftrightarrow$ to $\equivalenceClassOf{p'}_\leftrightarrow$. Otherwise, there is a $q \in \argmax_{p'' \in \Sigma_p} \weightOf(p'')$ such that $\weightOf(p') < \weightOf(q)$. Then, $(\equivalenceClassOf{p}_\leftrightarrow, \equivalenceClassOf{q}_\leftrightarrow) \in \Edges_{\thawKind}$. And, because $\setOf{p', q} \subseteq \Sigma_p$, we have $\sourceOf(p') = \sourceOf(p) = \sourceOf(q)$. Thus, because $\continuumRepresentationOf{q}(\weightOf(q) / 2) = \midpoint_q \in \imageOf \continuumRepresentationOf{p}$, the paths $q$ and $p$ are direction-preserving, and the multigraph $\Graph$ is a tree, we have $\continuumRepresentationOf{q}\restrictedTo_{\closedInterval{0, \weightOf(q) / 2}} = \continuumRepresentationOf{p}\restrictedTo_{\closedInterval{0, \weightOf(q) / 2}}$ and, analogously, we have $\continuumRepresentationOf{p'}\restrictedTo_{\closedInterval{0, \weightOf(p') / 2}} = \continuumRepresentationOf{p}\restrictedTo_{\closedInterval{0, \weightOf(p') / 2}}$. Hence, because $\weightOf(p') < \weightOf(q)$,
            \begin{equation*}
              \midpoint_{p'}
              = \continuumRepresentationOf{p'}(\weightOf(p') / 2)
              = \continuumRepresentationOf{p}(\weightOf(p') / 2)
              = \continuumRepresentationOf{q}(\weightOf(p') / 2)
              \in \imageOf\continuumRepresentationOf{q}.
            \end{equation*}
            Therefore, $p' \in \Sigma_q$. Now, if $p' \in \argmax_{p'' \in \Sigma_q} \weightOf(p'')$, then $(\equivalenceClassOf{q}_\leftrightarrow,\allowbreak \equivalenceClassOf{p'}_\leftrightarrow) \in \Edges_{\thawKind}$, and the path consisting of the edges $(\equivalenceClassOf{p}_\leftrightarrow, \equivalenceClassOf{q}_\leftrightarrow)$ and $(\equivalenceClassOf{q}_\leftrightarrow, \equivalenceClassOf{p'}_\leftrightarrow)$ is a path from $\equivalenceClassOf{p}_\leftrightarrow$ to $\equivalenceClassOf{p'}_\leftrightarrow$. Otherwise, because $\Vertices_{\thawKind}$ is finite and $\weightOf(q) > \weightOf(p')$, it follows by induction that there is a path from $\equivalenceClassOf{q}_\leftrightarrow$ to $\equivalenceClassOf{p'}_\leftrightarrow$ and therefore one from $\equivalenceClassOf{p}_\leftrightarrow$ to $\equivalenceClassOf{p'}_\leftrightarrow$. 
      \item[Case 2:] $p' \in T_p$. Then, $\invert(p') \in \Sigma_{\invert(p)}$. Hence, according to the first case, there is a path from $\equivalenceClassOf{\invert(p)}_\leftrightarrow$ to $\equivalenceClassOf{\invert(p')}_\leftrightarrow$. Therefore, because $\equivalenceClassOf{p}_\leftrightarrow = \equivalenceClassOf{\invert(p)}_\leftrightarrow$ and $\equivalenceClassOf{p'}_\leftrightarrow = \equivalenceClassOf{\invert(p')}_\leftrightarrow$, there is a path from $\equivalenceClassOf{p}_\leftrightarrow$ to $\equivalenceClassOf{p'}_\leftrightarrow$. \qedhere
    \end{description}
  \end{proof}

  Each midpoint signal of a path eventually collides with a matching thaw signal that originated at the midpoint of a maximum-weight direction-preserving path. This is what is essentially shown in

  \begin{lemma}
  \label{lemma:existence-of-paths-in-the-thaw-graph}
    Let $p$ be a direction-preserving path in $\Graph$. There is a maximum-weight direction-preserving path $\hat{p}$ in $\Graph$ such that there is a path from $\equivalenceClassOf{\hat{p}}_\leftrightarrow$ to $\equivalenceClassOf{p}_\leftrightarrow$ in $\Graph_{\thawKind}$.
  \end{lemma}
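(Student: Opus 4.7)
\begin{proof-sketch}
The plan is to deduce the lemma from \cref{lemma:existence-of-path-in-midpoint-signal-graph} by an iteration argument. The auxiliary claim I would establish is: for every direction-preserving path $p$ in $\Graph$ with $\weightOf(p) < d$, there is some direction-preserving path $q$ with $\weightOf(q) > \weightOf(p)$ and $p \in \Sigma_q \cup T_q$. Granted this claim, the lemma follows by induction on the finitely many distinct weights of direction-preserving paths in $\Graph$ that exceed $\weightOf(p)$: if $\weightOf(p) = d$, take $\hat{p} := p$ together with the empty path at $\equivalenceClassOf{p}_{\leftrightarrow}$ in $\Graph_{\thawKind}$; otherwise, the inductive hypothesis applied to a witness $q$ yields a maximum-weight path $\hat{p}$ and a $\Graph_{\thawKind}$-path from $\equivalenceClassOf{\hat{p}}_{\leftrightarrow}$ to $\equivalenceClassOf{q}_{\leftrightarrow}$, and \cref{lemma:existence-of-path-in-midpoint-signal-graph} appends a $\Graph_{\thawKind}$-path from $\equivalenceClassOf{q}_{\leftrightarrow}$ to $\equivalenceClassOf{p}_{\leftrightarrow}$.

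To prove the auxiliary claim, I would write $u := \sourceOf(p)$, $v := \targetOf(p)$, $m := \midpoint_p$ and split into two cases. If at least one of $u$, $v$ is not a leaf, extending $p$ at that end by one direction-preserving edge yields a $q$ that shares the opposite endpoint with $p$, is strictly heavier, and still contains $m$ since $\imageOf \continuumRepresentationOf{p} \subseteq \imageOf \continuumRepresentationOf{q}$. The substantive case is when both $u$ and $v$ are leaves, so $p$ admits no extension at its endpoints. There my plan is to pick an edge $e^*$ of $p$ whose closed image contains $m$, remove $e^*$ to split $\Graph$ into subtrees $T_u \ni u$ and $T_v \ni v$, and analyse $a := \max_{x \in T_u} \distanceOf(m, x)$ and $b := \max_{y \in T_v} \distanceOf(m, y)$. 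By definition of the midpoint, $a, b \geq \weightOf(p)/2$. The triangle inequality bounds the weight of any direction-preserving path inside $T_u$ by $2a$, any inside $T_v$ by $2b$, and any that crosses $e^*$ (which must then traverse $m$) by $a + b$; since every direction-preserving path in the tree falls into one of these three classes, $d \leq \max\setOf{2a, 2b, a + b}$. Hence $a = b = \weightOf(p)/2$ would force $d \leq \weightOf(p)$, contradicting $\weightOf(p) < d$, so at least one of $a$, $b$ strictly exceeds $\weightOf(p)/2$. By symmetry I may assume $a > \weightOf(p)/2$, pick $\hat{u} \in T_u$ attaining $a$, and take $q := p_{\hat{u}, v}$. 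This $q$ crosses $e^*$, hence contains $m$; has weight $a + \weightOf(p)/2 > \weightOf(p)$; and has $\targetOf(q) = v = \targetOf(p)$, so $p \in T_q$.

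The hard part I expect is exactly the leaf-to-leaf subcase: once $p$ has no free endpoint to extend at, the witness $q$ must be obtained by replacing a proper prefix or suffix of $p$ by a heavier direction-preserving tree path through $m$. The three-term bound $d \leq \max\setOf{2a, 2b, a + b}$, which turns $\weightOf(p) < d$ into a strict inequality on one side of $m$, is the key geometric input that makes the iteration go through; everything else is finite bookkeeping around \cref{lemma:existence-of-path-in-midpoint-signal-graph} and the finiteness of $\Graph$.
\end{proof-sketch}
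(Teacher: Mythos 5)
Your argument is correct in substance but takes a genuinely different route from the paper's. The paper fixes one maximum-weight path $\hat{p}$ at the outset, joins $p$ to it by the minimum-weight connector $p_d$, and builds a single hybrid path $p' = q_1 \concat p_d \concat \hat{q}_2$ from the heavier half $q_1$ of $p$ (suitably oriented) and the heavier half $\hat{q}_2$ of $\hat{p}$; it then checks $q \in \Sigma_{p'}$ and, unless $p'$ is itself of maximum weight, $p' \in T_{\hat{q}}$, so \cref{lemma:existence-of-path-in-midpoint-signal-graph} is invoked at most twice and no induction over weights is needed. You instead prove a one-step ascent claim --- every direction-preserving path of weight less than $d$ lies in $\Sigma_q \cup T_q$ for some strictly heavier $q$ --- and iterate it, applying \cref{lemma:existence-of-path-in-midpoint-signal-graph} once per step and using finiteness of the tree for termination. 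Your key geometric input is sound: in a tree, direction-preserving vertex-to-vertex paths are geodesics, every such path either stays in $T_u$, stays in $T_v$, or crosses $e^*$ and hence passes through $\midpoint_p$, so $d \leq \max\setOf{2a, 2b, a + b}$, and if $a = b = \weightOf(p)/2$ this contradicts $\weightOf(p) < d$; the replacement path $p_{\hat{u}, v}$ then indeed witnesses $p \in T_q$ with $\weightOf(q) = a + \weightOf(p)/2 > \weightOf(p)$. Your version is more local and makes the reason for the ascent transparent; the paper's version buys a shorter chain in $\Graph_{\thawKind}$ (length at most two applications of the previous lemma) at the price of a more intricate explicit construction.

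One boundary case is not covered as written: the lemma is also applied, via \cref{corollary:the-weight-of-paths-to-edges-and-vertices-in-the-thaw-graph}, to empty paths $(v_0)$ and $(v_1)$, and when the base vertex is a leaf your case split breaks down --- both endpoints are leaves, yet $p$ has no edge $e^*$ whose image could contain $\midpoint_p$. The patch is immediate: for the empty path $p = (v)$ at a leaf $v$ with unique incident edge $e$ and other end $v'$, the path $q = (v', e, v)$ satisfies $p \in T_q$ (note also that, contrary to your phrasing, an empty path at a leaf can be extended at its endpoint, since direction preservation constrains only consecutive edges). With that one-line addition your induction covers all direction-preserving paths, including the empty ones the corollary needs.
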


  \begin{proof}
    If $p$ is a maximum-weight path, then the path $\hat{p} = p$ in $\Graph$ and the empty path $(\equivalenceClassOf{p}_\leftrightarrow)$ in $\Graph_{\thawKind}$ have the desired properties. From now on, let $p$ not be a maximum-weight path. Furthermore, let $\hat{p}$ be a maximum-weight path in $\Graph$ and let $p_d$ be the minimum-weight path in $\Graph$ such that $\sourceOf(p_d)$ lies on $p$ and $\targetOf(p_d)$ lies on $\hat{p}$. Then, there are paths $p_1$ and $p_2$ in $\Graph$ such that $p_1 \concat p_2 = p$ and $\targetOf(p_1) = \sourceOf(p_d)$ as well as $\targetOf(\invert(p_2)) = \sourceOf(p_d)$. And, there are paths $\hat{p}_1$ and $\hat{p}_2$ in $\Graph$ such that $\hat{p}_1 \concat \hat{p}_2 = \hat{p}$ and $\targetOf(p_d) = \sourceOf(\hat{p}_2)$ as well as $\targetOf(p_d) = \sourceOf(\invert(\hat{p}_1))$. Let
    \begin{equation*}
      q_1 = \begin{dcases*} 
              p_1, &if $\weightOf(p_1) \geq \weightOf(p_2)$,\\
              \invert(p_2), &otherwise,
            \end{dcases*}
    \end{equation*}
    let
    \begin{equation*}
      q_2 = \begin{dcases*} 
              p_2, &if $\weightOf(p_1) \geq \weightOf(p_2)$,\\
              \invert(p_1), &otherwise,
            \end{dcases*}
    \end{equation*}
    and let $q = q_1 \concat q_2$. Furthermore, let
    \begin{equation*}
      \hat{q}_1 = \begin{dcases*}
                    \hat{p}_1, &if $\weightOf(\hat{p}_2) \geq \weightOf(\hat{p}_1)$,\\
                    \invert(\hat{p}_2), &otherwise,
                  \end{dcases*}
    \end{equation*}
    let
    \begin{equation*}
      \hat{q}_2 = \begin{dcases*}
                    \hat{p}_2, &if $\weightOf(\hat{p}_2) \geq \weightOf(\hat{p}_1)$,\\
                    \invert(\hat{p}_1), &otherwise,
                  \end{dcases*}
    \end{equation*}
    and let $\hat{q} = \hat{q}_1 \concat \hat{q}_2$. Moreover, let $p' = q_1 \concat p_d \concat \hat{q}_2$. Then, $q \in \setOf{p, \invert(p)}$ as well as $\hat{q} \in \setOf{\hat{p}, \invert(\hat{p})}$. And, $\weightOf(q_1) \geq \weightOf(q_2)$ as well as $\weightOf(\hat{q}_1) \leq \weightOf(\hat{q}_2)$. And, $\targetOf(p') = \targetOf(\hat{q})$ as well as $\sourceOf(q) = \sourceOf(p')$. And, because $\weightOf(q_1) \geq \weightOf(q_2)$, we have $\midpoint_q \in \imageOf\continuumRepresentationOf{q_1} \subseteq \imageOf\continuumRepresentationOf{p'}$.
    \begin{description}
      \item[Case 1:] $p'$ is a maximum-weight path. Then, because $p$ is not a maximum-weight path, we have $\weightOf(q) = \weightOf(p) < \weightOf(p')$. Hence, because $\sourceOf(q) = \sourceOf(p')$ and $\midpoint_q \in \imageOf\continuumRepresentationOf{p'}$, we have $q \in \Sigma_{p'}$. In conclusion, according to \cref{lemma:existence-of-path-in-midpoint-signal-graph}, there is a path from $\equivalenceClassOf{p'}_\leftrightarrow$ to $\equivalenceClassOf{p}_\leftrightarrow = \equivalenceClassOf{q}_\leftrightarrow$.
      \item[Case 2:] $p'$ is not a maximum-weight path. Then, $\weightOf(q_1 \concat p_d \concat \hat{q}_2) = \weightOf(p') < \weightOf(\hat{q}) = \weightOf(\hat{q}_1 \concat \hat{q}_2)$ and thus $\weightOf(q_1 \concat p_d) < \weightOf(\hat{q}_1)$, in particular, $\weightOf(q_1) < \weightOf(\hat{q}_1)$. Hence, because $\weightOf(q_2) \leq \weightOf(q_1) < \weightOf(\hat{q}_1) \leq \weightOf(\hat{q}_2)$, we have $\weightOf(q) = \weightOf(q_1) + \weightOf(q_2) < \weightOf(q_1) + \weightOf(\hat{q}_2) \leq \weightOf(p')$. And, because $\weightOf(q_1 \concat q_d) < \weightOf(\hat{q}_1) \leq \weightOf(\hat{q}_2)$, we have $\midpoint_{p'} \in \imageOf\continuumRepresentationOf{\hat{q}_2} \subseteq \imageOf\continuumRepresentationOf{\hat{q}}$. And, recall that $\midpoint_{q} \in \imageOf\continuumRepresentationOf{p'}$. Altogether, because $\targetOf(p') = \targetOf(\hat{q})$ and $\sourceOf(q) = \sourceOf(p')$, we have $p' \in T_{\hat{q}}$ and $q \in \Sigma_{p'}$. Therefore, according to \cref{lemma:existence-of-path-in-midpoint-signal-graph}, there is a path from $\equivalenceClassOf{\hat{p}}_\leftrightarrow = \equivalenceClassOf{\hat{q}}_\leftrightarrow$ to $\equivalenceClassOf{p'}_\leftrightarrow$ and there is a path from $\equivalenceClassOf{p'}_\leftrightarrow$ to $\equivalenceClassOf{q}_\leftrightarrow = \equivalenceClassOf{p}_\leftrightarrow$. In conclusion, there is a path from $\equivalenceClassOf{\hat{p}}_\leftrightarrow$ to $\equivalenceClassOf{p}_\leftrightarrow$. \qedhere
    \end{description}
  \end{proof}

  The time it takes a thaw signal from the midpoint of a path to collide with the midpoint signal of a matching path is given by half the former path's length minus half the latter path's length. This is what is essentially shown in

  \begin{lemma}
  \label{lemma:the-weight-of-paths-in-the-thaw-graph}
    For each path $p_{\thawKind}$ in $\Graph_{\thawKind}$, the weight of $p_{\thawKind}$ is equal to $\weightOf(\sourceOf(p_{\thawKind}))/2 - \weightOf(\targetOf(p_{\thawKind}))/2$. 
  \end{lemma}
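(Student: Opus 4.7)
The plan is a short induction on the length $n$ of the path $p_{\thawKind} = (v_{\thawKind, 0}, \ldots, v_{\thawKind, n})$ in $\Graph_{\thawKind}$, with the key input being \cref{remark:weight-of-edge-is-half-weight-of-longer-path-minus-half-weight-of-shorter-path}, which identifies the weight of a single edge $(\equivalenceClassOf{q}_\leftrightarrow, \equivalenceClassOf{q'}_\leftrightarrow) \in \Edges_{\thawKind}$ as $\weightOf(q)/2 - \weightOf(q')/2$. So each edge weight already has the form (half source-vertex weight) minus (half target-vertex weight), which is exactly the shape needed for telescoping.

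For the base case $n = 0$, the path is empty, its source and target coincide, both sides of the claimed equation are $0$, and $\weightOf_{\thawKind}$ of an empty path is $0$ by definition of the weight of a path in a weighted graph. For the base case $n = 1$, the path consists of a single edge and the statement is precisely \cref{remark:weight-of-edge-is-half-weight-of-longer-path-minus-half-weight-of-shorter-path}.

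For the inductive step, assume the claim holds for all paths of length $n$ and let $p_{\thawKind}$ have length $n + 1$. Write $p_{\thawKind} = p_{\thawKind}' \concat (v_{\thawKind, n}, e_{\thawKind}, v_{\thawKind, n + 1})$, where $p_{\thawKind}'$ is the length-$n$ subpath from $v_{\thawKind, 0}$ to $v_{\thawKind, n}$. By the inductive hypothesis,
\begin{equation*}
  \weightOf_{\thawKind}(p_{\thawKind}') = \weightOf(v_{\thawKind, 0})/2 - \weightOf(v_{\thawKind, n})/2,
\end{equation*}
and by the base case $n = 1$,
\begin{equation*}
  \weightOf_{\thawKind}((v_{\thawKind, n}, e_{\thawKind}, v_{\thawKind, n + 1})) = \weightOf(v_{\thawKind, n})/2 - \weightOf(v_{\thawKind, n + 1})/2.
\end{equation*}
Summing these two equations, the $\weightOf(v_{\thawKind, n})/2$ terms cancel, and we obtain
\begin{equation*}
  \weightOf_{\thawKind}(p_{\thawKind}) = \weightOf(v_{\thawKind, 0})/2 - \weightOf(v_{\thawKind, n + 1})/2 = \weightOf(\sourceOf(p_{\thawKind}))/2 - \weightOf(\targetOf(p_{\thawKind}))/2,
\end{equation*}
as desired.

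There is really no obstacle here; the work was done in setting up $\Graph_{\thawKind}$ so that the edge weights are differences of half vertex-weights, which makes the claim a one-line telescoping calculation. The only mild subtlety is to make sure that $\weightOf(\equivalenceClassOf{p}_\leftrightarrow)$ is well-defined on equivalence classes (true because $\weightOf(\invert(p)) = \weightOf(p)$), so that the statement even makes sense on $\Graph_{\thawKind}$; this has already been ensured by the definition of the weight of an undirected path.
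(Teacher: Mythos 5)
Your proof is correct and takes essentially the same approach as the paper: an induction on path length that telescopes the edge weights using \cref{remark:weight-of-edge-is-half-weight-of-longer-path-minus-half-weight-of-shorter-path}. The only cosmetic difference is that you peel off the last edge while the paper peels off the first, which changes nothing of substance.
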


  \begin{proof}
    We prove the statement by induction. 

    \proofPart{Base Case}
      Each empty path $p_{\thawKind}$ in $\Graph_{\thawKind}$ has weight $0$, has the same source and target vertices, and $\weightOf(\sourceOf(p_{\thawKind}))/2 - \weightOf(\targetOf(p_{\thawKind}))/2$ is equal to $0$ as needed.

    \proofPart{Inductive Step}
      Let $p_{\thawKind} = (\equivalenceClassOf{p_0}_\leftrightarrow, \equivalenceClassOf{p_1}_\leftrightarrow, \dotsc, \equivalenceClassOf{p_n}_\leftrightarrow)$ be a non-empty path in $\Graph_{\thawKind}$ such that the weight of the subpath $(\equivalenceClassOf{p_1}_\leftrightarrow, \dotsc, \equivalenceClassOf{p_n}_\leftrightarrow)$ is equal to $\weightOf(\equivalenceClassOf{p_1}_\leftrightarrow)/2 - \weightOf(\equivalenceClassOf{p_n}_\leftrightarrow)/2$. Then, according to \cref{remark:weight-of-edge-is-half-weight-of-longer-path-minus-half-weight-of-shorter-path}, we have $\weightOf_{\thawKind}(\equivalenceClassOf{p_0}_\leftrightarrow, \equivalenceClassOf{p_1}_\leftrightarrow) = \distanceOf(\midpoint_{p_0}, \midpoint_{p_1}) = \weightOf(p_0)/2 - \weightOf(p_1)/2 = \weightOf(\equivalenceClassOf{p_0}_\leftrightarrow)/2 - \weightOf(\equivalenceClassOf{p_1}_\leftrightarrow)/2$. Hence, the weight of the path $p_{\thawKind}$ is equal to $\weightOf(\equivalenceClassOf{p_0}_\leftrightarrow)/2 - \weightOf(\equivalenceClassOf{p_1}_\leftrightarrow)/2 + \weightOf(\equivalenceClassOf{p_1}_\leftrightarrow)/2 - \weightOf(\equivalenceClassOf{p_n}_\leftrightarrow)/2 = \weightOf(\equivalenceClassOf{p_0}_\leftrightarrow)/2 - \weightOf(\equivalenceClassOf{p_n}_\leftrightarrow)/2$.
  \end{proof}

  The time it takes a thaw signal from a maximum-weight direction-preserving path to collide with the midpoint signal of a matching path is essentially given in

  \begin{corollary}
  \label{corollary:the-weight-of-paths-in-the-thaw-graph}
    For each maximum-weight direction-preserving path $\hat{p}$ in $\Graph$ such that there is a path from $\equivalenceClassOf{\hat{p}}_\leftrightarrow$ to $\equivalenceClassOf{p}_\leftrightarrow$ in $\Graph_{\thawKind}$, the weight of this path is $d/2 - \weightOf(p)/2$.
  \end{corollary}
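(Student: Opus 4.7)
The plan is to derive this as a direct consequence of \cref{lemma:the-weight-of-paths-in-the-thaw-graph}. First I would apply that lemma to the given path in $\Graph_{\thawKind}$, which has source $\equivalenceClassOf{\hat{p}}_\leftrightarrow$ and target $\equivalenceClassOf{p}_\leftrightarrow$, yielding that its weight equals $\weightOf(\equivalenceClassOf{\hat{p}}_\leftrightarrow)/2 - \weightOf(\equivalenceClassOf{p}_\leftrightarrow)/2$. By the definition of the weight of an undirected path, this is $\weightOf(\hat{p})/2 - \weightOf(p)/2$.

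Next I would observe that because $\hat{p}$ is a maximum-weight direction-preserving path in $\Graph$, its weight equals the diameter of $\Graph$ with respect to the metric $\distanceOf$, that is, $\weightOf(\hat{p}) = d$ (this identification of the diameter of $\Graph$ with the supremum over $M$ was already noted in \cref{lemma:the-time-at-which-the-midpoints-of-longest-paths-are-found}). Substituting this into the expression above gives precisely $d/2 - \weightOf(p)/2$, which is the claim.

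There is no real obstacle here: the statement is essentially a specialisation of the preceding lemma together with the definition of the diameter. The only minor thing to keep straight is that $\weightOf(\equivalenceClassOf{q}_\leftrightarrow)$ is well-defined and equals $\weightOf(q)$ for any representative, which was recorded when undirected paths were introduced.
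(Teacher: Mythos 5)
Your proposal is correct and follows exactly the paper's own route: apply \cref{lemma:the-weight-of-paths-in-the-thaw-graph} to the path in $\Graph_{\thawKind}$ and substitute $\weightOf(\hat{p}) = d$. Nothing is missing.
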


  \begin{proof}
    This is a direct consequence of \cref{lemma:the-weight-of-paths-in-the-thaw-graph} with the fact that $\weightOf(\hat{p}) = d$.
  \end{proof}

  The time it takes a thaw signal from a maximum-weight direction-preserving path to collide with the midpoint signal of an edge it thaws and to reach the ends of the edge is essentially given in

  \begin{corollary}
  \label{corollary:the-weight-of-paths-to-edges-and-vertices-in-the-thaw-graph}
    Let $e$ be an edge of $\Graph$, and let $v_0$ and $v_1$ be the two ends of $e$. There is a maximum-weight direction-preserving path $\hat{p}$ such that there is a path from $\equivalenceClassOf{\hat{p}}_\leftrightarrow$ to $\equivalenceClassOf{(v_0, e, v_1)}_\leftrightarrow$ and all such paths have weight $d / 2 - \weightOf(e) / 2$, and there are also paths from $\equivalenceClassOf{\hat{p}}_\leftrightarrow$ to $\equivalenceClassOf{(v_0)}_\leftrightarrow$ and to $\equivalenceClassOf{(v_1)}_\leftrightarrow$ and all such paths have weight $d/2$.
  \end{corollary}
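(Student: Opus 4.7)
The plan is to invoke \cref{lemma:existence-of-paths-in-the-thaw-graph} and \cref{corollary:the-weight-of-paths-in-the-thaw-graph} directly on the three vertices of $\Graph_{\thawKind}$ associated with the edge $e$: the class $\equivalenceClassOf{(v_0, e, v_1)}_\leftrightarrow$ of the directed edge itself and the classes $\equivalenceClassOf{(v_0)}_\leftrightarrow$ and $\equivalenceClassOf{(v_1)}_\leftrightarrow$ of the two empty paths at its ends. All three are direction-preserving paths in $\Graph$: the path $(v_0, e, v_1)$ trivially so (it has only one edge), and the empty paths vacuously so, hence they are legitimate inputs for the existence lemma.

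First I would apply \cref{lemma:existence-of-paths-in-the-thaw-graph} once to each of the three direction-preserving paths. Each application produces a maximum-weight direction-preserving path $\hat{p}$ in $\Graph$ together with a path in $\Graph_{\thawKind}$ from $\equivalenceClassOf{\hat{p}}_\leftrightarrow$ to the corresponding target vertex. This settles the three existence claims simultaneously.

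Next I would appeal to \cref{corollary:the-weight-of-paths-in-the-thaw-graph} to compute the weights. For a path in $\Graph_{\thawKind}$ ending at $\equivalenceClassOf{(v_0, e, v_1)}_\leftrightarrow$, the corollary yields weight $d/2 - \weightOf((v_0, e, v_1))/2$, which equals $d/2 - \weightOf(e)/2$ since $\weightOf((v_0, e, v_1)) = \weightOf(e)$. For a path ending at $\equivalenceClassOf{(v_i)}_\leftrightarrow$ with $i \in \setOf{0, 1}$, the corollary yields $d/2 - 0/2 = d/2$, because empty paths have weight $0$.

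I do not expect a genuinely hard step: all the combinatorial and geometric content has already been absorbed into \cref{lemma:existence-of-path-in-midpoint-signal-graph,lemma:existence-of-paths-in-the-thaw-graph,lemma:the-weight-of-paths-in-the-thaw-graph}. The only point worth double-checking is that the empty paths are admissible inputs for \cref{lemma:existence-of-paths-in-the-thaw-graph} and that its proof (which splits on whether $p$ is a maximum-weight path) also goes through in the degenerate case $\weightOf(p) = 0$; this is straightforward from the construction of the auxiliary paths $p_d$, $q_1$, $q_2$, $\hat{q}_1$, $\hat{q}_2$ in that proof, which remain well defined when $p_1$ or $p_2$ is empty.
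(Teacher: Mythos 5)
Your proposal is correct and matches the paper's own proof, which likewise disposes of the corollary as a direct consequence of \cref{lemma:existence-of-paths-in-the-thaw-graph} and \cref{corollary:the-weight-of-paths-in-the-thaw-graph}. The only nuance is that applying the existence lemma separately to $(v_0, e, v_1)$, $(v_0)$, and $(v_1)$ may yield three different maximum-weight paths, whereas the statement asserts a single $\hat{p}$ serving all three targets; this is repaired immediately by noting that $(v_0) \in \Sigma_{(v_0, e, v_1)}$ and $(v_1) \in T_{(v_0, e, v_1)}$, so \cref{lemma:existence-of-path-in-midpoint-signal-graph} gives paths in $\Graph_{\thawKind}$ from $\equivalenceClassOf{(v_0, e, v_1)}_\leftrightarrow$ to both empty-path classes, which can be concatenated onto the path from $\equivalenceClassOf{\hat{p}}_\leftrightarrow$.
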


  \begin{proof}
    This is a direct consequence of \cref{lemma:existence-of-paths-in-the-thaw-graph} and corollary \ref{corollary:the-weight-of-paths-in-the-thaw-graph}. 
  \end{proof}

  \addtocontents{toc}{\protect\vspace{\beforebibskip}}
  \addcontentsline{toc}{section}{\refname}
  \printbibliography
\end{document}